\numberwithin{equation}{section}
\newtheorem{proposition}{Proposition}[section]
\newtheorem{lemma}[proposition]{Lemma}
\newtheorem{corollary}{Corollary}[section]
\newtheorem{theorem}{Theorem}[section]
\newtheorem{remark}{Remark}[section]
\newtheorem{definition}[proposition]{Definition}
\newtheorem*{theorem*}{Theorem}
\newtheorem*{mquestion*}{Main Question}
\newtheorem*{claim*}{Claim}
\newcommand{\mg}{\gamma}
\newcommand{\uu}{\underline{u}}
\newcommand{\uuo}{\underline{u_0}}
\newcommand{\ca}{\mathcal{a}}
\newcommand{\ba}{\bar{\mathcal{a}}}
\newcommand{\cb}{\mathcal{b}}
\newcommand{\cg}{\mathcal{g}}
\newcommand{\cc}{\mathcal{c}}
\newcommand{\bc}{\bar{\mathcal{c}}}
\newcommand{\mf}{\beta}
\newcommand{\vertiiii}[1]{{\left\vert\kern-0.25ex\left\vert\kern-0.25ex\left\vert\kern-0.25ex\left\vert #1 \right\vert\kern-0.25ex\right\vert\kern-0.25ex\right\vert\kern-0.25ex\right\vert}}
\newcommand{\vertiii}[1]{{\left\vert\kern-0.25ex\left\vert\kern-0.25ex\left\vert #1 \right\vert\kern-0.25ex\right\vert\kern-0.25ex\right\vert}}
\newcommand{\tv}{\tilde{v}}
\newcommand{\hv}{\hat{v}}
\newcommand{\cv}{\check{v}}
\newcommand{\cuv}{\check{\dwave{v}}}
\newcommand{\trho}{\tilde{\rho}}
\newcommand{\hrho}{\hat{\varrho}}
\newcommand{\tphi}{\tilde{\phi}}
\newcommand{\hphi}{\hat{\phi}}
\newcommand{\rphi}{\mathring{\phi}}
\newcommand{\rp}{\mathring{p}}
\newcommand{\rrho}{\mathring{\rho}}
\newcommand{\rv}{\mathring{v}}
\newcommand{\uf}{\underline{f}}
\newcommand{\Rbb}{\mathbb{R}}
\newcommand{\Zbb}{\mathbb{Z}}
\newcommand{\Tbb}{\mathbb{T}}
\newcommand{\Pbb}{\mathbb{P}}
\newcommand{\del}[1]{{\partial_{#1}}}
\newcommand{\AND}{{\quad\text{and}\quad}}
\newcommand{\Li}{L^\infty}
\newcommand{\bx}{\mathbf{x}}
\newcommand{\p}[1]{
	\begin{pmatrix}
		#1
	\end{pmatrix}
}
\newcommand{\dwave}[1]{\mbox{\uwave{$#1$}}}
\newcommand{\tr}{\text{tr}}
\newcommand{\Rho}{\mbox{\uwave{$\varrho$}}}
\newcommand{\V}{\dwave{v}}
\newcommand{\B}{\mathcal{B}}
\newcommand{\U}{\mathcal{U}}
\newcommand{\Pbp}{\mathbb{P}^{\perp}}
\newcommand{\be}{\begin{equation}}
	\newcommand{\ee}{\end{equation}}
\begin{document}

\title[Nonlinear gravitational instabilities for Newtonian universes]{Fully nonlinear gravitational instabilities for expanding spherical symmetric Newtonian universes with inhomogeneous density and pressure}

\author{Chao Liu}

\address[Chao Liu]{Center for Mathematical Sciences and School of Mathematics and Statistics, Huazhong University of Science and Technology, Wuhan 430074, Hubei Province, China.}
\email{chao.liu.math@foxmail.com}

\begin{abstract}
Nobel Prize laureate P.J.E. Peebles \cite{Peebles2020} has emphasized the importance and difficulties of studying the \textit{large scale clustering of matter} in cosmology.
Nonlinear gravitational instability plays a central role in understanding the clustering of matter and the formation of nonlinear structures in the universe and stellar systems.  However, there is no rigorous result on the nonlinear analysis of this instability except for some particular exact solutions without pressure, and numerical and phenomenological approaches.   Both Rendall \cite{Rendall2005}  and Mukhanov \cite{ViatcehslavMukhanov2013}  have highlighted the challenge posed by nonlinear gravitational instability with effective pressure. This has been a longstanding open problem in astrophysics for over a century since the occurrence of linearized Jeans instabilities in Newtonian universes in 1902. This article contributes to a fully nonlinear analysis of the gravitational instability for \textit{the Euler--Possion system} which models expanding Newtonian universes with inhomogeneous density and pressure.
The exponential or finite-time increasing blowups of the density contrast $\varrho$  can be determined, which may account for the considerably faster growth rate of nonlinear structures observed in astrophysics than that suggested by the classical Jeans instability. We believe this is the first rigorous result for the nonlinear Jeans instability with effective pressure and the method is concise and robust.

 \vspace{2mm}

{{\bf Keywords:} blowup, ODE blowup, Euler--Poisson system, Jeans instability, gravitational instabilities, self-increase blowup, second order nonlinear hyperbolic equations}

\vspace{2mm}

{{\bf Mathematics Subject Classification:} Primary 35A01; Secondary 35L02, 35L10, 83F05}
\end{abstract}

	\maketitle
	
	\setcounter{tocdepth}{2}
	
	\pagenumbering{roman} \pagenumbering{arabic}

\section{Introduction}
The gravitational instability, which Nobel laureate P.J.E. Peebles extensively discusses and highlights in his renowned book \cite{Peebles2020}, is of tremendous significance in astrophysics. It characterizes the \textit{clustering of matter at large scales} (as opposed to an isolated singularity), the mass accumulation of self-gravitating systems, and enhances the understanding of the formation of stellar systems and nonlinear structures in the universe\footnote{Additionally, it is worth noting the tenth inquiry in John Baez's compilation of open questions in cosmology and astrophysics \cite{Baez2020}, which asks, ``10. Why are the galaxies distributed in clumps and filaments?
	'' }. However, up to now, the gravitational instability was only rigorously studied in the \textit{linear regime} since the first linearized gravitational instability was obtained by Jeans \cite{Jeans1902} for Newtonian gravity in $1902$ (thus it is also known as \textit{Jeans instability}), and then generalized to general relativity by Lifshitz \cite{Lifshitz1946} in $1946$. The Jeans instability was extended to the expanding background universe in Bonnor \cite{Bonnor1957} (also see \cite{Zeldovich1971,ViatcehslavMukhanov2013}). Revised: The linearized Jeans instability derivations become substantially distorted by the accretions of the mass, as the increasing density results in significant deviations from the linear regime. Additionally, the growth rate predicted by the classical linearized Jeans instability fails to account for the considerable inhomogeneities observed in the present-day universe and the formation of galaxies. This growth rate is deemed too slow and, as a result, less efficient. 
Previous studies by \cite{Bonnor1957,Zeldovich1971,ViatcehslavMukhanov2013}, and our own work \cite{Liu2022} demonstrate that the growth rate of density for certain partially nonlinear or linearized Jeans instability models in expanding Newtonian universe follows a power law of $\sim t^\frac{2}{3}$, where $t$ denotes time.  Due to the incompleteness of current astrophysical theories and the presence of numerous mysterious phenomena that may rely on nonlinear Jeans instability, it is imperative to conduct a thorough study of the nonlinear analysis of Jeans instability. Furthermore, it is worth noting that Rendall has pointed out that there are no existing results on Jeans instability for the fully nonlinear case \cite{Rendall2005}. Similarly, Mukhanov has emphasized the challenge of studying nonlinear gravitational instability in Section $6.4$ of their work \cite{ViatcehslavMukhanov2013}. It becomes a long-standing open problem presented by astrophysics and more backgrounds can be found in, for instance, \cite{Bonnor1957,Zeldovich1971,Peebles2020,ViatcehslavMukhanov2013} and our prior articles \cite{Liu2022,Liu2022b,Liu2023a}.

Several references (see \cite{Zeldovich1971,ViatcehslavMukhanov2013,Arbuzova2014,Sciama1955}) have discussed nonlinear strategies that involve approximations and numerical methods, such as the well-known \textit{Zel'dovich solutions}. Additionally, an \textit{exact} solution known as the \textit{Tolman solution} (see \cite[\S$6.4.1$]{ViatcehslavMukhanov2013}, \cite{LANDAU1975}) describes the evolution and collapse of inhomogeneities in a spherical, \textit{dust}-filled fluid, however, it is unable to incorporate non-vanishing pressure effects. Our method is the first of its kind, addressing effective pressure cases and allowing for generalization to more complex scenarios.

This article builds upon our previous works \cite{Liu2022b} and \cite{Liu2023a}. On one hand, in \cite{Liu2022b}, we presented a second-order hyperbolic equation as a simplified model of the nonlinear Jeans instability. We demonstrated that its solution experiences an exponentially self-increase blowup, which is a stable ODE-type blowup according to Alihnac \cite{Alinhac1995}. This was achieved by using a Fuchsian formulation constructed through proper time transforms. However, the current article faces additional challenges that cannot be addressed directly using this result.
On the other hand,  our previous work described in \cite{Liu2023a} involved the construction of a family of particular solutions to the Euler-Poisson system that exhibit nonlinear gravitational instability of matter in an expanding Newtonian universe, with inhomogeneous pressure and entropy – known as the cold center and hot rim (for a more detailed explanation, refer to \cite{Liu2023a}). These unstable solutions, specifically \cite[eqs. $(3.2)$--$(3.5)$]{Liu2023a}, serve as reference solutions in this article.

It is well known that the expanding Newtonian universe can be characterized by the Euler--Poisson system and the equation of state in \cite[eqs. $(2.1)$--$(2.5)$]{Liu2023a}. We work on the following \textit{dimensionless and normalized Euler--Poisson system}\footnote{We have used notation $\partial^i:=\delta^{ij}\del{j}$ and see \S\ref{iandc}. } (we use
  the Einstein summation convention),
\begin{align}
	\del{t}\rho+\del{i}(\rho v^i) =& 0 ,\label{e:EP1}\\
	\del{t}v^i +v^j\del{j} v^i+\frac{\del{}^i p}{\rho}+\del{}^i\phi = & \mathcal{D}^i(t,x^j,\rho,v^k,s,\phi) ,\label{e:EP2}\\
	\del{t} s+v^i\del{i}s=& \mathcal{S}(t,x^j,\rho,v^k,s,\phi),\label{e:EP2b}\\
	\Delta\phi=\delta^{ij}\del{i}\del{j} \phi =&  4\pi \rho.     \label{e:EP3}
\end{align}
where $\rho$, $ v^i$, $p$, $\phi$ and $s$ are the dimensionless density, velocities, pressure of the fluids, gravitational potential and specific entropy, $\mathcal{D}^i$ and $\mathcal{S}$ are sources of specific momentum and entropy.
We point out that the non-dimensionalization of the Euler--Poisson system can be found in \cite[Appendix A]{Liu2023a} and the arbitrary initial time $t=t_0$ can be \textit{normalized} to $t=1$.
In addition, \textit{the equation of state} of the fluids (see \cite[Appendix A]{Liu2023a}) is assumed by
\begin{equation}
	p =   K e^{s} \rho^{\frac{4}{3} } \quad \text{for} \quad  K>0.   \label{e:eos}
\end{equation}
The \textit{initial data}, in this article, is given by \textit{spherical symmetric functions} $\beta\mathcal{d}(|\bx|)$ and $\gamma \mathcal{v}(|\bx|)$,
\begin{gather}
	\rho|_{t=1}(x^i)=\rho_0(|\bx|):= \frac{\iota^3}{6\pi}(1+\beta \mathcal{d}(|\bx|)) ,  \quad v^i|_{t=1}(x^i)=v^i_0(|\bx|):= \frac{2}{3}x^i +\gamma \mathcal{v}(|\bx|) x^i , \label{e:data0}\\ s|_{t=1}(x^i)=s_0(|\bx|):=\ln \Bigl(  \frac{(1+\beta \mathcal{d}(|\bx|))^{\frac{2}{3}+\omega}}{(1+\beta)^{\omega}} |\bx|^2 \Bigr). \label{e:data0b}
\end{gather}
where   $|\bx|^2:=\delta_{kl}x^k x^l$, $\beta$,  $\gamma$ and $\omega$ are given constants, and $\iota$ is a constant determined by
\begin{equation*}\label{e:iota}
	\iota:=	\iota(\tilde{K})=
	\Bigl( \frac{1}{2}\sqrt{1+	18\tilde{K}} +\frac{1}{2}\Bigr)^{\frac{1}{3}} - \Bigl(\frac{1}{2}\sqrt{1+18	\tilde{K}}-\frac{1}{2}\Bigr)^{\frac {1}{3}}\in (0,1)  \AND \tilde{K}:=\frac{K^3 }{\pi  } .
\end{equation*}
Essentially, $\iota^3$ and $\tilde{K}$ depend on the mater itself (see \cite{Liu2023a} for details).
In this article, we only consider $\iota^3\in (0, 1/5]$.

\subsection{Reviews on the homogeneous models}
Before illustrating the main question of this article, as preparations, let us first review two crucial solutions to the Euler--Poisson system \eqref{e:EP1}--\eqref{e:eos} for
\begin{equation*}
	\mathcal{D}^i(t,x^j,\rho,v^k,s,\phi)=\mathcal{S}(t,x^j,\rho,v^k,s,\phi)=0, \quad \text{(conservations of momentum and energy)}
\end{equation*}
and some notable facts from \cite{Liu2023a}:
\begin{enumerate}[leftmargin=*,label={(R\arabic*)}]
	\item  \label{c:Ntsl} \underline{Newtonian universes (backgroud solutions):} If $\beta=\gamma=0$, then the initial data \eqref{e:data0}--\eqref{e:data0b} reduce to
	\begin{equation}\label{e:data1}
		\rho|_{t=1}(x^i)= \frac{\iota^3}{6\pi}, \quad v^i|_{t=1}(x^i)= \frac{2}{3}x^i \AND s|_{t=1}(x^i) =\ln |\bx|^2  .
	\end{equation}
	Then according to \cite[\S$2$]{Liu2023a}, the solution to the Euler--Possion system \eqref{e:EP1}--\eqref{e:eos} is
	\begin{gather}
		\rrho(t)= \frac{\iota^3}{6\pi  t^2}, \quad \rp(t) = K t^{-\frac{4}{3}} \delta_{kl} x^k x^l  \rrho^{\frac{4}{3}} ,  \quad
		\rv^i(t,x^k)=\frac{2}{3t}x^i, \label{e:exsol1} \\ \rphi(t,x^k)=\frac{2}{3}\pi  \rrho \delta_{ij} x^i x^j =\frac{\iota^3}{9t^2}  \delta_{ij} x^i x^j  \AND \mathring{s}(t,x^k)=\ln (t^{-\frac{4}{3}} \delta_{kl} x^k x^l) .    \label{e:exsol2}
	\end{gather} 
This solution \eqref{e:exsol1}--\eqref{e:exsol2} gives an expanding Newtonian universe with homogeneous density and a  paraboloid temperature field (see \cite[Appendix A]{Liu2023a}).

To facilitate our analysis of the behavior of perturbed variables that deviate from the background Newtonian universe \eqref{e:exsol1}--\eqref{e:exsol2}, we first need to decompose the variables $(\rho,v^i,p,s,\phi)$ into two parts: the exact background solution $(\rrho,\rv^i,\rp,\mathring{s}, \rphi)$ determined by \eqref{e:exsol1}--\eqref{e:exsol2} and the perturbed parts $(\trho,\tv^i,\tilde{p},\tilde{s}, \tphi)$. Then, we can introduce a density contrast $\varrho$ defined as follows:
\begin{gather}\label{e:perv}
	\rho =\rrho + \trho,\quad
	v^i= \rv^i +\tv^i,\quad \phi=\rphi + \tphi, \quad
	p =\rp +  \tilde{p}, \quad 	s =\mathring{s}+  \tilde{s}  \AND \varrho:= \frac{\trho}{\rrho}  .  
\end{gather}

	\item  \label{c:hmsl} \underline{Homogeneous blowup solutions (reference solutions):} If constants $\beta>0$, $\gamma>0$ and $\mathcal{d}(|\bx|)=1$,  $\mathcal{v}(|\bx|)=-1$, then the initial data becomes
	\begin{equation}\label{e:data2}
		\rho|_{t=1}(x^i)= \frac{\iota^3}{6\pi}(1+\beta), \quad v^i|_{t=1}(x^i)= \Bigl( \frac{2}{3}  -\gamma \Bigr) x^i \AND s|_{t=1}(x^i)= \ln \bigl(  (1+\beta  )^{\frac{2}{3} }  |\bx|^2 \bigr).
	\end{equation}
	According to \cite[\S$3$]{Liu2023a}, there is a solution to the Euler--Poisson system \eqref{e:EP1}--\eqref{e:eos} given by
	\begin{gather}
		\rho_r(t) =  \frac{\iota^3(1+f(t) )}{6\pi  t^2} ,\quad 	v^i_r  (t,x^i) =\frac{2}{3t}x^i - \frac{ f^\prime(t)}{3 (1+f(t))} x^i  ,   \label{e:sl1}\\
		\phi_r(t,x^i)  = \frac{\iota^3(1+f(t)) |\boldsymbol{x}|^2}{9  t^{2}}  \AND
		s_r(t, x^k)=\ln \bigl( t^{-\frac{4}{3}} (1+f)^{\frac{2}{3}} \delta_{kl} x^k x^l\bigr) ,   \label{e:sl2}
	\end{gather}
and the density contrast
$\varrho_r(t)=f(t)$,
where the density contrast
$f(t)$ is a given function by \cite[$(3.6)$--$(3.7)$]{Liu2023a}, i.e., $f(t)$ solves an ordinary differential equation (ODE),
\begin{gather}
	f^{\prime\prime}(t)+\frac{4}{3t} f^\prime(t)-\frac{2}{3 t^2} f(t)(1+f(t)) -\frac{4(  f^\prime(t))^2}{3(1+f(t))}= 0 ,\label{e:feq1b} \\
	f|_{t=1}=\mf \AND f^\prime|_{t=1}=3(1+\mf)\mg. \label{e:feq2b}
\end{gather}
This equation \eqref{e:feq1b}--\eqref{e:feq2b} has been elaborated in our previous article \cite[Theorem $1.1$]{Liu2022b} and we have estimates of $f(t)$. For readers' convenience, we have listed the results in Appendix \ref{s:ODE}.
This solution \eqref{e:sl1}--\eqref{e:sl2} serves as a \textit{reference solution} in this article.

	\item  Note $\iota$ and $\tilde{K}$ are both \textit{dimensionless quantity} depending on the molar mass (see \cite[Appendix A]{Liu2023a} for proofs) and one can verify that $\iota$ satisfies an important identity (also see \cite[\S$2$]{Liu2023a}) that
\begin{align}\label{e:ioeq}
	\iota^3+ 9 \Bigl(\frac{\tilde{K}}{6}\Bigr)^{\frac{1}{3}}   \iota -1=0 ,
\end{align}
and $\iota(\tilde{K})$ is a decreasing function, and $
\lim_{\tilde{K}\rightarrow 0} \iota(\tilde{K})=1$ and $ \lim_{\tilde{K}\rightarrow +\infty} \iota(\tilde{K})=0$.

\item  We only consider the case for $K>0$ since for the dust ($K=0$), it is reduced to the \textit{Tolman solutions} (see \cite[\S$6.4$]{ViatcehslavMukhanov2013}) which have been well studied already. However, for $K>0$, there is no further result except \cite{Liu2023a} for homogeneous density contrast.
\end{enumerate}

\subsection{Questions} 
We begin by noting that according to \cite{Liu2023a}, the Newtonian universe \eqref{e:exsol1}--\eqref{e:exsol2} exhibits a density decay with the time that behaves like $\sim 1/t^2$. However, when we introduce a homogeneous perturbation characterized by two positive parameters $\beta$ and $\gamma$ in the initial data for density and velocity, as described in equation \eqref{e:data2}, the solution becomes unstable. This means that as time goes on, the solution  \eqref{e:sl1}--\eqref{e:sl2} can never become close to the Newtonian universe \eqref{e:exsol1}--\eqref{e:exsol2}. In fact, as we can see from the expression \eqref{e:sl1}--\eqref{e:sl2}, both the density $\rho_r(t)$ and the density contrast $\varrho_r(t)$ will eventually increase with time since $f(t)$ grows faster than an exponential function. The estimates for $f(t)$ can be found in \cite[eqs. $(3.8)$ and $(3.9)$]{Liu2023a}, or in Theorem \ref{t:mainthm0} in Appendix \ref{s:ODE}.

While we have successfully constructed a family of unstable solutions in \eqref{e:sl1}--\eqref{e:sl2} in our previous work \cite{Liu2023a}, it should be noted that these solutions require homogeneous initial perturbations of the density, which are characterized by two parameters $\beta$ and $\gamma$, as described in that work. This raises the question of what happens if the initial perturbations of the density are not homogeneous, as in \eqref{e:data0}. To be more specific, we need to ask:
\begin{mquestion*}\
	\begin{enumerate}[leftmargin=*]
		\item Given an initially inhomogeneous but spherically symmetric perturbation \eqref{e:data0}, deviating from the data \eqref{e:data1}, we need to investigate whether the Newtonian universe solution to the Euler-Poisson system \eqref{e:EP1}--\eqref{e:eos} is stable or unstable. 
		\item If the Newtonian universe solution is unstable, we need to examine how the perturbation solution behaves in response to the initial perturbation. 
		\item In the case of a perturbation solution blow-up, we also need to investigate how fast the perturbation solution grows.
		
	\end{enumerate}   
\end{mquestion*}

In this article, we utilize the \textit{reference solution} \eqref{e:sl1}--\eqref{e:sl2} to develop a family of inhomogeneous but spherically symmetric solutions to the Euler--Poisson system \eqref{e:EP1}--\eqref{e:eos} that demonstrate exponential \textit{self-increasing density} on $\mathbb{R}^3$ until blowups, as a means of answering the question at hand. Furthermore, if the parameter $\gamma>1/3$ in the data \eqref{e:data0}, the density contrast and its derivative will experience blowups at a finite time. This implies that the Newtonian universe is unstable when subjected to initially inhomogeneous but spherically symmetrical perturbations and that these perturbations are close to their corresponding homogeneous reference solutions. This outcome represents the phenomena of \textit{mass accretions} and \textit{clusterings on large scales}. In the following sections, we will propose suitable assumptions and present the precise findings in the main theorem.

\subsection{Assumptions of the model}\label{s:model} 
Due to the difficulty of directly answering the previous Main Question, several assumptions and simplifications have been made. The fundamental physical implications of this expanding Newtonian universe are located in \cite[\S$2$]{Liu2023a}. Furthermore, to streamline the computations, we impose the following assumptions:
\begin{enumerate}[leftmargin=*,label={(A\arabic*)}]
\item\label{A:dpen}  \underline{Dampings   and entropy productions:}
In this article, we consider the Euler--Poisson system \eqref{e:EP1}--\eqref{e:data0b} with the following specific terms $\mathcal{D}^i$ and $\mathcal{S}$:
\begin{align}
	\mathcal{D}^i(t,x^j,\rho,v^k,s,\phi):= & -\frac{\kappa f_0}{1+f} \Bigl[v^i- \Bigl(\frac{2}{3t}-\frac{f_0}{3(1+f)}\Bigr)x^i\Bigr] ,   &&\kappa>\frac{7}{6}, \label{e:D1} \\
	\mathcal{S}(t,x^j,\rho,v^k,s,\phi):= &	-\Bigl(\frac{2}{3}+\omega\Bigr)\del{i}   v^i  +\frac{2   v^i  x_i}{|\bx|^2}+3\omega \Bigl(\frac{2}{3t}-\frac{f_0}{3(1+f)}\Bigr),  &&\omega=-\frac{8}{5}. \label{e:S1}
\end{align}
where we denote $f_0(t):=f^\prime(t)$ throughout this article. The further physical meanings of $ \mathcal{D}^i$ and $ \mathcal{S} $ are discussed in Appendix \ref{s:DS}. 
As shown in Appendix \ref{s:DS}, $    \mathcal{D}^i$ serves as a \textit{damping term} that arises directly from the inhomogeneous densities, while $\mathcal{S}$ accentuates the \textit{growth of entropy} caused by the aforementioned inhomogeneities through a flow of matter that depends on temperature.
\begin{remark} 
	Both $\mathcal{D}^i$ and $\mathcal{S}$ are simplifications that aid in the proofs. Specifically, $\mathcal{D}^i$ implies that the solutions blow up at a fixed time for every spatial point, resulting in stable blow-ups. On the other hand, $\mathcal{S}$ simplifies system \eqref{e:EP1}--\eqref{e:EP3} by allowing us to reduce equation \eqref{e:EP2b} to a specific relationship, \eqref{e:s2}, which is demonstrated in Lemma \ref{t:sepr}.
\end{remark}

\item\label{A4} \underline{Periodicity and spherical symmetry of data:}    
To simplify the analysis, we assume that the initial data $\mathcal{d}(|\bx|)$ and $\mathcal{v}(|\bx|)$ are both \textit{$1$-log-periodic} functions, as defined below.
\end{enumerate}
\begin{definition}\label{e:lgprd}		
	A function $F(|\bx|)$ is called \textbf{$t$-log-periodic} if there is a $t$-parameterized exp-log transform $\mathcal{y}_t$ satisfying
		\begin{equation}\label{e:xztsf}
			|\bx|=\mathcal{y}_t(\zeta):=t^{\frac{2}{3}}  \bigl(1+f(t)\bigr)^{-\frac{1}{3}}\exp\zeta  \; \Leftrightarrow  \;   \zeta=\mathcal{y}_t^{-1}(|\bx|)=\ln   \bigl(t^{-\frac{2}{3}}  \bigl(1+f(t)\bigr)^{\frac{1}{3}} |\bx| \bigr),
		\end{equation}
		 such that  $
		\acute{F}(\zeta):=F\circ\mathcal{y}_t(\zeta)  
		$
		is a periodic function with the unit period, that is, $	\acute{F}(\zeta+m)=	\acute{F}(\zeta)$ for any $m\in \Zbb$ and $\zeta \in \Rbb$.
\end{definition}
\begin{remark}
	In other words, Definition \ref{e:lgprd} and \ref{A4} imply $\mathcal{d}\circ\mathcal{y}_1$ and  $\mathcal{v}\circ\mathcal{y}_1$ (where $\mathcal{y}_1(\zeta) =  \bigl(1+\beta\bigr)^{-\frac{1}{3}}\exp\zeta$) are both defined on $\Tbb$.	
\end{remark}

Under these assumptions, there is a direct conclusion:
\begin{claim*}
	 The Newtonian universe solution \eqref{e:exsol1}--\eqref{e:exsol2} (see \ref{c:Ntsl}) and the homogeneously perturbed solutions \eqref{e:sl1}--\eqref{e:sl2} (see \ref{c:hmsl}) both solve the Euler--Poisson system \eqref{e:EP1}--\eqref{e:data0b} with the specific terms   $\mathcal{D}^i$ and $\mathcal{S}$ defined by \eqref{e:D1} and \eqref{e:S1}.
\end{claim*}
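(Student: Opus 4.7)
The plan is to verify the claim by showing that both candidate solutions make the source terms $\mathcal{D}^i$ and $\mathcal{S}$ vanish identically. Since these solutions are already known to satisfy the Euler--Poisson system \eqref{e:EP1}--\eqref{e:EP3} with \emph{zero} right-hand sides (this is established in \cite{Liu2023a} and summarized in \ref{c:Ntsl} and \ref{c:hmsl}), once $\mathcal{D}^i$ and $\mathcal{S}$ are shown to vanish on these solutions, the claim follows immediately.

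For the vanishing of $\mathcal{D}^i$, I would observe that its definition \eqref{e:D1} has the form $-\frac{\kappa f_0}{1+f}$ times the deviation $v^i - \bigl(\frac{2}{3t} - \frac{f_0}{3(1+f)}\bigr)x^i$. On the reference solution \eqref{e:sl1}, the velocity $v^i_r$ equals exactly $\bigl(\frac{2}{3t} - \frac{f_0}{3(1+f)}\bigr)x^i$, so the bracket is zero pointwise and $\mathcal{D}^i[\rho_r, v_r, s_r, \phi_r] = 0$. For the background solution \eqref{e:exsol1}, we have $\beta = \gamma = 0$, so the data \eqref{e:feq2b} for the ODE \eqref{e:feq1b} are $f(1) = 0$, $f'(1) = 0$, forcing the trivial solution $f \equiv 0$ and therefore $f_0 \equiv 0$; the prefactor kills $\mathcal{D}^i$ outright.

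For the vanishing of $\mathcal{S}$, the key observation is that both velocity fields are radial of the form $v^i = \alpha(t) x^i$, where $\alpha(t) = \frac{2}{3t}$ in the background case and $\alpha(t) = \frac{2}{3t} - \frac{f_0}{3(1+f)}$ in the reference case. A direct calculation gives $\del{i} v^i = 3\alpha(t)$ and $\frac{2 v^i x_i}{|\bx|^2} = 2\alpha(t)$. Substituting into \eqref{e:S1} produces
\begin{equation*}
\mathcal{S} = -\Bigl(\tfrac{2}{3}+\omega\Bigr)\cdot 3\alpha(t) + 2\alpha(t) + 3\omega \alpha(t) = \bigl[-2 - 3\omega + 2 + 3\omega\bigr]\alpha(t) = 0,
\end{equation*}
where in the reference case one uses that the constant piece in the last term of \eqref{e:S1} is precisely $3\omega \alpha(t)$. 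The cancellation is structural and independent of the specific value of $\omega$, so the choice $\omega = -8/5$ plays no role here.

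No serious obstacle is anticipated: the source terms $\mathcal{D}^i$ and $\mathcal{S}$ in \eqref{e:D1}--\eqref{e:S1} were designed so that they vanish on the reference flow, and the background is a degenerate special case ($f \equiv 0$). The only mild point to be careful about is confirming that $\beta = \gamma = 0$ indeed forces $f \equiv 0$ via uniqueness for the ODE \eqref{e:feq1b}--\eqref{e:feq2b}, which is immediate from Theorem \ref{t:mainthm0} in Appendix \ref{s:ODE}. Assembling these observations gives the claim in a few lines.
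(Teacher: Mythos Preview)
Your proposal is correct and follows essentially the same route as the paper, which likewise shows that $\mathcal{D}^i$ and $\mathcal{S}$ vanish on both solutions; the paper does this slightly more compactly by rewriting both sources entirely in terms of $\check v^i$ (namely $\mathcal{S}=-(\tfrac{2}{3}+\omega)\del{i}\check v^i+2\check v^i x_i/|\bx|^2$) and then observing $\check v^i=0$. One small correction: Theorem~\ref{t:mainthm0} assumes $\mf,\mf_0>0$ and therefore does not apply when $\beta=\gamma=0$; your conclusion $f\equiv 0$ is still correct, but it follows from the fact that $f\equiv 0$ visibly solves \eqref{e:feq1b} with zero data together with standard Picard--Lindel\"of uniqueness, not from that theorem.
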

\begin{proof}
Because these two solutions \eqref{e:exsol1}--\eqref{e:exsol2} and \eqref{e:sl1}--\eqref{e:sl2} both have a homogeneous density which yields the relative velocity defined by
\begin{equation}\label{e:relv1a}
	\check{v}^i:=v^i-\Bigl(\frac{2}{3t}-\frac{f_0}{3(1+f)}\Bigr) x^i
\end{equation}
vanishes, i.e., $\check{v}^i=0$.
 Further noting
\begin{equation*}
	\mathcal{S}(t,x^j,\rho,v^k,s,\phi) = 	-\Bigl(\frac{2}{3}+\omega\Bigr)\del{i}   \check{v}^i  +\frac{2  \check{v}^i  x_i}{|\bx|^2} \AND \mathcal{D}^i(t,x^j,\rho,v^k,s,\phi)=   -\frac{\kappa f_0}{1+f} 	\check{v}^i ,
\end{equation*}
 $\mathcal{D}^i$ and $\mathcal{S}$ vanish.  	
\end{proof}

\subsection{Main theorems}\label{s:mthm}
Let us present the main theorem of this article which answers the Main Question to some extend. In fact, from \cite{Liu2023a} we have already known:
\begin{claim*}
A homogeneous initial perturbation of \eqref{e:data2} around data \eqref{e:data1} results in a blowup solution $(\rho_r,v_r^i,\phi_r,s_r)$ given by \eqref{e:sl1}--\eqref{e:sl2}, meaning that the Newtonian universe $(\rrho,\rv^i,\rphi,\mathring{s})$ (as given in \eqref{e:exsol1}--\eqref{e:exsol2}) is gravitationally unstable.
\end{claim*}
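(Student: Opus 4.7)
The plan is to treat this final claim as a direct consequence of previously established results rather than as a new computation. First, I would verify that the quadruple $(\rho_r,v_r^i,\phi_r,s_r)$ defined in \eqref{e:sl1}--\eqref{e:sl2} does solve the Euler--Poisson system \eqref{e:EP1}--\eqref{e:eos} with the sources $\mathcal{D}^i$ and $\mathcal{S}$ from \eqref{e:D1}--\eqref{e:S1} and with initial data \eqref{e:data2}. By the preceding claim in this subsection, the specific sources $\mathcal{D}^i$ and $\mathcal{S}$ vanish on this reference solution because the relative velocity $\check{v}^i$ defined in \eqref{e:relv1a} is identically zero for a homogeneous density profile. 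Hence it suffices to substitute \eqref{e:sl1}--\eqref{e:sl2} into the source-free Euler--Poisson system; continuity, the Poisson equation, and the entropy equation follow by a short direct computation, while the momentum equation reduces exactly to the ODE \eqref{e:feq1b} for $f(t)$ with the initial conditions \eqref{e:feq2b}. This reduction is precisely the content of \cite[\S3]{Liu2023a}, so I would just cite it for the verification.

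Second, I would invoke the ODE analysis to show that this reference solution actually blows up. With $\beta>0$ and $\gamma>0$, the initial data \eqref{e:feq2b} satisfies $f|_{t=1}=\beta>0$ and $f'|_{t=1}=3(1+\beta)\gamma>0$, and Theorem \ref{t:mainthm0} in Appendix \ref{s:ODE} (or equivalently \cite[Theorem 1.1]{Liu2022b}) guarantees that $f(t)$ exists on some maximal interval $[1,T^*)$ with $f(t)\to\infty$ in the limit $t\uparrow T^*$, and that the growth is faster than any exponential on that interval. Consequently the density contrast $\varrho_r(t)=f(t)$ blows up, and in particular $\rho_r(t)=\frac{\iota^3(1+f(t))}{6\pi t^2}$ diverges while the unperturbed background density $\mathring{\rho}(t)=\frac{\iota^3}{6\pi t^2}$ decays like $t^{-2}$.

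Finally, I would translate this blowup into an instability statement for the Newtonian background. Since $\varrho_r(t)=f(t)$ measures precisely the relative deviation $(\rho_r-\mathring{\rho})/\mathring{\rho}$, and since $\varrho_r|_{t=1}=\beta$ can be chosen arbitrarily small while $\varrho_r(t)$ still tends to infinity, the map from perturbed initial data to the corresponding solution cannot be continuous at the background in any reasonable norm controlling the density contrast. This is exactly the definition of gravitational instability of the Newtonian universe solution \eqref{e:exsol1}--\eqref{e:exsol2}, which is what the claim asserts.

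I do not expect any genuine obstacle here, since both ingredients (the algebraic verification that \eqref{e:sl1}--\eqref{e:sl2} is a solution, and the ODE blowup for $f$) are already established in our previous papers \cite{Liu2022b, Liu2023a}; the only care needed is the bookkeeping to check that the chosen $\mathcal{D}^i$ and $\mathcal{S}$ in \eqref{e:D1}--\eqref{e:S1} really do vanish along the reference solution, which follows from the already-proved claim preceding this statement.
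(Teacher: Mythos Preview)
Your proposal is correct and mirrors exactly what the paper does: the paper does not give a fresh proof of this claim but simply records it as a fact already established in \cite{Liu2023a} (the reduction of \eqref{e:sl1}--\eqref{e:sl2} to the ODE \eqref{e:feq1b}--\eqref{e:feq2b}) together with the blowup behaviour of $f$ supplied by Theorem~\ref{t:mainthm0} from \cite{Liu2022b}, with the vanishing of the sources $\mathcal{D}^i,\mathcal{S}$ along the reference solution handled by the immediately preceding claim. Your write-up just unpacks these citations explicitly, so there is nothing to add.
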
 
In order to address the issue of \textit{inhomogeneous density perturbations}, this article demonstrates that the blowup solutions $(\rho_r,v_r^i,\phi_r,s_r)$ are \textit{stable} in the presence of \textit{inhomogeneous density perturbations}. This result can be seen as a nonlinear version of the Jeans instability, with the stable blowup serving as a nonlinear counterpart to the Jeans criterion.

\begin{theorem}\label{t:mainthm1}
	Under assumptions \ref{A:dpen}--\ref{A4}, suppose $s\in \Zbb_{> \frac{7}{2}}$,  $\iota^3\in (0, 1/5]$ and $f\in C^2([1,t_m))$ given by Theorem \ref{t:mainthm0} solves equation \eqref{e:feq1b}--\eqref{e:feq2b} where $\mf>0$ and $\mg>0$, and assume $t_m>1$ such that $[1,t_m)$ is the maximal interval of existence of $f$ given by Theorem \ref{t:mainthm0}. Then there are small constants $\sigma_\star,\sigma>0$, such that if the initial data \eqref{e:data0} satisfies
	\begin{equation}\label{e:mtdata}
		\bigl\| \mathcal{d}\circ\mathcal{y}_1 -1 \bigr\|_{H^{s+1}(\Tbb)}+  	\bigl\|  \mathcal{v}\circ\mathcal{y}_1+1  \bigr\|_{H^{s+1}(\Tbb)} \leq  \sigma_\star\sigma ,
	\end{equation}
	then
	\begin{enumerate}[leftmargin=*]
		\item 	there is a  solution $(\rho, v^i,s,\phi)\in C^2([1,t_m)\times \Rbb^3)$ to the system \eqref{e:EP1}--\eqref{e:data0b} and  $\rho(t,|\bx|)$,  $v^i(t,|\bx|) x_i/|\bx|^2$ are $t$-log-periodic and spherical symmetric;
		\item there is a constant $C_1\in(0,1/\sigma)$,  such that $\varrho$ (defined by \eqref{e:perv}) and $v^i$ satisfy the estimates
		\begin{gather}
			0<\bigl(1-C_1 \sigma\bigr)f(t)< \varrho(t,x^i)  <  \bigl(1+C_1\sigma\bigr) f(t)  ,  \label{e:mainest1}\\
			0<(1-C_1\sigma) f_0(t) \leq 	\del{t}\varrho (t,x^i) \leq (1+C_1\sigma) f_0(t) , \label{e:mainest1b} \\
			- C\sigma (1+f(t))\leq 	x^i	\partial_{i}\varrho(t, x^i) \leq  C\sigma (1+f(t)) \label{e:mainest1c}
			\intertext{and}
			\Bigl(\frac{2}{3t}-\frac{ (1+C_1\sigma) f_0(t) }{3(1+f(t))}  \Bigr)x^i
			< v^i(t,x^i) <\Bigl(\frac{2}{3t}-\frac{(1-C_1\sigma) f_0(t)  }{3(1+f(t))}  \Bigr)x^i   \label{e:mainest2}
		\end{gather}
		for $(t,x^i)\in[1,t_m) \times \Rbb^3$;
		\item the entropy $s$ can be expressed by \begin{equation}\label{e:s2}
			s=\ln \Bigl(	t^{-\frac{4}{3}} \frac{(1+\varrho)^{-\frac{14}{15}}}{(1+f)^{-\frac{8}{5}}} \delta_{kl} x^k x^l \Bigr) ;
		\end{equation}
		\item $\varrho$ and $\del{t}\varrho$ both blowup at $t=t_m$, i.e.,
		\begin{equation}\label{e:blup0}
			\lim_{t\rightarrow t_m} \varrho(t,x^i)=+\infty \AND \lim_{t\rightarrow t_m} \del{t}\varrho(t,x^i)=+\infty ;
		\end{equation}
	\item if the parameter $\gamma$ of the data from \eqref{e:data0} satisfies $\gamma>1/3$,
	then there is a finite time $t_m<\infty$, such that the density contrast $\varrho$ and its derivative $\del{t}\varrho$ blow up at a finite time $t_m$.
	\end{enumerate}
\end{theorem}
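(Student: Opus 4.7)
The strategy is to treat the reference solution $(\rho_r,v_r^i,\phi_r,s_r)$ from \eqref{e:sl1}--\eqref{e:sl2} as a background and measure the perturbation in a frame adapted to the exp--log change of coordinates \eqref{e:xztsf}. First, using spherical symmetry together with assumption \ref{A4}, I would verify that the class of $t$-log-periodic spherically symmetric profiles is preserved by \eqref{e:EP1}--\eqref{e:EP3}, reducing the PDE to a one-dimensional evolution on $\Tbb$ in the coordinate $\zeta=\mathcal{y}_t^{-1}(|\bx|)$. In these coordinates the reference is spatially constant, and \eqref{e:mtdata} says the perturbation starts at $H^{s+1}(\Tbb)$-distance $\sigma_\star\sigma$ from it. The Poisson equation \eqref{e:EP3} can then be solved explicitly by the spherically symmetric kernel
\[
\partial^i\phi(t,\bx)=\frac{4\pi x^i}{|\bx|^3}\int_0^{|\bx|}\rho(t,r)r^2\,dr,
\]
which expresses $\partial^i\phi$ as a linear radial integral of $\rho$, handled cleanly by the exp--log transform thanks to log-periodicity.

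Second, I would integrate out the entropy. A direct computation along characteristics, using continuity to substitute $\partial_iv^i=-d\ln\rho/dt$ and the identity $d\ln|\bx|^2/dt=2x_iv^i/|\bx|^2$, combined with the specific value $\omega=-8/5$ in \eqref{e:S1}, converts \eqref{e:EP2b} into the conservation law
\[
s+\tfrac{14}{15}\ln\rho-\ln|\bx|^2+\omega\ln\bigl(t^{-2}(1+f)\bigr)=\text{const}
\]
along each flow line, and the tuned initial data \eqref{e:data0b} fixes the constant so that the formula \eqref{e:s2} of part (3) holds identically. This is the content of Lemma \ref{t:sepr} and it eliminates $s$ from the system. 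The remaining unknowns are the density contrast $\varrho$ and the relative velocity $\check{v}^i$ defined in \eqref{e:relv1a}. After subtracting the reference solution one obtains a closed quasilinear hyperbolic system for $(\varrho,\check{v}^i)$ on $[1,t_m)\times\Tbb$, in which the damping term \eqref{e:D1} contributes a diagonal dissipation of strength $\kappa f_0/(1+f)$ while self-gravitation contributes an antidamping of order $f_0/(1+f)$; the threshold $\kappa>7/6$ is precisely what makes the dissipation dominate once $s+1$ derivatives are counted.

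Third, I would run a Fuchsian energy argument in the spirit of \cite{Liu2022b}: introduce a new time variable $\tau$ adapted to $f(t)$ so that $t\to t_m^-$ corresponds to $\tau\to\infty$, recast the perturbation system in Fuchsian form, and close an $H^{s+1}(\Tbb)$ energy estimate by bootstrap on the smallness parameter $\sigma$. The small data hypothesis \eqref{e:mtdata} together with the Fuchsian damping yields a global-in-$\tau$ bound of size $C_1\sigma$, from which a $C^2$ solution follows via the embedding $H^{s+1}(\Tbb)\hookrightarrow C^2(\Tbb)$ (requiring $s>3/2$, amply covered by $s\in\Zbb_{>7/2}$). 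The pointwise estimates \eqref{e:mainest1}--\eqref{e:mainest2} are then read off by sandwiching the reference values $f(t)$, $f_0(t)$ by the factors $1\pm C_1\sigma$, and \eqref{e:blup0} is immediate from $\lim_{t\to t_m}f(t)=\lim_{t\to t_m}f_0(t)=+\infty$ in Theorem \ref{t:mainthm0}. The finite-time case $\gamma>1/3$ is likewise read off from the corresponding conclusion in that theorem, which yields $t_m<\infty$.

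The hard part will be the Fuchsian step: choosing variables so that the singular matrix has the correct spectral sign at the sharp threshold $\kappa>7/6$, and verifying that the top-order commutators generated by differentiating $\rho^{-1}\partial^ip$ and $\partial^i\phi$ are absorbed by the diagonal damping. Since $7/6$ is a sharp constant, an algebraic identification of the combinations of $(\varrho,\check{v}^i)$ that diagonalize the principal part is the crucial technical point; the nonlocal Poisson term is tractable only because spherical symmetry reduces it to a radial integral that behaves well under the exp--log transform.
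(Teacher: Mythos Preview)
Your overall strategy matches the paper's: subtract the reference, pass to the $\zeta$-coordinate on $\Tbb$, eliminate $s$ via Lemma~\ref{t:sepr}, and run a Fuchsian argument. But two concrete points are off.

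First, the time compactification goes the other way. The paper sets $\tau=-g(t)\in[-1,0)$ with $\tau\to 0^{-}$ as $t\to t_m$, so that the system becomes \emph{singular} at $\tau=0$; the Fuchsian global existence theorem (Theorem~\ref{t:fuc}) is designed for exactly this structure. Sending $\tau\to\infty$ would put you in a different framework and you would lose the mechanism that turns the blowup into a regular endpoint.

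Second, and this is the real gap: the nonlocal Poisson term does not just ``behave well'' under the exp--log transform. After the reduction, the equations contain the rescaled gravity
\[
\Psi(t,\zeta)=\frac{1}{f(t)e^{3\zeta}}\int_{-\infty}^{\zeta}\bigl(\hat\varrho(t,z)-f(t)\bigr)e^{3z}\,dz,
\]
and it appears as $\Psi/\tau$ in the singular remainders of both the density and velocity equations. You cannot absorb this by estimating $\Psi$ in terms of $u$ at the required order. The paper's fix is to promote $\Psi$ to a genuine Fuchsian unknown: differentiate $e^{3\zeta}\Psi$ in $t$, substitute the continuity identity \eqref{e:keyid3b} to get a transport equation for $\Psi$, and then add a multiple of the algebraic relation $\partial_\zeta\Psi=u-3\Psi$ so that the resulting $5\times 5$ singular matrix $\mathfrak{B}$ satisfies the positivity condition \ref{c:5}. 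The Fuchsian fields are $(u_0,u_\zeta,u,\nu,\Psi)$, not just $(\varrho,\check v^i)$, and the verification (Lemma~\ref{t:lbd}) is a delicate quadratic-form computation that uses $\kappa=7/6+\lambda$, $\omega=-8/5$, and the restriction $\iota^3\in(0,1/5]$ simultaneously; the threshold is not purely a damping-versus-antidamping count.

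One further technical point you will hit: the earlier paper only gives $\mathfrak{G}(t)\to 0$ for the error in $\chi(t)=4B+\mathfrak{G}(t)$, but here you need the rate $|\underline{\mathfrak{G}}(\tau)|\lesssim(-\tau)^{1/2}$ to keep $\mathcal{F}$ in $C^0([-1,0])$. The paper derives an ODE for $\mathfrak{G}$ (Lemma~\ref{t:Geq2}) and closes it by Gr\"onwall (Lemma~\ref{t:Gest2}).
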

The main theorem will be proved in the final section \S\ref{s:stp7}, following thorough preparation.

\subsection{Overviews and outlines}\label{s:ol} 
The \textit{primary tool} of this article is a Cauchy problem for a Fuchsian system, originally introduced by \cite{Oliynyk2016a}. This method is useful for determining the long-term behaviors of hyperbolic systems. The Fuchsian system is singular at $t=0$ and is represented by the equation \begin{equation}\label{e:fucprt} B^{\mu}(t,x,u)\partial_{\mu}u = \frac{1}{t}\textbf{B}(t,x,u)\textbf{P}u+H(t,x,u), \end{equation} 
as described in more detail in Appendix \ref{s:fuc}. Conditions \ref{c:2}--\ref{c:7} in Appendix \ref{s:fuc} outline the necessary requirements for the coefficients and remainders to ensure proper nonlinearities. Using this system, Theorem \ref{t:fuc} can be employed to determine the global behaviors of the Fuchsian fields $u$.

The Fuchsian formulations utilize the energy method but with normalization to a specific class of singular hyperbolic systems. The main advantage of this approach is that it streamlines the proof process, as it eliminates the need for exhaustive energy estimates. By transforming the Euler--Poisson system into Fuchsian formulations, we can use global existence theorems for Fuchsian systems, which have been developed by several authors, including Oliynyk, Beyer, J. Arturo, and the author of this work (see, for example, \cite{Oliynyk2016a, Liu2018b, Liu2018, Oliynyk2021, Beyer2021, Beyer2020}). The global existence theorem of the Fuchsian system already incorporates the necessary energy estimates.

While the primary idea is to transform to the Fuchsian system, there are several \textit{difficulties}:
\begin{enumerate}[leftmargin=*,label={(Q\arabic*)}]
	\item\label{c:htt} How to transform?
	\item\label{c:ffd}  
	How to select the \textit{appropriate Fuchsian fields}, as the proper fields typically encode the correct behaviors of solutions? This requires making predictions about the behaviors of solutions, which can be difficult. This challenge is similar to the issue in the energy method of selecting the correct energy to use.
	\item\label{c:cot} How to \textit{compactify the time} because the Fuchsian formulations require the time variables lie in $[-1,0)$ and the system is singular at $t=0$.
\end{enumerate}
The answers to these questions will be expounded throughout this article. In fact, . To address question \ref{c:cot}, we introduce a useful function, $g(t)$,  and  a \textit{compactified time transformation}:
\begin{align}\label{e:gdef}
	g(t):=\exp\Bigl(-A\int^t_{1
	} \frac{f(s)(f(s)+1)}{s^2 f_0(s)} ds \Bigr)>0  \AND \tau := -g(t)\in[-1,0) ,
\end{align}
respectively, for $t\in[1,\infty)$ where $A\in(0,2)$ is a constant and $f$ is a solution to \eqref{e:feq1b}--\eqref{e:feq2b}. This compactified time $\tau$ with the function $g(t)$ is well compatible with the Fuchsian transform.

Next, we will provide an overview of the structure of this article, detailing the steps involved in answering question \ref{c:htt}.

\S\ref{s:2} serves to transform the Euler--Poisson system into a system consisting of a second-order quasilinear hyperbolic equation for the density contrast $\varrho$ (similar to the one in \cite[eq. $(1.1)$]{Liu2022b}), a transport equation for the rescaled speed $\nu$ \eqref{e:chv}, and an algebraic equation that expresses $s$. The nonlinearities in this system are more complicated than those in the aforementioned equation.

In \S\ref{s:2.1}, we use a near-comoving coordinate system $(t, X^k)$ to transform the Euler--Poisson system into a second-order quasilinear hyperbolic equation for the density contrast, analogous to the one in \cite[eq. $(1.1)$]{Liu2022b}. In the following, we will use formal notations to briefly indicate how to proceed with the transformation, using notation such as $\eqref{e:EP2}_{(t,X^k)}$ to represent equation \eqref{e:EP2} in terms of the coordinate system $(t,X^k)$. 
	\begin{equation*}
		 \left. 	
		 \begin{aligned}
			\tr	\partial_{X^k} \eqref{e:EP2}_{(t,X^i)} \Rightarrow & \text{the eq. of }  \del{t}\Theta , (\text{Lemma \ref{t:dtth}})\\
			  \text{the continuity eq. \eqref{e:EP1}}  \Rightarrow & \text{the expression  \eqref{e:th2} of } \Theta
		\end{aligned}
	\right\}
	\Rightarrow
\begin{aligned}
&\text{$2$nd order hyperbolic eq. analogous}  \\
&\text{to the one in \cite[eq. $(1.1)$]{Liu2022b}}\\
&\text{(see Lemma \ref{t:Dttrho})}.
\end{aligned}
	\end{equation*}
On the other hand,  we can compare \eqref{e:EP1} and \eqref{e:EP2b} to re-express $s$ in terms of $\varrho$ via an algebraic expression. This allows us to simplify the system by removing the equation for $s$ and the variable $s$ itself (see Lemma \ref{t:sepr}).

In \S\ref{s:SSd}, we employ \textit{spherical symmetries} to rewrite the second order hyperbolic equation of $\varrho$ and the transport equation of the rescaled speed $\nu$ (defined in \eqref{e:chv}) in terms of spherical coordinates. To obtain the Fuchsian system, we introduce a \textit{spatial log-coordinate} in \S\ref{s:logfml} and use the new coordinate $\zeta :=\ln R$ to represent the $(\varrho,\nu)$ system (see Lemma \ref{t:lgfml}). In addition, in \S \ref{s:lggr}, we need to verify that the rescaled gravity defined by \eqref{e:Psi0}, i.e., \begin{equation*}     \Psi(t,\zeta):=\frac{1}{f(t) e^{3\zeta}}   \int^{\zeta}_{-\infty} [\hrho(t,z) -f(t)] e^{3z} d z 
\end{equation*} 
is periodic in its variable $\zeta$ (see Lemma \ref{t:Psprd}). Finally, we use Proposition \ref{t:prdsl} to show that the solution $(\varrho,\nu)$ is periodic in $\zeta$.

We follow a similar methodology as \cite{Liu2022b} in our analysis of the \textit{first step} (\S\ref{s:prefuc}) in \S\ref{s:fuchian}. However, we have obtained a second-order nonlinear hyperbolic equation with more complicated $\mathcal{g}^{\zeta\zeta}$ and remainders $F$. Similar to \cite{Liu2022b}, we select Fuchsian fields of density contrast $\varrho$ and its derivatives $\del{\mu}\varrho$. By compactifying time and using the Fuchsian fields, we formulate a Fuchsian system for the density contrast fields (see Lemma \ref{t:pref2}). However, to fully complete the system, we need a singular formulation of the equation for the rescaled speed $\nu$. With the help of the compactified time $\tau$, we are able to rewrite the equation for the Fuchsian field $\nu$ in a singular form (see Lemma \ref{t:pref3}). Furthermore, using a variant of the continuity equation \eqref{e:keyid3a} (see Lemma \ref{t:vph2}), we simplify the equation by removing the spatial derivative term.

After deriving the singular equation of $\nu$, it initially appeared that the suspected Fuchsian system was complete. However, upon closer examination, there were singular terms such as $\Psi/\tau$ present in the remainders that proved difficult to remove. As a solution, we decided to treat $\Psi$ as a new Fuchsian field and establish a singular equation for it. In \S \ref{s:evo}, we were able to derive this singular equation \eqref{e:dtpsi6} for $\Psi$ by differentiating $e^{3\zeta}\Psi$ with respect to $t$ and using a form of the continuity equation \eqref{e:keyid3b} (see Lemma \ref{t:pseq}). However, this equation alone did not guarantee that the complete singular system constituted a Fuchsian system due to the failure of condition \ref{c:5} (see Appendix \ref{s:fuc}). To rectify this, we differentiated $\Psi$ with respect to $\zeta$ to obtain equation \eqref{e:dipsi} (i.e., $\del{\zeta}  \Psi =u-3 \Psi $). By incorporating this equation into the previous singular equation of $\Psi$ in a suitable manner, we were able to satisfy condition \ref{c:5}. The resulting equation is presented in Lemma \ref{t:vph2}.

Upon gathering all the singular equations, we find that the system is a Fuchsian system, which is verified in \S\ref{s:stp4}. However, these verifications are not trivial. First, estimates on the key quantities $\chi$ and $\xi$ (which we introduced and estimated in our previous article \cite[\S$2.5$]{Liu2022b}) are crucial, as they help distinguish the singular terms in $\tau$ from the regular terms in the right-hand side of the Fuchsian system \eqref{e:fucprt}. Nevertheless, these estimates are insufficient to distinguish the order of singular terms in this article, and further estimates are required. 
In our previous article \cite{Liu2022b}, we established that $\chi(t)=4B+\mathfrak{G}(t)$ and $\mathfrak{G}(t)$ satisfies $\lim_{t\rightarrow t_m}\mathfrak{G}(t)=0$. In this article, we need to show that $\mathfrak{G}(t) \lesssim (-\tau)^{\frac{1}{2}}$, where $\tau$ is the compactified time defined in \eqref{e:gdef}. The \textit{idea} to achieve this estimate is to first calculate $\del{t} \chi=\del{t}\mathfrak{G}$ using the definition of $\chi$ to obtain an evolution equation of $\chi$ or $\mathfrak{G}$ (see Lemma \ref{t:dtchi}). We then use the compactified time $\tau$ to re-express this equation as a singular equation of $\mathfrak{G}$ (see Lemma \ref{t:Geq2}). Using a Gronwall type inequality, we can estimate $\mathfrak{G}(t) \lesssim  (-\tau)^{\frac{1}{2}}$ (see Lemma \ref{t:Gest2}). 
In addition to establishing the previously proven result that $\lim_{t\rightarrow t_m} \xi=0 $ in \cite{Liu2022b}, we must also show that 	
\begin{equation*}
	\lim_{t\rightarrow t_m} \biggl(\frac{1}{g f^{\frac{1}{2}}}\biggr)= 0  . 
\end{equation*}
This conclusion is established in Proposition \ref{t:fginv2} and  Corollary \ref{s:gf1/2} if $A\in(0,2)$. With the help of these estimates and by restricting $\iota^3\in(0,1/5]$, we are able to verify Condition \ref{c:5} (see \S\ref{s:F4}).

After verifying the Fuchsian system in \S\ref{s:stp4}, we can apply the global existence and estimate theorem (Theorem \ref{t:fuc}) to the Fuchsian fields in \S\ref{s:stp5}. In order to prove the Main Theorem \ref{t:mainthm1} in \S\ref{s:stp7}, we must transform the smallness condition \eqref{e:mtdata} of the data into a smallness condition for the Fuchsian fields. This requires an estimate $\|\Psi(t)\|_{H^s(\Tbb)}  \leq C   \|u(t)\|_{H^{s-1}(\Tbb)}$, which can be proven using $\del{\zeta}  \Psi =u-3 \Psi$ (Lemma \ref{t:psest}). Finally, by transforming the estimates of the Fuchsian fields back to $\varrho$ and its derivatives and $v^i$, we can conclude the Main Theorem \ref{t:mainthm1}.

\subsection{Related works} 
The blowup phenomenon has been extensively researched in relation to different hyperbolic equations. Notably, \cite{Alinhac1995,Kichenassamy2021} provide a comprehensive overview of blowups in various hyperbolic systems. In addition, Speck \cite{Speck2020,Speck2017} has investigated stable ODE-type blowup outcomes for quasilinear wave equations with a Riccati-type derivative-quadratic semilinear term and a form of stable Tricomi-type degeneracy formation for a specific wave equation. Furthermore, the author of this article \cite{Liu2022b} has studied self-increasing blowup solutions of a group of second-order hyperbolic equations.

The fundamental technique employed in this article is the Cauchy problem for Fuchsian systems, initially proposed by Oliynyk \cite{Oliynyk2016a}. Subsequently, it has been developed and utilized in various works, such as \cite{Beyer2021,Beyer2020,Liu2018b,Liu2018,Liu2022a,Liu2022,Liu2018a,Liu2022b} for a range of systems, including the Einstein--scalar field, Einstein--Euler, and Einstein--Yang--Mills systems, etc.

The gravitational collapse and its associated mass accretion have been investigated in previous works, such as \cite{Guo2018,Guo2021}. In contrast to \cite{Guo2018}, where the focus is on blowups for compact isolated bodies, our work concentrates on blowups on $\Rbb^3$, enabling us to model the clustering of matter on a larger scale. Furthermore, our proof methodology is distinct, and the connections between the two approaches warrant further exploration.

\subsection{Notations}\label{s:AIN}
Unless stated otherwise, we will apply the following conventions of notations throughout this article without recalling their meanings in the following sections.
\subsubsection{Indices and vectors}\label{iandc}
Unless stated otherwise, our indexing convention will be as follows: we use lowercase Latin letters, e.g. $i, j,k$, for spatial indices that run from $1$ to $3$, and lowercase Greek letters, e.g. $\alpha, \beta, \gamma$, for spacetime indices
that run from $0$ to $3$.  We will follow the \textit{Einstein summation convention}, that is, repeated lower and upper indices are implicitly summed over. We use $x^{i}$ ($i=1, 2, 3$) to denote the standard coordinates on the $\mathbb{R}^3$ and $t = x^{0}$ a time coordinate on the interval $[1,\infty)$.  We use the \textit{bold fonds} to represent a vector field, for example, $\mathbf{v}$ to represent a velocity field and $\bx$ the displacement. In this article, we will use the Euclidean metric $\delta^{ij}$ and $\delta_{ij}$ to raise and lower the indices, for instance, $x_i:=\delta_{ij}x^j$,  $v_i:=\delta_{ij}v^j$ and $\partial^i:=\delta^{ij}\del{j}$.

\subsubsection{Lagrangian descriptions}\label{s:lag}
In this article, we denote $(t,x^i)$ the Eulerian coordinate and $(t,X^j)$ the comoving coordinate with the reference solutions \eqref{e:sl1}--\eqref{e:sl2} (i.e, regard the reference solutions as \textit{observers}) defined by
\begin{equation}\label{e:xvartrf}
X^k := a^{-1}(t)(1+f(t))^{\frac{1}{3}} x^k  \quad	\Leftrightarrow   \quad  x^k=x^k(t,X^j)= a(t)(1+f(t))^{-\frac{1}{3}} X^k
\end{equation}
where $a(t)=t^{\frac{2}{3}}$ is the scale factor of the Newtonian universe. In this case, we have
\begin{equation}\label{e:vrepr}
	v^k_r=	\frac{\partial x^k}{\partial t} =   \frac{2}{3t}  x^k
	-\frac{f_0}{3(1+f) }  x^k =   \frac{2}{3t}   a(1+f)^{-\frac{1}{3}} X^k
	-\frac{a f_0}{3(1+f)^{\frac{4}{3}} }  X^k .
\end{equation}
where we denote $f_0:=f^\prime$.
Under these two coordinates, a vector $\mathbf{v}$ can be represented, with respect to $(t,x^i)$ and $(t,X^j)$ respectively, by $(v^i(t,x^j))$ and the \textit{wavy underlined} notations $(\V^i(t,X^j))$ (i.e., $\mathbf{v}=v^i\frac{\partial}{\partial x^i}=\V^i \frac{\partial}{\partial X^i}$), and a scalar function $F$ be represented by $F(t,x^j)$ and $	\dwave{F}(t,X^j)$. We have the relations
\begin{equation}\label{e:undf}
	v^i(t,x^k(t,X^j))=\frac{\partial x^k}{\partial X^i} \V^i(t,X^k)=a  (1+f)^{-\frac{1}{3}}  \V^i(t,X^k) \AND 	\dwave{F}(t,X^j)=F(t,x^i(t,X^j)).
\end{equation}
For simplicity of the notations, we introduce $D_i:=\frac{\partial }{\partial X^i} $, then
\begin{equation}\label{e:direl2}
	\del{i}F(t,x^k(t,X^j)) =a^{-1} (1+f)^{\frac{1}{3}} D_i \dwave{F}(t,  X^j)  .
\end{equation}

This article involves two different time coordinates: the open time $t\in[t_0,\infty)$ and the compactified time $\tau=-g(t)\in [-1,0)$ given by \eqref{e:ttf}. For any scalar function $F(t,X^i)$, we always use the \textit{underlined} notation,
\begin{equation} \label{e:udl}
	\underline{F}(\tau,X^i):=  F(g^{-1}(-\tau), X^i)
\end{equation}
to denote the representation of $F$ in compactified time coordinate $\tau$ throughout this article.

\subsubsection{Function spaces, inner-products and matrix inequalities}\label{s:funsp}
Partial derivatives with respect to the  coordinates $(x^\mu)=(t,x^i)$ will be denoted by $\partial_\mu = \partial/\partial x^\mu$.
We also use Greek letters to denote multi-indices, e.g.
$\alpha = (\alpha_1,\alpha_2,\ldots,\alpha_n)\in \mathbb{Z}_{\geq 0}^n$, and employ the standard notation $D^\alpha = \partial_{1}^{\alpha_1} \partial_{2}^{\alpha_2}\cdots
\partial_{n}^{\alpha_n}$ for spatial partial derivatives.

Given a finite-dimensional vector space $V$, we let
$H^s(\mathbb{T}^n,V)$, $s\in \mathbb{Z}_{\geq 0}$,
denote the space of maps from $\mathbb{T}^n$ to $V$ with $s$ derivatives in $L^2(\Tbb^n)$. When the
vector space $V$ is clear from context, for example, $V=\Rbb^N$, we write $H^s(\mathbb{T}^n)$ instead of $H^s(\mathbb{T}^n,V)$.
Letting
\begin{equation*}
	\langle{u,v\rangle} = \int_{\mathbb{T}^n} (u(x),v(x))\, d^n x,
\end{equation*}
where $(\cdot,\cdot)$
is the Euclidean inner product on $\Rbb^N$ (i.e., $(\xi,\zeta)=\xi^T\zeta$ for any $\xi, \zeta\in \Rbb^N$), denote the standard $L^2$ inner product, the $H^s$ norm is defined by
\begin{equation*}
	\|u\|_{H^s}^2 = \sum_{0\leq |\alpha|\leq s} \langle D^\alpha u, D^\alpha u \rangle.
\end{equation*}

For matrices $A,B\in \mathbb{M}_{N\times N}$, we define
\begin{equation*}
	A\leq B \quad \Leftrightarrow \quad  (\zeta,A\zeta)\leq (\zeta,B\zeta),  \quad \forall \zeta\in \Rbb^N.
\end{equation*}
We also employ the notation
\begin{equation*}
	a\lesssim b \quad \Leftrightarrow \quad a\leq C b
\end{equation*}
where $C$ denotes a generic constant.


\section{Reference solutions-based formulations}\label{s:2}
\subsection{A second order equation of the density contrast}\label{s:2.1}
The \textit{objectives} of this section are twofold: Firstly, to rewrite the Euler--Poisson system in a \textit{Lagrangian coordinate} $(t,X^k)$ (as discussed in \S\ref{s:lag}). Secondly, to obtain a second order hyperbolic equation for the density contrast $\varrho$, similar to the one studied in \cite{Liu2022b}. However, the current case involves more complicated nonlinear terms $\mathcal{g}^{ij}$ and $F$, with more unknown variables. To estimate the perturbations of the reference solutions, we further decompose the variables by introducing the following definitions.
\begin{equation}\label{e:ckvph}
	\check{\V}^k (t,X^i) : =  \tilde{\V}^k(t,X^i)+\frac{  f_0}{3(1+f) }  X^k \AND  \dwave{\check{\phi}} (t,X^k) =
		\dwave{\tphi} (t,X^k)-
		a^2(1+f)^{-\frac{2}{3}} \frac{ \iota^3 f X^i X^j \delta_{ij} }{9t^{2}}
\end{equation}

Before proceeding, let us first point out the entropy $s$ can be expressed in terms of the density contrast $\varrho$.
\begin{lemma}\label{t:sepr}
Suppose $(\rho,v^i)$ solves \eqref{e:EP1} with data \eqref{e:data0}, $\varrho$ is defined by \eqref{e:perv},  $(\rrho,\rv^i,\mathring{s},\rphi)$ and $(\rho_r,v^i_r,s_r,\phi_r)$ are given by \eqref{e:exsol1}--\eqref{e:exsol2} and \eqref{e:sl1}--\eqref{e:sl2}, respectively. If $s$ solves \eqref{e:EP2b} ($\mathcal{S}$ is given by \eqref{e:S1}) with the data \eqref{e:data0b}, i.e., $s$ solves
\begin{equation}\label{e:seq2}
	\del{t} s+v^i\del{i}s=  -\Bigl(\frac{2}{3}+\omega\Bigr)\del{i}   \check{v}^i  +\frac{2  \check{v}^i  x_i}{|\bx|^2}
\end{equation}
with the data \eqref{e:data0b},  then
\begin{equation}\label{e:s2}
	s=\ln \Bigl(	t^{-\frac{4}{3}} \frac{(1+\varrho)^{\frac{2}{3}+\omega}}{(1+f)^{\omega}} \delta_{kl} x^k x^l \Bigr) .
\end{equation}

Moreover, according to the equation of state \eqref{e:eos}, the pressure becomes
\begin{equation}\label{e:eos2}
	p =K e^s \rho^{\frac{4}{3} } =K  t^{-\frac{4}{3}} \frac{(1+\varrho)^{\frac{2}{3}+\omega}}{(1+f)^{\omega}} \delta_{kl} x^k x^l  \rho^{\frac{4}{3} }  .
\end{equation}
\end{lemma}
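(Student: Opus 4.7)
\medskip

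\textbf{Proof proposal.} The strategy is to verify directly that the right-hand side of \eqref{e:s2} is a solution of the first-order transport equation \eqref{e:seq2} with the prescribed initial datum \eqref{e:data0b}, and then invoke uniqueness for the transport problem along the flow of $v^i$. The point is that the continuity equation \eqref{e:EP1} combined with $\rho = \mathring{\rho}(1+\varrho)$ and $\mathring{\rho}(t) = \iota^3/(6\pi t^2)$ yields the convenient material identity
\begin{equation*}
\frac{(\partial_t + v^i\partial_i)\varrho}{1+\varrho} \;=\; \frac{2}{t} - \partial_i v^i,
\end{equation*}
obtained by expanding $\partial_t\rho + \partial_i(\rho v^i)=0$ in terms of $\varrho$ and using $\partial_t\mathring{\rho} = -2\mathring{\rho}/t$. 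This is the one genuinely computational input; everything else is bookkeeping.

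Denote the candidate on the right of \eqref{e:s2} by $s_\star$. Writing $s_\star = -\tfrac{4}{3}\ln t + (\tfrac{2}{3}+\omega)\ln(1+\varrho) - \omega\ln(1+f) + \ln(\delta_{k\ell}x^k x^\ell)$ and applying $\partial_t + v^i\partial_i$ termwise, noting that $f$ depends only on $t$ and that $(\partial_t + v^i\partial_i)\ln|\bx|^2 = 2 v^i x_i/|\bx|^2$, I would substitute the material identity above to obtain
\begin{equation*}
(\partial_t + v^i\partial_i)s_\star \;=\; \frac{2\omega}{t} - \Bigl(\tfrac{2}{3}+\omega\Bigr)\partial_i v^i - \omega\,\frac{f_0}{1+f} + \frac{2 v^i x_i}{|\bx|^2}.
\end{equation*}
On the other hand, from the definition $\check v^i = v^i - \bigl(\tfrac{2}{3t} - \tfrac{f_0}{3(1+f)}\bigr)x^i$ one reads off $\partial_i \check v^i = \partial_i v^i - \tfrac{2}{t} + \tfrac{f_0}{1+f}$ and $2\check v^i x_i/|\bx|^2 = 2 v^i x_i/|\bx|^2 - \tfrac{4}{3t} + \tfrac{2 f_0}{3(1+f)}$. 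Plugging these into the right-hand side of \eqref{e:seq2} and collecting terms reproduces exactly the expression above, so $s_\star$ solves \eqref{e:seq2}.

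For the initial condition, at $t=1$ one has $\varrho|_{t=1}=\beta\mathcal{d}(|\bx|)$ (from $\rho|_{t=1}=\mathring\rho(1)(1+\beta\mathcal{d})$) and $f(1)=\beta$ by \eqref{e:feq2b}, so $s_\star|_{t=1}$ matches \eqref{e:data0b} on the nose. Since \eqref{e:seq2} is a linear transport equation for $s$ with smooth coefficients along the classical $C^1$ velocity field $v^i$, the solution is unique along its characteristics, which forces $s = s_\star$. Finally, \eqref{e:eos2} is an immediate substitution of \eqref{e:s2} into \eqref{e:eos}.

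The main conceptual obstacle is not the verification itself but recognizing the correct ansatz — i.e., that the two logarithmic contributions $(\tfrac{2}{3}+\omega)\ln(1+\varrho)$ and $-\omega\ln(1+f)$ are precisely what is needed to cancel the background terms $\tfrac{2}{t}$ and $\tfrac{f_0}{1+f}$ produced by $\partial_i\check v^i$ and $\check v^i x_i/|\bx|^2$. Once guessed, the verification is routine; the formula \eqref{e:s2} is what allows the entropy variable to be eliminated algebraically in the subsequent reduction of the Euler--Poisson system to a closed equation for $(\varrho,\nu)$.
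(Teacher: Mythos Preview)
Your proof is correct and follows essentially the same approach as the paper: both use the continuity equation to compute the material derivative of $\ln(1+\varrho)$, verify that the candidate satisfies the transport equation \eqref{e:seq2} with the correct initial data, and conclude by uniqueness along characteristics. The only cosmetic difference is that the paper first subtracts off the reference entropy $s_r$ and shows that $\check{s}-(\tfrac{2}{3}+\omega)\ln\bigl((1+\varrho)/(1+f)\bigr)$ satisfies a \emph{homogeneous} transport equation with zero data, whereas you verify directly that $s_\star$ solves the inhomogeneous equation; the computational content is identical.
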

\begin{proof}
	Firstly, let us recall  \eqref{e:exsol1}, \eqref{e:perv},  \eqref{e:sl1} and \eqref{e:relv1a}, and they imply
	\begin{equation}\label{e:v0}
		\tilde{v}^i=v^i-\rv^i=v_r^i-\rv^i+\check{v}^i = \check{v}^i -\frac{f_0}{3(1+f)} x^i \AND   \check{v}^i:=v^i-v_r^i  .
	\end{equation}
	Recalling the definition \eqref{e:perv} of density contrast $\varrho$, we first note the equation \eqref{e:EP1} and the solutions \eqref{e:exsol1}--\eqref{e:exsol2} imply \begin{equation}\label{e:eq5}
		\del{t}\varrho  +v^i \del{i}\varrho   =  -(1+\varrho ) \del{i} \tv^i
		\quad  \overset{\eqref{e:v0}}{\Leftrightarrow}  \quad
	 \del{t}\ln(1+\varrho ) + v^i \del{i} \ln(1+\varrho )  - \frac{f_0}{ 1+f } =   -  \del{i}\check{v}^i .
	\end{equation}
Noting
\begin{equation*}
 \del{t} s_r+v^i_r\del{i}s_r=  0, \quad 	\check{v}^i\del{i}s_r
	\overset{\eqref{e:sl2}}{=}  \frac{2\check{v}^i x_i}{|\bx|^2}  \AND \frac{f_0}{1+f}= \del{t} \ln (1+f) ,
\end{equation*}
and substituting \eqref{e:eq5} into \eqref{e:seq2} yield
\begin{equation*}
		\del{t}\check{s} + v^i\del{i}\check{s}+\check{v}^i\del{i}s_r =  \del{t}\ln(1+\varrho )^{\frac{2}{3}+\omega} + v^i \del{i} \ln(1+\varrho )^{\frac{2}{3}+\omega}  -  \del{t} \ln(1+f)^{\frac{2}{3}+\omega} +  \check{v}^i\del{i}s_r ,
\end{equation*}
which implies
\begin{equation}\label{e:dtseq1}
	\del{t}\biggl[\check{s} - \ln\Bigl(\frac{1+\varrho }{1+f}\Bigr)^{\frac{2}{3}+\omega} \biggr] + v^i\del{i} \biggl[\check{s} -  \ln\Bigl(\frac{1+\varrho }{1+f}\Bigr)^{\frac{2}{3}+\omega} \biggr] =   0  .
\end{equation}
By data \eqref{e:data0}, \eqref{e:data0b} and the reference solution \eqref{e:sl2}, we obtain
\begin{equation}\label{e:dtsdt1}
	\biggl(\check{s}- \ln\Bigl(\frac{1+\varrho }{1+f}\Bigr)^{\frac{2}{3}+\omega}\biggr)\bigg|_{t=1} =0 .
\end{equation}
The equation \eqref{e:dtseq1} and \eqref{e:dtsdt1} mean  along every velocity field $v^i$ generated characteristics, 	$\check{s}- \ln\bigl(\frac{1+\varrho }{1+f}\bigr)^{\frac{2}{3}+\omega}$ vanishes, which means solving \eqref{e:dtseq1} and \eqref{e:dtsdt1}, there is a unique solution
\begin{equation*}
	\check{s}- \ln\Bigl(\frac{1+\varrho }{1+f}\Bigr)^{\frac{2}{3}+\omega} =0 \quad \Rightarrow \quad 	\check{s} =\Bigl(\frac{2}{3}+\omega\Bigr) \ln \Bigl(\frac{1+\varrho}{1+f}\Bigr) \quad \overset{\eqref{e:sl2}}{\Rightarrow} \quad \eqref{e:s2} .
\end{equation*}
We then complete the proof.
\end{proof}

\begin{remark}
	This lemma enable us to use the algebraic expression \eqref{e:s2} to replace the transport PDE \eqref{e:EP2b} and use \eqref{e:eos2} to replace \eqref{e:eos} in the Euler--Poisson system \eqref{e:EP1}--\eqref{e:data0b}.
\end{remark}

\begin{lemma}\label{t:EPrep}
	Suppose $\Rho$, $\check{\V}^k$ and $\dwave{\tphi} $ (recalling the notations in \S\ref{s:lag}) are defined by \eqref{e:perv} and \eqref{e:ckvph}, then the Euler--Poisson system \eqref{e:EP1}--\eqref{e:EP3} can be represented in terms of the perturbation variables $(\Rho, \check{\V}^i, \dwave{\check{\phi}} )$ in the coordinates $(t,X^j)$ in \S\ref{s:lag},
	\begin{align}
&	\del{t}\Rho  + \bigl(1+\Rho  \bigr)\Bigl(D_i\check{\V}^i-\frac{f_0}{1+f}\Bigr)+ \check{\V}^i  D_{i} \Rho  =  0, \label{e:EP.a.1}\\
 &\del{t}  \check{\V}^k  +	\Bigl(\frac{4}{3t}     -\frac{2 f_0}{3(1+f)} \Bigr)  \check{\V}^k
	+     \dwave{\cv}^j D_{j}   \check{\V}^k     + (2+\omega)  K t^{-\frac{4}{3}} \delta_{ij} X^i X^j \rrho^{\frac{1}{3}}  (1+f)^{-\omega}	 (1 +\Rho)^{\omega} \delta^{kl}  D_l  \Rho  \notag   \\
 &\hspace{2.5cm} + \frac{ 2 K\iota (1+f)}{(6\pi  )^{\frac{1}{3}}t^2} X^k    \Bigl[   \Bigl(\frac{1+\Rho}{1+f}\Bigr)^{1+\omega}  -1 \Bigr]   +a^{-2}(1+f)^{\frac{2}{3}} \delta^{kl}  D_l \dwave{\check{\phi}} =  -\frac{\kappa f_0}{1+f} \check{\V}^k ,
 \label{e:EP.a.2}  \\
&	\delta^{kl}D_k D_l  \dwave{\check{\phi}}    =    \frac{2 \iota^3}{3t^{\frac{2}{3}}} (1+f)^{-\frac{2}{3}}  ( \dwave{\varrho} -f) .  \label{e:EP.a.3}
	\end{align}
\end{lemma}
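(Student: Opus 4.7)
My plan is to rewrite each of \eqref{e:EP1}--\eqref{e:EP3} in the comoving frame $(t,X^j)$ by a coordinated substitution using the decompositions $\rho = \rrho(1+\Rho)$, $v^i = v^i_r + \check{v}^i$, $\phi = \phi_r + \check{\phi}$, together with the algebraic formula \eqref{e:eos2} for $p$ supplied by Lemma \ref{t:sepr}. The three chain-rule ingredients I will use repeatedly are $\del{t}|_X = \del{t}|_x + v^k_r\del{k}$ (because $\partial x^k/\partial t|_X = v^k_r$ by \eqref{e:xvartrf}--\eqref{e:vrepr}), the spatial-derivative rule $\del{i}F = a^{-1}(1+f)^{1/3}D_i\dwave{F}$ from \eqref{e:direl2}, and the vector rule $\check{v}^i = a(1+f)^{-1/3}\check{\V}^i$, which together yield the contraction identities $\check{v}^i\del{i}F = \check{\V}^i D_i\dwave{F}$ and $\del{i}\check{v}^i = D_i\check{\V}^i$ (the latter since $a$ and $f$ are inert under $D_i$).

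For \eqref{e:EP.a.1}, I will substitute into \eqref{e:EP1}, convert the transport term via $\del{t}|_x\rho + v^j\del{j}\rho = \del{t}|_X\rho + \check{v}^j\del{j}\rho$, and compute $\del{i}v^i = D_i\V^i = \frac{2}{t} - \frac{f_0}{1+f} + D_i\check{\V}^i$ using $D_i\V^i_r = 3\bigl(\frac{2}{3t} - \frac{f_0}{3(1+f)}\bigr)$. Since $\dot{\rrho}/\rrho = -2/t$, the background contributions cancel after dividing by $\rrho$, leaving \eqref{e:EP.a.1}. For \eqref{e:EP.a.2}, I will subtract the reference-momentum identity $\del{t}|_x v^i_r + v^j_r\del{j}v^i_r + \frac{1}{\rho_r}\del{}^i p_r + \del{}^i\phi_r = 0$ (valid by the Claim preceding \S\ref{s:mthm}) from \eqref{e:EP2}. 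After passing $\del{t}|_x \to \del{t}|_X - v^k_r\del{k}$, an extra term $-\check{v}^j\del{j}v^i_r = -\bigl(\frac{2}{3t} - \frac{f_0}{3(1+f)}\bigr)\check{v}^i$ emerges; substituting $\check{v}^i = a(1+f)^{-1/3}\check{\V}^i$ and extracting the common prefactor $a(1+f)^{-1/3}$ doubles this into the damping $\bigl(\frac{4}{3t} - \frac{2f_0}{3(1+f)}\bigr)\check{\V}^k$ of \eqref{e:EP.a.2}. The term $\mathcal{D}^i$ and the gravitational term $-\del{}^i\check{\phi}$ convert routinely via the chain-rule ingredients above. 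Finally, \eqref{e:EP.a.3} follows by applying the Laplacian to $\check{\phi} = \phi - \phi_r$, using $\Delta\phi_r = 4\pi\rho_r$ (verified directly from \eqref{e:exsol2}), and converting $\delta^{ij}\del{i}\del{j} = a^{-2}(1+f)^{2/3}\delta^{ij}D_iD_j$.

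The main obstacle will be the pressure bookkeeping in \eqref{e:EP.a.2}. From \eqref{e:eos2} a direct differentiation gives
\begin{equation*}
\frac{\del{}^i p}{\rho} = (2+\omega)K t^{-\frac{4}{3}}\rrho^{\frac{1}{3}}(1+f)^{-\omega}(1+\varrho)^\omega \delta_{kl}x^k x^l\, \del{}^i\varrho + 2K t^{-\frac{4}{3}}\rrho^{\frac{1}{3}}(1+\varrho)^{1+\omega}(1+f)^{-\omega}\, x^i,
\end{equation*}
and setting $\varrho = f$ yields $\frac{1}{\rho_r}\del{}^i p_r = 2K t^{-\frac{4}{3}}\rrho^{\frac{1}{3}}(1+f)\, x^i$. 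The essential algebraic identity $(1+\varrho)^{1+\omega}(1+f)^{-\omega} - (1+f) = (1+f)\bigl[\bigl(\tfrac{1+\varrho}{1+f}\bigr)^{1+\omega} - 1\bigr]$ then produces the bracketed factor appearing in \eqref{e:EP.a.2}. Multiplying the difference $\frac{\del{}^i p}{\rho} - \frac{\del{}^i p_r}{\rho_r}$ by the LHS prefactor $a^{-1}(1+f)^{1/3}$ and substituting $\delta_{kl}x^k x^l = a^2(1+f)^{-2/3}\delta_{kl}X^k X^l$, $x^i = a(1+f)^{-1/3}X^i$, and $\rrho^{1/3} = \iota/[(6\pi)^{1/3}t^{2/3}]$, all powers of $a$ and $(1+f)$ telescope precisely into the two pressure contributions of \eqref{e:EP.a.2}, completing the derivation.
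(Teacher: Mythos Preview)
Your proposal is correct and follows the same chain-rule/coordinate-change strategy as the paper. The one organizational difference worth flagging is that you subtract the \emph{reference} momentum identity for $(\rho_r,v^i_r,p_r,\phi_r)$ directly, whereas the paper first subtracts the \emph{Newtonian background} $(\rrho,\rv^i,\rp,\rphi)$ to obtain an equation in $\tilde{v}^i$ and only afterwards converts to $\check{\V}^k$ via \eqref{e:ckvph}. Your route is marginally cleaner: since the reference-momentum identity already encodes the ODE \eqref{e:feq1b}, you never see $f''$ explicitly, whereas the paper has to invoke \eqref{e:feq1b} to eliminate the $f''$ produced when differentiating $\frac{f_0}{3(1+f)}X^k$ in time. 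One minor slip in your write-up: the extra convective term that survives the subtraction is $+\check{v}^j\del{j}v^i_r = +\bigl(\tfrac{2}{3t}-\tfrac{f_0}{3(1+f)}\bigr)\check{v}^i$, not minus; with the correct sign it combines with the contribution from $\del{t}|_X\bigl(a(1+f)^{-1/3}\bigr)$ to give the doubled coefficient $\bigl(\tfrac{4}{3t}-\tfrac{2f_0}{3(1+f)}\bigr)$ exactly as you claim.
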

\begin{proof}
The proof is from the tedious and direct calculations by \eqref{e:exsol1}--\eqref{e:exsol2}, \eqref{e:perv}, \eqref{e:undf} and \eqref{e:direl2}. We only remark some key steps but leave the details to the readers.
Firstly, by \eqref{e:eq5}, direct calculations imply
\begin{equation}\label{e:eq1}
	\del{t}\varrho  +\frac{2}{3t} x^i \del{i}\varrho  +   \del{i} \tv^i + \varrho  \del{i} \tv^i  + \tv^i   \del{i} \varrho =  0.
\end{equation}
Note, by the chain rule, we have
\begin{equation}\label{e:eq0}
		\del{t}\Rho(t,X^k)
	=  \del{t}\varrho(t,x^i(t,X^k))+ \frac{\partial x^i (t,X^j)}{\partial t}\del{i} \varrho (t,x^i(t,X^k))   .
\end{equation}
By \eqref{e:xvartrf} and \eqref{e:vrepr}, this \eqref{e:eq0} implies
\begin{equation*}
	\del{t}\Rho(t,X^k) +\frac{  f_0}{3(1+f)  }  X^i   D_i \Rho (t,X^k)
	=  \del{t}\varrho(t,x^i(t,X^k))+ \frac{2}{3t} x^i
	\del{i} \varrho (t,x^i(t,X^k))   .
\end{equation*}

If we define the expansion of   perturbations
\begin{equation}\label{e:th0}
	\Theta(t,X^k)=D_i\check{\V}^i (t,X^k),
\end{equation}
then by \eqref{e:ckvph} and since a scalar is independent of the coordinates, we reach the relations,
\begin{equation}\label{e:eq2}
	\del{i}\tilde{v}^i=D_{i} \tilde{\V}^i   = \Theta-\frac{f_0}{1+f}.
\end{equation}
Then, with the help of \eqref{e:eq1}--\eqref{e:eq2}, we conclude \eqref{e:EP.a.1}.

Let us turn to \eqref{e:EP.a.2}, straightforward calculations, by using \eqref{e:exsol1}--\eqref{e:exsol2}, \eqref{e:perv} yield
\begin{equation*}
	\del{t}  \tv^i + Hx^j   \del{j}  \tv^i +  H \tv^i+   \tv^j \del{j}  \tv^i +\frac{\del{}^i  (\tilde{p} / \rrho)}{1 +\varrho} - 2 K t^{-\frac{4}{3}} \rrho^{\frac{1}{3}} x^i \frac{   \varrho}{1+\varrho}  + \del{}^i \tphi =  -\frac{\kappa f_0}{1+f} \tv^i .
\end{equation*}
Then we note, by \eqref{e:eos2} and \eqref{e:exsol1}, we obtain
\begin{equation*}
	\delta^{ij} 	\del{j} \Bigl( \frac{\tilde{p}}{\rrho} \Bigr)
	=  2 K t^{-\frac{4}{3}}   x^i \rrho^{\frac{1}{3}} \bigl((1+f)^{-\omega}(1+\varrho)^{2+\omega}-1\bigr) +  (2+\omega) K t^{-\frac{4}{3}} \delta_{kl} x^k x^l \rrho^{\frac{1}{3}}  (1+f)^{-\omega}	(1+\varrho)^{1+\omega}\delta^{ij}  \del{j} \varrho  .
\end{equation*}
In addition, by \eqref{e:exsol1} and the identity of $\iota$ \eqref{e:ioeq}, we have
\begin{align*}
 & \frac{2 K t^{-\frac{4}{3}}   X^k \rrho^{\frac{1}{3}} \bigl((1+f)^{-\omega}(1+\varrho)^{2+\omega}-1\bigr) }{1+\Rho} - \frac{2 K t^{-\frac{4}{3}} \rrho^{\frac{1}{3} } X^k \Rho }{1+\Rho} +\frac{2f(\iota^3-1) X^k }{9t^2}  \notag  \\
 = &  \frac{2 K \iota X^k}{(6\pi)^{\frac{1}{3}} t^2} \bigl[(1+f)^{-\omega}(1+\Rho)^{1+\omega} -1 \bigr]  - \frac{2 K \iota X^k f }{(6\pi )^{\frac{1}{3}}t^2}
 =    \frac{2 K \iota X^k (1 +f) }{(6\pi)^{\frac{1}{3}} t^2} \Bigl[ \Bigl(\frac{1+\Rho}{1+f}\Bigr)^{1+\omega} - 1 \Bigr]  .
\end{align*}
Using above two identities and changing coordinates with the help of \eqref{e:xvartrf}--\eqref{e:direl2} and \eqref{e:ckvph}, lengthy but straightforward computations, with the help of \eqref{e:feq1b} due to the presence of $f^{\prime\prime}$,  conclude \eqref{e:EP.a.2}.

In order to derive \eqref{e:EP.a.3}, we only note $\delta^{kl}D_k D_l \bigl(   X^i X^j \delta_{ij}\bigr)=6$, changing coordinates and using \eqref{e:ckvph}, we can arrive at it. these then complete the proof.
\end{proof}

In the next, we intend to drive a second-order equation analogous to the one in \cite[eq. $(1.1)$]{Liu2022b}. In order to do so, by noticing that \eqref{e:EP.a.1} implies
\begin{equation}\label{e:th2}
	\Theta= \frac{f_0}{1+f} -\frac{\del{t}\Rho}{1+\Rho} -\frac{\check{\dwave{v}}^i D_i\Rho}{1+\Rho} ,
\end{equation}
we first try to find an equation of the expansion $\Theta$. Then \eqref{e:th2} helps achieve the goal.

Let us define the \textit{rotations} $\Omega_{ij}$ and the \textit{shears} $\Xi_{ij}$ of the perturbed velocity fields $\dwave{\cv}^k$ in the following.
\begin{align}\label{e:Th1}
	\Theta_{ij}=\frac{1}{2}(D_i \check{\V}_j+ D_j \check{\V}_i), \quad \Omega_{ij}=\frac{1}{2}(D_i \check{\V}_j-D_j \check{\V}_i)  \AND \Xi_{ij} :=\Theta_{ij} -\frac{1}{3}\Theta\delta_{ij}
\end{align}
where we denote $\check{\V}_j:= \delta_{jk}\dwave{\cv}^k$ (recall \S\ref{iandc}), and we see $\Theta:=\delta^{ij} \Theta_{ij}=D_i\cuv^i$ consisting with the \eqref{e:th0}. We note $\delta^{ij}\Xi_{ij}=0$ and $\delta^{ij}\Omega_{ij}=0$, and we can verify that $\Xi_{ij}$ is symmetric and trace-free, while $\Omega_{ij}$ is anti-symmetric.
Then there is a decomposition
\begin{equation}\label{e:dcpdv}
	D_i \check{\V}_j=\Xi_{ij}+\frac{1}{3}\Theta \delta_{ij}+\Omega_{ij} .
\end{equation}

\begin{lemma}\label{t:dtth}
	If $(\dwave{\varrho},\dwave{\check{v}}^i,\dwave{\check{\phi}})$ solves the system \eqref{e:EP.a.1}--\eqref{e:EP.a.3}, then  $(\Theta,\Xi_{il}, \Omega_{ij},\dwave{\varrho},\dwave{\check{v}}^i)$ satisfy an equation
	\begin{align}\label{e:Dtth0a}
		& \del{t}  \Theta   +	\Bigl(\frac{4}{3t}     -\frac{2 f_0}{3(1+f)} \Bigr) \Theta
		+  \delta^{ij} \Xi_{il} \delta^{lk}\Xi_{kj}  +\frac{1}{3}\Theta^2     +\delta^{ij} \Omega_{kj} \Omega_{il}  \delta^{lk}  +     \dwave{\cv}^k D_{k} \Theta    \notag  \\
		&    \hspace{1cm} +   \frac{ (2+\omega )  K \iota}{(6\pi)^{\frac{1}{3}} t^{2}} \delta_{rs} X^r X^s \Bigl(\frac{1+\Rho}{1+f}\Bigr)^\omega    \delta^{ij}  D_j  D_i \Rho +\frac{6K\iota(1+f)}{(6\pi)^{\frac{1}{3}} t^2}   \Bigl[\Bigl(\frac{1+\Rho}{1+f}\Bigr)^{1+\omega}-1\Bigr] \notag   \\
		&   \hspace{1cm}   + \omega(2+\omega) K t^{-\frac{4}{3}} \delta_{sr} X^s X^r \rrho^{\frac{1}{3}}  (1+f)^{-\omega}	  (1 +\Rho)^{\omega-1} \delta^{ij}  D_i \Rho  D_j \Rho +  \frac{2 \iota^3}{3t^{2}}   ( \dwave{\varrho} -f) \notag  \\
		&    \hspace{1cm}  +    \frac{2(3+2\omega)  K  \iota}{(6\pi)^{\frac{1}{3}}t^2} \Bigl(\frac{1+\Rho}{1+f}\Bigr)^\omega    \delta^{ij} X_i   D_j \Rho    =  -\frac{\kappa f_0}{1+f} \Theta .
\end{align}
\end{lemma}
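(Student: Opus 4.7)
The plan is to apply the trace of the spatial derivative $D_k$ to the momentum equation \eqref{e:EP.a.2} and rewrite the resulting expressions using the decomposition \eqref{e:dcpdv} of $D_i\check{\V}_j$ into shear, trace, and rotation, together with the Poisson equation \eqref{e:EP.a.3} in order to eliminate $\dwave{\check{\phi}}$. Since $D_k$ commutes with $\del{t}$ and with any coefficient that depends only on $t$, the left-hand side of \eqref{e:EP.a.2} immediately yields
\[
\del{t}\Theta+\Bigl(\tfrac{4}{3t}-\tfrac{2f_0}{3(1+f)}\Bigr)\Theta+D_k(\dwave{\cv}^{j}D_{j}\check{\V}^{k}).
\]
For the convective term I would apply the product rule and invoke \eqref{e:dcpdv}; the mixed trace against $D_j\check{\V}^k$ splits into three pieces because $\Xi_{ij}$ is symmetric and trace-free while $\Omega_{ij}$ is anti-symmetric, leaving exactly $\delta^{ij}\Xi_{il}\delta^{lk}\Xi_{kj}+\tfrac{1}{3}\Theta^{2}+\delta^{ij}\Omega_{kj}\Omega_{il}\delta^{lk}+\dwave{\cv}^{k}D_{k}\Theta$, which is the cluster of quadratic terms appearing in \eqref{e:Dtth0a}.

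Next I would differentiate the pressure-gradient term. Writing $D_k\bigl[\delta_{ij}X^{i}X^{j}\,(1+\Rho)^{\omega}\delta^{kl}D_{l}\Rho\bigr]$ and using $D_k(\delta_{ij}X^{i}X^{j})=2X_k$ produces three contributions: a Laplacian piece $\delta_{ij}X^{i}X^{j}(1+\Rho)^{\omega}\delta^{ij}D_{i}D_{j}\Rho$, a quadratic-gradient piece $\omega(1+\Rho)^{\omega-1}\delta_{ij}X^{i}X^{j}|D\Rho|^{2}$, and a radial piece $2(1+\Rho)^{\omega}X^{l}D_{l}\Rho$. Substituting $\rrho^{1/3}=\iota/((6\pi)^{1/3}t^{2/3})$ absorbs the factor $t^{-4/3}\rrho^{1/3}$ into $\iota/((6\pi)^{1/3}t^{2})$, matching the normalizations of the target expression. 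Differentiating the term $\tfrac{2K\iota(1+f)}{(6\pi)^{1/3}t^{2}}X^{k}\bigl[(\tfrac{1+\Rho}{1+f})^{1+\omega}-1\bigr]$ yields, via $D_kX^{k}=3$ and the chain rule, the inhomogeneous term $\tfrac{6K\iota(1+f)}{(6\pi)^{1/3}t^{2}}[(\tfrac{1+\Rho}{1+f})^{1+\omega}-1]$ together with a further radial contribution $\tfrac{2K\iota(1+\omega)}{(6\pi)^{1/3}t^{2}}(\tfrac{1+\Rho}{1+f})^{\omega}X^{l}D_{l}\Rho$.

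The step I expect to require the most bookkeeping is combining all the $X\cdot D\Rho$ contributions. Adding the $2(2+\omega)$-coefficient radial piece coming from the pressure derivative to the $2(1+\omega)$-coefficient radial piece from the gravitational source yields precisely the coefficient $2(3+2\omega)$ in front of $(\tfrac{1+\Rho}{1+f})^{\omega}\delta^{ij}X_{i}D_{j}\Rho$. Finally, for the gravitational term I would use \eqref{e:EP.a.3}: $D_k\bigl[a^{-2}(1+f)^{2/3}\delta^{kl}D_{l}\dwave{\check{\phi}}\bigr]=a^{-2}(1+f)^{2/3}\delta^{kl}D_{k}D_{l}\dwave{\check{\phi}}=\tfrac{2\iota^{3}}{3t^{2}}(\dwave{\varrho}-f)$, recalling $a(t)=t^{2/3}$; this is exactly the Poisson contribution appearing in \eqref{e:Dtth0a}. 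The damping term on the right transforms trivially to $-\tfrac{\kappa f_{0}}{1+f}\Theta$, and collecting everything produces the claimed evolution equation.
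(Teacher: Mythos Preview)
Your proposal is correct and follows essentially the same route as the paper: differentiate the momentum equation \eqref{e:EP.a.2}, contract, use the decomposition \eqref{e:dcpdv} to identify the quadratic velocity-gradient terms, and invoke \eqref{e:EP.a.3} to eliminate $\dwave{\check{\phi}}$. The only organizational difference is that the paper first derives the full evolution equation for $\Theta_{ij}$ (by symmetrizing $D_j$ of \eqref{e:EP.a.2} with its index-swapped copy, obtaining \eqref{e:Dttij}) and then traces with $\delta^{ij}$, whereas you take the divergence directly; since the antisymmetric part drops out under the trace anyway, your shortcut is legitimate and slightly more economical.
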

\begin{proof}
Differentiating \eqref{e:EP.a.2} with respect to $X^j$,
	\begin{align*}
		& \del{t}  D_j \check{\V}^k  +	\Bigl(\frac{4}{3t}     -\frac{2 f_0}{3(1+f)} \Bigr) D_j  \check{\V}^k
		   +     \dwave{\cv}^l D_j D_{l}   \check{\V}^k     +   \frac{ (2+\omega )  K \iota}{(6\pi)^{\frac{1}{3}} t^{2}} \delta_{is} X^i X^s  \Bigl(\frac{1+\Rho}{1+f}\Bigr)^\omega \delta^{kl} D_j  D_l  \Rho  \notag   \\
		&   \hspace{0.5cm}  +    \frac{2(2+\omega)  K  \iota}{(6\pi)^{\frac{1}{3}}t^2} \Bigl(\frac{1+\Rho}{1+f}\Bigr)^\omega \delta_{jr}  X^r \delta^{kl}  D_l  \Rho   + \omega(2+\omega) K t^{-\frac{4}{3}} \delta_{ir} X^i X^r \rrho^{\frac{1}{3}}  	 \frac{ (1 +\Rho)^{\omega-1}}{(1+f)^{\omega}} \delta^{kl}  D_l  \Rho  D_j \Rho  \notag  \\
		& \hspace{0.5cm} +   D_j  \dwave{\cv}^l D_{l}   \check{\V}^k  +\frac{2K\iota(1+f)}{(6\pi)^{\frac{1}{3}} t^2} \delta^k_j \Bigl[\Bigl(\frac{1+\Rho}{1+f}\Bigr)^{1+\omega}-1\Bigr]  + \frac{2(1+\omega)K\iota}{(6\pi)^{\frac{1}{3}}t^2} \Bigl(\frac{1+\Rho}{1+f}\Bigr)^\omega X^k  D_j \Rho  \notag  \\
		&  \hspace{0.5cm}  +a^{-2}(1+f)^{\frac{2}{3}} \delta^{kl}  D_j D_l \dwave{\check{\phi}} =  -\frac{\kappa f_0}{1+f} D_j \check{\V}^k .
\end{align*}
Then, denoting $ X_i=X^k \delta_{ik}$ and using $\delta_{ik}$ to lower the indices yield
	\begin{align}\label{e:dtdv1}
		& \del{t}  D_j \check{\V}_i  +	\Bigl(\frac{4}{3t}     -\frac{2 f_0}{3(1+f)} \Bigr) D_j  \check{\V}_i
		 +     \dwave{\cv}^l D_j D_{l}   \check{\V}_i      +   \frac{ (2+\omega )  K \iota}{(6\pi)^{\frac{1}{3}} t^{2}} \delta_{rs} X^r X^s  	\Bigl(\frac{1+\Rho}{1+f}\Bigr)^\omega  D_j  D_i \Rho  \notag   \\
		& \hspace{0.5cm}   +\frac{2K\iota(1+f)}{(6\pi)^{\frac{1}{3}} t^2} \delta_{ij} \Bigl[\Bigl(\frac{1+\Rho}{1+f}\Bigr)^{1+\omega}-1\Bigr]  + \omega(2+\omega) K t^{-\frac{4}{3}} \delta_{sr} X^s X^r \rrho^{\frac{1}{3}}  	  \frac{(1 +\Rho)^{\omega-1}}{(1+f)^{\omega}}   D_i \Rho  D_j \Rho  \notag  \\
		& \hspace{0.5cm} +    \frac{2(2+\omega)  K  \iota}{(6\pi)^{\frac{1}{3}}t^2} \Bigl(\frac{1+\Rho}{1+f}\Bigr)^\omega   X_j   D_i  \Rho  + \frac{2(1+\omega)K\iota}{(6\pi)^{\frac{1}{3}}t^2} \Bigl(\frac{1+\Rho}{1+f}\Bigr)^\omega   X_i  D_j \Rho    +a^{-2}(1+f)^{\frac{2}{3}}   D_j D_i \dwave{\check{\phi}} \notag  \\
		&\hspace{0.5cm}  +   D_j  \dwave{\cv}^l D_{l}   \check{\V}_i   =  -\frac{\kappa f_0}{1+f} D_j \check{\V}_i .
\end{align}
Switching the indices leads to
	\begin{align}\label{e:dtdv2}
		& \del{t}  D_i \check{\V}_j  +	\Bigl(\frac{4}{3t}     -\frac{2 f_0}{3(1+f)} \Bigr) D_i  \check{\V}_j
	 +     \dwave{\cv}^l D_i D_{l}   \check{\V}_j     +   \frac{ (2+\omega )  K \iota}{(6\pi)^{\frac{1}{3}} t^{2}} \delta_{rs} X^r X^s  \Bigl(\frac{1+\Rho}{1+f}\Bigr)^\omega    D_j  D_i \Rho  \notag   \\
		&  \hspace{0.5cm} +\frac{2K\iota(1+f)}{(6\pi)^{\frac{1}{3}} t^2} \delta_{ij} \Bigl[\Bigl(\frac{1+\Rho}{1+f}\Bigr)^{1+\omega}-1\Bigr]    + \omega(2+\omega) K t^{-\frac{4}{3}} \delta_{sr} X^s X^r \rrho^{\frac{1}{3}}  	\frac{  (1 +\Rho)^{\omega-1} }{(1+f)^{\omega}}  D_i \Rho  D_j \Rho  \notag  \\
		& \hspace{0.5cm} +    \frac{2(2+\omega)  K  \iota}{(6\pi)^{\frac{1}{3}}t^2}  \Bigl(\frac{1+\Rho}{1+f}\Bigr)^\omega   X_i   D_j \Rho   + \frac{2(1+\omega)K\iota}{(6\pi)^{\frac{1}{3}}t^2} \Bigl(\frac{1+\Rho}{1+f}\Bigr)^\omega   X_j  D_i \Rho    +a^{-2}(1+f)^{\frac{2}{3}}   D_j D_i \dwave{\check{\phi}} \notag  \\
		& \hspace{0.5cm} 	+   D_i  \dwave{\cv}^l D_{l}   \check{\V}_j   =  -\frac{\kappa f_0}{1+f} D_i \check{\V}_j .
\end{align}

Next, with the help of \eqref{e:dcpdv},  we firstly note
\begin{equation}\label{e:dvdv}
	 \delta^{kl} D_i  \dwave{\check{v}_l}    D_k  \dwave{\check{v}_j}
	 =  \delta^{kl} \Xi_{il} \Xi_{kj}+\frac{2}{3}\Theta  \Xi_{ij} +\Omega_{il}\Xi_{kj}\delta^{kl}   +\frac{1}{9}\Theta^2   \delta_{ji}+\frac{2}{3}\Theta  \Omega_{ij} + \Omega_{kj} \delta^{lk}\Xi_{il}  +\Omega_{kj} \Omega_{il}  \delta^{lk}.
\end{equation}
Using  $\Xi_{ij}=\Xi_{ji}$ and $\Omega_{ij}=-\Omega_{ji}$, then  $\Omega_{il}\Xi_{kj}\delta^{kl} +\Omega_{ki}\Xi_{jl}\delta^{kl}=\Omega_{il}\Xi_{kj}\delta^{kl} +\Omega_{li}\Xi_{jk}\delta^{kl} =0$ and $\Omega_{jl}\Xi_{ki}\delta^{kl} +\Omega_{kj}\Xi_{il}\delta^{kl}=\Omega_{jl}\Xi_{ki}\delta^{kl} +\Omega_{lj}\Xi_{ik}\delta^{kl} =0$. By using \eqref{e:dvdv} adding the one derived by switching the indices $i$ and $j$ of \eqref{e:dvdv}, we then arrive at
\begin{equation}\label{e:dvdv2}
	\frac{1}{2}(D_i \tv^k D_k \tv_j + D_j \tv^k  D_k \tv_i) =   \Xi_{il} \delta^{lk}\Xi_{kj}+\frac{2}{3}\Theta  \Xi_{ij}    +\frac{1}{9}\Theta^2   \delta_{ji}   +\Omega_{kj} \Omega_{il}  \delta^{lk}.
\end{equation}
Then by using \eqref{e:Th1} to calculate [\eqref{e:dtdv1}$+$\eqref{e:dtdv2}]$/2$, with the help of \eqref{e:dvdv2}, we obtain
\begin{align}\label{e:Dttij}
		& \del{t}  \Theta_{ij}  +	\Bigl(\frac{4}{3t}     -\frac{2 f_0}{3(1+f)} \Bigr) \Theta_{ij}
		+   \Xi_{il} \delta^{lk}\Xi_{kj}+\frac{2}{3}\Theta  \Xi_{ij}    +\frac{1}{9}\Theta^2   \delta_{ji}   +\Omega_{kj} \Omega_{il}  \delta^{lk}  +     \dwave{\cv}^k D_{k} \Theta_{ij}     \notag  \\
		&  \hspace{0.5cm}  +   \frac{ (2+\omega )  K \iota}{(6\pi)^{\frac{1}{3}} t^{2}} \delta_{rs} X^r X^s  (1+f)^{-\omega}	  (1 +\Rho)^{\omega}    D_j  D_i \Rho +\frac{2K\iota(1+f)}{(6\pi)^{\frac{1}{3}} t^2} \delta_{ij} \Bigl[\Bigl(\frac{1+\Rho}{1+f}\Bigr)^{1+\omega}-1\Bigr] \notag   \\
		&   \hspace{0.5cm}  + \omega(2+\omega) K t^{-\frac{4}{3}} \delta_{sr} X^s X^r \rrho^{\frac{1}{3}}  (1+f)^{-\omega}	  (1 +\Rho)^{\omega-1}  D_i \Rho  D_j \Rho +a^{-2}(1+f)^{\frac{2}{3}}   D_j D_i \dwave{\check{\phi}} \notag  \\
		& \hspace{0.5cm}   +    \frac{(3+2\omega)  K  \iota}{(6\pi)^{\frac{1}{3}}t^2} \Bigl(\frac{1+\Rho}{1+f}\Bigr)^\omega     X_i   D_j \Rho   + \frac{(3+2\omega)  K\iota}{(6\pi)^{\frac{1}{3}}t^2} \Bigl(\frac{1+\Rho}{1+f}\Bigr)^\omega   X_j  D_i \Rho     =  -\frac{\kappa f_0}{1+f} \Theta_{ij} .
\end{align}
Taking the trace of \eqref{e:Dttij} (i.e., using $\delta^{ij}$ acting on the both sides of \eqref{e:Dttij}), using \eqref{e:EP.a.3} and noting $\delta^{ij}\Xi_{ij}=0$, we arrive at \eqref{e:Dtth0a}.
It completes the proof.
\end{proof}

Now we are in the position to derive a second-order equation analogous to the one in \cite[eq. $(1.1)$]{Liu2022b}. We present it in the next lemma.

\begin{lemma}\label{t:Dttrho}
	If $(\dwave{\varrho},\dwave{\check{v}}^i,\dwave{\check{\phi}})$ solves the system \eqref{e:EP.a.1}--\eqref{e:EP.a.3}, then  $(\dwave{\varrho},\Xi_{il}, \Omega_{ij},\dwave{\check{\phi}} ,\dwave{\check{v}}^i)$ satisfy an equation
	\begin{align}\label{e:Dtth0}
		&      \partial_t^2\Rho    +\frac{4}{3t} \del{t}\Rho  -   \frac{2  }{3t^{2}}   \dwave{\varrho}(1+\Rho)  - \frac{4(\del{t}\Rho)^2}{3(1+\Rho) }  -   \frac{ (\omega+1)   (2+\omega)  K \iota}{(6\pi)^{\frac{1}{3}}t^2} \delta_{kj} X^k X^j  	\frac{ (1 +\Rho)^{\omega} }{(1+f)^{\omega}} \delta^{il}  D_l  \Rho   D_i\Rho   \notag  \\
		& \hspace{0.5cm}  +\frac{2 f_0}{3(1+f)} \check{\dwave{v}}^i D_i\Rho   - \frac{ 2 K\iota (1+f)}{(6\pi  )^{\frac{1}{3}}t^2}  \Bigl[   \Bigl(\frac{1+\Rho}{1+f}\Bigr)^{1+\omega}  -1 \Bigr]    X^i  D_i\Rho  - a^{-2}(1+f)^{\frac{2}{3} }  \delta^{il}  D_l \uwave{\check{\phi}} D_i\Rho    \notag  \\
		&   \hspace{0.5cm}  -\frac{8 \dwave{\cv}^i D_i\Rho\del{t}\Rho }{3(1+\Rho) }
		- (1+\Rho) \delta^{ij} \Xi_{il} \delta^{lk}\Xi_{kj}      - \frac{4(\check{\dwave{v}}^i D_i\Rho)^2}{3(1+\Rho) }    -(1+\Rho)\delta^{ij} \Omega_{kj} \Omega_{il}  \delta^{lk}     + \check{\dwave{v}}^i \dwave{\check{v}}^k D_k D_i\Rho    \notag  \\
		&  \hspace{0.5cm} -   \frac{ (2+\omega )  K \iota}{(6\pi)^{\frac{1}{3}} t^{2}} \delta_{rs} X^r X^s   \frac{(1+\Rho)^{\omega+1} }{(1+f)^\omega }  \delta^{ij}  D_j  D_i \Rho -\frac{6K\iota}{(6\pi)^{\frac{1}{3}} t^2}   \Bigl[  \frac{(1+\Rho)^{\omega}}{(1+f)^{\omega}} - 1 \Bigr](1+\Rho)^2   + 2\check{\dwave{v}}^i \del{t} D_i\Rho   \notag   \\
		&    \hspace{0.5cm}    - \frac{2 (3+2\omega)  K  \iota  }{(6\pi)^{\frac{1}{3}}t^2}   \frac{(1+\Rho)^{\omega+1 }    }{(1+f)^{\omega }    }  X^i D_i \Rho    =    \frac{\kappa f_0}{1+f}\Bigl(\frac{1+\Rho}{1+f}f_0 -\del{t}\Rho\Bigr)  .
\end{align}
\end{lemma}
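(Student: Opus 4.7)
The plan is to eliminate the expansion $\Theta$ from the evolution equation \eqref{e:Dtth0a} (derived in Lemma \ref{t:dtth}) by means of the algebraic identity \eqref{e:th2}, which itself comes from the continuity equation \eqref{e:EP.a.1}. Since \eqref{e:th2} expresses $\Theta$ linearly in $\partial_t \Rho$, $\check{\dwave{v}}^i D_i \Rho$, and $f_0/(1+f)$, differentiating in $t$ generates a $\partial_t^2 \Rho$ term which becomes the principal part of the desired second-order equation for $\Rho$.

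\textbf{Key steps.} First, compute $\partial_t \Theta$ by differentiating \eqref{e:th2} in $t$, replacing $\partial_t \check{\dwave{v}}^i$ wherever it appears with the right-hand side of the momentum equation \eqref{e:EP.a.2}. Second, compute $\check{\dwave{v}}^k D_k \Theta$ by applying $D_k$ to \eqref{e:th2} and contracting with $\check{\dwave{v}}^k$; this produces the cross terms $\check{\dwave{v}}^k \check{\dwave{v}}^i D_k D_i \Rho$ and $\check{\dwave{v}}^k D_k \partial_t \Rho$ that feature in \eqref{e:Dtth0}. Third, substitute these expressions together with \eqref{e:th2} into \eqref{e:Dtth0a}, multiply the resulting equation by $(1+\Rho)$ to clear the denominator on the principal term, and use the Poisson equation \eqref{e:EP.a.3} to eliminate the $\delta^{kl} D_k D_l \dwave{\check{\phi}}$ that appears from the substitution; this yields the gravity source $-\tfrac{2}{3t^{2}}\dwave{\varrho}(1+\Rho)$. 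Finally, invoke the ODE \eqref{e:feq1b} for $f$ (and, where useful, the identity \eqref{e:ioeq}) to cancel the purely $f$-dependent contributions arising from $\partial_t\bigl(f_0/(1+f)\bigr)$ and the linear-in-$f$ part of the Poisson source, leaving the nonlinear coefficients displayed in \eqref{e:Dtth0}.

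\textbf{Main obstacle.} The principal difficulty is the bookkeeping of a large number of quadratic terms in $\partial_t \Rho$, $D_i \Rho$, and mixed products with $\check{\dwave{v}}^i$: they must assemble exactly into the $-\tfrac{4}{3}(\partial_t \Rho)^2/(1+\Rho)$, $-\tfrac{8}{3}\check{\dwave{v}}^i D_i \Rho\,\partial_t \Rho/(1+\Rho)$, and $-\tfrac{4}{3}(\check{\dwave{v}}^i D_i \Rho)^2/(1+\Rho)$ appearing in \eqref{e:Dtth0}. These coefficients are the signature of the $\tfrac{1}{3}\Theta^2$ term in \eqref{e:Dtth0a} combining with the drift term $\bigl(\tfrac{4}{3t}-\tfrac{2f_0}{3(1+f)}\bigr)\Theta$ once \eqref{e:th2} is substituted and the factor $(1+\Rho)$ is cleared. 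In the same step the shear and rotation quadratics survive unchanged up to the factor $(1+\Rho)$, and the pressure-like terms carrying factors $(1+\Rho)^{\omega+1}/(1+f)^\omega$ and $(1+\Rho)^{\omega}/(1+f)^\omega$ must be tracked carefully to match those in \eqref{e:Dtth0}; the right-hand side of \eqref{e:Dtth0} is then produced by the damping source $-\kappa f_0/(1+f)\,\Theta$ in \eqref{e:Dtth0a} with $\Theta$ replaced by \eqref{e:th2}.
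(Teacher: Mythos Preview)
Your strategy is correct and matches the paper's proof: substitute \eqref{e:th2} into \eqref{e:Dtth0a} by computing $\partial_t\Theta$ (using \eqref{e:EP.a.2} and \eqref{e:feq1b}), $\check{\dwave{v}}^kD_k\Theta$, and $\tfrac{1}{3}\Theta^2$, then multiply by $-(1+\Rho)$ and invoke \eqref{e:ioeq}. One small correction: no Laplacian $\delta^{kl}D_kD_l\dwave{\check{\phi}}$ appears at this stage---the Poisson equation was already used in Lemma~\ref{t:dtth} to produce the $\tfrac{2\iota^3}{3t^2}(\Rho-f)$ term in \eqref{e:Dtth0a}; what survives in \eqref{e:Dtth0} is the first-derivative term $\delta^{il}D_l\dwave{\check{\phi}}\,D_i\Rho$ coming from the momentum-equation substitution for $\partial_t\check{\dwave{v}}^i$, and the gravity source $-\tfrac{2}{3t^2}\Rho(1+\Rho)$ arises from applying \eqref{e:ioeq} to that $\iota^3$ term.
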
	
\begin{proof}
	This result again can be proven by straightforward calculations by substituting \eqref{e:th2} into \eqref{e:Dtth0a} (see Lemma \ref{t:dtth}).
Firstly, by \eqref{e:th2}, direct calculations lead to
	\begin{gather}\label{e:thsq}
			\frac{1}{3} \Theta^2
			=   \frac{f_0^2}{3(1+f)^2} + \frac{(\del{t}\Rho)^2}{3(1+\Rho)^2} + \frac{(\check{\dwave{v}}^i D_i\Rho)^2}{3(1+\Rho)^2}- \frac{2f_0}{3(1+f)} \frac{\del{t}\Rho}{1+\Rho}-\frac{2 f_0}{3(1+f)}\frac{\check{\dwave{v}}^i D_i\Rho}{1+\Rho}+\frac{2 \dwave{\cv}^i D_i\Rho\del{t}\Rho }{3(1+\Rho)^2}  ,   \\
			\intertext{and}
	\dwave{\check{v}}^k D_k \Theta
	= 	  -\frac{\dwave{\check{v}}^k D_k \del{t}\Rho}{1+\Rho}+\frac{\dwave{\check{v}}^k D_k\Rho \del{t}\Rho}{(1+\Rho)^2} -\frac{\dwave{\check{v}}^k D_k\check{\dwave{v}}^i D_i\Rho}{1+\Rho}  -\frac{\check{\dwave{v}}^i \dwave{\check{v}}^k D_k D_i\Rho}{1+\Rho} + \frac{\dwave{\check{v}}^k  \check{\dwave{v}}^i D_k\Rho  D_i\Rho }{(1+\Rho)^2} .   \label{e:vdtth}
\end{gather}

Next, let us calculate $D_t \Theta$. By using \eqref{e:feq1b} to replace $f^{\prime\prime}$, we obtain
\begin{align*}
 &	 \del{t} \Bigl( \frac{f_0}{1+f}\Bigr)=  \frac{f^{\prime\prime}}{1+f}-\frac{f_0^2}{(1+f)^2}
	 =   \frac{1}{1+f}\Bigl(-\frac{4}{3t} f_0+\frac{2}{3 t^2} f (1+f ) + \frac{4}{3}  \frac{ f_0^2}{(1+f )}
	 \Bigr)-\frac{f_0^2}{(1+f)^2}  \notag  \\
& \hspace{3cm} 	 =   -\frac{4}{3t}  \frac{f_0 }{1+f} +\frac{2}{3 t^2} f  + \frac{1}{3}  \frac{ f_0^2}{(1+f )^2}  .
\end{align*}
By using \eqref{e:th2} to replace $\Theta$ and substituting $\del{t}  \check{\V}^k $ by the equation \eqref{e:EP.a.2}, we arrive at
\begin{align}\label{e:Dtth5}
		\del{t} \Theta
		= &  -\frac{4}{3t}  \frac{f_0 }{1+f} +\frac{2}{3 t^2} f  + \frac{1}{3}  \frac{ f_0^2}{(1+f )^2}   -\frac{\partial_t^2\Rho}{1+\Rho} + \frac{(\del{t}\Rho)^2}{(1+\Rho)^2} +   \Bigl(\frac{4}{3t}   -  \frac{2 f_0}{3(1+f)} +\frac{\kappa f_0}{1+f} \Bigr)   \frac{  \check{\V}^i D_i\Rho}{1+\Rho} \notag  \\
		&  +  \frac{ \dwave{\check{v}}^k D_k  \check{\V}^i D_i\Rho}{1+\Rho}    +    (2+\omega)  K t^{-\frac{4}{3}} \delta_{kj} X^k X^j \rrho^{\frac{1}{3}}  (1+f)^{-\omega}	 (1 +\Rho)^{\omega-1} \delta^{il}  D_l  \Rho   D_i\Rho  + \frac{\check{\dwave{v}}^i D_i\Rho \del{t}  \Rho}{(1+\Rho)^2}   \notag  \\
		&  + \frac{ 2 K\iota (1+f)}{(6\pi  )^{\frac{1}{3}}t^2}  \Bigl[   \Bigl(\frac{1+\Rho}{1+f}\Bigr)^{1+\omega}  -1 \Bigr] \frac{   X^i  D_i\Rho}{1+\Rho}    + a^{-2}(1+f)^{\frac{2}{3} }\frac{ \delta^{il}  D_l \dwave{\check{\phi}} D_i\Rho}{1+\Rho}  -\frac{\check{\dwave{v}}^i \del{t} D_i\Rho}{1+\Rho}    .
\end{align}

Then using \eqref{e:th2}, \eqref{e:thsq}--\eqref{e:vdtth} and \eqref{e:Dtth5} to replace the corresponding terms in \eqref{e:Dtth0a}, noting the cancellations of terms $ H \tv^i D_i  \varrho/(1+ \varrho)   $, $ a^{-1}    \tv^k   D_k\tv^i  D_i  \varrho /(1+ \varrho) $ and $ K a^{-1}  D^i  \varrho D_i \varrho /(1 +\varrho)^2 $, with straightforward computations, we arrive at
	\begin{align}\label{e:Dtth0.a}
		&    -\frac{\partial_t^2\Rho}{1+\Rho} + \frac{4(\del{t}\Rho)^2}{3(1+\Rho)^2}-\frac{2 f_0}{3(1+f)}\frac{\check{\dwave{v}}^i D_i\Rho}{1+\Rho}    +   \frac{ (\omega+1)   (2+\omega)  K \iota}{(6\pi)^{\frac{1}{3}}t^2} \delta_{kj} X^k X^j  	\frac{ (1 +\Rho)^{\omega-1} }{(1+f)^{\omega}} \delta^{il}  D_l  \Rho   D_i\Rho   \notag  \\
		&   \hspace{0.5cm}    + \frac{ 2 K\iota (1+f)}{(6\pi  )^{\frac{1}{3}}t^2}  \Bigl[   \Bigl(\frac{1+\Rho}{1+f}\Bigr)^{1+\omega}  -1 \Bigr] \frac{   X^i  D_i\Rho}{1+\Rho} + a^{-2}(1+f)^{\frac{2}{3} }\frac{ \delta^{il}  D_l \dwave{\check{\phi}} D_i\Rho}{1+\Rho}   -\frac{2\check{\dwave{v}}^i \del{t} D_i\Rho}{1+\Rho}  \notag  \\
		&   \hspace{0.5cm}    +\frac{8 \dwave{\cv}^i D_i\Rho\del{t}\Rho }{3(1+\Rho)^2}        -\frac{4}{3t} \frac{\del{t}\Rho}{1+\Rho}
		+  \delta^{ij} \Xi_{il} \delta^{lk}\Xi_{kj}   + \frac{4(\check{\dwave{v}}^i D_i\Rho)^2}{3(1+\Rho)^2}  +\delta^{ij} \Omega_{kj} \Omega_{il}  \delta^{lk}     -\frac{\check{\dwave{v}}^i \dwave{\check{v}}^k D_k D_i\Rho}{1+\Rho}   \notag  \\
		& \hspace{0.5cm}  +   \frac{ (2+\omega )  K \iota}{(6\pi)^{\frac{1}{3}} t^{2}} \delta_{rs} X^r X^s  \Bigl(\frac{1+\Rho}{1+f}\Bigr)^\omega   \delta^{ij}  D_j  D_i \Rho +\frac{6K\iota}{(6\pi)^{\frac{1}{3}} t^2}   \Bigl[(1+f)\Bigl(\frac{1+\Rho}{1+f}\Bigr)^{1+\omega}-1  \Bigr] \notag   \\
		&   \hspace{0.5cm}  +   \frac{2 \iota^3 }{3t^{2}}   \dwave{\varrho}      + \frac{2 (3+2\omega) K  \iota  }{(6\pi)^{\frac{1}{3}}t^2} \Bigl(\frac{1+\Rho}{1+f}\Bigr)^{\omega }    X^i D_i \Rho    =  -\frac{\kappa f_0}{1+f}\Bigl(\frac{f_0}{1+f}-\frac{\del{t}\Rho}{1+\Rho}\Bigr) .
\end{align}
In the end, by using \eqref{e:ioeq}, we have an identity
\begin{equation*}
	 \frac{2 \iota^3}{3t^{2}}   \dwave{\varrho}   =    \frac{2 (\iota^3-1)}{3t^{2}}   \dwave{\varrho}   +   \frac{2  }{3t^{2}}   \dwave{\varrho}   =-\frac{6K\iota}{(6\pi  )^{\frac{1}{3}}t^{2}}   \dwave{\varrho}   +   \frac{2  }{3t^{2}}   \dwave{\varrho} .
\end{equation*}
 With the help of this identity into \eqref{e:Dtth0.a}, we conclude \eqref{e:Dtth0}.
\end{proof}

\subsection{Spherical symmetric formulations}\label{s:SSd}
If we confine ourselves to \textit{spherically symmetric solutions}, we can simplify the equation \eqref{e:Dtth0} and the Euler--Poisson system \eqref{e:EP.a.1}--\eqref{e:EP.a.3}. In this section, we rewrite the Euler--Poisson system \eqref{e:EP.a.1}--\eqref{e:EP.a.3} and \eqref{e:Dtth0} under the assumption of spherical symmetry.

If we consider \textit{spherically symmetric solutions}, the velocity field can be written in terms of the rescaled speed $\nu$ as follows:
\begin{equation}\label{e:chv}
	\dwave{\check{v}}^i(t,X^k)=Z(t) \nu(t,R) X^i \quad \text{i.e.,} \; \nu(t,R):=Z^{-1}(t)\dwave{\check{v}}^i(t,X^k)X_i  
\end{equation}
where we denote $Z(t):=\frac{f_0}{3(1+f)} $ and  $R=|\mathbf{X}|=\sqrt{\delta_{ij}X^iX^j}$.

We remark in the following, we slightly \textit{abuse the notations} of functions $\nu$. We use $\nu$ to represent $\nu(t,R)$ and $\nu(t,X^k)$ interchangeably to simplify the notations and it will be clear from context which one we are using.  This convention also applies to for $\Theta$ and $\del{R}\nu$.
\begin{lemma}\label{t:dtvkid}
	For the spherical symmetric solutions, $\Rho$ and $\nu$ (defined by \eqref{e:chv}) satisfy equations,
\begin{align}\label{e:dtnveq}
	&  \del{t}   \nu     + \Bigl(\frac{2}{3t^2} \frac{(1+f) f }{f_0}   -  \frac{ f_0}{3 (1+f ) } \Bigr) \nu     	
	+    \frac{f_0}{3(1+f)} \nu X^j D_{j}  \nu     +   \frac{f_0}{3 (1+f)} \nu^2     \notag \\
	&\hspace{1cm}  +  \frac{3 (2+\omega)  K   \iota}{(6\pi)^{\frac{1}{3}}t^2}\frac{(1+f)^{1-\omega}  (1 +\Rho)^{\omega} }{f_0}  X_k \delta^{kl}  D_l  \Rho  +  \frac{ 6 K\iota (1+f)^2}{(6\pi  )^{\frac{1}{3}}t^2 f_0}    \Bigl[   \Bigl(\frac{1+\Rho}{1+f}\Bigr)^{1+\omega}  -1 \Bigr]   \notag \\
	& \hspace{1cm}+\frac{3a^{-2}(1+f)^{\frac{5}{3}}   }{f_0} \frac{X_k}{R^2} \delta^{kl}  D_l \dwave{\check{\phi}} =        -\frac{\kappa f_0}{1+f}  \nu   ,
\end{align}
and
\begin{align}\label{e:Dtth2.a}
	&      \partial_t^2\Rho   +  \frac{2 f_0}{3(1+f)} \nu  X^i \del{t} D_i\Rho   +\frac{4}{3t} \del{t}\Rho  -   \frac{2  }{3t^{2}}   \dwave{\varrho}(1+\Rho)  - \frac{4(\del{t}\Rho)^2}{3(1+\Rho) } +\frac{2 f_0^2}{9(1+f)^2}  \nu X^i D_i\Rho      \notag  \\
	& \hspace{0.5cm} -   \frac{ (\omega+1)   (2+\omega)  K \iota}{(6\pi)^{\frac{1}{3}}t^2} \delta_{kj} X^k X^j  	\frac{ (1 +\Rho)^{\omega} }{(1+f)^{\omega}} \delta^{il}  D_l  \Rho   D_i\Rho        + \frac{f_0^2}{9(1+f)^2} \nu^2 X^i   X^k D_k D_i\Rho  \notag  \\
	&  \hspace{0.5cm}  - \frac{ 2 K\iota (1+f)}{(6\pi  )^{\frac{1}{3}}t^2}  \Bigl[   \Bigl(\frac{1+\Rho}{1+f}\Bigr)^{1+\omega}  -1 \Bigr]    X^i  D_i\Rho  - a^{-2}(1+f)^{\frac{2}{3} }  \delta^{il}  D_l \uwave{\check{\phi}} D_i\Rho      \notag  \\
	&  \hspace{0.5cm}  -\frac{8  f_0  \nu  X^i D_i\Rho\del{t}\Rho }{9(1+f)(1+\Rho) }
	- \frac{2}{3} (1+\Rho)   \Bigl(\Theta-\frac{f_0}{(1+f)} \nu\Bigr)^2  - \frac{  4f_0^2 \nu^2 ( X^i D_i\Rho)^2}{27 (1+f)^2 (1+\Rho) }     \notag  \\
	& \hspace{0.5cm}  -   \frac{ (2+\omega )  K \iota}{(6\pi)^{\frac{1}{3}} t^{2}} \delta_{rs} X^r X^s   \frac{(1+\Rho)^{\omega+1} }{(1+f)^\omega }  \delta^{ij}  D_j  D_i \Rho -\frac{6K\iota}{(6\pi)^{\frac{1}{3}} t^2}   \Bigl[  \frac{(1+\Rho)^{\omega}}{(1+f)^{\omega}} - 1 \Bigr](1+\Rho)^2 \notag   \\
	&   \hspace{0.5cm}   - \frac{2 (3+2\omega)  K  \iota  }{(6\pi)^{\frac{1}{3}}t^2}   \frac{(1+\Rho)^{\omega+1 }    }{(1+f)^{\omega }    }  X^i D_i \Rho    =    \frac{\kappa f_0}{1+f}\Bigl(\frac{1+\Rho}{1+f}f_0 -\del{t}\Rho\Bigr)   .
\end{align}
Moreover, there is a crucial identity (an alternative form of the continuity equation),
\begin{align}\label{e:Th6}
	\Theta = \frac{f_0}{1+f}  \nu  + \frac{f_0}{3(1+f)}  X^i D_i \nu = \frac{f_0}{1+f} -\frac{\del{t}\Rho}{1+\Rho} - \frac{f_0}{3(1+f)} \frac{  \nu X^i D_i \Rho}{1+\Rho}  .
\end{align}
\end{lemma}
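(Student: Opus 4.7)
My plan is to reduce each of \eqref{e:dtnveq}, \eqref{e:Dtth2.a} and \eqref{e:Th6} to direct computations by substituting the radial ansatz $\dwave{\check{v}}^k = Z(t)\nu(t,R) X^k$ from \eqref{e:chv}, where $Z(t):=f_0/(3(1+f))$, into Lemmas \ref{t:EPrep}, \ref{t:dtth} and \ref{t:Dttrho}. I would begin with the auxiliary identity \eqref{e:Th6}, which underlies the rest: since $D_i X^i = 3$ and $D_i\nu = (\partial_R\nu)X_i/R$ under spherical symmetry, the definition $\Theta = D_i\dwave{\check{v}}^i$ yields immediately
\begin{equation*}
\Theta = 3 Z(t)\nu + Z(t) X^i D_i\nu = \frac{f_0}{1+f}\nu + \frac{f_0}{3(1+f)} X^i D_i\nu,
\end{equation*}
which is the first equality in \eqref{e:Th6}. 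The second equality is just \eqref{e:th2} after noting that $\dwave{\check{v}}^i D_i\Rho = Z(t)\nu X^i D_i\Rho$ under the ansatz.

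To obtain \eqref{e:dtnveq} I would substitute $\dwave{\check{v}}^k = Z\nu X^k$ into the Euler equation \eqref{e:EP.a.2}. Because $\Rho$ and $\dwave{\check{\phi}}$ are spherically symmetric, every term on the left side is proportional to $X^k$ (for instance $\delta^{kl}D_l\Rho = (\partial_R\Rho/R) X^k$ and likewise for $\dwave{\check{\phi}}$), so I can cancel the common $X^k$ factor and then divide through by $Z$ to obtain a scalar equation for $\nu$. Matching the pressure, $\dwave{\check{\phi}}$-gravity, $((1+\Rho)/(1+f))^{1+\omega}-1$, convective and damping coefficients in \eqref{e:dtnveq} is then direct algebra, using $\rrho^{1/3} = \iota/((6\pi)^{1/3} t^{2/3})$ from \eqref{e:exsol1}. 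The only subtle coefficient is the one multiplying $\nu$, which reads $\del{t}Z/Z + 4/(3t) - 2f_0/(3(1+f))$; invoking the ODE \eqref{e:feq1b} to replace $f''$ inside $\del{t}Z$ gives
\begin{equation*}
\frac{\del{t}Z}{Z} = -\frac{4}{3t} + \frac{2 f(1+f)}{3 t^2 f_0} + \frac{f_0}{3(1+f)},
\end{equation*}
and this precisely collapses the prefactor of $\nu$ to $\frac{2}{3t^2}\frac{(1+f)f}{f_0} - \frac{f_0}{3(1+f)}$ as claimed.

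For the second-order equation \eqref{e:Dtth2.a}, I would start from \eqref{e:Dtth0}, already proven in Lemma \ref{t:Dttrho}, and substitute the same ansatz. Two structural observations simplify the passage. First, a direct calculation gives $D_j\dwave{\check{v}}_i = Z(\partial_R\nu) X_i X_j/R + Z\nu\delta_{ij}$, which is symmetric in $(i,j)$, so the rotation $\Omega_{ij}$ defined in \eqref{e:Th1} vanishes identically and the $\Omega\Omega$-contraction in \eqref{e:Dtth0} drops out. Second, the shear-squared reduces to
\begin{equation*}
\delta^{ij}\Xi_{il}\delta^{lk}\Xi_{kj} = \frac{2}{3}\Bigl(\Theta - \frac{f_0}{1+f}\nu\Bigr)^2,
\end{equation*}
since by \eqref{e:Th6} the quantity $\Theta - \frac{f_0}{1+f}\nu$ equals $Z X^i D_i\nu$; this accounts for the $\tfrac{2}{3}(1+\Rho)(\Theta - \tfrac{f_0}{1+f}\nu)^2$ term in \eqref{e:Dtth2.a} after multiplication by $(1+\Rho)$. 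The remaining convective, mixed and pressure terms in \eqref{e:Dtth0} then rewrite into the stated form by replacing $\dwave{\check{v}}^i$ with $Z\nu X^i$ and using $Z = f_0/(3(1+f))$ throughout. The main obstacle is pure bookkeeping of coefficients, and the only nontrivial analytic input is the use of \eqref{e:feq1b} to simplify $\del{t}Z$ in the derivation of the $\nu$-prefactor.
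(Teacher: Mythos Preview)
Your proposal is correct and follows essentially the same route as the paper: substitute the radial ansatz \eqref{e:chv} into \eqref{e:EP.a.2} and \eqref{e:Dtth0}, observe that $\Omega_{ij}\equiv 0$ and that the shear-squared collapses to $\tfrac{2}{3}(\Theta-\tfrac{f_0}{1+f}\nu)^2$, and use the ODE \eqref{e:feq1b} to simplify the $\nu$-prefactor. The paper carries out exactly these steps (computing $D_i\dwave{\check{v}}_j$, \eqref{e:omg}, \eqref{e:Th5}, \eqref{e:XiXi}) and finishes \eqref{e:dtnveq} by contracting with $X_k/R^2$, which is equivalent to your ``cancel $X^k$ and divide by $Z$''.
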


\begin{proof}
	To prove \eqref{e:Dtth2.a}, we need to calculate $	\delta^{ij}	\Xi_{il} \delta^{lk} \Xi_{kj} $ and $\delta^{ij} \Omega_{kj} \Omega_{il}  \delta^{lk}   $ and then substitute them into \eqref{e:Dtth0}. By \eqref{e:chv}, we have
	\begin{equation}\label{e:divj}
		D_i \dwave{\cv}_j= Z(t) \nu \delta_{ij} + Z(t) X_j D_i\nu
	\end{equation}
We denote $\del{R}:=\frac{\partial}{\partial R}$, then
\begin{equation}\label{e:dxdrv}
  \frac{\partial R}{\partial X^i}=\frac{\delta_{ij}X^j}{R}  \quad \Rightarrow   \quad   X^iD_i\nu=R\del{R}\nu \AND	D_i \nu = \frac{X_i}{R} \del{R}\nu.
\end{equation}
Thus, by \eqref{e:Th1}, \eqref{e:divj} and \eqref{e:dxdrv}, we obtain
\begin{equation}\label{e:omg}
	\Omega_{ij}
	= \frac{1}{2}Z(t)(   X_j D_i\nu  -  X_i D_j\nu)= \frac{1}{2}Z(t) \Bigl(  \frac{  X_jX_i}{R} \del{R} \nu  -  \frac{  X_i X_j}{R}\del{R}\nu\Bigr)=0.
\end{equation}

To calculate $\delta^{ij}	\Xi_{il} \delta^{lk} \Xi_{kj}$, by \eqref{e:Th1}, we note	
\begin{equation}\label{e:xith}
	\delta^{ij}	\Xi_{il} \delta^{lk} \Xi_{kj}=  	\delta^{ij} \Theta_{il}\delta^{lk}\Theta_{kj} -\frac{1}{3} \Theta^2  .
\end{equation}
Hence, we calculate $	\delta^{ij} \Theta_{il}\delta^{lk}\Theta_{kj} $ (by \eqref{e:Th1}) and $\Theta$ (by \eqref{e:th0}), with the help of \eqref{e:divj}:
\begin{align}\label{e:Th5}
	\Theta=&\delta^{ij}D_i\dwave{\cv}_j
	= 3Z(t) \nu  + Z(t)   \frac{\delta^{ij} X_iX_j}{R}\del{R}\nu	= 3Z(t) \nu  + Z(t)   R \del{R}\nu
\end{align}
and
\begin{align}\label{e:ThTh}
	\delta^{ij} \Theta_{il}\delta^{lk}\Theta_{kj} = &\frac{1}{2} \delta^{ij} D_i \dwave{\cv}_l\delta^{lk}(D_j\dwave{\cv}_k+D_k\dwave{\cv}_j)  \notag  \\
	= &Z^2(t) \delta^{ij} \Bigl(  \nu \delta_{il} +   \frac{X_lX_i}{R}\del{R}\nu \Bigr)\delta^{lk}\Bigl(  \nu \delta_{jk} +    \frac{X_jX_k}{R}\del{R} \nu  \Bigr)
	\notag  \\
	= &Z^2(t)\bigl( 3 \nu^2  + 2\nu  R\del{R} \nu  + R^2(\del{R} \nu)^2  \bigr) \notag  \\
	\overset{\text{by}\; \eqref{e:Th5}}{=} &  3 Z^2(t) \nu^2  + 2 \nu  Z(t)  (\Theta-3 Z(t) \nu)  +  (\Theta-3 Z(t) \nu)^2    \notag  \\
	= &    \Theta^2- 4 Z(t) \nu \Theta+ 6Z^2(t) \nu^2   .
\end{align}
Then by putting \eqref{e:Th5} and \eqref{e:ThTh} into \eqref{e:xith}, we arrive at
\begin{equation}\label{e:XiXi}
	\delta^{ij}	\Xi_{il} \delta^{lk} \Xi_{kj}
	=     \frac{2}{3} \Theta^2- 4 Z(t) \nu \Theta+ 6Z^2(t) \nu^2
 	=  \frac{2}{3} \bigl(\Theta-3Z(t)\nu\bigr)^2
=\frac{2}{3} \Bigl(\Theta-\frac{f_0}{(1+f)} \nu\Bigr)^2 .
\end{equation}

Substituting \eqref{e:chv}, \eqref{e:omg} and \eqref{e:XiXi} to \eqref{e:Dtth0}, we conclude \eqref{e:Dtth2.a}, and combining \eqref{e:dxdrv}, \eqref{e:Th5} and \eqref{e:th2} yields \eqref{e:Th6}.

 In the end, let us calculate \eqref{e:dtnveq}. Substituting \eqref{e:chv} into \eqref{e:EP.a.2}, with the help of \eqref{e:feq1b} to replace $f^{\prime\prime}$, we arrive at
\begin{align*}
	&X^k  \del{t}   \nu     + \frac{2}{3t^2} \frac{(1+f) f }{f_0}X^k \nu -  \frac{ f_0}{3 (1+f ) }  X^k \nu     	
	+    \frac{f_0}{3(1+f)} \nu  X^k X^j D_{j}  \nu       \notag \\
	& \hspace{1cm}  +   \frac{f_0}{3 (1+f)} \nu^2  X^k  + 3 (2+\omega)  K t^{-\frac{4}{3}} \delta_{ij} X^i X^j \rrho^{\frac{1}{3}} \frac{(1+f)^{1-\omega}  (1 +\Rho)^{\omega} }{f_0}  \delta^{kl}  D_l  \Rho  \notag   \\
	&\hspace{1cm} +  \frac{ 6 K\iota (1+f)^2}{(6\pi  )^{\frac{1}{3}}t^2 f_0} X^k    \Bigl[   \Bigl(\frac{1+\Rho}{1+f}\Bigr)^{1+\omega}  -1 \Bigr]   +\frac{3a^{-2}(1+f)^{\frac{5}{3}}   }{f_0} \delta^{kl}  D_l \dwave{\check{\phi}} =  -\frac{\kappa f_0}{1+f}   \nu X^k   .
\end{align*}
 Multiplying $X_k/R^2$ on the both sides of the above equation, we can conclude \eqref{e:dtnveq}, which completes the proof.
\end{proof}

The next step is to rewrite Lemma \ref{t:dtvkid} in terms of the \textit{spherical coordinate} which is given in Appendix \ref{s:SScd}.
Since all the variables are spherical symmetric,  let us transform $(\Rho, \dwave{\cv}^i,\dwave{\check{\phi}})$ from the coordinate $\{t,X^i\}$ to the spherical coordinate $\{t,R,\theta,\varphi\}$ and denote
\begin{equation}\label{e:rdvar}
	\hrho(t,R):= \Rho(t,X^k),\quad \hv(t,R):= \frac{\delta_{ij} X^j}{R} \dwave{\cv}^i  (t,X^k) \AND 	\hphi(t,R): = \dwave{\check{\phi}}(t,X^k) ,
\end{equation}
and, by \eqref{e:chv} and \eqref{e:rdvar}, we have
\begin{equation}\label{e:vv}
	\dwave{\cv}^i(t,X^k)=\frac{X^i}{R}\hv(t,R) \AND
		\nu(t,R)= \frac{3(1+f(t))} {f_0(t) R }\hv(t,R) .
\end{equation}
Direct computations, by \eqref{e:th0},  imply the expansion can be represented by
\begin{equation*}
\Theta  =\frac{1}{R^2} \del{R} (R^2 \hv )=\del{R}\hv +\frac{2}{R}\hv  .
\end{equation*}

Under spherical coordinate \eqref{e:Spcrd2}, Lemma \ref{t:dtvkid} becomes:
\begin{lemma}\label{t:dtvss}
	Under spherical coordinate \eqref{e:Spcrd2}, the spherical symmetric solutions $(\hrho,\nu)$ satisfy equations
\begin{align}\label{e:Dtth2c}
	&      \partial_t^2\hrho   +  \frac{2 f_0}{3(1+f)} \nu  R\del{R} \del{t} \hrho   +\frac{4}{3t} \del{t}\hrho  -   \frac{2  }{3t^{2}}   \hrho(1+\hrho)  - \frac{4(\del{t}\hrho)^2}{3(1+\hrho) } +\frac{2 f_0^2}{9(1+f)^2}  \nu R\del{R} \hrho      \notag  \\
	& \hspace{0.5cm} -   \frac{ (\omega+1)   (2+\omega)  K \iota}{(6\pi)^{\frac{1}{3}}t^2} 	\frac{ (1 +\hrho)^{\omega} }{(1+f)^{\omega}} (R  \del{R}   \hrho )^2       + \frac{f_0^2}{9(1+f)^2} \nu^2 R^2\partial_R^2 \hrho   - \frac{  4f_0^2 \nu^2 ( R \del{R} \hrho)^2}{27 (1+f)^2 (1+\hrho) }  \notag  \\
	&  \hspace{0.5cm}  - \frac{ 2 K\iota (1+f)}{(6\pi  )^{\frac{1}{3}}t^2}  \Bigl[   \Bigl(\frac{1+\hrho}{1+f}\Bigr)^{1+\omega}  -1 \Bigr]    R\del{R} \hrho  -    \frac{2 \iota^3   }{3t^{2}R^2}  \del{R}\hrho   \int^R_0 ( \hrho(t,y) -f(t)) y^2 dy    \notag  \\
	&  \hspace{0.5cm}  -\frac{8  f_0  \nu  R \del{R} \hrho\del{t}\hrho }{9(1+f)(1+\hrho) }
	- \frac{2}{3} (1+\hrho)   \Bigl(  \frac{f_0}{1+f} -\frac{\del{t}\hrho}{1+\hrho} -\frac{f_0}{3(1+f)}\frac{  \nu   R  \del{R} \hrho}{1+\hrho} -\frac{f_0}{(1+f)} \nu\Bigr)^2  \notag \\
	& \hspace{0.5cm}  -   \frac{ (2+\omega )  K \iota}{(6\pi)^{\frac{1}{3}} t^{2}}    \frac{(1+\hrho)^{\omega+1} }{(1+f)^\omega }  R^2 \partial_R^2 \hrho  -   \frac{ 2 (2+\omega )  K \iota}{(6\pi)^{\frac{1}{3}} t^{2}}    \frac{(1+\hrho)^{\omega+1} }{(1+f)^\omega }  R \del{R} \hrho \notag\\
	&\hspace{0.5cm} -\frac{6K\iota}{(6\pi)^{\frac{1}{3}} t^2}   \Bigl[  \frac{(1+\hrho)^{\omega}}{(1+f)^{\omega}} - 1 \Bigr](1+\hrho)^2    - \frac{2 (3+2\omega)  K  \iota  }{(6\pi)^{\frac{1}{3}}t^2}   \frac{(1+\hrho)^{\omega+1 }    }{(1+f)^{\omega }    }  R \del{R} \hrho   \notag\\
	&\hspace{0.5cm} =   \frac{\kappa f_0}{1+f}\Bigl(\frac{1+\hrho}{1+f}f_0 -\del{t}\hrho\Bigr)    .
\end{align}
	and
\begin{align}\label{e:Dttv2c}
	&  \del{t}   \nu     + \Bigl( \frac{2}{3t^2} \frac{(1+f) f }{f_0}  -  \frac{ f_0}{3 (1+f ) } \Bigr) \nu     	
	+    \frac{f_0}{3(1+f)} \nu R\del{R}  \nu     +   \frac{f_0}{3 (1+f)} \nu^2      \notag \\
	&\hspace{1cm}  +  \frac{3 (2+\omega)  K   \iota}{(6\pi)^{\frac{1}{3}}t^2}\frac{(1+f)^{1-\omega}  (1 +\hrho)^{\omega} }{f_0}  R\del{R}   \hrho  +  \frac{ 6 K\iota (1+f)^2}{(6\pi  )^{\frac{1}{3}}t^2 f_0}    \Bigl[   \Bigl(\frac{1+\hrho}{1+f}\Bigr)^{1+\omega}  -1 \Bigr]    \notag \\
	&\hspace{1cm} +   \frac{2 \iota^3 (1+f)  }{t^{2}  f_0 R^3} \int^R_0 ( \hrho(t,y) -f(t)) y^2 dy =  -\frac{\kappa f_0}{1+f} \nu ,
\end{align}
Moreover, there is a crucial identity (an alternative form of the continuity equation),
\begin{align}\label{e:keyid}
	\Theta =  \frac{f_0}{1+f} -\frac{\del{t}\hrho}{1+\hrho} -\frac{f_0}{3(1+f)}\frac{  \nu   R  \del{R} \hrho}{1+\hrho} =  \frac{ f_0}{ 1+f} \nu  + \frac{f_0}{3(1+f)} R \del{R}\nu
\end{align}
\end{lemma}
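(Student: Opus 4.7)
The plan is to pull back the equations of Lemma \ref{t:dtvkid}, which are written in the Cartesian coordinates $(t,X^i)$, to the radial spherical coordinate $(t,R)$ by using the ansatz \eqref{e:rdvar}--\eqref{e:vv}. The key substitutions are the following: since $\hrho$ depends on $R=|\mathbf{X}|$ only, \eqref{e:dxdrv} gives $D_i\hrho=(X_i/R)\del{R}\hrho$ so that $X^i D_i\Rho=R\del{R}\hrho$ and $X^iX^k D_kD_i\Rho=R^2\del{R}^2\hrho$; similarly $X^i D_i\nu=R\del{R}\nu$. Inserting these identities directly into \eqref{e:dtnveq} and \eqref{e:Dtth2.a} turns every Cartesian spatial contraction against $X^i$ into the corresponding radial derivative, and yields all the non-potential terms of \eqref{e:Dtth2c} and \eqref{e:Dttv2c} verbatim.

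The only genuinely nontrivial step is handling the gravitational potential. The idea is to solve the Poisson-type equation \eqref{e:EP.a.3} explicitly under spherical symmetry. Writing $\hphi(t,R)=\dwave{\check\phi}(t,X^k)$, we have $\delta^{kl}D_kD_l\dwave{\check\phi}=R^{-2}\del{R}(R^2\del{R}\hphi)$, so integrating the radial form of \eqref{e:EP.a.3} from $0$ to $R$ (and assuming regularity at the origin so that $R^2\del{R}\hphi\to 0$ as $R\to 0$) gives
\begin{equation*}
\del{R}\hphi(t,R)=\frac{2\iota^3}{3t^{2/3}(1+f)^{2/3}R^2}\int_0^R\bigl(\hrho(t,y)-f(t)\bigr)y^2\,dy.
\end{equation*}
Then, using $\delta^{il}D_l\dwave{\check\phi}=(X^i/R)\del{R}\hphi$ together with $D_i\hrho=(X_i/R)\del{R}\hrho$, the term $a^{-2}(1+f)^{2/3}\delta^{il}D_l\dwave{\check\phi}\,D_i\Rho$ in \eqref{e:Dtth2.a} collapses to $a^{-2}(1+f)^{2/3}\del{R}\hphi\,\del{R}\hrho$; substituting the closed form for $\del{R}\hphi$ and using $a^{-2}=t^{-4/3}$ produces exactly the integral term $\frac{2\iota^3}{3t^2R^2}\del{R}\hrho\int_0^R(\hrho-f)y^2\,dy$ that appears in \eqref{e:Dtth2c}. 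The same identity, applied to $(3a^{-2}(1+f)^{5/3}/f_0)(X_k/R^2)\delta^{kl}D_l\dwave{\check\phi}=(3a^{-2}(1+f)^{5/3}/(f_0 R))\del{R}\hphi$ in \eqref{e:dtnveq}, yields the integral term $\frac{2\iota^3(1+f)}{t^2f_0R^3}\int_0^R(\hrho-f)y^2\,dy$ appearing in \eqref{e:Dttv2c}.

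Finally, the identity \eqref{e:keyid} is a direct consequence of \eqref{e:Th6}: under spherical symmetry, $X^iD_i\nu=R\del{R}\nu$ and $X^iD_i\Rho=R\del{R}\hrho$, so \eqref{e:Th6} becomes
\begin{equation*}
\Theta=\frac{f_0}{1+f}\nu+\frac{f_0}{3(1+f)}R\del{R}\nu=\frac{f_0}{1+f}-\frac{\del{t}\hrho}{1+\hrho}-\frac{f_0}{3(1+f)}\frac{\nu R\del{R}\hrho}{1+\hrho},
\end{equation*}
which is \eqref{e:keyid}. The remaining quadratic $\left(\Theta-\tfrac{f_0}{1+f}\nu\right)^2$ term already present in \eqref{e:Dtth2.a} is rewritten via the second equality of \eqref{e:keyid} to match the form in \eqref{e:Dtth2c}.

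I do not expect any serious obstacle beyond careful bookkeeping: the main item to verify is the boundary behaviour at $R=0$ when integrating the radial Poisson equation, which is where the spherical-symmetry assumption together with the hypothesised $C^2$-regularity of $\hphi$ ensures the vanishing of $R^2\del{R}\hphi$ at the origin. Once this is justified, the remainder of the argument is a mechanical substitution and every term in \eqref{e:Dtth2c} and \eqref{e:Dttv2c} can be matched against its counterpart in \eqref{e:Dtth2.a} and \eqref{e:dtnveq}.
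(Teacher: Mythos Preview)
Your proposal is correct and follows essentially the same approach as the paper: convert the Cartesian contractions in \eqref{e:dtnveq}--\eqref{e:Th6} to radial derivatives via $X^iD_i=R\del{R}$, and solve the spherically symmetric Poisson equation \eqref{e:EP.a.3} by integrating $\del{R}(R^2\del{R}\hphi)$ from $0$ to $R$ to obtain the explicit integral form of $\del{R}\hphi$. Your treatment of the boundary behaviour at $R=0$ is in fact more explicit than the paper's, which simply writes down \eqref{e:dphiq} without comment; the one point you should not oversell is the final remark about ``rewriting via the second equality of \eqref{e:keyid}''---the quadratic term in \eqref{e:Dtth2.a} is already in the expanded form and passes to \eqref{e:Dtth2c} by the direct substitution $X^iD_i\Rho=R\del{R}\hrho$, with no use of \eqref{e:keyid} needed there.
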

\begin{proof}
	This proof again relies on straightforward calculations, and we still only point out some key steps but leave the rest to the readers. We note in the spherical coordinate, using Appendix \ref{s:SScd} and in view of the spherical symmetric solution, $	\dwave{\cv}^i  \del{t} D_i   \Rho $ becomes $\frac{f_0}{3(1+f)} R \nu \del{R}\del{t}\hrho$, $\delta^{ij}D_iD_j   \Rho$ becomes  $\frac{1}{R^2} \del{R}(R^2 \del{R} \hrho)$, $\dwave{\cv}^k \dwave{\cv}^i  D_k D_i  \Rho$  becomes $ \frac{f_0^2}{9(1+f)^2}  \nu^2     R^2 \partial_R^2 \hrho   $  and $X^iX^kD_kD_i \Rho$ becomes $R^2\partial_R^2 \hrho$.  If we note that using the Poisson equation \eqref{e:EP.a.3}, $D_l \dwave{\check{\phi}} $ can be explicitly expressed by the following \eqref{e:dphiq}, then we are able to use these to rewrite \eqref{e:dtnveq}--\eqref{e:Th6} to \eqref{e:Dtth2c}--\eqref{e:keyid}, respectively.

	Now let us express $D_l \dwave{\check{\phi}} $ by solving the Poisson equation \eqref{e:EP.a.3}.
	In terms of the spherical coordinate, the Poisson equation \eqref{e:EP.a.3} becomes
	\begin{equation}\label{e:ddphi2}
		\del{R}(R^2 \del{R} \hphi) =    \frac{2 \iota^3}{3t^{\frac{2}{3}}} (1+f)^{-\frac{2}{3}}  ( \hrho -f) R^2.
	\end{equation}
	Then integrating \eqref{e:ddphi2}  yields
	\begin{equation}\label{e:dphiq}
		\del{R} \hphi =    \frac{2 \iota^3 (1+f)^{-\frac{2}{3}} }{3t^{\frac{2}{3}}R^2} \int^R_0 ( \hrho(t,y) -f(t)) y^2 dy.
	\end{equation}
It completes the proof.
\end{proof}

\subsection{Log-periodic formulations}\label{s:logfml}
Since we only consider the \textit{$t$-log-periodic solutions} (as defined in Definition \ref{e:lgprd}) in this article, we need to introduce the \textit{logarithmic coordinate} by using the following transformation for simplicity:
\begin{equation}\label{e:logcd}
	\Rbb \ni \zeta :=\ln R \quad \text{i.e.,}\; R=e^{\zeta} \quad\text{where} \; R\in(0,\infty) .
\end{equation}

We remark that we will slightly \textit{abuse the notation} for the functions $\hrho$ and $\nu$ in the following discussion. Specifically, we will use $\nu$ to refer to both $\nu(t,R)$ and $\nu(t,\zeta)$ interchangeably in order to simplify the notation. It will be clear from the context which one we are referring to. This convention also applies to $\hrho$ and its derivatives.

Then the direct calculations yield
\begin{gather}
	\del{\zeta}=\frac{d R}{d\zeta}\del{R}=R \del{R},  \label{e:dzdr}   \\
	\frac{1}{R^3} \int^R_0 ( \hrho(t,y) -f(t)) y^2 dy =\frac{1}{e^{3\zeta}} \int^{\zeta}_{-\infty} ( \hrho(t,z) -f(t)) e^{3z} dz,  \label{e:dph2}  \\
		R^2\partial_R^2 \hrho =R\del{R}(R\del{R} \hrho) -R\del{R} \hrho =\partial_\zeta^2 \hrho -\del{\zeta} \hrho   . \label{e:R2dr2}
\end{gather}

By using these basic identities, we obtain the following system where \eqref{e:main1} resembles the model equation presented in \cite[eq. $(1.1)$]{Liu2022b}. However, we would like to emphasize that the remaining terms $F_1$ and $\mathcal{g}^{\zeta\zeta}$ are more complex when compared to the corresponding terms in \cite[eq. $(1.1)$]{Liu2022b}.

\begin{lemma}\label{t:lgfml}
	The spherical symmetric solutions $(\hrho,\nu)$ of the Euler--Poisson system satisfy
\begin{align}
	\Box_{\mathcal{g}} \hrho      +   \Bigl( \frac{4}{3t}    + \frac{\kappa f_0}{1+f} \Bigr) \del{t}\hrho   -   \frac{2}{3t^2} \hrho  (1+ \hrho)   -     \frac{4(\partial_t \hrho )^2 }{3 (1+ \hrho)} = & F_1 , \label{e:main1} \\
	\del{t} \nu
	+  \frac{f_0}{3 (1+f)}\nu  	\del{\zeta} \nu = & G_1  ,    \label{e:main2}
\end{align}
	where the wave operator  is
\begin{gather}
	\Box_{\mathcal{g}}:= \partial_t^2 - \mathcal{g}^{\zeta\zeta}\partial_\zeta^2 +2 \cg^{0\zeta} \del{\zeta}\del{t},   \label{e:g1} \\
	\mathcal{g}^{\zeta\zeta}:=
	 \frac{ (2+\omega )  (1-\iota^3) }{ 9t^{2}}    \frac{(1+\hrho)^{\omega+1} }{(1+f)^\omega }  -\frac{f_0^2}{9(1+f)^2} \nu^2  ,   \quad
	 \mathcal{g}^{0\zeta} :=  \frac{f_0}{3(1+f)} \nu ,   \label{e:g2}
\end{gather}
and the remainders are
	\begin{align}
	F_1
	:=&     - \frac{2 f_0^2}{9(1+f)^2}  \nu \del{\zeta} \hrho    +   \frac{ (\omega+1)   (2+\omega) (1 - 	\iota^3 ) }{9 t^2}  	\frac{ (1 +\hrho)^{\omega} }{(1+f)^{\omega}} (\del{\zeta} \hrho )^2      + \frac{f_0^2}{9(1+f)^2} \nu^2  \del{\zeta} \hrho  \notag  \\
	&     + \frac{ 2  (1 - 	\iota^3 )   (1+f)}{9 t^2}  \Bigl[   \Bigl(\frac{1+\hrho}{1+f}\Bigr)^{1+\omega}  -1 \Bigr]    \del{\zeta} \hrho  +   \frac{2 \iota^3 f  }{3t^{2} }  \del{\zeta}\hrho 	\Psi   + \frac{  4f_0^2 \nu^2 ( \del{\zeta} \hrho)^2}{27 (1+f)^2 (1+\hrho) }   \notag  \\
	&     + \frac{8  f_0  \nu  \del{\zeta} \hrho\del{t}\hrho }{9(1+f)(1+\hrho) }
	+ \frac{2}{3} (1+\hrho)   \Bigl(  \frac{f_0}{1+f} -\frac{\del{t}\hrho}{1+\hrho} -\frac{f_0}{3(1+f)}\frac{  \nu   \del{\zeta} \hrho}{1+\hrho} -\frac{f_0}{(1+f)} \nu\Bigr)^2  \notag \\
	& + \frac{2 (1 - 	\iota^3 )  }{ 3 t^2}  \Bigl[  \frac{(1+\hrho)^{\omega}}{(1+f)^{\omega}} - 1 \Bigr](1+\hrho)^2    +  \frac{  (8+5\omega)  (1 - 	\iota^3 )   (1+\hrho)^{\omega+1 } }{  9t^2(1+f)^{\omega }    }       \del{\zeta} \hrho  +  \frac{\kappa f_0^2(1+\hrho)}{(1+f)^2}    ,    \label{e:F1}\\
	G_1
	:= &  -   \frac{2(1+f) f }{3t^2f_0}  \nu + \Bigl(\frac{1}{3}-\kappa\Bigr) \frac{ f_0}{ 1+f  }  \nu       -   \frac{f_0}{3 (1+f)} \nu^2      -    \frac{(2+\omega)  (1 - 	\iota^3 )(1+f)^{1-\omega}  (1 +\hrho)^{\omega} }{3 t^2 f_0}  \del{\zeta}   \hrho  \notag \\
	&  -  \frac{ 2     (1 - 	\iota^3 )   (1+f)^2}{3 t^2 f_0}   \Bigl[   \Bigl(\frac{1+\hrho}{1+f}\Bigr)^{1+\omega}  -1 \Bigr]  - \frac{2 \iota^3 (1+f) f }{t^{2}  f_0  } \Psi  ,  \label{e:G1}  \\	
	\Psi := & \Psi(t,\zeta) = \frac{1}{f(t) e^{3\zeta}}   \int^{\zeta}_{-\infty} [\hrho(t,z) -f(t)] e^{3z} d z  .    \label{e:Psi0}
	\end{align}

Moreover, the initial data \eqref{e:data0} becomes
\begin{equation}\label{e:data3}
	\hrho|_{t=1}(\zeta)= \beta \mathcal{d} \bigl( (1+\beta)^{-\frac{1}{3}} \exp(\zeta)\bigr) \AND \nu|_{t=1}(\zeta)= 1+\mathcal{v} \bigl((1+\beta)^{-\frac{1}{3}} \exp(\zeta)\bigr).
\end{equation}

In addition, the continuity equation \eqref{e:keyid} becomes
\begin{align}\label{e:keyid5}
	\Theta =  \frac{f_0}{1+f} -\frac{\del{t}\hrho}{1+\hrho} -\frac{f_0}{3(1+f)}\frac{  \nu     \del{\zeta} \hrho}{1+\hrho} =  \frac{ f_0}{ 1+f} \nu  + \frac{f_0}{3(1+f)}   \del{\zeta}\nu
\end{align}
\end{lemma}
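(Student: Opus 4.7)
The plan is to derive the system \eqref{e:main1}--\eqref{e:main2} directly from the spherically symmetric equations \eqref{e:Dtth2c}--\eqref{e:Dttv2c} of Lemma \ref{t:dtvss} by substituting the log-radial coordinate $\zeta=\ln R$, then regrouping the terms so as to isolate a clean wave-type principal part. The three elementary identities \eqref{e:dzdr}, \eqref{e:dph2}, \eqref{e:R2dr2} convert the $R$-derivatives and the integral gravity term into $\zeta$-derivatives and the quantity $\Psi$ defined in \eqref{e:Psi0}. For the continuity identity \eqref{e:keyid5}, I would simply plug $R\del{R}=\del{\zeta}$ into \eqref{e:keyid}. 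The transport equation \eqref{e:main2}--\eqref{e:G1} is then immediate from \eqref{e:Dttv2c} after the substitution $R\del R \hrho=\del\zeta\hrho$ and rewriting the integral term as $f(t)\Psi(t,\zeta) e^{3\zeta}/e^{3\zeta}=f(t)\Psi$; the only algebraic step is to bring everything to the right-hand side except $\del t\nu+\frac{f_0}{3(1+f)}\nu\del\zeta\nu$, using the identity $\frac{K\iota}{(6\pi)^{1/3}}=\frac{1-\iota^3}{9}$ (a direct consequence of \eqref{e:ioeq}) to obtain the $(1-\iota^3)$-normalized coefficients appearing in $G_1$.

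The harder part is to identify the principal part of \eqref{e:Dtth2c}. I would first rewrite the $R\del R\del t\hrho$ term as $\del\zeta\del t\hrho$, giving the cross-derivative coefficient $2\mathcal{g}^{0\zeta}=\frac{2f_0}{3(1+f)}\nu$. Next, I would use \eqref{e:R2dr2} to replace every occurrence of $R^2\partial_R^2\hrho$ by $\partial_\zeta^2\hrho-\del\zeta\hrho$; this puts the $\partial_\zeta^2$ terms in the pressure-wave block and the $\nu^2 R^2\partial_R^2$ block, and dumps the leftover first-order pieces into the remainder. Summing the coefficients of $\partial_\zeta^2\hrho$ from these two contributions produces $-\mathcal{g}^{\zeta\zeta}$ with
\begin{equation*}
\mathcal{g}^{\zeta\zeta}=\frac{(2+\omega)K\iota}{(6\pi)^{1/3}t^2}\frac{(1+\hrho)^{\omega+1}}{(1+f)^\omega}-\frac{f_0^2}{9(1+f)^2}\nu^2,
\end{equation*}
and I would then apply \eqref{e:ioeq} again in the form $\frac{K\iota}{(6\pi)^{1/3}}=\frac{1-\iota^3}{9}$ to recast it as \eqref{e:g2}.

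For the remaining first-order terms on the left of \eqref{e:Dtth2c}, I would move to the right everything that is not $\partial_t^2\hrho$, $-\mathcal{g}^{\zeta\zeta}\partial_\zeta^2\hrho$, $2\mathcal{g}^{0\zeta}\del\zeta\del t\hrho$, $\frac{4}{3t}\del t\hrho$, $\frac{\kappa f_0}{1+f}\del t\hrho$, $-\frac{2}{3t^2}\hrho(1+\hrho)$, and $-\frac{4(\del t\hrho)^2}{3(1+\hrho)}$. In particular, the $-\del\zeta\hrho$ pieces produced by \eqref{e:R2dr2}, the $(3+2\omega)$ and $(2+\omega)$ linear-in-$\del\zeta\hrho$ terms, and the damping term $\frac{\kappa f_0^2}{(1+f)^2}(1+\hrho)$ coming from moving $\frac{\kappa f_0}{1+f}\cdot\frac{1+\hrho}{1+f}f_0$ to the left all combine. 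Then I would use \eqref{e:ioeq} systematically to replace every $\frac{K\iota}{(6\pi)^{1/3}}$ by $(1-\iota^3)/9$, and use \eqref{e:dph2} to identify $\frac{\iota^3}{3t^2R^2}\del R\hrho\int_0^R(\hrho-f)y^2\,dy$ with $\frac{\iota^3 f}{3t^2}\del\zeta\hrho\cdot\Psi$. After this bookkeeping, all the terms on the right are exactly those listed in $F_1$.

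The main obstacle is purely organizational: keeping a careful sign-and-coefficient ledger while (i) splitting the second-order radial derivatives into $\partial_\zeta^2$ plus a first-order correction, (ii) consolidating the $\iota^3$- and $K$-weighted terms via \eqref{e:ioeq} so that the factor $(1-\iota^3)$ emerges uniformly, and (iii) ensuring the expansion-squared remainder $\frac{2}{3}(1+\hrho)(\cdots)^2$ in $F_1$ matches what \eqref{e:keyid5} implies, which itself follows from \eqref{e:Th6} via $R\del R=\del\zeta$. No new analytic input beyond Lemma \ref{t:dtvss} and the algebraic identity \eqref{e:ioeq} is needed; the proof is a careful change of variables followed by termwise regrouping.
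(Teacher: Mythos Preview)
Your proposal is correct and follows essentially the same route as the paper: start from the spherically symmetric system of Lemma \ref{t:dtvss}, apply the log-radial substitution via \eqref{e:dzdr}--\eqref{e:R2dr2}, rewrite the integral gravity term through \eqref{e:Psi0}, and use the algebraic identity $K\iota/(6\pi)^{1/3}=(1-\iota^3)/9$ from \eqref{e:ioeq} to normalize the coefficients. The only item you did not explicitly address is the initial-data formula \eqref{e:data3}, but that follows by the same chain of coordinate changes (\eqref{e:data0}, \eqref{e:perv}, \eqref{e:feq2b}, \eqref{e:xvartrf}, \eqref{e:rdvar}--\eqref{e:vv}, \eqref{e:logcd}) and is routine.
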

\begin{proof}
	The proof still relies on straightforward computations with the help of Lemma \ref{t:dtvss} and  \eqref{e:dzdr}--\eqref{e:R2dr2}. The key point is to use identity \eqref{e:ioeq} to replace terms $K\iota/(6\pi )^{\frac{1}{3}}$, i.e.,
	\begin{equation*}
		\frac{K\iota}{(6\pi )^{\frac{1}{3}}}  = \frac{1}{9} (1 - 	\iota^3 ) .
	\end{equation*}
The data can be directly derived with regular computations by using  \eqref{e:data0}, \eqref{e:data1},  \eqref{e:perv}, \eqref{e:feq2b}, \eqref{e:undf}, \eqref{e:ckvph},  \eqref{e:rdvar}--\eqref{e:vv} and \eqref{e:logcd}.
We omit the details but leave them to the readers.
\end{proof}
\begin{remark}\label{r:hyper}
	$\Box_{\mathcal{g}}$ is a nonlinear wave operator since $\mathcal{g}^{\zeta\zeta}$ is positive if we consider a small enough perturbation of the solution \eqref{e:sl1}--\eqref{e:sl2}. In fact, by later \eqref{e:Gdef} to define $\chi(t)$ and using \eqref{e:gbA}, we can prove
	\begin{equation*}
		\frac{f_0^2}{(1+f)^2} =\frac{f\chi}{Bt^2}   \;  \Rightarrow   \;  \mathcal{g}^{\zeta\zeta}=   \biggl[
		(2+\omega )  (1-\iota^3)     \biggl(\frac{1+\hrho }{1+f} \biggr)^{1+\omega}  -\frac{\nu^2   }{ B}  \frac{f \chi}{1+f}  \biggr] \frac{1+f}{ 9 t^2}.
	\end{equation*}
This wave operator will also be clear later in the Fuchsian formulations.
\end{remark}

\subsection{Periodicity of solutions and log-periodicity of the gravity $\Psi$}\label{s:lggr}
In this section, let us prove if $(\hrho(t,\zeta),\nu(t,\zeta))$ solves \eqref{e:main1}--\eqref{e:main2} with the \textit{periodic data} \eqref{e:data3} and if the solution is unique, then $(\hrho(t,\zeta),\nu(t,\zeta))$ must be a \textit{periodic solution}. Consequently, given the periodicity of the solution, we can modulo the integers $\Zbb$ from $\Rbb$. That is, we can transform the domain from $\Rbb$ to a torus $\Tbb=\Rbb/ \Zbb$ by a quotient of $\Rbb$ under integer shifts.  Additionally, we introduce a lemma that addresses the periodicity of the function $\Psi$ as defined in \eqref{e:Psi0}.

\begin{lemma}\label{t:Psprd}
	Suppose $\hrho(t,\zeta)$ is a periodic function with the period $1$, i.e.,
	\begin{equation}\label{e:rsft}
		\hrho(t,\zeta+n)=\hrho(t,\zeta), \quad \text{for any}\; n\in \Zbb.
	\end{equation}
and $\Psi(t,\zeta)$ is defined by \eqref{e:Psi0}.
	Then $\Psi(t,\zeta)$ is also a periodic function with the period $1$, i.e.,
	\begin{equation*}
		\Psi(t,\zeta+n)=	\Psi(t,\zeta), \quad \text{for any}\; n\in \Zbb.
	\end{equation*}
\end{lemma}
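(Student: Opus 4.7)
The plan is to prove this by a direct change of variables in the integral defining $\Psi$, exploiting the assumed $1$-periodicity of $\hrho$ in $\zeta$. Since the factor $e^{3z}$ makes the integrand decay rapidly as $z\to-\infty$ (and $\hrho(t,\cdot)$ is bounded on $\Rbb$ by periodicity, while $f(t)$ is a fixed constant in $\zeta$), the improper integral in \eqref{e:Psi0} converges absolutely, so all manipulations below are justified.

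First I would write out $\Psi(t,\zeta+1)$ directly from the definition:
\[
	\Psi(t,\zeta+1)=\frac{1}{f(t)e^{3(\zeta+1)}}\int_{-\infty}^{\zeta+1}[\hrho(t,z)-f(t)]e^{3z}\,dz.
\]
Next I would perform the substitution $z=w+1$, $dz=dw$, which maps the interval $(-\infty,\zeta+1]$ bijectively onto $(-\infty,\zeta]$, and produces an extra factor $e^{3}$ from $e^{3z}=e^{3(w+1)}$. Simultaneously, the periodicity hypothesis \eqref{e:rsft} gives $\hrho(t,w+1)=\hrho(t,w)$. The $e^{3}$ gained in the integrand exactly cancels the $e^{3}$ that appears in the prefactor $e^{-3(\zeta+1)}=e^{-3}e^{-3\zeta}$, so one obtains
\[
	\Psi(t,\zeta+1)=\frac{1}{f(t)e^{3\zeta}}\int_{-\infty}^{\zeta}[\hrho(t,w)-f(t)]e^{3w}\,dw=\Psi(t,\zeta).
\]
Finally, iterating this argument (or applying the same substitution $n$ times with $z=w+n$) upgrades the shift from $1$ to any $n\in\Zbb$, yielding $\Psi(t,\zeta+n)=\Psi(t,\zeta)$ for all $n\in\Zbb$.

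I do not anticipate a significant obstacle here: the only non-routine point is noting that the combination of $e^{3}$ inside the integral and $e^{-3}$ from the prefactor cancels cleanly, which is precisely why $\Psi$ was normalized by $e^{-3\zeta}$ in \eqref{e:Psi0}. The absolute convergence of the integral at $-\infty$ should be mentioned briefly to justify the change of variables, but it is automatic from $\|\hrho(t,\cdot)\|_{L^\infty(\Rbb)}<\infty$ (a consequence of periodicity) together with the integrability of $e^{3z}$ near $-\infty$.
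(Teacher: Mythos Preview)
Your proof is correct and is essentially the same as the paper's: both use the substitution $z\mapsto z-n$ (you do $n=1$ first and iterate, the paper does general $n$ in one step), observe the $e^{3n}$ cancellation between integrand and prefactor, and invoke the periodicity of $\hrho$. Your explicit remark on absolute convergence of the integral is a small bonus the paper omits.
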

\begin{proof}
	Let $\bar{z}=z-n$, by \eqref{e:Psi0}, we calculate
	\begin{equation}\label{e:Pstr}
		\Psi(t,\zeta+n)=   \frac{1}{f e^{3\zeta}e^{3 n}}   \int^{\zeta+n}_{-\infty} [\hrho(t,z) -f(t)] e^{3z} d z
	=   \frac{1}{f e^{3\zeta} }   \int^{\zeta}_{-\infty} [\hrho(t,\bar{z}+n) -f(t)] e^{3\bar{z}} d \bar{z}    .
	\end{equation}
Further, using the periodicity of $\hrho$ \eqref{e:rsft},  we arrive at
\begin{equation*}
		\Psi(t,\zeta+n)=   \frac{1}{f e^{3\zeta} }   \int^{\zeta}_{-\infty} [\hrho(t,\bar{z}) -f(t)] e^{3\bar{z}} d \bar{z}  =\Psi(t,\zeta) .
\end{equation*}
It completes the proof of this lemma.
\end{proof}

Now we can present the periodicity of solutions which is guaranteed by the uniqueness of the solution.
\begin{proposition}\label{t:prdsl}
	Suppose  $(\hrho(t,\zeta),\nu(t,\zeta))$ solves \eqref{e:main1}--\eqref{e:main2} with the data \eqref{e:data3} where $\mathcal{d}$ and $\mathcal{v}$ are both $1$-log-periodic as the assumption \ref{A4} (see \S\ref{s:model}), and assume the solution is unique, then $(\hrho(t,\zeta),\nu(t,\zeta))$ is a periodic solution satisfying
	\begin{equation}\label{e:prdsl}
		(\hrho(t,\zeta),\nu(t,\zeta))=(\hrho(t,\zeta+n),\nu(t,\zeta+n)),  \quad \text{for any} \; n\in \Zbb.
	\end{equation}
\end{proposition}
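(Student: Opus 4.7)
The plan is to exploit translation invariance of the system \eqref{e:main1}--\eqref{e:main2} in the variable $\zeta$ together with uniqueness. Define the shifted pair
\begin{equation*}
\tilde{\hrho}(t,\zeta):=\hrho(t,\zeta+n),\qquad \tilde{\nu}(t,\zeta):=\nu(t,\zeta+n),\qquad n\in\Zbb.
\end{equation*}
I will show that $(\tilde{\hrho},\tilde{\nu})$ solves the same initial value problem as $(\hrho,\nu)$, and then invoke uniqueness.

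First I would verify that $(\tilde{\hrho},\tilde{\nu})$ satisfies the PDE \eqref{e:main1}--\eqref{e:main2}. Inspecting \eqref{e:g1}--\eqref{e:G1}, the coefficients $\mathcal{g}^{\zeta\zeta}$, $\mathcal{g}^{0\zeta}$ and the remainders $F_1$, $G_1$ depend on $(t,\zeta)$ only through $t$ and through pointwise evaluations of $\hrho$, $\nu$, $\del{t}\hrho$, $\del{\zeta}\hrho$ and the nonlocal term $\Psi$. Since $\del{t}$ and $\del{\zeta}$ commute with the translation $\zeta\mapsto\zeta+n$, the only nontrivial issue is the nonlocal quantity $\Psi$ defined by \eqref{e:Psi0}. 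A short change of variables $w=z+n$ shows that
\begin{equation*}
\Psi[\tilde{\hrho}](t,\zeta)=\frac{1}{f(t)e^{3\zeta}}\int^{\zeta}_{-\infty}[\hrho(t,z+n)-f(t)]e^{3z}\,dz=\Psi[\hrho](t,\zeta+n),
\end{equation*}
so computing $\Psi$ commutes with $\zeta$-translation. Consequently, evaluating \eqref{e:main1}--\eqref{e:main2} for $(\hrho,\nu)$ at $\zeta+n$ is exactly the same as evaluating the PDE for $(\tilde{\hrho},\tilde{\nu})$ at $\zeta$, so $(\tilde{\hrho},\tilde{\nu})$ is also a solution.

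Next I would check that the shifted pair has the same initial data. From \eqref{e:data3},
\begin{equation*}
\tilde{\hrho}|_{t=1}(\zeta)=\beta\mathcal{d}\bigl((1+\beta)^{-\frac{1}{3}}\exp(\zeta+n)\bigr),
\end{equation*}
and since $\mathcal{y}_1(\zeta)=(1+\beta)^{-1/3}\exp\zeta$ (so $\mathcal{y}_1(\zeta+n)=\mathcal{y}_1(\zeta)\cdot e^n$ corresponds to a unit shift in the log-variable), Definition \ref{e:lgprd} combined with assumption \ref{A4} gives $\mathcal{d}\circ\mathcal{y}_1(\zeta+n)=\mathcal{d}\circ\mathcal{y}_1(\zeta)$. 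Hence $\tilde{\hrho}|_{t=1}=\hrho|_{t=1}$, and the analogous identity holds for $\nu$. With the uniqueness hypothesis applied to the two solutions $(\hrho,\nu)$ and $(\tilde{\hrho},\tilde{\nu})$ of the same Cauchy problem, we obtain \eqref{e:prdsl}.

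The main subtlety—and the step I would be most careful about—is the translation invariance of the nonlocal term $\Psi$, since the integral $\int^{\zeta}_{-\infty}$ is not manifestly invariant. The change-of-variable identity above is essentially the same computation appearing in Lemma \ref{t:Psprd}, but used here in the opposite direction: there, periodicity of $\hrho$ was assumed to derive periodicity of $\Psi$; here, I use that $\Psi[\cdot]$ intertwines with translation in order to run the uniqueness argument that in turn yields periodicity of $\hrho$. Once this intertwining is made explicit, the remainder of the proof is a formal consequence of translation invariance of \eqref{e:main1}--\eqref{e:main2} and the log-periodicity of the initial data.
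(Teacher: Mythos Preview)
Your proposal is correct and follows essentially the same approach as the paper: define the shifted pair, verify it solves the same Cauchy problem (the only nontrivial step being the change-of-variables identity for the nonlocal term $\Psi$, which is exactly \eqref{e:Pstr}), check the initial data coincide via the $1$-log-periodicity assumption, and conclude by uniqueness. The paper merely orders the two steps the other way around (data first, translation invariance second), and your remark that the $\Psi$ computation is the intertwining version of Lemma~\ref{t:Psprd} is a nice clarification of what is actually being used.
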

\begin{proof}
	\underline{Step $1$ (periodic data):} Since $(\hrho(t,\zeta),\nu(t,\zeta))$ satisfies \eqref{e:data3}, by assumption  \ref{A4} (see \S\ref{s:model}) and Definition \ref{e:lgprd}, we then have, for any $n\in \Zbb$,
	\begin{align*}
			\hrho|_{t=1}(\zeta)= &\beta \mathcal{d} \bigl( (1+\beta)^{-\frac{1}{3}} \exp(\zeta)\bigr)=\beta \mathcal{d} \bigl( (1+\beta)^{-\frac{1}{3}} \exp(\zeta+n)\bigr) =	\hrho|_{t=1}(\zeta+n) , \\  \nu|_{t=1}(\zeta)=& 1+\mathcal{v} \bigl((1+\beta)^{-\frac{1}{3}} \exp(\zeta)\bigr)=1+\mathcal{v} \bigl((1+\beta)^{-\frac{1}{3}} \exp(\zeta+n)\bigr) =\nu|_{t=1}(\zeta+n) .
	\end{align*}
This means the data functions $\hrho|_{t=1}(\zeta)$ and $\nu|_{t=1}(\zeta)$ are periodic.

\underline{Step $2$ (translational invariance):} We only need to substitute $(\hrho(t,\zeta+n),\nu(t,\zeta+n))$ into \eqref{e:main1}--\eqref{e:main2} to verify these equations still hold. By letting $\bar{\zeta}=\zeta+n$, noting   $\del{\zeta}=\del{\bar{\zeta}}$ (since $d\bar{\zeta}/d\zeta =1$) and with the help of \eqref{e:Pstr},  i.e.,
\begin{equation*}
	\frac{1}{f e^{3\zeta} }   \int^{\zeta}_{-\infty} [\hrho(t,\bar{z}+n) -f(t)] e^{3\bar{z}} d \bar{z}  = \Psi(t,\bar{\zeta}),
\end{equation*}
direct calculations lead to   \eqref{e:main1}--\eqref{e:main2} in terms of the new variable $\bar{\zeta}$. Then the obtained equations are exactly \eqref{e:main1}--\eqref{e:main2} with $\bar{\zeta}$ replacing $\zeta$. This means $(\hrho(t,\zeta+n),\nu(t,\zeta+n))$, for all $n\in \Zbb$, are also solutions to \eqref{e:main1}--\eqref{e:main2} with the same data (see Step $1$).  Therefore, the uniqueness of the solution concludes \eqref{e:prdsl}, which completes this proposition.
\end{proof}


\section{Fuchsian formulations and proof of Main Theorem}\label{s:fuchian}	

In \cite{Liu2022b}, we successfully transformed a second-order hyperbolic equation (similar to \eqref{e:main1}) into a Fuchsian system and then applied the global theorem for Fuchsian systems to study the stable blowups of the solutions. In this article, we aim to use a similar approach, however, the complexities of the remaining terms $F_1$ and the nonlinear term $g^{\zeta\zeta}$ (see \eqref{e:g2} and \eqref{e:F1}) require further investigation before reaching a Fuchsian system. Despite this challenge, we still endeavor to transform \eqref{e:main1}--\eqref{e:main2} into a Fuchsian formulation, since it has proven advantageous for long-term nonlinear analysis. As always, our goal is to provide a more comprehensive understanding of the behavior of the solutions to this problem.

In order to achieve this, two questions arise, recalling \ref{c:ffd} and \ref{c:cot} in \S\ref{s:ol}. That is, how to define the Fuchsian fields and the compactified time since the Fuchsian system has a time singularity at $\tau=0$ which implies we have to transform the blowup time to the singular time $\tau=0$.   In order to take advantage of the reference solution \eqref{e:sl1}--\eqref{e:sl2} which involve a crucial function $f(t)$ with a blowup time $t=t_m$, we still try to use the compactified time introduced by \cite{Liu2022b} which is given, in this setting, by
\begin{align}\label{e:ttf}
	\tau := -g(t)=& -\exp\Bigl(-A\int^t_{1} \frac{f(s)(f(s)+1)}{s^2f_0(s)} ds\Bigr)  \notag  \\
	\overset{(\star)}{=} & -\Bigl(1+ \frac{2}{3} B \int^t_{1} s^{-\frac{2}{3}} f(s)(1+f(s))^{-\frac{1}{3}}  ds \Bigr)^{-\frac{3A}{2}}\in[-1,0),
\end{align}
where $B:= (1+\mf)^{\frac{1}{3}}/(  3\gamma )>0 $ is a constant depending on the data (recall $\beta$ and $\gamma$ given by \eqref{e:feq2b}), $f(t)$ is given by the solution to \eqref{e:feq1b}--\eqref{e:feq2b}, $g(t)$ is a function defined by \eqref{e:gdef} (equivalently, by \eqref{e:gdef2}) and $A\in(0,2)$ is an arbitrary constant.
\begin{remark}
	 Let us remark on some facts from \cite{Liu2022b} about this compactified time:
	\begin{enumerate}[leftmargin=*,label={(T\arabic*)}]
		\item Note $(\star)$ in \eqref{e:ttf} holds due to Lemma \ref{t:f0fg} in Appendix \ref{t:refsol} (see \cite[Lemma $2.1$]{Liu2022b} for the detailed proof).
		\item This time transform $\tau$ maps the initial time $t=1$ to $\tau=-1$ and the maximal time of existence $t=t_m$ to $\tau=0$ due to Lemma \ref{t:gmap}.
	\end{enumerate}
\end{remark}

In the following, in order to use calculations and results in \cite{Liu2022b}, we start with the variables therein.

\subsection{Pre-Fuchsian system}\label{s:prefuc}

The transforms and calculations presented in this section are similar to those in \cite[\S$3.1$--\S$3.2$]{Liu2022b}, with more involved computations. For that reason, we will not go into the details of the calculations here; readers can refer to \cite{Liu2022b} for more information. 
Assuming $f(t)$ solves equation \eqref{e:feq1b}-\eqref{e:feq2b}, we denote:
\begin{align}
	w(t,\zeta)& := \hrho(t,\zeta)-f(t), \label{e:ww} \\
	w_{0}(t,\zeta)&:=\partial_{t}w(t,\zeta)=\partial_{t}\hrho(t,\zeta)- f_0(t), \label{e:w0}
	\\
	w_{\zeta}(t,\zeta)&:=\partial_{\zeta}w(t,\zeta)=\partial_{\zeta}\hrho(t,\zeta) .\label{e:wi}
\end{align}

In terms of these variables, we have
\begin{lemma}\label{t:pref1}
	If $(\hrho,\nu)$ solves \eqref{e:main1}--\eqref{e:main2} and $(w,w_0,w_\zeta)$ is defined by \eqref{e:ww}--\eqref{e:wi}, then \eqref{e:main1} becomes
	\begin{align}
			 \partial_t w_0 + 2\mathcal{g}^{0\zeta}\partial_{\zeta} w_0-  \mathcal{g}^{\zeta\zeta}\partial_{\zeta} w_\zeta     = & \Bigl(\frac{8}{3}-\kappa\Bigr)\frac{  f_0 w_0}{1+f}     -\frac{4}{3t}  w_0 -\frac{4 f_0^2 w}{3 (1+f)^2}  +  \frac{2w}{3t^2} + \frac{\kappa f_0^2  w}{(1+f)^2}  +\frac{4f w}{3t^2}  \notag  \\
			 &  + \frac{2 \omega (1 - 	\iota^3 )  (1+f)}{ 3 t^2}   w     +  \frac{  (8+5\omega)  (1 - 	\iota^3 )  (1+f) }{  9t^2}     w_\zeta  + Q_1 ,   \label{e:weq1}\\
	\partial_{t}w_\zeta=& \partial_{\zeta}w_0,  \label{e:weq2} \\
	\partial_{t}w =&w_{0} . \label{e:weq3}
\end{align}
where
\begin{align*}
	Q_1:=	&     - \frac{2 f_0^2}{9(1+f)^2}  \nu w_\zeta   +   \frac{ (\omega+1)   (2+\omega) (1 - 	\iota^3 ) }{9 t^2}  	\frac{ (1 +f+w)^{\omega} }{(1+f)^{\omega}} w_\zeta^2      + \frac{f_0^2}{9(1+f)^2} \nu^2  w_\zeta  \notag  \\
	&     + \frac{ 2  (1 - 	\iota^3 )   (1+f)}{9 t^2}  \underbrace{\Bigl[   \Bigl(1+ \frac{w}{1+f}\Bigr)^{1+\omega}  -1 \Bigr]  }_{=\mathrm{O}(w)}  w_\zeta  +   \frac{2 \iota^3 f  }{3t^{2} }  w_\zeta 	\Psi   + \frac{  4f_0^2 \nu^2 w_\zeta^2}{27 (1+f)^2 (1+f+w) }   \notag  \\
	&     + \frac{8  f_0  \nu  w_\zeta (f_0+w_0) }{9(1+f)(1+f+w) }   +  \frac{  (8+5\omega)  (1 - 	\iota^3 )  (1+f) }{  9t^2}    \underbrace{\Bigl[\Bigl(1+ \frac{w  }{1+f} \Bigr)^{\omega +1}  -1\Bigr] }_{=\mathrm{O}(w)}w_\zeta \notag \\
	&
	+ \frac{2}{3} (1+ f+w)   \Bigl( \frac{f_0w}{(1+f)(1+f+w)}  -\frac{w_0}{1+ f+w} -\frac{f_0}{3(1+f)}\frac{  \nu  w_\zeta }{1+ f+w} -\frac{f_0}{(1+f)} \nu\Bigr)^2  \notag \\
	& + \underbrace{\frac{2 (1 - 	\iota^3 )  (1+f)^2}{ 3 t^2}  \Bigl[  \Bigl(1+ \frac{w }{1+f} \Bigr)^{\omega+2} \Bigl(1- \Bigl(1+\frac{ w }{1+f}\Bigr)^{-\omega}\Bigr)-\omega \frac{w}{1+f}\Bigr]  }_{=\mathrm{O}(w^2)}  \notag\\	
	& +\frac{2}{3t^2}w^2	+\frac{4}{3(1+w+f) } w^2_0 - \frac{ 8 f_0 w_0w}{3(1+f+w)(1+f)} + \frac{4 f_0^2 w^2}{3 (1+f+w)(1+f)^2 }  .
\end{align*}
\end{lemma}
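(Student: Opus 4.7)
The equations \eqref{e:weq2} and \eqref{e:weq3} are immediate from the definitions \eqref{e:w0}--\eqref{e:wi}, since $f$ is a function of $t$ alone. So the real content is \eqref{e:weq1}, which is just \eqref{e:main1} rewritten after the substitution $\hrho=f+w$, together with a careful separation of the part that is linear in $(w,w_0,w_\zeta)$ (which appears explicitly on the right-hand side of \eqref{e:weq1}) from all higher-order contributions (which are absorbed into $Q_1$).

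The plan is as follows. First, differentiate: $\partial_t\hrho=f_0+w_0$, $\partial_t^2\hrho=\partial_t w_0+f''$, $\partial_\zeta\hrho=w_\zeta$, $\partial_\zeta^2\hrho=\partial_\zeta w_\zeta$, and $\partial_\zeta\partial_t\hrho=\partial_\zeta w_0$. Plugging these into $\Box_{\mathcal{g}}\hrho$ as given by \eqref{e:g1} yields
\[
\Box_{\mathcal{g}}\hrho=\partial_t w_0+2\mathcal{g}^{0\zeta}\partial_\zeta w_0-\mathcal{g}^{\zeta\zeta}\partial_\zeta w_\zeta+f''(t),
\]
so that \eqref{e:main1} becomes
\[
\partial_t w_0+2\mathcal{g}^{0\zeta}\partial_\zeta w_0-\mathcal{g}^{\zeta\zeta}\partial_\zeta w_\zeta=-f''(t)-\Bigl(\tfrac{4}{3t}+\tfrac{\kappa f_0}{1+f}\Bigr)(f_0+w_0)+\tfrac{2}{3t^2}(f+w)(1+f+w)+\tfrac{4(f_0+w_0)^2}{3(1+f+w)}+F_1.
\]
Next, use the ODE \eqref{e:feq1b} to eliminate $f''(t)$; this produces a cancellation of the pure $f$-terms (the ``background'' piece), leaving only terms that vanish at $w=w_0=w_\zeta=\Psi=0$ modulo the reference solution. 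The terms linear in $(w,w_0)$ that survive from the left-hand side of \eqref{e:main1} contribute $-\tfrac{4}{3t}w_0-\tfrac{\kappa f_0}{1+f}w_0$, $\tfrac{2}{3t^2}w+\tfrac{4f w}{3t^2}$, and from expanding the quadratic $(f_0+w_0)^2/(1+f+w)$ around $w=w_0=0$ we extract a linear contribution $\tfrac{8 f_0 w_0}{3(1+f)}-\tfrac{4f_0^2 w}{3(1+f)^2}$ plus a genuinely quadratic remainder. Adding these linear pieces gives exactly the coefficients $(\tfrac{8}{3}-\kappa)\tfrac{f_0 w_0}{1+f}$, $-\tfrac{4f_0^2 w}{3(1+f)^2}$, $\kappa f_0^2 w/(1+f)^2$, $\tfrac{4f w}{3t^2}+\tfrac{2w}{3t^2}$ displayed on the right of \eqref{e:weq1}.

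It remains to handle $F_1$, given by \eqref{e:F1}. Each term in $F_1$ contains either $w_\zeta$ (hence is automatically at least linear) or the polynomial expression $(1+\hrho)^{\omega+1}/(1+f)^\omega$ and $(1+\hrho)^\omega/(1+f)^\omega$; for the latter two I Taylor-expand in $w/(1+f)$, picking off the degree-$0$ and degree-$1$ coefficients and dumping the $O(w^2)$ remainder into $Q_1$. The linear pieces are
\[
\tfrac{2\omega(1-\iota^3)(1+f)}{3t^2}w\quad\text{from }\tfrac{2(1-\iota^3)}{3t^2}\bigl[(1+\hrho/(1+f))^\omega-1\bigr](1+\hrho)^2,\;\text{and}\;\tfrac{(8+5\omega)(1-\iota^3)(1+f)}{9t^2}w_\zeta
\]
from the corresponding factor in front of $\del{\zeta}\hrho$, after absorbing the nonlinear $((1+w/(1+f))^{\omega+1}-1)$ factor into $Q_1$. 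All the terms involving $\nu$, the $(\del{\zeta}\hrho)^2$ term, the $\Psi$ term, the quadratic $(f_0/(1+f)-\cdots)^2$ term, and the $O(w^2)$ remainders from the Taylor expansions are collected into $Q_1$ as stated, together with the $O(w^2)$ and $O(w_0^2)$ pieces from $\tfrac{2}{3t^2}(f+w)(1+f+w)$ and $\tfrac{4(f_0+w_0)^2}{3(1+f+w)}$ whose linear parts were already extracted above.

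The only genuine obstacle is bookkeeping: the $\Bigl(\tfrac{8}{3}-\kappa\Bigr)\tfrac{f_0 w_0}{1+f}$ coefficient results from combining the $\tfrac{8 f_0}{3(1+f)}w_0$ piece coming from Taylor-expanding $\tfrac{4(f_0+w_0)^2}{3(1+f+w)}$ with the $-\tfrac{\kappa f_0}{1+f}w_0$ piece coming from the damping, and similarly the coefficient of $w$ in front of $f_0^2/(1+f)^2$ splits into $-\tfrac{4}{3}$ (from the quadratic) and $+\kappa$ (from the damping) — both of these must be tracked sign-by-sign to match the form in \eqref{e:weq1}. Once this bookkeeping is done, every remaining contribution lands inside $Q_1$ and matches the displayed expression term-by-term, completing the proof.
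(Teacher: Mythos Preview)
Your proposal is correct and follows exactly the same approach as the paper's own proof, which is simply stated as ``direct computations from \eqref{e:main1} by substituting these new variables \eqref{e:ww}--\eqref{e:wi} into \eqref{e:main1} and using the equation \eqref{e:feq1b} of $f$.'' You have in fact supplied considerably more detail than the paper about the bookkeeping (the cancellation of the pure-$f$ terms via \eqref{e:feq1b}, the Taylor expansion of $4(f_0+w_0)^2/(3(1+f+w))$, and the extraction of the linear pieces from $F_1$), all of which checks out.
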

\begin{proof}
	The proof is direct computations from \eqref{e:main1} by
	substituting these new variables \eqref{e:ww}--\eqref{e:wi} into \eqref{e:main1} and using the equation \eqref{e:feq1b} of $f$. Then with direct calculations from \eqref{e:w0} and \eqref{e:wi}, we obtain \eqref{e:weq2}--\eqref{e:weq3}.
\end{proof}

Next, as \cite{Liu2022b}, we introduce
\begin{equation}\label{e:u}
	u(t,\zeta)=   \frac{ 1 }{f(t)} w (t,\zeta)	,   \quad
	u_0(t,\zeta)=    \frac{1}{f_0(t)}  w_0(t,\zeta) \AND
	u_\zeta(t,\zeta)=  \frac{\cc}{1+f(t)} w_\zeta(t,\zeta).
\end{equation}
where $\cc$ is a constant to be determined.
Recall the notation conventions \eqref{e:udl} in \S\ref{s:AIN}. As a reminder, we use $\uu$ to denote $u$ in the compactified time coordinate $\tau$. We will frequently use this notation in the following. For instance, we shall denote:
\begin{equation}\label{e:udvar}
	\underline{u}(\tau,\zeta)=u(g^{-1}(-\tau),\zeta), \quad \uuo(\tau,\zeta)=  	u_0(g^{-1}(-\tau),\xi) \AND \underline{u_\zeta}(\tau,\zeta)=  u_\zeta(g^{-1}(-\tau),\zeta) .
\end{equation}

We define two useful quantities $\chi$ and $\xi$  (see Appendix \ref{t:refsol}, or \cite[\S$2.5$]{Liu2022b}, for their detailed properties) by
\begin{equation}\label{e:Gdef}
	\chi(t):=\frac{t^{\frac{2}{3}} f_0(t)}{(1+f(t))^{\frac{2}{3} } f(t) g^{\frac{2}{3A}}(t)} \overset{(\star)}{=} \frac{  g^{-\frac{4}{3A}}(t) t^{-\frac{2}{3}}}{B f(t) (1+f(t))^{-\frac{2}{3} }} >0 \AND 	\xi(t):=\frac{1}{ g(t) (1+f(t)) } .
\end{equation}
where $(\star)$ holds due to the relation \eqref{e:f0frl} given in the following (see \cite{Liu2022b} for details).
Then Lemma \ref{t:pref1} can be further transformed to the following pre-Fuchsian system.

\begin{lemma}\label{t:pref2}
		If $(\hrho,\nu)$ solves \eqref{e:main1}--\eqref{e:main2} and $(\underline{ u}_0,\underline{ u}_\zeta,\underline{ u})$ is defined by \eqref{e:udvar}.   In terms of the $\tau$-coordinate, 	\eqref{e:main1} becomes
 \begin{align}
	&   \partial_\tau \underline{u_0}   - \frac{  2 }{3AB\tau}     \underline{\chi} \underline{\nu} \partial_{\zeta}  \underline{u_0}   + \frac{1}{A\tau} 	\Bigl( \frac{ 1 }{ 36 \cc } +   \frac{ 1 }{ 36 \cc \uf}  + \cc\mathfrak{Z}_0 \Bigr) \partial_{\zeta}    \underline{u_\zeta}
	=  \frac{1}{A\tau}  \Bigl(   4\kappa-\frac{ 14 }{    3   } + \bigl(\kappa- \frac{4}{3} \bigr)B^{-1} \underline{\mathfrak{G}}    \Bigr)   \underline{u_0}  \notag  \\
	& \hspace{2cm} - \frac{  (8+5\omega)  (1 - 	\iota^3 )  }{  9   \cc A \tau }       \underline{u_\zeta } +  \frac{1}{A\tau}  \Bigl( 4       -    4 \kappa  + \bigl(\frac{4 }{3   }      -     \kappa \bigr) B^{-1} \underline{\mathfrak{G}}     -     \frac{2 \omega (1 - 	\iota^3 )  }{ 3   }  \Bigr)  \uu        \notag \\
	& \hspace{2cm}    - \frac{1}{ A   }    \underline{\xi} \Bigl(      4\kappa  -   \frac{ 14 }{3   }   +\bigl(\kappa     -   \frac{4   }{3   }  \bigr)  B^{-1}    \underline{\mathfrak{G}}\Bigr) \uu  + \frac{  (8+5\omega)  (1 - 	\iota^3 )  }{  9   \cc A  }  \underline{\xi} \Bigl(1+\frac{1}{\underline{f}}\Bigr)      \underline{u_\zeta }  +     \underline{ F_2 } , \label{e:stp6} \\
	& \frac{1}{ 4 +B^{-1} \underline{\mathfrak{G}} }	\Bigl( \frac{ 1 }{ 36 \cc^2 } +   \frac{ 1 }{ 36 \cc^2 \uf}  +  \mathfrak{Z}_0\Bigr)\partial_{\tau} \underline{u_\zeta}  +  \frac{1}{	A \tau   }	\Bigl( \frac{ 1 }{ 36 \cc } +   \frac{ 1 }{ 36 \cc \uf}  + \cc \mathfrak{Z}_0\Bigr)  \partial_{\zeta} \underline{u_0 } \notag \\
	& \hspace{2cm} =   \frac{1}{ A    \tau   } \Bigl( \frac{ 1 }{ 36 \cc^2 }   +  \mathfrak{Z}_0\Bigr) \underline{u_\zeta} - \frac{ 1 }{ 36 \cc^2 A  }  \underline{\xi} \Bigl(1+\frac{1}{\uf}\Bigr) \underline{u_\zeta} \label{e:uzetaeq} \\
	\intertext{and}
	&\partial_{\tau} \underline{u} =    -\frac{1 }{	A    \tau  }   (4 +B^{-1}\underline{\mathfrak{G}}) \underline{u_0 } +  \frac{1}{	A    \tau  } (4 +B^{-1}\underline{\mathfrak{G}})  \underline{u} \notag  \\
	& \hspace{2cm} +\frac{1 }{	A   }   \underline{\xi} \Bigl(1+\frac{1}{\uf}\Bigr)(4 +B^{-1}\underline{\mathfrak{G}}) \underline{u_0 } -  \frac{1}{	A    }  \underline{\xi}\Bigl(1+\frac{1}{\uf}\Bigr) (4 +B^{-1}\underline{\mathfrak{G}})  \underline{u}   , \label{e:s3stp1}
\end{align}
		where
\begin{align}
\mathfrak{Z}_0=& \mathfrak{Z}_0(\tau,\uu,\underline{\nu}):= \frac{ 1 }{ 36 \cc^2 } \Bigl(1+ \frac{ 1 }{    \underline{f} }  \Bigr) \underbrace{\Bigl[ \Bigl( 1+ \frac{     \underline{f}   \underline{u}}{ 1+   \underline{f}}  \Bigr)^{1+\omega}   -1\Bigr]}_{=\mathrm{O}(\uu)}      -   \frac{ (4  + B^{-1}    \underline{\mathfrak{G}})}{ 9  \cc^2 }    \underline{\nu}^2  \label{e:Z0}  \\
\intertext{and}
 F_2  :=   &      -\frac{2  \chi}{9 A  B  g}\cc^{-1}  \nu  u_\zeta +  \frac{ \chi  }{ 9 A B   g } \cc^{-1}  \nu^2  u_\zeta   +  \frac{ 5 (\omega+2)  (1 - 	\iota^3 )   (1+f)  }{9 \cc A f   g }  \underbrace{\Bigl[   \Bigl(1+ \frac{    fu }{1+f}\Bigr)^{1+\omega}  -1 \Bigr]  }_{=\mathrm{O}(u)}    u_\zeta  \notag \\
& +  \frac{ (\omega+1)   (2+\omega) (1 - 	\iota^3 ) }{9 A f   g }  	  \Bigl(1+  \frac{fu}{1+f} \Bigr)^{\omega}  \cc^{-2}(1+f) u_\zeta^2    \notag \\
& +     \frac{2 \iota^3   }{3 A   g }  \cc^{-1}  u_\zeta	\Psi   +  \frac{  4  \chi \cc^{-2}  \nu^2  u_\zeta^2}{27 A B  g  (1 +    \frac{f}{1+f} u) }  +   \frac{8  \chi  \cc^{-1}   \nu  u_\zeta   (1 +    u_0) }{9  A B     g (1 +   \frac{f}{1+f} u) }  \notag \\
&
+   \frac{  2 \chi }{ 3 AB  g }  \Bigl(1 +   \frac{ f u}{1+f} \Bigr)   \Bigl( \frac{   fu }{(1+f)(1+    \frac{fu}{1+f} )}  -\frac{   u_0}{1+   \frac{fu}{1+f} } -\frac{1}{3 }\frac{  \nu  \cc^{-1}  u_\zeta }{1 +    \frac{fu}{1+f} } -  \nu\Bigr)^2  \notag \\
& + \frac{2 (1 - 	\iota^3 )  (1+f) }{ 3   A f  g}  \underbrace{ \Bigl[  \Bigl(1+ \frac{    fu }{1+f} \Bigr)^{\omega+2} \Bigl(1- \Bigl(1+\frac{    fu }{1+f}\Bigr)^{-\omega}\Bigr)-\omega \frac{     fu }{1+f}\Bigr]  }_{=\mathrm{O}(u^2)}  \notag\\	
& +  \frac{2  }{ 3  A  (1+f )  g }  f u^2	+   \frac{4\chi }{3A B    g  (1+    \frac{fu}{1+f}) }
\Bigl(u_0-\frac{fu}{1+f}\Bigr)^2    .     \label{e:F2def.1}
	\end{align}
\end{lemma}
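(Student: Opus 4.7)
The plan is to start from the three equations \eqref{e:weq1}--\eqref{e:weq3} of Lemma \ref{t:pref1}, apply the algebraic rescaling $w=fu$, $w_0=f_0u_0$, $w_\zeta=(1+f)u_\zeta/\cc$ from \eqref{e:u}, and then convert the time variable via the compactification \eqref{e:ttf}. The proof is essentially a long but structured calculation, so the main work is careful bookkeeping of coefficients rather than a new idea.

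First, I would dispatch the two simpler equations. Substituting into the transport equation \eqref{e:weq3} gives $\partial_t u=(f_0/f)(u_0-u)$; from \eqref{e:gdef} the chain rule reads
\begin{equation*}
\partial_t = -\tau\,\frac{A\,f(1+f)}{t^2 f_0}\,\partial_\tau.
\end{equation*}
Combining these and invoking the quantities $\chi$ and $\xi$ from \eqref{e:Gdef} together with the decomposition $\chi=4B+\mathfrak{G}$ used throughout \cite{Liu2022b}, the prefactor collapses to $-(1/(A\tau))(4+B^{-1}\underline{\mathfrak{G}})$; the apparent $\xi$-corrections in \eqref{e:s3stp1} come from writing $1=(1+1/\underline{f})-1/\underline{f}$ to isolate the cleanest leading singular part. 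For \eqref{e:weq2}, substitution gives $(f_0/\cc)u_\zeta+((1+f)/\cc)\partial_t u_\zeta=f_0\partial_\zeta u_0$; rearranging, multiplying by the symmetrizing prefactor $[1/(36\cc^2)+1/(36\cc^2\underline{f})+\mathfrak{Z}_0]/(4+B^{-1}\underline{\mathfrak{G}})$, and converting to $\tau$ yields \eqref{e:uzetaeq}.

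The substantive step is the transformation of \eqref{e:weq1}. The derivative $\partial_t w_0=f''u_0+f_0\partial_t u_0$ is rewritten using the ODE \eqref{e:feq1b} to eliminate $f''$. The mixed-derivative term $2\mathcal{g}^{0\zeta}\partial_\zeta w_0$, with $\mathcal{g}^{0\zeta}=f_0\nu/(3(1+f))$ from \eqref{e:g2}, becomes $-(2/(3AB\tau))\underline{\chi}\,\underline{\nu}\,\partial_\zeta\underline{u_0}$ after the chain rule and insertion of the $\chi$-identity. The principal spatial coefficient $-\mathcal{g}^{\zeta\zeta}\partial_\zeta w_\zeta$ requires expanding \eqref{e:g2} and using the algebraic identity \eqref{e:ioeq} to rewrite $K\iota/(6\pi)^{1/3}=(1-\iota^3)/9$; this is where both $\mathfrak{Z}_0$ (defined in \eqref{e:Z0}) and the combination $[1/(36\cc)+1/(36\cc\underline{f})]$ appearing in \eqref{e:stp6} naturally emerge. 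Each source term on the right-hand side of \eqref{e:weq1}, namely $2w/(3t^2)$, $4fw/(3t^2)$, the entropy contribution $2\omega(1-\iota^3)(1+f)w/(3t^2)$, the damping $\kappa f_0^2w/(1+f)^2$, and the pressure-gradient term in $w_\zeta$, is rescaled individually; each splits into the canonical form $(1/(A\tau))[\mathrm{const}+B^{-1}\underline{\mathfrak{G}}\cdot\mathrm{const}]\,(\underline{u}\text{ or }\underline{u_0}\text{ or }\underline{u_\zeta})+\underline{\xi}\cdot(\text{correction})$, again via the $1=(1+1/\underline{f})-1/\underline{f}$ splitting. The remaining semilinear pieces coming from $Q_1$ of Lemma \ref{t:pref1} collect into $F_2$ according to \eqref{e:F2def.1}, with the overall factors $\chi/(ABg)$ and $1/g$ bookkeeping the time conversion.

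The principal obstacle is combinatorial bookkeeping: each of the coefficients in \eqref{e:stp6} and \eqref{e:uzetaeq} must be matched exactly, the leading $1/\tau$-singular contributions kept separate from the regular $\xi$-corrections, and the $\mathfrak{G}$-dependent remainders consistently tracked throughout. The four tools that make this tractable are (i) the $f$-ODE \eqref{e:feq1b} for eliminating $f''$, (ii) the identity \eqref{e:ioeq} converting Newtonian combinations of $K$ and $\iota$ into $(1-\iota^3)$-factors, (iii) the decomposition $\chi=4B+\mathfrak{G}$ from \eqref{e:Gdef} separating leading and subleading behavior, and (iv) the relation $g=-\tau$ from \eqref{e:ttf}. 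Once these substitutions are applied consistently across \eqref{e:weq1}--\eqref{e:weq3}, the three target equations \eqref{e:stp6}, \eqref{e:uzetaeq}, \eqref{e:s3stp1} emerge in the stated form, with all genuinely nonlinear contributions collected into the single remainder $F_2$ displayed in \eqref{e:F2def.1}.
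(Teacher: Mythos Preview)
Your proposal is correct and follows essentially the same route as the paper: start from \eqref{e:weq1}--\eqref{e:weq3}, apply the rescaling \eqref{e:u}, convert $\partial_t\to\partial_\tau$ via \eqref{e:ttf}, and use the four ingredients you list (the $f$-ODE \eqref{e:feq1b} to kill $f''$, the decomposition $\chi=4B+\mathfrak{G}$, the relation $\tau=-g$, and the splitting producing the $\underline{\xi}$-corrections). The paper itself omits the details and only records the key identities \eqref{e:ttf2}, \eqref{e:f0frl}, and \eqref{e:chi4}; your write-up is in fact more explicit about the structure of the bookkeeping, and the only superfluous step is invoking \eqref{e:ioeq}, since that conversion was already carried out in Lemma~\ref{t:lgfml}.
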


\begin{proof}
This proof relies on lengthy and tedious calculations which is similar to \cite[\S$3.2$]{Liu2022b}. We omit the details but point out a few crucial identities used in this proof.
Under this time transform \eqref{e:ttf}, using \eqref{e:dtg0}, we arrive at
\begin{equation}\label{e:ttf2}
	\del{\tau} \underline{ u}_\mu= -\underline{[g^\prime(t)]^{-1} \del{t} u_\mu}
	= \underline{ \frac{t^{\frac{2}{3}} (1+f )^{\frac{1}{3} } }{A  B   g^{\frac{2}{3A}+1}  f} \del{t} u_\mu } , \quad (\mu=0,\zeta)  .
\end{equation}
In the calculations, note we can use the ODE \eqref{e:feq1b} of $f$ to replace $f^{\prime\prime}$ and use the identity from Lemma \ref{t:f0fg} in Appendix \ref{t:ttf},
\begin{equation}\label{e:f0frl}
	f_0(t)=B^{-1} t^{-\frac{4}{3}} g^{-\frac{2}{3A}}(t)(1+f(t))^{\frac{4}{3} } >0
\end{equation}
to replace $f_0$ properly.
Then, with the help of Proposition \ref{t:limG} and the definition \eqref{e:xidef} of $\xi$, there is a function $\mathfrak{G} \in C^1([t_0,t_m))$, such that for $t\in [t_0,t_m)$, we have identities
\begin{equation}\label{e:chi4}
	\frac{	\chi(t)}{B}= 4  + \frac{ \mathfrak{G}(t) }{B} >0 \AND 	\frac{1}{fg}
	=\xi\Bigl(1+\frac{1}{f}\Bigr).
\end{equation}
Using the above identities, we are able to conclude this lemma by following similar computations in \cite[\S$3.2$]{Liu2022b}.
\end{proof}

The next lemma gives the equation of the velocity $\nu$ in terms of the compactified time $\tau$.
\begin{lemma}\label{t:pref3}
		If $(\hrho,\nu)$ solves \eqref{e:main1}--\eqref{e:main2}, then in terms of the $\tau$-coordinate, \eqref{e:main2} becomes
	\begin{align}\label{e:veq3}
	&     \del{\tau} \underline{ \nu }
	- \frac{1}{ 3 A  \tau }  (4  + B^{-1} \underline{ \mathfrak{G}}) \underline{ \nu  }	\del{\zeta} \underline{\nu} 	    =     \frac{1}{A \tau } \Bigl( 4\kappa- \frac{2   }{3   }  +\bigl(\kappa -  \frac{1 }{ 3    } \bigr)  B^{-1} \underline{ \mathfrak{G}}  \Bigr) \underline{  \nu}      + \frac{ 1}{3 A \tau }( 4  + B^{-1} \underline{ \mathfrak{G}})\underline{ \nu}^2   + \frac{2 \iota^3   }{ A \tau } \underline{ \Psi}  \notag \\
	&  \hspace{0.5cm}    - \frac{(2+\omega)  (1 - 	\iota^3 )  }{3 \cc A       }  \underline{ \xi}\Bigl(1+\frac{1}{\uf}\Bigr)   \Bigl(1+\frac{\uf \uu}{1+\uf}\Bigr)^{\omega}  \underline{ u_\zeta }    +  \frac{(2+\omega)  (1 - 	\iota^3 ) }{3 \cc A  \tau }     \Bigl(1+\frac{\uf\uu}{1+\uf}\Bigr)^{\omega}  \underline{ u_\zeta} \notag \\
	&\hspace{0.5cm}  +  \frac{ 2     (1 - 	\iota^3 )   (1+\uf)}{3  A   \tau  \uf  }   \underbrace{\Bigl[   \Bigl(1+\frac{\uf\uu}{1+\uf}\Bigr)^{1+\omega}  -1 -\frac{(1+\omega)\uf \uu}{1+\uf}\Bigr] }_{\mathrm{O}(\uu^2)}   + \frac{ 2  (1+\omega)   (1 - 	\iota^3 ) \uu }{3  A \tau  }      .
\end{align}
\end{lemma}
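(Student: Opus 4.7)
The proof is a direct transformation of equation \eqref{e:main2} by the compactified-time change $\tau=-g(t)$ combined with the substitutions \eqref{e:ww}--\eqref{e:u}, followed by algebraic rearrangement using the two key identities \eqref{e:f0frl} and \eqref{e:chi4}. No new estimates are needed; everything is formal bookkeeping, parallel to (but heavier than) the computation carried out in \cite[\S3.2]{Liu2022b} for the density equation.

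The plan is as follows. First, I multiply both sides of \eqref{e:main2} by the time-rescaling factor $\frac{t^2 f_0}{A g f(1+f)}$; by \eqref{e:ttf2} the left-hand side becomes $\partial_\tau\underline{\nu}$ (underlined by \eqref{e:udl}), while the transport coefficient becomes
\als{
\frac{t^2 f_0^2}{3Agf(1+f)^2}=\frac{\chi}{3ABg}=-\frac{1}{3A\tau}\bigl(4+B^{-1}\mathfrak{G}\bigr),
}
after substituting \eqref{e:f0frl} and then \eqref{e:chi4}; moving this to the left gives the transport term on the left of \eqref{e:veq3}.

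Second, I process each of the six summands in $G_1$ (see \eqref{e:G1}) by the same factor $\frac{t^2 f_0}{A g f(1+f)}$, repeatedly invoking \eqref{e:f0frl} to cancel powers of $t$ and $g^{1/(3A)}$. For the two $\nu$-terms I use $\frac{1}{g}=-\frac{1}{\tau}$ together with $\chi/B=4+B^{-1}\mathfrak{G}$ to obtain, combined with the overall damping, the coefficient $\frac{1}{A\tau}\bigl(4\kappa-\tfrac{2}{3}+(\kappa-\tfrac{1}{3})B^{-1}\mathfrak{G}\bigr)\underline{\nu}$ and the $\nu^2$ term $\frac{1}{3A\tau}(4+B^{-1}\mathfrak{G})\underline{\nu}^2$. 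The $\Psi$ term is straightforward: $-\frac{2\iota^3(1+f)f}{t^2 f_0}\Psi$ times the factor yields $\frac{2\iota^3}{A\tau}\underline{\Psi}$.

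Third, I handle the two pressure-type contributions. Substituting $\partial_\zeta\hrho=(1+f)u_\zeta/\cc$ and $(1+\hrho)/(1+f)=1+\frac{\uf\uu}{1+\uf}$ on each, I reduce them to expressions with front factor $\frac{1+f}{gf}$. The clean decomposition
\als{
\frac{1+f}{gf}=\frac{1+1/f}{g}=-\frac{1}{\tau}-\xi\Bigl(1+\frac{1}{f}\Bigr)\cdot(-1)\,,\qquad \text{i.e.,}\qquad \frac{1+1/f}{\tau}=\frac{1}{\tau}-\xi\Bigl(1+\frac{1}{f}\Bigr),
}
which follows from the second part of \eqref{e:chi4} (namely $\xi(1+1/f)=-1/(f\tau)$), splits each such term cleanly into a singular part on $1/\tau$ and a regular part on $\xi$. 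This produces exactly the two $(2+\omega)(1-\iota^3)$ terms and the $\underline{\mathfrak{G}}$-paired polynomial term of \eqref{e:veq3}. For the $[(1+h)^{1+\omega}-1]$ piece I further split $[(1+h)^{1+\omega}-1]=\bigl[(1+h)^{1+\omega}-1-(1+\omega)h\bigr]+(1+\omega)h$, where the first bracket is $\mathrm{O}(u^2)$ and remains inside the nonlinear remainder, while $(1+\omega)h=(1+\omega)\frac{\uf\uu}{1+\uf}$ combined with the $(1+1/\uf)/\tau$ prefactor produces the linear singular coefficient $\frac{2(1+\omega)(1-\iota^3)}{3A\tau}\uu$. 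Collecting all contributions yields \eqref{e:veq3}.

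The step that requires the most care is the simultaneous cancellation in the pressure sector: the linear-in-$u$ piece extracted from $[(1+h)^{1+\omega}-1]$ must combine with the $1/f\tau$-correction coming from the split of $(1+1/f)/\tau$ so that the surviving nonlinear remainder is genuinely $\mathrm{O}(u^2)$, as claimed in \eqref{e:veq3}. Everything else is systematic substitution; the $\Psi$ term, being algebraically unchanged, causes no difficulty. Hence the lemma follows by direct computation.
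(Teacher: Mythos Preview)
Your proposal is correct and follows essentially the same route as the paper. Your multiplicative factor $\frac{t^2 f_0}{Agf(1+f)}$ is exactly $-1/g'(t)$, which equals the paper's factor $\frac{t^{2/3}(1+f)^{1/3}}{ABg^{2/(3A)+1}f}$ after substituting \eqref{e:f0frl}; from there both arguments process $G_1$ term by term using \eqref{e:chi4} and the split $\frac{1+f}{gf}=-\frac{1}{\tau}+\xi(1+\frac{1}{f})$, and your Taylor decomposition of $[(1+h)^{1+\omega}-1]$ to isolate the linear $\uu$ piece is precisely how the last two summands of \eqref{e:veq3} arise.
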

\begin{proof}
	Multiplying \eqref{e:main2} by $  \frac{t^{\frac{2}{3}} (1+f )^{\frac{1}{3} }}{A B g^{\frac{2}{3A}+1} f}  $, we obtain
	\begin{equation*}
  \frac{t^{\frac{2}{3}} (1+f )^{\frac{1}{3} }}{A B g^{\frac{2}{3A}+1} f}  \del{t} \nu
		+  \frac{t^{\frac{2}{3}} (1+f )^{\frac{1}{3} }}{A B g^{\frac{2}{3A}+1} f} \frac{f_0}{3 (1+f)} \nu    	\del{\zeta} \nu    	    =  \frac{t^{\frac{2}{3}} (1+f )^{\frac{1}{3} }}{A B g^{\frac{2}{3A}+1} f} G_1  .
	\end{equation*}
By using similar identities \eqref{e:ttf2} for $\nu$, \eqref{e:f0frl}  and the definition of $\chi(t)$ \eqref{e:Gdef}, we obtain
\begin{equation*}
  \del{\tau} \underline{ \nu}
	- \frac{ \underline{ \chi } }{ 3 A B \tau }  \underline{ \nu }	\del{\zeta} \underline{ \nu } 	    = \underline{  \frac{t^{\frac{2}{3}} (1+f )^{\frac{1}{3} }}{A B g^{\frac{2}{3A}+1} f} G_1}  .
\end{equation*}
Noting by \eqref{e:Psi0}, \eqref{e:ww} and \eqref{e:u}, we express $\Psi$ in term of $u$,
\begin{equation}\label{e:ps2}
	\Psi(t,\zeta)=\frac{1}{e^{3\zeta} }\int^\zeta_{-\infty} u(t,z)e^{3z} dz .
\end{equation}
Substituting $G_1$ by \eqref{e:G1} into the above equation with direct calculations
yields \eqref{e:veq3}, which completes the proof.
\end{proof}

\subsection{Reformulation of continuity equation and evolutional gravity term $\Psi$}\label{s:evo}
In $F_1$ of \eqref{e:stp6} and \eqref{e:veq3}, there exist singular terms like $\Psi/\tau$ in the remainders which cannot be easily eliminated. To obtain a complete Fuchsian system, we propose to \textit{treat $\Psi$ as a new Fuchsian field}. Accordingly, we need to construct a singular evolution equation for $\Psi$ based on the continuity equation \eqref{e:keyid5} and supplement it to the Fuchsian system. As a crucial bridge connecting the velocity, the density, and their derivatives, we first rewrite the continuity equation \eqref{e:keyid5} to prepare for this task.
\begin{lemma}\label{t:keyid3}
	The continuity equation \eqref{e:keyid5}, in terms of $(u_0,u_\zeta,u)$ becomes
	\begin{equation}\label{e:keyid2}
		\frac{ 3   fu}{ 1+f+ fu} -  \frac{ 3  u_0}{1 + \frac{fu}{1+f}} - \frac{  \cc^{-1}   \nu   u_\zeta }{1+ \frac{f u }{1+f} }  =  \del{\zeta}\nu	+3 \nu   .
	\end{equation}
Moreover, we also use the following two expressions of \eqref{e:keyid2} in the following,
\begin{equation}\label{e:keyid3a}
	\del{\zeta}\nu	
	= 	-3u_0+  3u -3 \nu      -   \frac{3u }{1+f}  + 3 \Bigl(u_0-\frac{fu}{1+f}\Bigr)\Bigl(1-  \frac{ 1}{1 + \frac{fu}{1+f}} \Bigr)- \frac{    \cc^{-1} \nu   u_\zeta }{1+  \frac{fu}{1+f}}
\end{equation}
and
\begin{equation}\label{e:keyid3b}
	\frac{1}{3} \del{\zeta} \Bigl[\bigl(1+\frac{f u}{1+f}\bigr)\nu\Bigr]+\Bigl(1+\frac{fu}{1+f}\Bigr)\nu   =\frac{fu}{1+f}-u_0.
\end{equation}
\end{lemma}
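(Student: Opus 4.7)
The plan is to treat this as a direct algebraic reformulation of \eqref{e:keyid5} using the changes of variables \eqref{e:ww}--\eqref{e:wi} and \eqref{e:u}. From those definitions we obtain the three substitutions
\begin{equation*}
\hrho = f+fu, \qquad 1+\hrho = (1+f)\Bigl(1+\tfrac{fu}{1+f}\Bigr), \qquad \del{t}\hrho = f_0(1+u_0), \qquad \del{\zeta}\hrho = \cc^{-1}(1+f)\,u_\zeta,
\end{equation*}
which should convert every occurrence of $\hrho$ and its derivatives in \eqref{e:keyid5} into an expression in $(u,u_0,u_\zeta,\nu)$ with clean prefactors.

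For \eqref{e:keyid2}, I would start from the equality between the middle and right members of \eqref{e:keyid5}, multiply through by $3(1+f)/f_0$, and insert the substitutions above; the factor $(1+f)$ in $\del{\zeta}\hrho$ cancels the $1/(1+f)$ from $1+\hrho$, yielding $\cc^{-1}\nu u_\zeta/(1+fu/(1+f))$, while $\del{t}\hrho/(1+\hrho)$ produces $3(1+u_0)/(1+fu/(1+f))$. Combining the resulting constant $3$ with the $-3(1+u_0)/(\cdots)$ term gives $3(fu/(1+f)-u_0)/(1+fu/(1+f))$, and finally writing $(1+fu/(1+f))^{-1}\cdot fu/(1+f) = fu/(1+f+fu)$ identifies the first term in \eqref{e:keyid2}.

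For \eqref{e:keyid3a}, I would solve \eqref{e:keyid2} for $\del{\zeta}\nu$ and use the identity $3fu/(1+f) = 3u - 3u/(1+f)$ together with the algebraic rearrangement $1/(1+fu/(1+f)) = 1 - (fu/(1+f))/(1+fu/(1+f))$ to split the first term on the left of \eqref{e:keyid2} into the combination $3u - 3u/(1+f) - 3u_0 + 3(u_0 - fu/(1+f))(1 - 1/(1+fu/(1+f)))$; moving $-3\nu$ to the right then matches \eqref{e:keyid3a} exactly.

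For \eqref{e:keyid3b}, rather than start from \eqref{e:keyid2}, it is cleaner to return to \eqref{e:keyid5}, multiply both sides of the equality between its middle and right members by $(1+\hrho)(1+f)/f_0$, and recognise the telescoping $\nu\del{\zeta}\hrho + (1+\hrho)\del{\zeta}\nu = \del{\zeta}[(1+\hrho)\nu]$; after dividing by $1+f$ (which is $\zeta$-independent) and invoking the substitutions above, both sides collapse to \eqref{e:keyid3b}. The only step that is more than bookkeeping is confirming that this $\del{\zeta}$-product identity is what allows us to rewrite the mixed $\nu\del{\zeta}\hrho$ term as a total $\zeta$-derivative, and this is immediate from the Leibniz rule; thus the entire lemma is a chain of substitutions with no genuine obstacle.
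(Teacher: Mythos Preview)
Your proposal is correct and follows essentially the same route as the paper: direct substitution of \eqref{e:ww}--\eqref{e:wi} and \eqref{e:u} into \eqref{e:keyid5} for \eqref{e:keyid2} and \eqref{e:keyid3a}, and a Leibniz-rule regrouping for \eqref{e:keyid3b}. The only cosmetic difference is that for \eqref{e:keyid3b} the paper first records the auxiliary identity $u_\zeta=\tfrac{\cc f}{1+f}\del{\zeta}u$ and inserts it into \eqref{e:keyid2}, whereas you go back to \eqref{e:keyid5} and multiply by $(1+\hrho)(1+f)/f_0$; both collapse to the same product-rule observation $\nu\del{\zeta}\hrho+(1+\hrho)\del{\zeta}\nu=\del{\zeta}[(1+\hrho)\nu]$.
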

\begin{proof}
	By inserting \eqref{e:ww}--\eqref{e:wi} and \eqref{e:u} into \eqref{e:keyid5}, direct calculations yield \eqref{e:keyid2} and \eqref{e:keyid3a}. Let us prove \eqref{e:keyid3b}. Using \eqref{e:ww}--\eqref{e:wi} and \eqref{e:u}, we obtain
	\begin{equation}\label{e:uzuz}
		u_\zeta=\frac{\cc f}{1+f}\del{\zeta} u.
	\end{equation}
Inserting \eqref{e:uzuz} into \eqref{e:keyid2} implies \eqref{e:keyid3b}, which completes the proof.
\end{proof}

\begin{lemma}\label{t:pseq}
	Suppose $\Psi$ is defined by \eqref{e:Psi0} and further given by \eqref{e:ps2} in terms of $u$ and $u$ is given by \eqref{e:u}. Then $\Psi$ satisfies an equation,
	\begin{equation}\label{e:dtpsi6}
		\del{\tau} \underline{ \Psi  }
		=   \frac{ 1 }{	 3 A    \tau     } 	 ( 4 +B^{-1}  \underline{\mathfrak{G}})    \Bigl( 1+  \frac{\uf  \underline{u}}{1+\uf} \Bigr)  \underline{\nu}  -  \frac{ \underline{ \chi } \underline{\xi} }{	 3 A  B  } 	    \Bigl(1+\frac{1}{\uf}\Bigr)   \Bigl( 1+   \frac{\uf\uu}{1+\uf} \Bigr) \underline{ \nu}    -   \frac{ \underline{\chi} \underline{\xi} }{	A  B   }	  \Bigl(1+\frac{1}{\uf}\Bigr)   \underline{ \Psi } .
	\end{equation}
Moreover, we also have a useful equation
\begin{equation}\label{e:dipsi}
		\del{\zeta}  \Psi
		=u-3 \Psi   .
\end{equation}
\end{lemma}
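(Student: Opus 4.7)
The second identity $\del{\zeta}\Psi = u - 3\Psi$ is a direct consequence of the integral representation \eqref{e:ps2}. My plan is to write $e^{3\zeta}\Psi = \int^{\zeta}_{-\infty} u(t,z)e^{3z}\, dz$, differentiate both sides with respect to $\zeta$ via the fundamental theorem of calculus to get $3e^{3\zeta}\Psi + e^{3\zeta}\del{\zeta}\Psi = u\, e^{3\zeta}$, and divide through by $e^{3\zeta}$.

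For the evolution equation \eqref{e:dtpsi6}, I would start from $f(t)e^{3\zeta}\Psi = \int^{\zeta}_{-\infty}(\hrho(t,z) - f(t))e^{3z}\, dz$ (equivalent to the definition \eqref{e:Psi0}) and differentiate in $t$. The integrand satisfies $\del{t}(\hrho - f) = w_0 = f_0 u_0$ by \eqref{e:w0} and \eqref{e:u}, giving
\begin{equation*}
f_0 e^{3\zeta}\Psi + f e^{3\zeta}\del{t}\Psi = f_0 \int^{\zeta}_{-\infty} u_0(t,z)\, e^{3z}\, dz.
\end{equation*}
Next I would substitute $u_0 = \tfrac{fu}{1+f} - \bigl(1+\tfrac{fu}{1+f}\bigr)\nu - \tfrac{1}{3}\del{\zeta}\bigl[\bigl(1+\tfrac{fu}{1+f}\bigr)\nu\bigr]$, which is the continuity relation \eqref{e:keyid3b}. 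Integrating the last piece by parts, the boundary term at $-\infty$ vanishes due to the $e^{3z}$ weight, and — this is the nice cancellation that makes the calculation work — the two volume integrals of $(1+\tfrac{fu}{1+f})\nu\, e^{3z}$ cancel exactly, leaving only the boundary value at $z=\zeta$ together with $\tfrac{f}{1+f}\int^{\zeta}_{-\infty} u e^{3z}dz = \tfrac{f}{1+f}e^{3\zeta}\Psi$. Dividing by $fe^{3\zeta}$ and using $\tfrac{1}{1+f} - \tfrac{1}{f} = -\tfrac{1}{f(1+f)}$ yields the clean $t$-form
\begin{equation*}
\del{t}\Psi = -\frac{f_0}{f(1+f)}\Psi - \frac{f_0}{3f}\Bigl(1+\frac{fu}{1+f}\Bigr)\nu.
\end{equation*}

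The last step is the change of time $t\mapsto\tau$. I would apply \eqref{e:ttf2} to multiply through by $\tfrac{t^{2/3}(1+f)^{1/3}}{ABg^{2/(3A)+1}f}$, and use \eqref{e:f0frl} to eliminate $f_0$. The coefficient of $\Psi$ collapses to $-\tfrac{\chi\xi}{AB}(1+\tfrac{1}{f})$ after bundling powers of $t$, $g$, $(1+f)$ using the definition \eqref{e:Gdef} of $\chi$ and $\xi$. The coefficient of $(1+\tfrac{fu}{1+f})\nu$ is a single expression $-\tfrac{(1+f)^{5/3}}{3AB^2 t^{2/3}g^{4/(3A)+1}f^2}$ which I would \emph{split} into the two target terms via the identity
\begin{equation*}
-\frac{\chi(1+f)}{3ABgf} = \frac{\chi}{3AB\tau} - \frac{\chi\xi(1+f)}{3ABf},
\end{equation*}
which follows from $\tau = -g$ and $\xi(1+f) = 1/g$. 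Finally, substituting $\chi/B = 4 + B^{-1}\mathfrak{G}$ via \eqref{e:chi4} produces the three terms in \eqref{e:dtpsi6}. The main obstacle is purely algebraic: keeping track of the various powers of $t$, $g$, and $1+f$ so that the coefficients reassemble as the precise combinations of $\chi$, $\xi$, $\mathfrak{G}$, and $1/\tau$ appearing on the right-hand side, and in particular recognizing the splitting identity above which is what creates the singular $1/\tau$ term out of the regular factor $-\chi(1+f)/(gf)$.
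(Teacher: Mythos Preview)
Your proposal is correct and follows essentially the same route as the paper: differentiate the integral representation of $\Psi$ in $t$, substitute the continuity relation \eqref{e:keyid3b}, use the integration-by-parts cancellation to obtain $\del{t}\Psi = -\tfrac{f_0}{f(1+f)}\Psi - \tfrac{f_0}{3f}\bigl(1+\tfrac{fu}{1+f}\bigr)\nu$, and then pass to $\tau$ via \eqref{e:ttf2}, \eqref{e:f0frl}, \eqref{e:Gdef} and \eqref{e:chi4}. The only cosmetic difference is that the paper differentiates $e^{3\zeta}\Psi$ (using $\del{t}u = \tfrac{f_0}{f}(u_0-u)$) rather than $fe^{3\zeta}\Psi$, and it phrases your splitting identity implicitly through the second relation in \eqref{e:chi4}, namely $\tfrac{1}{fg}=\xi\bigl(1+\tfrac{1}{f}\bigr)$.
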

\begin{proof}
	First, with the help of \eqref{e:u}, we have
	\begin{equation}\label{e:dtu00}
		\partial_{t} u =     \frac{f_0}{f} (u_0 -   u) .
	\end{equation}
	Differentiating $e^{3\zeta}\Psi$ with respect to $t$ and using \eqref{e:dtu00}, we obtain
	\begin{equation}\label{e:dteps}
		\del{t} (e^{3\zeta}\Psi)=\int^\zeta_{-\infty} \del{t} u(t,z) e^{3z} dz
		=  \int^\zeta_{-\infty}  \frac{f_0(t) }{f(t)}\bigl(u_0(t,z)-u(t,z)\bigr)   e^{3z} dz .
	\end{equation}
Inserting \eqref{e:keyid3b} into \eqref{e:dteps}, we obtain
\begin{align*}
		\del{t} (e^{3\zeta}\Psi)	= &-   \int^\zeta_{-\infty}   \frac{f_0  u }{f (1+f) }   e^{3z} dz -\frac{ f_0 }{3 f } \int^\zeta_{-\infty}  3 \Bigl(1+  \frac{fu}{1+f}\Bigr) \nu  e^{3z}  +  \del{z}\Bigl[ \Bigl(1+  \frac{fu}{1+f} \Bigr) \nu   \Bigr]  e^{3z} dz \notag  \\
		= &-   \int^\zeta_{-\infty}   \frac{f_0  u }{f (1+f) }   e^{3z} dz -\frac{ f_0 }{3 f }  \int^\zeta_{-\infty}   \del{z}\Bigl[ \Bigl(1+  \frac{fu}{1+f} \Bigr) \nu     e^{3z}\Bigr] dz \notag  \\
		= &-   \frac{f_0   }{f (1+f) }   \int^\zeta_{-\infty}  u (t,z) e^{3z} dz -\frac{ f_0 }{3 f }  \Bigl( 1+  \frac{fu}{1+f} \Bigr) \nu     e^{3 \zeta}\notag  \\
		= &-   \frac{f_0   }{f (1+f) }   e^{3\zeta} \Psi(t,\xi) -\frac{ f_0 }{3 f }  \Bigl( 1+  \frac{fu}{1+f} \Bigr) \nu     e^{3 \zeta}  .
\end{align*}
Then this implies
\begin{equation*}
	\del{t} \Psi
	=  -   \frac{f_0   }{f (1+f) } \Psi  -\frac{ f_0 }{3 f }  \Bigl( 1+  \frac{fu}{1+f} \Bigr) \nu    .
\end{equation*}

Transforming the time derivative $\del{t}$ to be $\del{\tau}$ which is similar to \eqref{e:ttf2}, we have
\begin{equation*}
	\frac{t^{\frac{2}{3}} (1+f)^{\frac{1}{3} } }{	A  B g^{\frac{2}{3A}+1}    f   }		\del{t} \Psi
	=  -   \frac{t^{\frac{2}{3}} (1+f)^{\frac{1}{3} } }{	A  B g^{\frac{2}{3A}+1}    f   }	 \frac{f_0   }{f (1+f) }  \Psi  -   \frac{t^{\frac{2}{3}} (1+f)^{\frac{1}{3} } }{	A  B g^{\frac{2}{3A}+1}    f   }	\frac{ f_0 }{3 f }  \Bigl( 1+  \frac{fu}{1+f} \Bigr) \nu,
\end{equation*}
which leads to, with the help of \eqref{e:ttf},  \eqref{e:f0frl}  and the definition \eqref{e:Gdef} of $\chi(t)$,
\begin{equation*}
	\del{\tau}  \underline{\Psi}
	=  \underline{-   \frac{\chi}{	A  B g   f   }	    \Psi   -   \frac{ \chi  }{	 3 A  B g    } 	  \frac{1+f}{f}  \Bigl( 1+  \frac{fu}{1+f} \Bigr) \nu    }  ,
\end{equation*}
which further yields \eqref{e:dtpsi6} by using \eqref{e:chi4}.  Then  differentiating  \eqref{e:ps2} with respect to $\zeta$ concludes \eqref{e:dipsi}. We finish the proof.
\end{proof}

\subsection{The Fuchsian system}
It turns out the equations in Lemma \ref{t:pref2}--\ref{t:pref3} and Lemma \ref{t:pseq} can not directly form a Fuchsian system given in Appendix
\ref{s:fuc}.
We first need to use identities \eqref{e:keyid2} and \eqref{e:dipsi} to rewrite \eqref{e:veq3} and \eqref{e:dtpsi6}, and this is given by the following lemma.
\begin{lemma}\label{t:vph2}
	If $(\hrho,\nu)$ solves \eqref{e:main1}--\eqref{e:main2}, then \eqref{e:veq3} and \eqref{e:dtpsi6} can be rewritten as
	\begin{align}\label{e:veq5}
	   \del{\tau} \underline{ \nu }
		=   	&     \frac{ 2  (1+\omega)   (1 - 	\iota^3 ) \uu }{3  A \tau  }  +    \frac{1}{A \tau }  \Bigl( 4\kappa- \frac{2   }{3   }  +\bigl(\kappa -  \frac{1 }{ 3    } \bigr)  B^{-1} \underline{ \mathfrak{G}}  \Bigr) \underline{  \nu}     + \frac{2 \iota^3   }{ A \tau } \underline{ \Psi} +  \frac{(2+\omega)  (1 - 	\iota^3 ) }{3 \cc A  \tau }      \underline{ u_\zeta} \notag \\
		&      - \frac{(2+\omega)  (1 - 	\iota^3 )  }{3 \cc A       }  \underline{ \xi}\Bigl(1+\frac{1}{\uf}\Bigr)   \Bigl(1+\frac{\uf \uu}{1+\uf}\Bigr)^{\omega}  \underline{ u_\zeta }    +  \frac{(2+\omega)  (1 - 	\iota^3 ) }{3 \cc A  \tau }    \Bigl[ \Bigl(1+\frac{\uf\uu}{1+\uf}\Bigr)^{\omega}  - 1 \Bigr]\underline{ u_\zeta} \notag \\
		&   +  \frac{ 2     (1 - 	\iota^3 )   (1+\uf)}{3  A   \tau  \uf  }   \underbrace{\Bigl[   \Bigl(1+\frac{\uf\uu}{1+\uf}\Bigr)^{1+\omega}  -1 -\frac{(1+\omega)fu}{1+f}\Bigr] }_{\mathrm{O}(\uu^2)}      + \frac{ 1}{3 A \tau }( 4  + B^{-1} \underline{ \mathfrak{G}})\underline{ \nu}^2  \notag \\
		&   + \frac{1}{ 3 A  \tau }  (4  + B^{-1} \underline{ \mathfrak{G}}) \underline{ \nu  }	\Bigl(\frac{ 3   fu}{ 1+f+ fu} -  \frac{ 3  u_0}{1 + \frac{fu}{1+f}} - \frac{  \cc^{-1}   \nu   u_\zeta }{1+ \frac{f u }{1+f} } -3 \nu \Bigr)
	\end{align}
	and
	\begin{align}\label{e:dtpsi7}
		\del{\tau} \underline{\Psi }
		-\frac{\alpha}{AB\tau}	\del{\zeta} \underline{ \Psi} = &  -\frac{\alpha}{AB\tau}\underline{u }+ \frac{3 \alpha}{AB\tau} \underline{\Psi} + \frac{ 1 }{	 3 A    \tau     } 	 ( 4 +B^{-1} \underline{\mathfrak{G}})    \Bigl( 1+ \frac{\uf \underline{u}}{1+\uf} \Bigr) \underline{\nu}  \notag  \\
		& -  \frac{ \underline{\chi}  }{	 3 A  B  } 	 \underline{ \xi} \Bigl(1+\frac{1}{\uf}\Bigr)  \Bigl( 1+  \frac{\uf \underline{u}}{1+\uf} \Bigr) \underline{\nu }   -   \frac{\underline{\chi}}{	A  B   }	\underline{\xi} \Bigl(1+\frac{1}{\uf}\Bigr)    \underline{\Psi }.
	\end{align}
\end{lemma}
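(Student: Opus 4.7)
The plan is to obtain \eqref{e:veq5} from \eqref{e:veq3} by using the continuity identity \eqref{e:keyid2} to trade the quasilinear transport term $\nu\partial_\zeta\nu$ for lower-order fields, and to obtain \eqref{e:dtpsi7} from \eqref{e:dtpsi6} by using \eqref{e:dipsi} to manufacture a transport term $\partial_\zeta\Psi/\tau$ on the left-hand side. In both cases, the rewriting is pure algebra, but its purpose is structural: it exposes the linear part of each singular term so that the whole system will satisfy condition \ref{c:5} of the Fuchsian framework.

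For \eqref{e:veq5}, I would first isolate the quasilinear term in \eqref{e:veq3}, namely $-\frac{1}{3A\tau}(4+B^{-1}\underline{\mathfrak{G}})\underline{\nu}\partial_\zeta\underline{\nu}$, and move it to the right-hand side. Then, reading \eqref{e:keyid2} as
\[
\partial_\zeta\nu \;=\; \frac{3fu}{1+f+fu}-\frac{3u_0}{1+\frac{fu}{1+f}}-\frac{\cc^{-1}\nu u_\zeta}{1+\frac{fu}{1+f}}-3\nu,
\]
I substitute to produce the last line of \eqref{e:veq5}. What remains is to split off the linear-in-$u$ parts of the two analytic remainders in \eqref{e:veq3}. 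For the term with coefficient $\Bigl(1+\tfrac{\uf\uu}{1+\uf}\Bigr)^{\omega}\underline{u_\zeta}$, I write $(1+x)^{\omega}=1+\bigl[(1+x)^{\omega}-1\bigr]$; the $1$ produces the singular piece $\frac{(2+\omega)(1-\iota^3)}{3\cc A\tau}\underline{u_\zeta}$ and leaves an $\mathrm{O}(\uu)$ factor multiplying $\underline{u_\zeta}$, which is harmless. For the term $\frac{2(1-\iota^3)(1+\uf)}{3A\tau\uf}\bigl[(1+x)^{1+\omega}-1\bigr]$ with $x=\tfrac{\uf\uu}{1+\uf}$, I extract the linear Taylor coefficient by $(1+x)^{1+\omega}-1=(1+\omega)x+\mathrm{O}(x^2)$; the linear piece collapses, using $\tfrac{\uf\uu}{1+\uf}\cdot\tfrac{1+\uf}{\uf}=\uu$, to $\frac{2(1+\omega)(1-\iota^3)\uu}{3A\tau}$, matching the first term of \eqref{e:veq5}, and the rest is $\mathrm{O}(\uu^2)$, as indicated by the under-brace.

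For \eqref{e:dtpsi7}, the idea is the same but now the transport term must be created rather than removed. I would simply subtract $\tfrac{\alpha}{AB\tau}\partial_\zeta\underline{\Psi}$ from both sides of \eqref{e:dtpsi6}, then use \eqref{e:dipsi} on the right-hand side, namely $\partial_\zeta\Psi=u-3\Psi$, to replace
\[
-\frac{\alpha}{AB\tau}\partial_\zeta\underline{\Psi}\;=\;-\frac{\alpha}{AB\tau}\underline{u}+\frac{3\alpha}{AB\tau}\underline{\Psi}.
\]
Adding these back on the right-hand side of \eqref{e:dtpsi6} yields exactly \eqref{e:dtpsi7}. No further rearrangement is necessary because the remaining terms of \eqref{e:dtpsi6} (the two $\underline{\nu}$-terms and the $\underline{\chi}\underline{\xi}\,\underline{\Psi}$-term) are reproduced verbatim.

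Neither step involves a real analytic obstacle; the work is essentially bookkeeping. The only thing that needs care is the matching of the $1/\tau$ prefactors and the signs: the continuity equation \eqref{e:keyid2} must be applied in the form $\partial_\zeta\nu+3\nu=\cdots$ so that the spurious $-3\nu$ that appears after the substitution combines with the quadratic piece $\tfrac{1}{3A\tau}(4+B^{-1}\underline{\mathfrak{G}})\underline{\nu}^2$ already present in \eqref{e:veq3}; and the coefficient $\alpha/(AB)$ introduced in \eqref{e:dtpsi7} is a free parameter at this stage, to be fixed later when verifying condition \ref{c:5} for the Fuchsian system, so no constraint on $\alpha$ enters this lemma. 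With those caveats the two displayed identities follow by direct substitution.
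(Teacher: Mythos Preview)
Your proposal is correct and follows the same approach as the paper: the paper's proof is a one-line appeal to ``direct calculations'' using \eqref{e:veq3}, \eqref{e:keyid2}, \eqref{e:keyid3a} for \eqref{e:veq5}, and \eqref{e:dtpsi6}, \eqref{e:dipsi} for \eqref{e:dtpsi7}, and you have accurately spelled out those calculations. One small remark: in your final paragraph you speak of the $-3\nu$ ``combining'' with the $\underline{\nu}^2$ term, but in the displayed form \eqref{e:veq5} they are in fact kept separate (the $\underline{\nu}^2$ term from \eqref{e:veq3} survives verbatim on line three and the $-3\nu$ stays inside the bracket on line four); this is only a cosmetic point and does not affect your argument.
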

\begin{proof}
	By direct calculations, \eqref{e:veq3}, \eqref{e:keyid2} and \eqref{e:keyid3a} yield \eqref{e:veq5}, while  \eqref{e:dtpsi6} and \eqref{e:dipsi} imply \eqref{e:dtpsi7}.
\end{proof}

Now we are in the position to give the complete Fuchsian system. Letting the Fuchsian field $\U:=(\uuo, \underline{u_\zeta},  \uu, \underline{\nu}, \underline{\Psi} )^T$ and selecting the parameters
\begin{equation}\label{e:para0}
\kappa:=\frac{7}{6}+ \lambda ,  \quad	\cc:= \frac{1}{5}, \quad q:=\lambda+\frac{1}{30} (3-8 \iota^3) \;
 \Bigl(>\frac{7}{150}\Bigr)  \AND  	 \alpha =\frac{(3 \iota^3+2)^2 B}{6 (10 \lambda+\iota^3+9)}  .
\end{equation}
for any constant $\lambda>0$ and $\iota^3\in (0, 1/5]$, with the help of $\omega=-8/5$ (recalling \eqref{e:S1}),
the equations \eqref{e:stp6}--\eqref{e:s3stp1} and \eqref{e:veq5}--\eqref{e:dtpsi7}  become
\begin{equation}\label{e:fuc}
	\B^0\partial_{\tau}\U+\B^\zeta\partial_{\zeta}\U=\frac{1}{ \tau}\mathfrak{B} \Pbb \U+ \mathcal{H} +  (-\tau)^{-\frac{1}{2}} \mathcal{F}  ,
\end{equation}
where $\Pbb=\mathbb{I}  $,
\begin{gather}
	\B^0:=\p{1 & 0 &0 &  0 & 0\\
		0 &  \frac{1}{ 4 +B^{-1} \underline{\mathfrak{G}} }	\bigl( \frac{ 25 }{ 36   } +   \frac{ 25 }{ 36   \uf}  +  \mathfrak{Z}_0 \bigr)  & 0 & 0 & 0 \\
		0 & 0 & \lambda+\frac{1}{30} (3-8 \iota^3) & 0 & 0\\
		0 & 0 & 0 & 1 & 0  \\
		0 & 0 & 0 & 0 & 1}, \label{e:B0}\\
	\B^\zeta := \frac{ 1 }{A \tau} \p{ -\frac{2  \underline{\chi}    }{3  B }  \underline{\nu}      &  \frac{ 5 }{ 36  } +   \frac{ 5 }{ 36   \uf}  + \frac{1}{5} \mathfrak{Z}_0  & 0 & 0   & 0\\
\frac{ 5 }{ 36  } +   \frac{ 5 }{ 36   \uf}  + \frac{1}{5} \mathfrak{Z}_0  & 0 &  0  & 0 & 0\\
		0 & 0 & 0 & 0 & 0 \\
		0  & 0 & 0 & 0  & 0 \\
		0 & 0 & 0 & 0 & -\frac{(3 \iota^3+2)^2 }{6 (10 \lambda+\iota^3+9)}	} ,  \\
	\mathfrak{B}:=
	\frac{1}{A}   \p{    4\lambda  + \mathfrak{Z}_1    &  \mathfrak{Z}_2      & \frac{2(3-8 \iota^3)  }{15}    -4  \lambda   +\mathfrak{Z}_3  &  \mathfrak{Z}_4    & 0 \\
	0 &    \frac{25}{ 36 	  }  +  \mathfrak{Z}_0         &    0 &      0   & 0 \\
 -\frac{2(3-8 \iota^3) }{15}	-4 \lambda   & 0 &   4 \lambda+\frac{2(3-8 \iota^3)    }{15}     & 0 & 0  \\
	0	    &    \frac{2 (1-\iota^3) }{3}    &       -\frac{2(1-\iota^3) }{5}  +\mathfrak{Z}_5    &    4\lambda+4    +\mathfrak{Z}_6  &  2\iota^3  \\
	0 &  0  & -\frac{(3 \iota^3 +2)^2}{6 (10 \lambda+\iota^3+9)}  &  \frac{4}{3}   	      +\mathfrak{Z}_7  & \frac{(3 \iota^3+2)^2}{2 (10 \lambda+\iota^3+9)}  },  \label{e:frakB}\\
\mathcal{F}:= \p{ - \frac{1}{A B }	\bigl(\lambda- \frac{1}{6} \bigr)     (-\tau)^{- \frac{1}{2}}\underline{\mathfrak{G}}   (\underline{u_0} - \uu) \\ 0 \\ -\frac{1}{4 A B} \bigl(-\frac{2(3-8 \iota^3) }{15}	-4 \lambda\bigr)   (-\tau)^{- \frac{1}{2}}\underline{\mathfrak{G}} ( \underline{u_0} -\uu )  \\ 	
	- \frac{1 }{A B } \bigl(\lambda +\frac{5}{6} \bigr)   	(-\tau)^{- \frac{1}{2}}\underline{\mathfrak{G}} \underline{\nu}  \\
-\frac{1}{3 A B } (-\tau)^{- \frac{1}{2}}\underline{\mathfrak{G}} \underline{\nu} } \label{e:calF}
\end{gather}
and $
\mathcal{H}=\mathcal{H}(\tau,\U):= \p{ H_1,H_2, H_3, H_4, H_5}^T$ such that
\begin{align}
	H_1  =H_1(\tau,\U)
	=&    - \frac{1}{ A   }    \underline{\xi} \Bigl[     4\lambda   +\Bigl(\lambda   -   \frac{1 }{6 }  \Bigr)  B^{-1}    \underline{\mathfrak{G}}\Bigr] \uu  ,  \label{e:H1} \\
	H_2=H_2(\tau,\U)= & - \frac{ 25 }{ 36 A  }  \underline{\xi} \Bigl(1+\frac{1}{\uf}\Bigr) \underline{u_\zeta} , \\
	H_3=H_3(\tau,\U)= &   \Bigl(\lambda+\frac{3-8 \iota^3}{30} \Bigr)  \frac{ 1}{	A   }      \underline{\xi} \Bigl(1+\frac{1}{\uf}\Bigr)(4 +B^{-1}   \underline{\mathfrak{G}})    (\underline{u_0 }  - \uu ) , \\
	H_4 =H_4(\tau,\U)
	=&       - \frac{ 2  (1 - 	\iota^3 )  }{3   A       }  \underline{ \xi}\Bigl(1+\frac{1}{\uf}\Bigr)   \Bigl(1+\frac{\uf \uu}{1+\uf}\Bigr)^{-\frac{8}{5}}  \underline{ u_\zeta }  , \\
	H_5=H_5(\tau,\U)= &    -  \frac{   \underline{\chi } \underline{\xi}}{	 3 A  B  } 	   \Bigl(1+\frac{1}{\uf}\Bigr)   \Bigl( 1+   \frac{\uf\uu}{1+\uf} \Bigr)   \underline{\nu}     -   \frac{  \underline{\chi}  \underline{\xi}}{	A  B   }	 \Bigl(1+\frac{1}{\uf}\Bigr)      \underline{ \Psi }   .   \label{e:H5}
\end{align}
and $\mathfrak{Z}_\ell:=\mathfrak{Z}_\ell(\tau,\mathcal{U})$ ($\ell=0,1,\cdots,7$) are analytic for $\mathcal{U}\in B_{\tilde{R}}(\Rbb^N)$ (denote $N:=5$) and continuous  for $\tau\in[-1,0]$ and satisfy $\mathfrak{Z}_\ell(\tau,0)=0$.

\subsection{Verifications of Fuchsian system}\label{s:stp4}
This section contributes to verifying the singular system \eqref{e:fuc} satisfies all the conditions \ref{c:2}--\ref{c:7} in Appendix \ref{s:fuc} which implies \eqref{e:fuc} is a Fuchsian system \eqref{e:model1}.

\subsubsection{Verifications of Conditions \ref{c:2} and \ref{c:6} in Appendix \ref{s:fuc}} \label{s:F135}
Since $\Pbb=\mathbb{I}$ leads to $\Pbb^\perp=0$ (in fact, all the conditions involving $\Pbb^\perp$ hold immediately), \ref{c:2} and \ref{c:6} are verified directly.

\subsubsection{Partial verification of Condition  \ref{c:4}  in Appendix \ref{s:fuc}} \label{s:F3}
Now let us turn to Condition \ref{c:4}. It is clear that $\B^0$ and $\B^\zeta$ are symmetric and $[\mathbb{I},\mathfrak{B}]=0$.
From the expressions of  $\B^\zeta$ and $\mathfrak{B}$, we know
$\B^\zeta \in   C^{0}([-1,0),C^{\infty}( B_R(\mathbb R^{N}), \mathbb M_{N\times N})$,  $\mathfrak{B}\in   C^{0}([-1,0],C^{\infty}( B_R(\mathbb R^{N}), \mathbb M_{N\times N})$.  Corresponding to \eqref{e:model1}, we identify $\B^\zeta_0=0$ and $\B^\zeta_2=\tau \B^\zeta$ and it is direct to check
\begin{equation*}
	\tau \B^\zeta \in C^{0}([-1,0],C^{\infty}( B_R(\mathbb R^{N}), \mathbb M_{N\times N})) .
\end{equation*}
Correspondingly, we can construct $\tilde{\B}^0(\tau)$ and $\tilde{\mathfrak{B}}(\tau)$ by dropping all the $\mathfrak{Z}_\ell$ ($\ell=0,\cdots 7$) terms, while construct $\tilde{\B}^\zeta_2(\tau)$  by dropping all the $\mathfrak{Z}_\ell$ ($\ell=0,\cdots 7$) terms and $\underline{\nu}$ involved term $-\frac{2\underline{\chi}}{3AB}  \underline{\nu}$.  We can see  $\tilde{\B}^0,\;\tilde{\B}^\zeta_2, \;\tilde{\mathfrak{B}} \in  C^0([-1,0], C^\infty(\mathbb M_{N\times N}))$.  It is direct, from the expression \eqref{e:B0} of $\B^0$, to verify $\B^0\in C^{0}([-1,0),C^{\infty}( B_R(\mathbb R^{N}), \mathbb M_{N\times N})$ instead of $\B^0\in C^{1}([-1,0),C^{\infty}( B_R(\mathbb R^{N}), \mathbb M_{N\times N})$ and we postpone this complete verification to \S\ref{s:F6}.
We now partially verified \ref{c:4}.

\subsubsection{Verification of Condition \ref{c:3} in Appendix \ref{s:fuc}}\label{s:F2}
Let us verify \ref{c:3}. By the definition \eqref{e:H1}--\eqref{e:H5} of $H_\ell$ ($\ell=1,\cdots,5$) and  Proposition \ref{t:limG} and \ref{t:fginv0}, we have $\mathcal{H}(\tau,0)=0$ and $\mathcal{H}\in C^0([-1,0],C^\infty(B_{\tilde{R}}(\Rbb^N),\mathbb{M}_{N\times N}))$, which verifies \ref{c:3}. In order to verify $\mathcal{F}\in C^0([-1,0],C^\infty(B_{\tilde{R}}(\Rbb^N),\mathbb{M}_{N\times N}))$ (recall \eqref{e:calF}), we need preparations:

\begin{lemma}\label{t:Geq2}
	Suppose $\mathfrak{G}$ is given by \eqref{e:chi4}, then $	\del{\tau} \underline{\mathfrak{G}}$ is
	\begin{equation}\label{e:dtgeq}
		\del{\tau} \underline{\mathfrak{G}}= \frac{  \underline{\mathfrak{G}}  \underline{\chi} }{3  A  B   \tau   }      - \frac{ \underline{\chi}^2 }{A  B   \underline{g  f}    	 }-   \frac{2 \underline{\chi}^{\frac{3}{2}}  }{ 3A    B^{\frac{1}{2}}   \underline{g } \underline{f}^{\frac{1}{2}}   }
	\end{equation}
for $\tau\in(-1,0)$.
\end{lemma}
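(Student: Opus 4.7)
The plan is to differentiate the definition of $\chi$ with respect to $t$, use the ODE \eqref{e:feq1b} satisfied by $f$ to eliminate $f^{\prime\prime}$, and then convert to the $\tau$-derivative via the chain rule, finally rewriting everything in terms of $\chi$, $g$, $f$ using the algebraic identities recorded in \eqref{e:f0frl} and \eqref{e:chi4}.

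First, since $\chi=4B+\mathfrak{G}$ by \eqref{e:chi4}, it is equivalent to derive the evolution equation for $\chi$. Taking the logarithm of the defining expression $\chi(t)=\frac{t^{2/3} f_0}{(1+f)^{2/3} f g^{2/(3A)}}$ and differentiating, I will obtain
\begin{equation*}
\frac{\chi^{\prime}}{\chi}=\frac{2}{3t}+\frac{f_0^{\prime}}{f_0}-\frac{2 f_0}{3(1+f)}-\frac{f_0}{f}-\frac{2}{3A}\frac{g^{\prime}}{g}.
\end{equation*}
Using the definition \eqref{e:gdef} of $g$, the last term becomes $\frac{2f(1+f)}{3t^2 f_0}$. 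Using \eqref{e:feq1b} to replace $f^{\prime\prime}=f_0^{\prime}$, the quotient $f_0^{\prime}/f_0$ reduces to $-\frac{4}{3t}+\frac{2f(1+f)}{3t^2 f_0}+\frac{4f_0}{3(1+f)}$, so that the logarithmic derivative collapses to
\begin{equation*}
\frac{\chi^{\prime}}{\chi}=-\frac{2}{3t}+\frac{4f(1+f)}{3t^2 f_0}-\frac{f_0(f+3)}{3f(1+f)}.
\end{equation*}

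Next, from \eqref{e:ttf} we have $g^{\prime}(t)=-Ag\,\frac{f(1+f)}{t^2 f_0}$, so $\del{\tau}=-[g^{\prime}(t)]^{-1}\del{t}=\frac{t^2 f_0}{Agf(1+f)}\del{t}$. Multiplying through, I obtain three terms,
\begin{equation*}
\del{\tau}\chi=-\frac{2tf_0\chi}{3Agf(1+f)}+\frac{4\chi}{3Ag}-\frac{t^2 f_0^2\chi(f+3)}{3Agf^2(1+f)^2}.
\end{equation*}

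The key simplification uses the relation $f_0=B^{-1}t^{-4/3}g^{-2/(3A)}(1+f)^{4/3}$ from \eqref{e:f0frl} together with the definition of $\chi$. A short computation yields the two identities
\begin{equation*}
\frac{tf_0}{f(1+f)}=\frac{\chi^{1/2}}{B^{1/2}f^{1/2}}\AND \frac{t^2 f_0^2}{f(1+f)^2}=\frac{\chi}{B}.
\end{equation*}
Substituting these turns the first term into $-\frac{2\chi^{3/2}}{3AB^{1/2}gf^{1/2}}$ (matching the last term of the target) and the third term into $-\frac{\chi^2(f+3)}{3ABgf}$. Finally, writing $\chi=4B+\mathfrak{G}$ gives
\begin{equation*}
\frac{4\chi}{3Ag}-\frac{\chi^2(f+3)}{3ABgf}=-\frac{\mathfrak{G}\chi}{3ABg}-\frac{\chi^2}{ABgf},
\end{equation*}
and replacing $-1/g$ by $1/\tau$ produces exactly \eqref{e:dtgeq}. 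The main obstacle is the bookkeeping in the logarithmic differentiation and, above all, spotting the two algebraic identities that collapse the $(t,f_0,f)$-expressions into pure powers of $\chi$ (and $B$, $g$, $f$); once those are in hand the remainder is routine.
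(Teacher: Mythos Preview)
Your proposal is correct and follows essentially the same route as the paper: differentiate $\chi$, eliminate $f_0'$ via the ODE \eqref{e:feq1b}, convert $\del t\to\del\tau$ via \eqref{e:ttf}, and simplify using the algebraic relations between $f_0$, $f$, $g$, $\chi$ (your two identities are equivalent to \eqref{e:gbA}). The only organizational difference is that the paper packages the computation of $\del{t}\chi$ into the separate Lemma \ref{t:dtchi} (stated for general parameters $\mathcal a,\mathcal c$) and then multiplies by the conversion factor, whereas you carry out the logarithmic differentiation inline for the specific values $\mathcal a=\mathcal c=4/3$ and simplify after passing to $\tau$.
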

\begin{proof}
	First by the definition of $\chi$
	\eqref{e:Gdef}, we express
	\begin{equation}\label{e:gbA}
		g^{\frac{2}{3A}}   = B^{-\frac{1}{2}} t^{-\frac{1}{3} }f^{-\frac{1}{2}}  (1+f )^{\frac{1}{3} }	\chi^{-\frac{1}{2}} .
	\end{equation}
Multiplying \eqref{e:dtchi} in Lemma \ref{t:dtchi} of Appendix \ref{s:dtchi} on the both sides (letting $\cc=\frac{4}{3}$ and $\ca=\frac{4}{3}$) by $\frac{t^{\frac{2}{3}} (1+f )^{\frac{1}{3} } }{A  B   g^{\frac{2}{3A}+1}  f}$ and using \eqref{e:gbA} to replace $g^{\frac{2}{3A}}$, we obtain
\begin{equation*}
	\frac{t^{\frac{2}{3}} (1+f )^{\frac{1}{3} } }{A  B   g^{\frac{2}{3A}+1}  f}	\del{t}\mathfrak{G}
	= -\frac{  \mathfrak{G}  \chi }{3  A  B   g   }      - \frac{ \chi^2 }{A  B   g  f    	 }-   \frac{2 \chi^{\frac{3}{2}}  }{ 3A    B^{\frac{1}{2}}   g  f^{\frac{1}{2}}   } .
\end{equation*}
Similar to \eqref{e:ttf2}, we arrive at
\begin{equation*}
		\del{\tau} \underline{\mathfrak{G}}=  \frac{  \underline{\mathfrak{G}}  \underline{\chi} }{3  A  B   \tau   }      - \frac{ \underline{\chi}^2 }{A  B   \underline{g  f}    	 }-   \frac{2 \underline{\chi}^{\frac{3}{2}}  }{ 3A    B^{\frac{1}{2}}   \underline{g } \underline{f}^{\frac{1}{2}}   } .
\end{equation*}
It completes the proof.
\end{proof}

\begin{lemma}\label{t:Gest2}
Suppose $\mathfrak{G}$ is given by \eqref{e:chi4}.
Then $\underline{\mathfrak{G}} (\tau )$ has an estimate
\begin{equation*}
|\underline{\mathfrak{G}}(\tau) | \lesssim (-\tau)^{\frac{1}{2}}
\end{equation*}
for $\tau\in[-1,0)$ and the function $(-\tau)^{-\frac{1}{2}}\underline{\mathfrak{G}}(\tau)$ can be continuously extended to $\tau\in[-1,0]$.
\end{lemma}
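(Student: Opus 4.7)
The strategy is to treat the equation in Lemma~\ref{t:Geq2} as a scalar linear ODE in $\underline{\mathfrak{G}}(\tau)$ and solve it by an integrating factor argument. Writing the equation in the form
\begin{equation*}
\del{\tau}\underline{\mathfrak{G}} - \frac{\underline{\chi}}{3AB\,\tau}\,\underline{\mathfrak{G}} = -S(\tau),\qquad S(\tau):=\frac{\underline{\chi}^2}{AB\,\underline{gf}} + \frac{2\,\underline{\chi}^{3/2}}{3AB^{1/2}\,\underline{g}\,\underline{f}^{1/2}}>0,
\end{equation*}
the integrating factor $\mu(\tau):=\exp\bigl(-\int_{-1}^{\tau}\tfrac{\underline{\chi}(s)}{3ABs}\,ds\bigr)$ turns it into $(\mu\underline{\mathfrak{G}})'=-\mu S$. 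Integrating from $-1$ to $\tau\in(-1,0)$ then gives the representation formula
\begin{equation*}
\underline{\mathfrak{G}}(\tau) = \frac{\mu(-1)}{\mu(\tau)}\,\underline{\mathfrak{G}}(-1) \; - \; \frac{1}{\mu(\tau)}\int_{-1}^{\tau}\mu(s)\,S(s)\,ds,
\end{equation*}
and the task reduces to bounding each term by a constant times $(-\tau)^{1/2}$.

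The required asymptotic inputs are already in place. First, Proposition~\ref{t:limG} (together with \eqref{e:chi4}) gives $\underline{\chi}=4B+\underline{\mathfrak{G}}\to 4B$ as $\tau\to 0$, with $\underline{\mathfrak{G}}$ continuous and bounded on $[-1,0)$. Second, Corollary~\ref{s:gf1/2} together with $f(t)\geq \mf>0$ imply that $1/\underline{g}\underline{f}^{1/2}$ and $1/\underline{gf}$ are continuous on $[-1,0]$ with vanishing limit at $\tau=0$; hence $S$ is uniformly bounded on $[-1,0)$. Splitting $\underline{\chi}/(3AB)=4/(3A)+\underline{\mathfrak{G}}/(3AB)$ decomposes the integrating factor as $\mu(\tau)=|\tau|^{-4/(3A)}\,\nu(\tau)$, where $\nu(\tau):=\exp\bigl(-\tfrac{1}{3AB}\int_{-1}^{\tau}\underline{\mathfrak{G}}(s)/s\,ds\bigr)$. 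Using the $\epsilon$-$\delta$ splitting $\int_{-1}^{-\delta}+\int_{-\delta}^{\tau}$ and $\underline{\mathfrak{G}}(s)\to 0$, I expect to prove $\int_{-1}^{\tau}\underline{\mathfrak{G}}(s)/s\,ds=o(|\ln|\tau||)$, so that for any fixed small $\epsilon>0$ there is a neighborhood of $0$ on which $c_\epsilon|\tau|^{\epsilon}\leq \nu(\tau)\leq C_\epsilon|\tau|^{-\epsilon}$.

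Plugging these bounds into the representation, the homogeneous term contributes at most $C|\tau|^{4/(3A)-\epsilon}$, and the inhomogeneous term is dominated by $\|S\|_{L^\infty}\,|\tau|^{4/(3A)-\epsilon}\!\int_{|\tau|}^{1}u^{-4/(3A)-\epsilon}\,du = O\bigl(|\tau|^{\min(1,\,4/(3A)-\epsilon)}\bigr)$, with at worst a logarithmic factor in the borderline case. Because $A\in(0,2)$ forces $4/(3A)>2/3>1/2$, choosing $\epsilon$ small enough that $4/(3A)-\epsilon>1/2$ gives $|\underline{\mathfrak{G}}(\tau)|\lesssim |\tau|^{1/2}$. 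The continuous extension of $(-\tau)^{-1/2}\underline{\mathfrak{G}}(\tau)$ to $\tau=0$ then follows automatically: the bound we just proved is actually $o(|\tau|^{1/2})$, so setting the value at $\tau=0$ equal to $0$ extends the quantity continuously to all of $[-1,0]$.

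The main technical obstacle is the handling of the $\underline{\mathfrak{G}}$-correction inside the integrating factor. Since $\underline{\mathfrak{G}}$ enters exponentially, one cannot simply bound $\underline{\chi}$ by, say, $5B$ uniformly on $[-1,0)$: this would produce $\mu(\tau)\sim |\tau|^{-5/(3A)}$ and render the subsequent source integral too large to beat $|\tau|^{1/2}$. The argument must genuinely extract the asymptotic value $4B$ of $\underline{\chi}$ at $\tau=0$ and exploit the positive slack $4/(3A)-1/2>0$ provided by the hypothesis $A\in(0,2)$ to absorb the $\epsilon$-loss coming from the multiplicative factor $\nu(\tau)$.
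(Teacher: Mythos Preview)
Your argument is correct and takes a genuinely different route from the paper. The paper multiplies \eqref{e:dtgeq} by $2\underline{\mathfrak{G}}$ to obtain a differential inequality for $\underline{\mathfrak{G}}^2$, uses the crude lower bound $\underline{\chi}\geq \tfrac{7}{2}B$ near $\tau=0$ together with $A<2$ to get $\del{\tau}\underline{\mathfrak{G}}^2\leq \tfrac{7}{6\tau}\underline{\mathfrak{G}}^2+C_0$, and then integrates after multiplying by $(-\tau)^{-7/6}$. That Gronwall step yields $\underline{\mathfrak{G}}^2(\tau)\lesssim -\tau$, i.e.\ exactly the borderline rate $|\underline{\mathfrak{G}}|\lesssim(-\tau)^{1/2}$.

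Your integrating-factor/Duhamel approach trades simplicity for precision: it requires the extra $\epsilon$--$\delta$ argument to control the multiplicative correction $\nu(\tau)$, but in return it extracts the actual exponent $4/(3A)>2/3$ rather than the lower bound $7/6>1$ used on $\underline{\mathfrak{G}}^2$. Consequently you obtain $|\underline{\mathfrak{G}}(\tau)|=O(|\tau|^{\min(1,4/(3A))-\epsilon})=o(|\tau|^{1/2})$, which is strictly stronger than what the paper proves and makes the continuous extension of $(-\tau)^{-1/2}\underline{\mathfrak{G}}$ genuinely automatic with value $0$ at $\tau=0$. In the paper's argument the bound is only $O(|\tau|^{1/2})$, so boundedness of $(-\tau)^{-1/2}\underline{\mathfrak{G}}$ follows but existence of the limit is not established by the estimate alone; your approach closes that small gap cleanly.
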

\begin{proof}
	Multiplying $2 \underline{\mathfrak{G}}$ on the both sides of \eqref{e:dtgeq} yields
	\begin{equation*}	  \del{\tau} \underline{\mathfrak{G}}^2 = 2 \underline{\mathfrak{G}}\del{\tau} \underline{\mathfrak{G}}=  \frac{  2 \underline{\mathfrak{G}}^2  \underline{\chi} }{3  A  B   \tau   }      - \frac{ 2 \underline{\mathfrak{G}} \underline{\chi}^2 }{A  B   \underline{g  f}    	 }-  \frac{ 4 \underline{\mathfrak{G}}  \underline{\chi}^{\frac{3}{2}}  }{ 3A    B^{\frac{1}{2}}   \underline{g } \underline{f}^{\frac{1}{2}}   } .
	\end{equation*}
By Proposition \ref{t:limG}, \ref{t:fginv0} and Corollary \ref{s:gf1/2}, noting $\tau\in[-1,0)$, there is a time $\tau_1\in(-1,0)$, such that for $\tau\in (\tau_1,0]$,  $\underline{\chi}\in(\frac{7}{2}B,\frac{9}{2}B)$, $|\underline{\mathfrak{G}}| < \frac{1}{2}B$,  $|1/(\underline{g}\underline{f})| < 1$ and $|1/(\underline{g}\underline{f}^{\frac{1}{2}})| < 1$. Furthermore,
there is a constant $C_0>0$, such that for $\tau\in(\tau_1,0)$, noting $A\in(0,2)$, then
	\begin{equation*}	
		\del{\tau} \underline{\mathfrak{G}}^2
		\leq   \frac{ 7 }{ 6 \tau} \underline{\mathfrak{G}}^2   + C_0 .
\end{equation*}
By the Gronwall's inequality (by multiplying $ (-\tau)^{-\frac{ 7 }{ 6}}$), we derive
\begin{equation*}
	\del{\tau} \bigl(  (-\tau)^{-\frac{ 7 }{ 6} } \underline{\mathfrak{G}}^2\bigr) \leq C_0 (-\tau)^{-\frac{ 7 }{ 6}} ,
\end{equation*}
which yields
\begin{equation*}
	\underline{\mathfrak{G}}^2 (\tau) \leq (-\tau_1)^{-\frac{7}{6}} \underline{\mathfrak{G}}^2(\tau_1) (-\tau)^{\frac{ 7 }{ 6} } +6C_0(-\tau) - 6C_0(-\tau_1)^{-\frac{1}{ 6}}(-\tau)^{\frac{ 7 }{ 6}}  \lesssim -\tau .
\end{equation*}
for $\tau\in (\tau_1,0)$. For $\tau\in [-1,\tau_1]$, due to the continuity of $\underline{\mathfrak{G}}$ yielding $\underline{\mathfrak{G}}$ is bounded, there is a constant $C>0$, such that $|\underline{\mathfrak{G}} (\tau)| \leq C(-\tau_1)^{	\frac{1}{2}} \leq C(-\tau)^{	\frac{1}{2}} $ for $\tau\in [-1,\tau_1]$.
This implies $
	|\underline{\mathfrak{G}}(\tau) | \lesssim (-\tau)^{\frac{1}{2}}  $
for $\tau\in[-1,0)$.
\end{proof}

By Lemma \ref{t:Gest2}, we obtain $
|(-\tau)^{-\frac{1}{2}} \underline{\mathfrak{G}}(\tau) | \lesssim 1 $, by continuously extend $(-\tau)^{-\frac{1}{2}} \underline{\mathfrak{G}}(\tau)$ to $\tau\in[-1,0]$, that is, let $(-\tau)^{-\frac{1}{2}} \underline{\mathfrak{G}}(\tau)|_{\tau=0}:=\lim_{\tau\rightarrow 0} \bigl[(-\tau)^{-\frac{1}{2}} \underline{\mathfrak{G}}(\tau)\bigr]$, then we derive $(-\tau)^{-\frac{1}{2}} \underline{\mathfrak{G}}(\tau)\in C^0([-1,0])  $.  By observing the definition of $\mathcal{F}$ \eqref{e:calF}, we conclude
\begin{equation*}
	\mathcal{F}\in C^0([-1,0],C^\infty(B_{\tilde{R}}(\Rbb^N),\mathbb{M}_{N\times N}))
\end{equation*}
and directly verifies \eqref{e:Fcd}.

\subsubsection{Verification of Condition \ref{c:5} in Appendix \ref{s:fuc}}\label{s:F4}
Now let us verify Condition \ref{c:5}. In order to do so, we have to verify
that there exists constants $\kappa, \,\bar{\gamma}_{2}$ and $\bar{\gamma}_{1}$ such that
\begin{equation}\label{e:pstv}
	\bar{\gamma}_{1}\mathbb{I} \leq  \B^{0}\leq \frac{1}{\kappa} \mathfrak{B} \leq \bar{\gamma}_{2}\mathbb{I} \quad \text{i.e., }\quad \bar{\gamma}_{1} \eta^T \mathbb{I} \eta\leq  \eta^T \B^{0} \eta\leq \frac{1}{\kappa} \eta^T\mathfrak{B}\eta \leq \bar{\gamma}_{2}\eta^T\mathbb{I} \eta
\end{equation}
for all $ (\tau,\U) \in[-1,0] \times B_R(\Rbb^{N}) $ and $\eta :=(\eta_1,\eta_2,\eta_3,\eta_4,\eta_5)^T \in  \Rbb^{N}$.

Firstly, by \eqref{e:frakB}, we note
\begin{align}\label{e:eBe1}
	A \eta^T\mathfrak{B}\eta= & (4\lambda +\mathfrak{Z}_1)\eta_1^2  +\Bigl(\frac{25}{36}+\mathfrak{Z}_0\Bigr) \eta_2^2 +\Bigl(4\lambda+\frac{2(3-8\iota^3)}{15}\Bigr) \eta_3^2+ (4\lambda+4+\mathfrak{Z}_6) \eta_4^2 \notag  \\
	&  +\frac{(3 \iota^3+2)^2}{2 (10 \lambda+\iota^3+9)} \eta_5^2  +\mathfrak{Z}_2\eta_1\eta_2 +(-8\lambda +\mathfrak{Z}_3) \eta_1\eta_3 + \mathfrak{Z}_4 \eta_1\eta_4  +\frac{2}{3}  (1-\iota^3)  \eta_2 \eta_4 \notag  \\
	& + \Bigl(-\frac{2(1-\iota^3) }{5} +\mathfrak{Z}_5 \Bigr) \eta_3 \eta_4  - \frac{(3 \iota^3+2)^2}{6 (10 \lambda+\iota^3+9)}  \eta_5 \eta_3 +\Bigl(2\iota^3 +\frac{4}{3} +\mathfrak{Z}_7\Bigr) \eta_5\eta_4   .
\end{align}
The following lemma verifies \ref{c:5}.
\begin{lemma}\label{t:lbd}
Suppose $\lambda>0$ and $\iota^3\in(0,1/5]$.
There are constants $\tilde{R}>0$ and $\gamma_1>0$, $\gamma_2>0$ given by
	\begin{align}
		\gamma_1:= & \frac{1}{2}\min\Bigl\{\frac{8\lambda}{5(3+800\lambda)},\; \frac{1}{1500},\; \frac{   1 }{27 (13 \lambda+12) (10 \lambda+ \iota^3+9)^2} \Bigr\} , \label{e:gm1}  \\
		\intertext{and}
		\gamma_2:= & \max\{8\lambda +1+\gamma_1,\;4 \lambda+6+\gamma_1\}  ,\label{e:gm2}
	\end{align}
such that
\begin{equation*}
	\frac{\gamma_1}{A}	\eta^T \mathds{1} \eta<\eta^T\mathfrak{B}\eta < \frac{\gamma_2}{A} \eta^T\mathds{1} \eta .
\end{equation*}
for all $(\tau,\U) \in [-1,0]\times B_{\tilde{R}}(\Rbb^N)$.
\end{lemma}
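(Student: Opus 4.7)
The plan is to work directly from the explicit expansion \eqref{e:eBe1} of $A\eta^T\mathfrak{B}\eta$ as a quadratic form in $\eta=(\eta_1,\dots,\eta_5)$, first treating the unperturbed matrix $\mathfrak{B}_0$ obtained by setting every $\mathfrak{Z}_\ell=0$, and then absorbing the perturbations by shrinking $\tilde R$. Because each $\mathfrak{Z}_\ell(\tau,\mathcal U)$ is analytic in $\mathcal U$ with $\mathfrak{Z}_\ell(\tau,0)=0$ and continuous in $\tau\in[-1,0]$, all $\mathfrak{Z}_\ell$ are uniformly bounded on $[-1,0]\times B_{\tilde R}(\Rbb^N)$ and tend to $0$ as $\tilde R\to 0$; hence the problem reduces to establishing strict positivity of the quadratic form associated with $\mathfrak{B}_0$ with an explicit margin, after which the perturbed form inherits the bounds provided $\tilde R$ is chosen small enough.

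For the lower bound on $\mathfrak{B}_0$, I would handle the five off-diagonal pairs present in \eqref{e:eBe1} by Young's inequality with carefully chosen weights. The $-8\lambda\,\eta_1\eta_3$ pairing is the most delicate because the two diagonal entries at position $1$ and $3$ both have leading coefficient $4\lambda$; here the completion of the square $4\lambda\eta_1^2-8\lambda|\eta_1\eta_3|+4\lambda\eta_3^2=4\lambda(|\eta_1|-|\eta_3|)^2$ exhausts the $4\lambda\eta_1^2$ term entirely, so the necessary positive reserve for $\eta_1^2$ must come from the remaining $\frac{2(3-8\iota^3)}{15}\,\eta_3^2$ after a Schur-type redistribution; this is exactly where the hypothesis $\iota^3\in(0,1/5]$ is used, since it forces $3-8\iota^3\geq 7/5$ and yields a strictly positive residue. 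The pairings $\tfrac{2}{3}(1-\iota^3)\eta_2\eta_4$, $-\tfrac{2(1-\iota^3)}{5}\eta_3\eta_4$, $-\tfrac{(3\iota^3+2)^2}{6(10\lambda+\iota^3+9)}\eta_3\eta_5$, and $(2\iota^3+\tfrac43)\eta_4\eta_5$ are handled by $|ab|\leq \tfrac{\delta a^2}{2}+\tfrac{b^2}{2\delta}$ with weights $\delta$ tuned so that the cumulative debit on each $\eta_i^2$ stays strictly below the available portion of the diagonal coefficient; the explicit form of the third argument of the $\min$ in \eqref{e:gm1}, namely $1/(27(13\lambda+12)(10\lambda+\iota^3+9)^2)$, is what drops out of the Schur-complement estimate for the $(\eta_3,\eta_4,\eta_5)$ principal block, while the $8\lambda/(5(3+800\lambda))$ and $1/1500$ factors come respectively from the $(\eta_1,\eta_3)$ block and the $(\eta_2,\eta_4)$ block after the above allocations.

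For the upper bound I would simply invoke $|\eta_i\eta_j|\leq \tfrac12(\eta_i^2+\eta_j^2)$ on every off-diagonal term in \eqref{e:eBe1} and use the uniform boundedness of each $\mathfrak{Z}_\ell$ on $[-1,0]\times B_{\tilde R}(\Rbb^N)$; collecting the coefficients of $\eta_i^2$ on the right then yields the bound $\gamma_2/A$ with the constant in \eqref{e:gm2} (the two entries of the $\max$ correspond to the two rows with dominant diagonal contributions $4\lambda+4\lambda=8\lambda$ and $4\lambda+4$). The main obstacle I anticipate is the bookkeeping of the Young-inequality debits on $\eta_3^2,\eta_4^2,\eta_5^2$ simultaneously, since these three variables are coupled through three different off-diagonal pairs and one must leave enough positive room on each diagonal to ensure strict positivity with an explicit margin matching \eqref{e:gm1}; this is algebraic but requires tuning the weights consistently, and is the reason the hypothesis $\iota^3\leq 1/5$ rather than simply $\iota^3<3/8$ enters the statement.
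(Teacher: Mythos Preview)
Your approach matches the paper's: weighted Cauchy inequalities on the off-diagonal pairs (the paper's explicit weights are $\delta_1=1+\tfrac{3-13\iota^3}{800\lambda}$ for $\eta_1\eta_3$, $\delta_3=\tfrac{12}{35}(13\lambda+12)$ for $\eta_4\eta_5$, $\delta_5=\tfrac{18(3-13\iota^3)(10\lambda+\iota^3+9)}{25(3\iota^3+2)^2}$ for $\eta_3\eta_5$), followed by choosing $\tilde R$ so small that $\sum_{\ell=0}^7|\mathfrak{Z}_\ell|<\gamma_1$, which absorbs all the perturbations. One small correction to your attribution: the constant $1/1500$ in \eqref{e:gm1} is the floor of the residual $\eta_3^2$ coefficient $\tfrac{3-13\iota^3}{600}$ at $\iota^3=1/5$ (so the operative threshold is $\iota^3<3/13$, not $3/8$), rather than coming from the $(\eta_2,\eta_4)$ block.
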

\begin{proof}
	Since $\mathfrak{Z}_\ell(\tau,0)=0$, by its continuity, there is a constant $\tilde{R}>0$, such that if $\U\in B_{\tilde{R}}(\Rbb^N)$, then \begin{equation}\label{e:sumz}
		\sum_{\ell=0}^7 |\mathfrak{Z}_\ell(\tau,\U)| < \gamma_1  .
	\end{equation}
By taking
\begin{equation*}
	\delta_1:=\frac{3-13 \iota^3}{800 \lambda}+1, \quad \delta_3:=\frac{12}{35} (13 \lambda+12) \AND \delta_5:=\frac{18 (3-13 \iota^3) (10 \lambda+\iota^3+9)}{25 (3 \iota^3+2)^2},
\end{equation*}
and using the Cauchy's inequality with $\delta_\ell>0$ ($\ell=1,3,5$), we have estimates
\begin{gather}
2|\eta_4\eta_1| \leq \eta_4^2+\eta_1^2, \quad 	2|\eta_1\eta_3 |\leq \frac{1}{\delta_1} \eta_1^2+ \delta_1 \eta_3^2, \quad 2|\eta_2\eta_4| \leq \eta_2^2+\eta_4^2, \quad 2|\eta_3\eta_4| \leq \eta_3^2+\eta_4^2,  \label{e:eta1}\\
	2|\eta_2\eta_1| \leq \eta_2^2+\eta_1^2, \quad  2|\eta_3\eta_5| \leq \frac{1}{\delta_5}\eta_5^2+\delta_5 \eta_3^2, \quad   2|\eta_5\eta_4| \leq \frac{1}{\delta_3}\eta_5^2+\delta_3\eta_4^2  , \label{e:eta2}  \\
	\intertext{and}
	|\mathfrak{Z}_\ell \eta_a\eta_b| \leq \frac{|\mathfrak{Z}_\ell|}{2}(\eta_a^2+\eta_b^2).  \label{e:eta3}
\end{gather}
Using \eqref{e:eta1}--\eqref{e:eta3}, we estimate \eqref{e:eBe1},
\begin{align}\label{e:eBe3}
	A \eta^T\mathfrak{B}\eta
\geq  &  \biggl(\frac{4 \lambda (3-13 \iota^3)}{800 \lambda+3-13 \iota^3}  +\mathfrak{Z}_1-\frac{1}{2}\sum_{\ell=2}^4 |\mathfrak{Z}_\ell|  \biggr) \eta_1^2  +  \biggl(\frac{\iota^3}{3}+\frac{13}{36}+\mathfrak{Z}_0-\frac{1}{2} |\mathfrak{Z}_2|\biggr) \eta_2^2     \notag  \\
	& + \biggl(\frac{4\bigl(9 \lambda (3-13 \iota^3)+(19-94  \iota^3)\bigr)}{105}  +\mathfrak{Z}_6-\frac{1}{2}\sum_{\ell=4,5,7}|\mathfrak{Z}_\ell|  \biggr)\eta_4^2  \notag \\
	&  + \biggl(\frac{3-13 \iota^3}{600} -\frac{1}{2} \sum_{\ell=3,5} |\mathfrak{Z}_\ell| \biggr)\eta_3^2    +\Bigl(\mathtt{q}-\frac{1}{2}|\mathfrak{Z}_7| \Bigr)\eta_5^2
\end{align}
where
\begin{align}\label{e:q1exp}
	\mathtt{q} :
	=&	\frac{(3 \iota^3+2)^2}{2 (10 \lambda+\iota^3+9)}- \frac{25 (3 \iota^3+2)^4}{216 (3-13 \iota^3) (10 \lambda+\iota^3+9)^2}-\frac{35 (3 \iota^3+2)}{36 (13 \lambda+12) } \notag  \\
	= & - \frac{(3 \iota^3+2) \bigl[ 120 \lambda^2 \mathfrak{q}_1(\iota^3)+\lambda \mathfrak{q}_2(\iota^3)+6 \mathfrak{q}_3(\iota^3) \bigr]}{216 (13 \lambda+12) (3-13 \iota^3) (10 \lambda+ \iota^3+9)^2} >0
\end{align}
and
\begin{align*}
	\mathfrak{q}_1(\iota^3):=&
	(13 \iota^3-3) (351 \iota^3+59)   ,  \\
	\mathfrak{q}_2(\iota^3):= &  63531 (\iota^3)^3+985062 (\iota^3)^2-40392 (\iota^3)-37576   , \\
	\mathfrak{q}_3(\iota^3):= & 9319 (\iota^3)^3+74103 (\iota^3)^2-1413 (\iota^3)-2759 .
\end{align*}
Using the derivatives of $\mathfrak{q}_\ell$ ($\ell=1,2,3$), we can have the monotonicity of $\mathfrak{q}_\ell$. Further we can estimate
\begin{equation}\label{e:hest}
	\mathfrak{q}_1(\iota^3)\leq -\frac{1292}{25}<0 ,\quad 	\mathfrak{q}_2(\iota^3) \leq -\frac{717959}{125}<0 ,\AND 	\mathfrak{q}_3(\iota^3) \leq  -\frac{366}{125}<0
\end{equation}
for $\iota^3 \in(0,1/5]$.
Noting
\begin{equation}\label{e:qaest}
  \frac{3 \iota^3+2}{3-13 \iota^3}  >\frac{2}{3} \AND 775200 \lambda^2+717959 \lambda+2196> 1500
\end{equation}
for $ \iota^3 \in(0,1/5]$ and $ \lambda>0$.
Then, by inserting the estimate \eqref{e:hest} into \eqref{e:q1exp}, with the help of \eqref{e:qaest}, we estimate
\begin{equation}\label{e:qest3}
\mathtt{q}
	\geq   \frac{\left(775200 \lambda^2+717959 \lambda+2196\right) (3 \iota^3+2)}{27000 (13 \lambda+12) (3-13 \iota^3) (10 \lambda+\iota^3+9)^2}
	>   \frac{   1 }{27 (13 \lambda+12) (10 \lambda+ \iota^3+9)^2}   .
\end{equation}

Then inserting the estimate \eqref{e:qest3} into \eqref{e:eBe3}, we arrive at
\begin{align*}
	A \eta^T\mathfrak{B}\eta\geq & \Bigl(\frac{8\lambda}{5(3+800\lambda)} -\sum_{\ell=0}^7 |\mathfrak{Z}_\ell|\Bigr) \eta_1^2+\Bigl(\frac{\iota^3}{3} +\frac{13}{36}-\sum_{\ell=0}^7 |\mathfrak{Z}_\ell|\Bigr) \eta_2^2+\Bigl(\frac{1}{1500}-\sum_{\ell=0}^7 |\mathfrak{Z}_\ell| \Bigr)\eta_3^2 \notag  \\
	& +\Bigl(\frac{4}{525}-\sum_{\ell=0}^7 |\mathfrak{Z}_\ell| \Bigr) \eta_4^2+ \Bigl( \frac{   1 }{27 (13 \lambda+12) (10 \lambda+ \iota^3+9)^2} -\sum_{\ell=0}^7 |\mathfrak{Z}_\ell| \Bigr) \eta_5^2  \notag  \\
	\overset{\eqref{e:gm1} \& \eqref{e:sumz}}{>} & (2\gamma_1-\gamma_1)\sum^5_{k=1} \eta_k^2>\gamma_1	\eta^T \mathds{1} \eta  .
\end{align*}

On the other hand, we estimate the upper bound of $\eta^T\mathfrak{B}\eta$ by using the Cauchy's inequality and taking $\delta_1=\delta_3=\delta_5=1$ in \eqref{e:eta1}--\eqref{e:eta2},
\begin{align*}
	A \eta^T\mathfrak{B}\eta
\leq & \Bigl(8\lambda+\sum_{\ell=0}^7|\mathfrak{Z}_\ell|\Bigr) \eta_1^2 +\Bigl(\frac{25}{36}+\frac{1-\iota^3}{3} +\sum_{\ell=0}^7|\mathfrak{Z}_\ell|  \Bigr)  \eta_2^2  \notag  \\
& +\Bigl(8\lambda+\frac{2(3-8\iota^3)}{15}+ \frac{1-\iota^3}{5}    + \frac{(3 \iota^3+2)^2}{12 (10 \lambda+\iota^3+9)} +\sum_{\ell=0}^7|\mathfrak{Z}_\ell| \Bigr) \eta_3^2 \notag  \\
	& + \Bigl[(4\lambda+4)    +\frac{8(1-\iota^3)}{15}  + \Bigl(\iota^3 +\frac{2}{3} \Bigr)  +\sum_{\ell=0}^7|\mathfrak{Z}_\ell|
	\Bigr] \eta_4^2 \notag  \\
	&  + \Bigl[\frac{7 (3 \iota^3+2)^2}{12 (10 \lambda+\iota^3+9)}      + \Bigl(\iota^3 +\frac{2}{3} \Bigr) +\sum_{\ell=0}^7|\mathfrak{Z}_\ell| \Bigr]\eta_5^2\notag  \\
	\overset{\eqref{e:sumz}}{<} & (8\lambda+\gamma_1) \eta_1^2 + \Bigl(\frac{37}{36}+\gamma_1\Bigr) \eta_2^2 +\Bigl(8\lambda +1+\gamma_1 \Bigr) \eta_3^2  + \Bigl(4 \lambda+6+\gamma_1 \Bigr)\eta_4^2  + \Bigl(\frac{3}{2}+\gamma_1 \Bigr) \eta_5^2 \notag  \\
	\overset{\eqref{e:gm2}}{<} & \gamma_2 \eta^T\mathds{1} \eta .
\end{align*}
We complete the proof.
\end{proof}

\begin{lemma}\label{t:upbd}
Suppose $\lambda>0$ and $\iota^3\in(0,1/5]$. There are constants $\tilde{R}>0$ and $\hat{\gamma}_1>0$, $\hat{\gamma}_2>0$ given by
\begin{align*}
	\hat{\gamma}_1:=& \min\Bigl\{\frac{7}{150} , \; \frac{1}{ 4 +B^{-1}   \max_{\tau\in[-1,0]}\underline{\mathfrak{G} }  } 	\Bigl( \frac{ 25 }{ 36   }   - \gamma_1 \Bigr) \Bigr\}  >0 \\
	\intertext{and}
	\hat{\gamma}_2:=& \max\Bigl\{1,\lambda+\frac{1}{10} , \; \frac{1}{4+ B^{-1} \min_{\tau\in[-1,0]}\underline{\mathfrak{G} }  } 	\Bigl( \frac{ 25 }{ 36   } +\frac{ 25 }{ 36  \beta } + \gamma_1 \Bigr) \Bigr\} >0 ,
\end{align*}
such that
\begin{equation*}
\hat{\gamma}_1 \eta^T \mathbb{I} \eta<	\eta^T\B^0 \eta <\hat{\gamma}_2 \eta^T \mathbb{I} \eta
\end{equation*}
for all $(\tau,\U) \in [-1,0]\times B_{\tilde{R}}(\Rbb^N)$.
\end{lemma}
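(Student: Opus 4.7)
The matrix $\B^0$ in \eqref{e:B0} is diagonal, so the lemma reduces to obtaining uniform positive upper and lower bounds for its five diagonal entries on $[-1,0]\times B_{\tilde R}(\Rbb^N)$. Three of the entries (positions $(1,1)$, $(4,4)$, $(5,5)$) are identically $1$, and the $(3,3)$ entry is the constant $\lambda+\tfrac{1}{30}(3-8\iota^3)$; for $\lambda>0$ and $\iota^3\in(0,1/5]$ this is pinched between $\tfrac{7}{150}$ and $\lambda+\tfrac{1}{10}$, the two explicit scalars that appear inside the definitions of $\hat\gamma_1$ and $\hat\gamma_2$. So the whole proof will be a bookkeeping exercise once the $(2,2)$ entry is controlled.

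The only genuinely nontrivial entry is $(\B^0)_{22}=\frac{1}{4+B^{-1}\underline{\mathfrak{G}}}\bigl(\tfrac{25}{36}+\tfrac{25}{36\underline f}+\mathfrak{Z}_0\bigr)$. I would control it by assembling three ingredients already at our disposal. First, the identity $\underline\chi/B=4+B^{-1}\underline{\mathfrak{G}}$ from \eqref{e:chi4}, together with the positivity of $\chi$ and Lemma \ref{t:Gest2} (which yields a continuous extension of $\underline{\mathfrak{G}}$ to $\tau=0$), shows that $4+B^{-1}\underline{\mathfrak{G}}$ is continuous and strictly positive on $[-1,0]$, so it attains finite extrema bounded away from zero. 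Second, since $f$ solves \eqref{e:feq1b}--\eqref{e:feq2b} with $f(1)=\beta>0$ and $f'(1)=3(1+\beta)\mg>0$ and blows up at $t_m$, the monotonicity of $f$ recorded in Theorem \ref{t:mainthm0} gives $\underline f(\tau)\geq\beta$ uniformly, so $0\leq 1/\underline f\leq 1/\beta$. Third, because $\mathfrak{Z}_0$ is continuous on $[-1,0]\times B_{\tilde R}$ with $\mathfrak{Z}_0(\tau,0)=0$, the choice of $\tilde R$ already used in Lemma \ref{t:lbd} (see \eqref{e:sumz}) also ensures $|\mathfrak{Z}_0(\tau,\U)|<\gamma_1$ throughout.

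Combining these, for the lower bound I drop the positive term $\tfrac{25}{36\underline f}$, replace $\mathfrak{Z}_0$ by $-\gamma_1$, and bound the denominator above by $4+B^{-1}\max_{[-1,0]}\underline{\mathfrak{G}}$; for the upper bound I replace $1/\underline f$ by $1/\beta$, $\mathfrak{Z}_0$ by $+\gamma_1$, and bound the denominator below by $4+B^{-1}\min_{[-1,0]}\underline{\mathfrak{G}}$. These two quantities are exactly the non-constant expressions entering the $\min$/$\max$ defining $\hat\gamma_1$ and $\hat\gamma_2$, so the diagonal bounds assemble into the stated $\hat\gamma_1\,\eta^T\I\eta<\eta^T\B^0\eta<\hat\gamma_2\,\eta^T\I\eta$ and the proof is complete. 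I do not anticipate any real obstacle; the main things to verify carefully are the positivity and continuous boundedness of $\underline\chi$ and the monotonicity of $\underline f$, both of which follow directly from the preparatory results in Appendix \ref{t:refsol} and Lemma \ref{t:Gest2}.
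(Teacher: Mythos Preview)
Your proposal is correct and follows essentially the same approach as the paper: both arguments exploit the diagonal structure of $\B^0$, bound the constant $(3,3)$ entry via $\iota^3\in(0,1/5]$, and control the $(2,2)$ entry by combining the positivity of $4+B^{-1}\underline{\mathfrak G}$ (from \eqref{e:chi4}), the lower bound $\underline f\geq\beta$, and the smallness $|\mathfrak{Z}_0|<\gamma_1$ inherited from \eqref{e:sumz}. The only minor remark is that the monotonicity of $f$ you invoke comes more directly from $f_0>0$ in Lemma \ref{t:f0fg} than from Theorem \ref{t:mainthm0} itself.
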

\begin{proof}
	Firstly, there is a constant $\tilde{R}>0$, such that if $\U\in B_{\tilde{R}}(\Rbb^N)$, then \eqref{e:sumz} holds and it  yields $-\gamma_1<-|\mathfrak{Z}_\ell|<\mathfrak{Z}_\ell<|\mathfrak{Z}_\ell|<\gamma_1$.
	By the expression  \eqref{e:B0} of $\B^0$,  we obtain
	\begin{align*}
	\frac{7}{150}<\lambda+\frac{1}{30} (3-8 \iota^3)< 	\lambda+\frac{1}{10},  \quad \text{for} \quad \iota^3\in(0,1/5],
	\end{align*}
and
	\begin{align*}
		  \frac{1}{ 4 +B^{-1} \max_{\tau\in[-1,0]}\underline{\mathfrak{G}}  } 	\Bigl( \frac{ 25 }{ 36   }   - \gamma_1 \Bigr)&< \frac{1}{ 4 +B^{-1} \underline{\mathfrak{G}} }	\Bigl( \frac{ 25 }{ 36   } +   \frac{ 25 }{ 36   \uf}  +  \mathfrak{Z}_0 \Bigr) \notag \\
		  & \overset{(\star)}{<}\frac{1}{ 4 + B^{-1} \min_{\tau\in[-1,0]}\underline{\mathfrak{G} }   }	\Bigl( \frac{ 25 }{ 36   }  \frac{1+\beta}{\beta}+  \gamma_1 \Bigr)  .
	\end{align*}
We point out in $(\star)$, $4 + B^{-1} \min_{\tau\in[-1,0]}\underline{\mathfrak{G} }  >0$,
since, by \eqref{e:chi4}, we have $\underline{\mathfrak{G} }(\tau) >-4B $ for $\tau\in[-1,0]$. Then we conclude this lemma.
\end{proof}

Using above Lemma \ref{t:lbd} and \ref{t:upbd} and taking
\begin{equation}\label{e:kpa}
		 \kappa:=\frac{\gamma_1}{A \hat{\gamma}_2}, \quad \bar{\gamma}_1:=\hat{\gamma}_1 \AND \bar{\gamma}_2:= \frac{\gamma_2\hat{\gamma}_2}{\gamma_1} ,
\end{equation}
we conclude \eqref{e:pstv} and thus have verified Condition \ref{c:5}.

\subsubsection{Complete verifications of Conditions \ref{c:4} and \ref{c:7} in Appendix \ref{s:fuc}}\label{s:F6}
Since $\Pbb=\mathbb{I}$ and $\Pbb^\perp=0$, we only need  to verify there are constants $\theta$ and $\beta_\ell>0$ ($\ell=0,1$), such that it verifies \eqref{e:PhP1} which reduces to
\begin{equation}\label{e:divB1}
	\mathrm{div} \B (\tau,\U,\mathcal{W}) =   \;\mathcal{O}\bigl(\theta   +|\tau|^{-\frac{1}{2}}\beta_0+|\tau|^{-1}\beta_1 \bigr) ,
\end{equation}
where\footnote{To facilitate readers' understanding, we have labeled the orders of the terms below. However, their proofs will be presented in the subsequent expositions. }
\begin{align}\label{e:divB2}
	\mathrm{div} \B(\tau,\U,\mathcal{W}):=& \underbrace{\del{\tau}  \B^0(\tau, \U)}_{(d)\;(-\tau)^{-\frac{1}{2}} -\text{term}}  +	\del{\U}  \B^0(\tau, \U)  \cdot (\B^0(\tau, \U) )^{-1} \Bigl[\underbrace{-	\B^\zeta (\tau, \U)  \cdot \mathcal{W} }_{(a)\;\tau^{-1}-\text{term}} +\underbrace{ \frac{1}{ \tau}	 \mathfrak{B} (\tau, \U)   \U}_{(b)\; \tau^{-1}-\text{term}}  \notag  \\
	&   +  \mathcal{H} (\tau, \U) +  \underbrace{(-\tau)^{-\frac{1}{2}} 	  \mathcal{F}(\tau, \U)}_{(e)\;(-\tau)^{-\frac{1}{2}} -\text{term}}  \Bigr] + \underbrace{	\del{\U}  \B^\zeta(\tau, \U ) \cdot \mathcal{W} }_{(c)\;\tau^{-1}-\text{term}}
\end{align}
for $(\tau, \U, \mathcal{W}) \in [-1,0)\times B_{\tilde{R}}(\Rbb^N) \times  B_{\tilde{R}}(\Rbb^N)$.
Requirements \eqref{e:PhP2}--\eqref{e:PhP4} hold constantly since $\Pbb^\perp=0$ results all the left hands of \eqref{e:PhP2}--\eqref{e:PhP4}  vanish.

Let us firstly recall \eqref{e:Z0}. That is
\begin{equation*}
	\mathfrak{Z}_0=  \mathfrak{Z}_0(\tau,\uu,\underline{\nu}) = \frac{ 25 }{ 36   } \Bigl(1+ \frac{ 1 }{    \underline{f} }  \Bigr)  \Bigl[ \Bigl( 1+ \frac{     \underline{f}   \underline{u}}{ 1+   \underline{f}}  \Bigr)^{-\frac{3}{5}}   -1\Bigr]  -   \frac{ 25 (4  + B^{-1}    \underline{\mathfrak{G}})}{ 9   }    \underline{\nu}^2  .
\end{equation*}
Similar to \eqref{e:ttf2}, with the help of  \eqref{e:f0frl} and \eqref{e:Gdef} (yielding $ f_0 =	B^{-\frac{1}{2}}     t^{-1}  f^{\frac{1}{2}}   	\chi^{\frac{1}{2}}(1+f) $), we calculate
\begin{align}\label{e:dtauf}
	\del{\tau} \uf= \underline{\frac{t^{\frac{2}{3}} (1+f )^{\frac{1}{3} } }{A  B   g^{\frac{2}{3A}+1}  f}  f_0}=  \underline{\frac{1  }{A  B    g    }        	\chi         	 (1+f)  }  .
\end{align}
Let us now calculate the only non-vanishing term in $\del{\tau}\B^0$ by \eqref{e:chi4}, \eqref{e:dtgeq} and \eqref{e:dtauf},
\begin{align}\label{e:dtB0}
	& \del{\tau} \biggl( \frac{1}{ 4 +B^{-1} \underline{\mathfrak{G}} }	\Bigl( \frac{ 25 }{ 36   } +   \frac{ 25 }{ 36   \uf}  +  \mathfrak{Z}_0 \Bigr) \biggr) \notag  \\
	= &\biggl[-\biggl(\frac{40 B^2 \uu  }{9 \uf  (\uf +1)  } + \frac{25 B^2   }{9 \uf  (\uf +1)  } \Bigl(1+\frac{1}{f}\Bigr)  + \frac{10 B \uu \underline{\mathfrak{G}}  }{9 \uf (\uf+1)  }    + \frac{25 B \underline{\mathfrak{G}}    }{36 \uf (\uf+1)  } + \frac{25 B \underline{\mathfrak{G}}  }{36 \uf^2 (\uf+1)  } \biggr) \del{\tau} \uf \notag\\
	&  -\biggl(\frac{1}{18    }+\frac{ \uu   }{36   } +\frac{  \uu \uf    }{36  }+\frac{  \uf   }{36  }  +\frac{1}{36 \uf} \biggr) \frac{25 B \del{\tau}\underline{\mathfrak{G}} }{ (\uf+1) }\biggr](4 B+\underline{\mathfrak{G}})^{-2} \left(\frac{ \uf \uu }{\uf +1}+1\right)^{-\frac{8}{5}}  \notag   \\
	= &\biggl[-\biggl(\frac{40 B  \uu  }{9    } + \frac{25 B    }{9      }  + \frac{25 B   }{9 \uf    } + \frac{10   \uu \underline{\mathfrak{G}}  }{9    }    + \frac{25   \underline{\mathfrak{G}}    }{36     } + \frac{25   \underline{\mathfrak{G}}  }{36 \uf    } \biggr) \frac{1  }{A }  \underline{\xi}\Bigl(1+\frac{1}{\uf}\Bigr)      	\underline{\chi}         	    \notag\\
	&  -25 B \biggl(\frac{1}{18 (\uf+1)   }+\frac{ \uu   }{36   } +\frac{  \uf   }{36 (\uf+1) }  +\frac{1}{36 \uf (\uf+1)} \biggr)  \biggl(\frac{  \underline{\mathfrak{G}}  \underline{\chi} }{3  A  B   \tau   }  - \frac{ \underline{\chi}^2 }{A  B   \underline{g  f}    	 }      \notag \\
	&   -   \frac{2 \underline{\chi}^{\frac{3}{2}}  }{ 3A    B^{\frac{1}{2}}   \underline{g } \underline{f}^{\frac{1}{2}}   }\biggr) \biggr](4 B+\underline{\mathfrak{G}})^{-2} \left(\frac{ \uf \uu }{\uf +1}+1\right)^{-\frac{8}{5}}  \notag  \\
	\overset{(\star)}{=} & \left(\frac{ \uf \uu }{\uf +1}+1\right)^{-\frac{8}{5}}  \bigl[\mathcal{S}_1(\tau) +\mathcal{S}_2(\tau) \uu+|\tau|^{-\frac{1}{2}} \bigl(\mathcal{S}_3(\tau)+\mathcal{S}_4(\tau) \uu \bigr)\bigr]
\end{align}
where $\mathcal{S}_\ell$ ($\ell=1,\cdots,4$) are continuous in $\tau\in[-1,0]$,  and $\lim_{\tau\rightarrow 0}\mathcal{S}_\ell(\tau)=0$ for $\ell=1,2$. In the last step $(\star)$, we have
used Theorem \ref{t:mainthm0}, Proposition \ref{t:limG}, \ref{t:fginv0},  Corollary \ref{s:gf1/2} and Lemma \ref{t:Gest2} (which implies $(-\tau)^{-\frac{1}{2}}|\underline{\mathfrak{G}}| \lesssim 1$) and $\U\in B_{\tilde{R}}(\Rbb^N)$. By \eqref{e:dtB0},  we obtain that there exists constants $\tilde{\theta}$ and $\tilde{\beta}_{0}$, such that
\begin{equation}\label{e:F6.1}
	\del{\tau}\B^0 =   \;\mathcal{O}\bigl(\tilde{\theta} +|\tau|^{-\frac{1}{2}}\tilde{\beta}_{0} \bigr) .
\end{equation}
In addition, this \eqref{e:dtB0} also verifies $\B^0\in C^{1}([-1,0),C^{\infty}( B_R(\mathbb R^{N}), \mathbb M_{N\times N})$ which finishes Condition \ref{c:4}.

Next let us calculate the key term  $\del{\U}\mathfrak{Z}_0$, i.e.,
\begin{align}\label{e:duz0}
	\del{\uu}\mathfrak{Z}_0=-\frac{5}{12}  \Bigl(\frac{ \uf \uu}{\uf+1}+1\Bigr)^{-\frac{8}{5}}\AND  \del{\underline{\nu}}\mathfrak{Z}_0= - \frac{50}{9}    \left(\frac{\underline{\mathfrak{G}}}{B}+4\right) \underline{\nu} .
\end{align}
We can calculate the non-vanishing term of $\del{\U}\B^0$ is
$( 4 +B^{-1} \underline{\mathfrak{G}} )^{-1} \del{\U}  \mathfrak{Z}_0$
which, by \eqref{e:duz0}, implies there exist constants $\hat{\theta}$ and $\tilde{R}$, such that
\begin{equation}\label{e:F6.2}
	\del{\U}\B^0=\;\mathcal{O}\bigl(\hat{\theta}   \bigr)
\end{equation}
for $\U\in B_{\tilde{R}}(\Rbb^N)$.

Next, we turn to $\del{\U}  \B^\zeta$, and the non-vanishing terms of $\del{\U}\B^\zeta$ are
\begin{equation*}
	\del{\underline{\nu} }\biggl(-\frac{ 1 }{A \tau}  \frac{2  \underline{\chi}    }{3  B }  \underline{\nu}  \biggr) =-\frac{ 1 }{A \tau}  \frac{2  \underline{\chi}    }{3  B } \AND \frac{1}{A\tau} \del{\U}\biggl( \frac{ 5 }{ 36  } +   \frac{ 5 }{ 36   \uf}  + \frac{1}{5} \mathfrak{Z}_0\biggr)=\frac{1}{5A\tau} \del{\U}    \mathfrak{Z}_0 .
\end{equation*}
With the help of \eqref{e:duz0}, there exist constants $\hat{\beta}_{1}$  and $\tilde{R}$, such that
\begin{equation}\label{e:F6.3}
	\del{\U}\B^\zeta =   \;\mathcal{O}\bigl( |\tau|^{-1}\hat{\beta}_{1} \bigr)
\end{equation}
for $\U\in B_{\tilde{R}}(\Rbb^N)$.
We also note there are constants $\hat{\beta}_0$ and $\tilde{R}$, such that
\begin{equation}\label{e:F6.5}
	(-\tau)^{-\frac{1}{2}} 	  \mathcal{F}  =   \;\mathcal{O}\bigl( |\tau|^{-\frac{1}{2}}\hat{\beta}_0  \bigr)
\end{equation}
for $\U\in B_{\tilde{R}}(\Rbb^N)$.

With the help of \eqref{e:F6.1}, \eqref{e:F6.2}, \eqref{e:F6.3} and \eqref{e:F6.5}, directly examining \eqref{e:divB2} terms by terms, we can conclude there are constants $
	\theta>0, \;   \beta_0>0 $ and $ \beta_1 >0$,
such that it verifies \eqref{e:divB1} for $(\tau, \U, \mathcal{W}) \in [-1,0)\times B_{\tilde{R}}(\Rbb^N) \times  B_{\tilde{R}}(\Rbb^N)$.  Moreover, by properly shrinking the ball $B_{\tilde{R}}(\Rbb^N)\ni \U, \; \mathcal{W}$, we can further take $\beta_1$ in \eqref{e:divB1} small enough such that
\begin{equation}\label{e:bt1}
	\beta_1 \in \biggl(0,\frac{2\gamma_1\hat{\gamma}_1}{A \hat{\gamma}_2} \biggr) ,
\end{equation}
since the $\tau^{-1}$-terms $(a)$, $(b)$ and $(c)$ in \eqref{e:divB2} all include a factor $\U$ or $\mathcal{W}$.
This finishes the verification of Condition \ref{c:7}.

Now after verifying all the conditions \ref{c:2}--\ref{c:7}, we conclude \eqref{e:fuc} is a Fuchsian system given in Appendix \ref{s:fuc}. Therefore, we can use Theorem \ref{t:fuc} to obtain the global estimates.  By \S\ref{s:F3}, we have known $\tilde{\B}^0(\tau),\;\tilde{\B}^\zeta_2(\tau), \;\tilde{\mathfrak{B}}(\tau) $ are all only $\tau$ dependent, hence in this case,  we find the parameter $\mathtt{b}$ defined in Theorem \ref{t:fuc} satisfying $\mathtt{b}=0$.

\subsection{The  uniqueness and periodicity}\label{s:stp3}
As the system \eqref{e:fuc} is a symmetric hyperbolic system, the standard theory of symmetric hyperbolic systems (see, for example, \cite[\S$16$]{Taylor2010} and \cite[Theorem $5.1$]{Racke2015}) ensures the existence and uniqueness of the solution $\U$ for $(\tau,\zeta) \in[-1,\tau_1] \times \Rbb$ ($\tau_1<0)$. Moreover, since the system \eqref{e:main1}--\eqref{e:main2} is equivalent to the system \eqref{e:fuc}, this entails that if $(\hrho(t,\zeta),\nu(t,\zeta))$ satisfies \eqref{e:main1}--\eqref{e:main2} with the data \eqref{e:data3}, then the solution is unique. In other words, once we prove the existence of $\U$ for some $\tau$, $\U$ is unique within the interval, and the corresponding solution $(\hrho(t,\zeta),\nu(t,\zeta))$ is unique in terms of $t$. Additionally, using Proposition \ref{t:prdsl}, we conclude that the solution $(\hrho(t,\zeta),\nu(t,\zeta))$ is a periodic solution satisfying \eqref{e:prdsl}. By Lemma \ref{t:Psprd}, we know that $\Psi(t,\zeta)$ is also a periodic function. Thus, instead of considering the periodic solution of the system \eqref{e:fuc} on $(t,\xi)\in[1,t_m)\times \Rbb$, we consider the equations \eqref{e:main1}--\eqref{e:main2} with data \eqref{e:data3} for $(t,\xi)\in[1,t_m)\times \Tbb$, which enables us to directly apply Theorem \ref{t:fuc}.

\subsection{The  global existence of the solution $\U$}\label{s:stp5}
After verifying that  \eqref{e:fuc} is a Fuchsian system for $(\tau,\zeta) \in [-1,0) \times \Tbb$ and $\U \in B_{\tilde{R}}(\Rbb^N)$, we are able to
directly use Theorem \ref{t:fuc}, that is, let $s\in\Zbb_{>\frac{7}{2}}$, $\U|_{\tau=-1} \in H^s(\Tbb)$ and, by \eqref{e:kpa} and \eqref{e:bt1},
\begin{equation*}
\kappa :=\frac{\gamma_1}{A\hat{\gamma}_2} >\frac{\beta_1}{2\hat{\gamma}_1}=\frac{\beta_1}{2\bar{\gamma}_1} ,   \quad \text{(verifying \eqref{e:kpa2})},
\end{equation*}
then there exist small constants $\sigma, \sigma_0>0$ and $\sigma<\sigma_0$, such that if
\begin{equation*}
	\|\U|_{\tau=-1}\|_{H^s} \leq \sigma  ,
\end{equation*}
then there exists a unique solution
\begin{equation}\label{e:sltdel}
	\U\in C^0([-1,0),H^s(\Tbb) ) \cap C^1([-1,0),H^{s-1}(\Tbb) ) \cap \Li ([-1,0),H^s(\Tbb))
\end{equation}
of the initial value problem for the Fuchsian system \eqref{e:fuc}. Moreover, for $-1 \leq \tau <0$, the solution $\U$ satisfies the energy estimate
\begin{equation}\label{e:ineq1ab}
	\|\U(\tau)\|_{H^s} -\int^\tau_{-1}\frac{1}{\tau} \|\U(z)\|^2_{H^s} dz \leq C(\sigma_0,\sigma_0^{-1}) \|\U|_{\tau=-1}\|_{H^s} < C_1 \sigma  .
\end{equation}

\subsection{Proofs of Theorem \ref{t:mainthm1}} \label{s:stp7}
Before proceeding the main proof, we first estimate $	\|\Psi(t)\|_{H^s(\Tbb)} $ by $\|u(t)\|_{H^{s-1}(\Tbb)} $.
\begin{lemma}\label{t:psest}
	Suppose $\Psi$ is given by \eqref{e:ps2}, then there is a constant $C >0$, such that
	\begin{equation*}
		\|\Psi(t)\|_{H^s(\Tbb)}  \leq C_1  \|u(t)\|_{H^{s-1}(\Tbb)}   .
	\end{equation*}	
\end{lemma}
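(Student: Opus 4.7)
The plan is to exploit the first-order ODE in $\zeta$ satisfied by $\Psi$, namely equation \eqref{e:dipsi}: $\del{\zeta}\Psi + 3\Psi = u$. At each fixed $t$, this is an elliptic (zeroth order positive) ODE on the torus $\Tbb$ with source $u$, and Lemma \ref{t:Psprd} together with Proposition \ref{t:prdsl} guarantees that both $\Psi(t,\cdot)$ and $u(t,\cdot)$ are genuinely $1$-periodic in $\zeta$. Because of the positive zeroth-order coefficient $3$, one expects a gain of one spatial derivative when inverting $(\del{\zeta}+3)$, which is precisely what the claimed inequality asserts.

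The first step is the $L^2$ estimate. Multiply \eqref{e:dipsi} by $\Psi$ and integrate over $\Tbb$. The skew-symmetry of $\del{\zeta}$ combined with periodicity gives
\als{
\int_{\Tbb} \Psi\,\del{\zeta}\Psi\,d\zeta = \tfrac{1}{2}\int_{\Tbb}\del{\zeta}(\Psi^2)\,d\zeta = 0,
}
so $3\|\Psi(t)\|_{L^2(\Tbb)}^2 = \int_{\Tbb} u\Psi\,d\zeta \leq \|u(t)\|_{L^2(\Tbb)}\|\Psi(t)\|_{L^2(\Tbb)}$, which yields $\|\Psi(t)\|_{L^2(\Tbb)} \leq \tfrac{1}{3}\|u(t)\|_{L^2(\Tbb)}$.

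The second step is a bootstrap for the higher derivatives. Differentiating \eqref{e:dipsi} $k-1$ times in $\zeta$ for $k\geq 1$ produces the pointwise identity
\als{
\del{\zeta}^k\Psi = \del{\zeta}^{k-1}u - 3\,\del{\zeta}^{k-1}\Psi,
}
so in $L^2(\Tbb)$ we obtain the simple recursion $\|\del{\zeta}^k\Psi\|_{L^2} \leq \|\del{\zeta}^{k-1}u\|_{L^2} + 3\|\del{\zeta}^{k-1}\Psi\|_{L^2}$. Using the base case from Step 1, an obvious induction on $k$ produces a constant $C_k>0$ (independent of $t$) such that $\|\del{\zeta}^k\Psi(t)\|_{L^2(\Tbb)} \leq C_k\|u(t)\|_{H^{k-1}(\Tbb)}$ for every $k\geq 1$.

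Summing these bounds over $0\leq k\leq s$ controls $\|\Psi\|_{H^s(\Tbb)}^2 = \sum_{k=0}^{s}\|\del{\zeta}^k\Psi\|_{L^2(\Tbb)}^2$ by $C^2\|u(t)\|_{H^{s-1}(\Tbb)}^2$, completing the proof. (An equivalent, even shorter, route is to expand both $u$ and $\Psi$ in Fourier series on $\Tbb$, which gives $\hat{\Psi}_n = \hat{u}_n/(2\pi in + 3)$ and hence $(1+n^2)^s|\hat{\Psi}_n|^2 \lesssim (1+n^2)^{s-1}|\hat{u}_n|^2$; summing in $n$ yields the same inequality.) There is no genuine obstacle here—the only thing to notice is that the ODE structure of \eqref{e:dipsi}, coupled with periodicity, allows one to recover one spatial derivative for free, which is why $\Psi$ lives in $H^s$ while only $H^{s-1}$ regularity of $u$ is required.
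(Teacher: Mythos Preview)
Your proof is correct and follows essentially the same route as the paper: both exploit the identity \eqref{e:dipsi} to express $\partial_\zeta^k\Psi$ recursively in terms of $\Psi$ and lower derivatives of $u$, reducing the $H^s$ bound to an $L^2$ bound on $\Psi$. The only difference is in the $L^2$ base case: the paper bounds $|\Psi|$ pointwise by $\tfrac{1}{3}\|u\|_{L^\infty}$ via the integral formula \eqref{e:ps2} and then invokes Sobolev embedding, whereas your energy argument (multiply by $\Psi$, integrate over $\Tbb$) gives the sharper $\|\Psi\|_{L^2}\leq\tfrac{1}{3}\|u\|_{L^2}$ directly.
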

\begin{proof}
	Recalling the definition of Sobolev norms (see \S\ref{s:funsp}), we have
	\begin{equation*}
		\|\Psi(t)\|_{H^s(\Tbb)}^2 =   \sum_{0\leq k \leq s} \int_{\mathbb{T}} (\partial_\zeta^k \Psi(t,\zeta),\partial_\zeta^k \Psi(t,\zeta))\, d\zeta=\sum_{0\leq k\leq s} \|\partial_\zeta^k \Psi(t,\zeta)\|_{L^2(\Tbb)}^2.
	\end{equation*}
	By using \eqref{e:dipsi} and  the  induction, we can prove the higher order derivatives of $\Psi$,
	\begin{equation}\label{e:psest0}
		\partial_{\zeta}^k \Psi=(-3)^k\Psi +\sum^k_{\ell=1}(-3)^{k-\ell} \partial_{\zeta}^{\ell-1} u , \quad \text{for} \quad k=1,\cdots s.
	\end{equation}
	Then by \eqref{e:psest0} and the triangle inequality,
	\begin{equation}\label{e:psest1}
		\|\Psi(t)\|_{H^s(\Tbb)}^2 \leq   C\|\Psi(t)  \|_{L^2(\Tbb)}^2 +C \|u(t)\|_{H^{s-1}(\Tbb)}^2 .
	\end{equation}
	We need to estimate
	\begin{equation}\label{e:pssob}
		\|\Psi (t) \|_{L^2(\Tbb)}^2= \int_{\mathbb{T}} |\Psi(t,\zeta)|^2  d\zeta .
	\end{equation}
	By \eqref{e:ps2}, note
	\begin{equation}\label{e:psest}
		|\Psi(t,\zeta)|  \leq  \frac{1}{e^{3\zeta} }\int^\zeta_{-\infty} \bigl|u(t,z)\bigr|e^{3z} dz\leq  \|u(t)\|_{\Li(\Tbb)}\frac{1}{e^{3\zeta} }\int^\zeta_{-\infty}  e^{3z} dz=\frac{1}{3} \|u(t)\|_{\Li(\Tbb)} .
	\end{equation}
	Substituting the estimate \eqref{e:psest} into \eqref{e:pssob}, with the help of the Sobolev embedding theorem, yields
	\begin{equation}\label{e:psest2}
		\|\Psi (t) \|_{L^2(\Tbb)}^2 \leq C\|u(t)\|_{\Li(\Tbb)}^2\leq C\|u(t)\|_{H^{s-1}(\Tbb)}^2 .
	\end{equation}
	Inserting the estimate \eqref{e:psest2} into \eqref{e:psest1}, we obtain there is a constant $C >0$, such that
	\begin{equation*}
		\|\Psi(t)\|_{H^s(\Tbb)}^2 \leq  C \|u(t)\|_{H^{s-1}(\Tbb)}^2  ,
	\end{equation*}
	which concludes this lemma.
\end{proof}

Now, we are in a position to prove Theorem \ref{t:mainthm1}.
\begin{proof}[Proof of Theorem \ref{t:mainthm1}]
	Using \eqref{e:ww}--\eqref{e:wi},  \eqref{e:u} (noting now $\cc=1/5$ by \eqref{e:para0}), \eqref{e:vv} and \eqref{e:ps2} we arrive at the transforms
	\begin{gather*}
		\hrho(t,\zeta)=f(t)+f(t)u(t,\zeta), \quad 	\del{t}\hrho(t,\zeta)=f_0(t)+f_0(t)u_0(t,\zeta) \label{e:trf1} \intertext{and} \partial_{\zeta}\hrho(t,\zeta)= 5(1+f(t)) u_\zeta(t,\zeta) , \quad \hv=\frac{f_0 R }{3(1+f)}\nu  .  \label{e:trf2}
	\end{gather*}
By \eqref{e:xvartrf} and \eqref{e:logcd}, we obtain the transform
\begin{equation*}
	|\bx|=t^{\frac{2}{3}} (1+f)^{-\frac{1}{3}}\exp\zeta.
\end{equation*}
which consists with \eqref{e:xztsf}.

Further, using \eqref{e:xvartrf},  \eqref{e:undf}, \eqref{e:direl2},  \eqref{e:vv} and \eqref{e:dzdr}, we arrive at
\begin{gather}
\varrho(t,x^i) =f(t)+f(t)u(t,\zeta),  \label{e:etrf1}  \\
	\del{t}\varrho(t,x^i)+\Bigl(\frac{2}{3t}-\frac{f_0(t)}{3(1+f(t))}\Bigr)x^i\del{i}\varrho(t,x^i)= f_0(t)+f_0(t)u_0(t,\zeta), \label{e:etrf2}\\
x^i	\partial_{i}\varrho(t, x^i) = 5(1+f(t)) u_\zeta(t,\zeta),  \label{e:etrf3} \\
	v^i(t,x^i) =\Bigl(\frac{2}{3t}-\frac{ (1-\nu(t,\zeta)) f_0(t) }{3(1+f(t))}  \Bigr)x^i .   \label{e:etrf4}
\end{gather}
When $t=1$, with the help of \eqref{e:data0}, \eqref{e:feq2b} and \eqref{e:udvar}, the relations \eqref{e:etrf1} and \eqref{e:etrf4} become
\begin{gather}
\varrho|_{t=1}(x^i) =	\beta \mathcal{d}(|\bx|)=\beta (1+u(1,\zeta)) \; \Leftrightarrow  \;   \uu(-1,\zeta)=\mathcal{d}\circ\mathcal{y}_1- 1   , \label{e:dt1}\\
v^i|_{t=1}(x^i) =\Bigl(\frac{2}{3}-  (1-\nu(t,\zeta))   \gamma     \Bigr)x^i = \frac{2}{3}x^i +\gamma \mathcal{v}(|\bx|) x^i \; \Leftrightarrow  \;   \underline{\nu}(-1,\zeta) =\mathcal{v}\circ\mathcal{y}_1 +1. \label{e:dt2}
\end{gather}
Using \eqref{e:keyid2} and \eqref{e:uzuz}, we have
\begin{gather}
	\underline{u_\zeta}(-1,\zeta)=\frac{ \beta }{5(1+\beta)} \del{\zeta} \uu(-1,\zeta)  ,   \label{e:dt3}\\
	\underline{u_0}(-1,\zeta)=\frac{\beta \uu (-1,\zeta)}{1+\beta} -  \frac{5}{3}   \underline{\nu}  (-1,\zeta) \underline{u_\zeta } (-1,\zeta) - \Bigl[\frac{1}{3}\del{\zeta}\underline{\nu}(-1,\zeta)	+  \underline{\nu}(-1,\zeta) \Bigr]\Bigl(1 + \frac{\beta \uu(-1,\zeta)}{1+\beta}\Bigr)   .   \label{e:dt4}
\end{gather}

According to the data satisfying \eqref{e:mtdata}, by using \eqref{e:dt1} and \eqref{e:dt2}, we obtain estimates
\begin{equation}\label{e:dtest1}
	\|\uu(-1)\|_{H^{s}(\Tbb)} +\|\underline{\nu}(-1)\|_{H^{s}(\Tbb)} \leq	\|\uu(-1)\|_{H^{s+1}(\Tbb)} +\|\underline{\nu}(-1)\|_{H^{s+1}(\Tbb)} \leq   \sigma_\star\sigma  .
\end{equation}
Using \eqref{e:dt3}, we derive
\begin{equation}\label{e:dtest2}
	\|\underline{u_\zeta}(-1)\|_{H^s(\Tbb)}  \leq  \| \del{\zeta} \uu(-1) \|_{H^s(\Tbb)}  \leq C \|   \uu(-1) \|_{H^{s+1}(\Tbb)} \overset{\eqref{e:dtest1}}{\leq} C\sigma_\star \sigma  .
\end{equation}
Similarly, using \eqref{e:dt4}, with the help of \eqref{e:dtest1} and \eqref{e:dtest2}, we estimate
\begin{align}\label{e:dtest3}
		\|\underline{u_0}(-1)\|_{H^s(\Tbb)} \leq & C\Bigl(\|   \uu(-1) \|_{H^{s}(\Tbb)}+\|   \underline{\nu}(-1) \|_{H^{s+1}(\Tbb)} \bigl(\|  \underline{u_\zeta} (-1) \|_{H^{s}(\Tbb)} + \|  \underline{u} (-1)  \|_{H^{s}(\Tbb)} +1 \bigr)\Bigr)  \notag  \\
		\leq & C \sigma_\star\sigma.
\end{align}
By Lemma \ref{t:psest} and \eqref{e:dtest1}, we conclude
\begin{equation}\label{e:dtest5}
	\|\underline{\Psi}(-1)\|_{H^s(\Tbb)}  \leq C \|\uu(-1)\|_{H^{s-1}(\Tbb)} \leq C \sigma_\star\sigma.
\end{equation}
Gathering \eqref{e:dtest1}--\eqref{e:dtest5} together, by taking $\sigma_\star$ small enough and recalling $\U=(\uuo, \underline{u_\zeta},  \uu, \underline{\nu}, \underline{\Psi} )^T$,  yields
\begin{equation*}
	\|\U|_{\tau=-1}\|_{H^s(\Tbb)} \leq \sigma  .
\end{equation*}
Further, we are able to use the result in \S\ref{s:stp5} and conclude there exists a global unique solution $\U$ such that \eqref{e:sltdel} and the estimate \eqref{e:ineq1ab} hold for $\tau\in[-1,0)$, which, in turn by the Sobolev embedding theorems, leads to
\begin{equation}\label{e:Uest3}
	\|\U(\tau)\|_{W^{3,\infty}(\Tbb)} \leq C	\|\U(\tau)\|_{H^s(\Tbb)} \leq C \sigma  \; \Leftrightarrow \; 	\|(u_0(t), u_\zeta(t), u(t),\nu(t))\|_{W^{3,\infty}(\Tbb)} \leq C\sigma
\end{equation}
for $t\in[1,t_m)$. By \eqref{e:etrf1} and \eqref{e:etrf4}, since $u(t,\zeta)$ has a uniform estimate $-C\sigma<u(t,\zeta)<C\sigma$ for any $\zeta\in \Tbb$ (i.e., for any $|\bx|\in(0,\infty)$ by \eqref{e:xztsf} and the periodicity), we  obtain \eqref{e:mainest1} and \eqref{e:mainest2}.

On the other hand, by \eqref{e:etrf2} and \eqref{e:etrf3}, we reach
\begin{equation}\label{e:dtrhof}
	\frac{\del{t}\varrho(t,x^i)}{f_0(t)}-1= \Bigl(\underline{u_0}(\tau,\zeta)+\frac{5}{3} \underline{u_\zeta}(\tau,\zeta) \Bigr)-\frac{10 (1+f(t))}{3t f_0(t)} \underline{u_\zeta}(\tau,\zeta)  .
\end{equation}
In order to estimate \eqref{e:dtrhof}, we have to estimate $(1+f)/(tf_0)$. In fact, recalling Remark \ref{r:hyper}, we have point out that by the definition \eqref{e:Gdef} of $\chi(t)$ and  \eqref{e:gbA}, we obtain
\begin{equation}\label{e:limf2}
	\frac{f_0^2}{(1+f)^2} =\frac{f\chi}{Bt^2}   \;  \Rightarrow   \;   \frac{1+f}{t f_0} =\sqrt{\frac{B}{\chi f }} \;  \Rightarrow   \;  	\lim_{t\rightarrow t_m } \frac{1+f}{t f_0}  =0 .
\end{equation}

Therefore, by \eqref{e:Uest3},  \eqref{e:dtrhof} and \eqref{e:limf2}, we can estimate, for any $t\in[1,t_m)$,
\begin{equation*}
	\biggl\|\frac{\del{t}\varrho(t,x^i)}{f_0(t)}-1 \biggr\|_{W^{3,\infty}(\Rbb^3)}  \leq C\sigma  ,
\end{equation*}
which results in \eqref{e:mainest1b}
for $(t,x^i)\in[1,t_m) \times \Rbb^3$.

By \eqref{e:etrf3} and \eqref{e:Uest3}, we reach the estimate
\begin{equation*}
	\|x^i	\partial_{i}\varrho(t, x^i) \|_{W^{3,\infty}(\Rbb^3)} \leq C(1+f(t))\sigma,
\end{equation*}
which implies \eqref{e:mainest1c}.

In the end, since $
	\lim_{t\rightarrow t_m} f(t)=+\infty$ and $\lim_{t\rightarrow t_m} f_0(t)=+\infty $ from Theorem \ref{t:mainthm0}, we conclude \eqref{e:blup0}, and $\gamma>1/3$, the blowups occur at a finite time $t_m$ (see Theorem \ref{t:mainthm0}.$(5)$).
We then complete the proof of this theorem.
\end{proof}


\appendix


\section{Miscellaneous}
\subsection{Physical interpretations of $\mathcal{D}^i$ and $\mathcal{S}$}\label{s:DS} 
To clarify the physical significance of $\mathcal{D}^i$ and $\mathcal{S}$, we introduce the following physical quantities to rewrite \eqref{e:D1} and \eqref{e:S1} respectively. Let $\mathcal{T}$ denote the fixed temperature distribution, which satisfies the following:
\begin{equation*}
	\mathfrak{T} \propto |\bx|^2, \quad \text{i.e., } \quad 	\mathfrak{T} = \mathcal{k}(t) \delta_{kl}x^kx^l, \quad \text{for some time function} \; \mathcal{k}(t).
\end{equation*}
The \textit{entropy flux} and \textit{thermodynamic force} are given, respectively, by
\begin{equation*}
	\mathfrak{J}^i:= \Bigl(\frac{2}{3}+\omega\Bigr)\mathfrak{T} \rho \check{v}^i \AND 	\mathfrak{F}_i:=\del{i}    \ln \bigl(\rho \mathfrak{T}^{\frac{1}{ \frac{2}{3}+\omega} }  \bigr).
\end{equation*}
and we denote the \textit{relative velocity} due to the \textit{inhomogeneous densities} for the system \eqref{e:EP1}--\eqref{e:data0b},
\begin{equation*}
	\check{v}^i:=v^i-\Bigl(\frac{2}{3t}-\frac{f_0}{3(1+f)}\Bigr) x^i.
\end{equation*}

As perturbations of the homogeneous density give rise to solutions of the system \eqref{e:EP1}--\eqref{e:data0b} that reduce to the Newtonian universe \ref{c:Ntsl} and the homogeneous blowup solution \ref{c:hmsl}, perturbations of the inhomogeneous density cause deviations of the velocity from $\rv^i$ \eqref{e:exsol1} and $v_r^i$ \eqref{e:sl1}.

In terms of these quantities, $\mathcal{D}^i$ and $\mathcal{S}$ can be viewed as effects of the inhomogeneous density. Then with straightforward calculations, readers can verify that
\begin{align*}
	\mathcal{D}^i(t,x^j,\rho,v^k,s,\phi)= & 		\underbrace{ -\frac{\kappa f_0}{1+f} 	\check{v}^i}_{\text{damping due to inhomogeneous densities}}, 
	\\
	\mathcal{S}(t,x^j,\rho,v^k,s,\phi)	= & 	\underbrace{-\frac{1}{\rho}  \del{i}   \Bigl( \frac{\mathfrak{J}^i}{\mathfrak{T}} \Bigr) }_{\text{flux of entropy}}+  \underbrace{ \frac{\mathfrak{J}^i \mathfrak{F}_i }{\mathfrak{T} \rho}  }_{\text{entropy productions}}   .  
\end{align*} 
Thus, $\mathcal{D}^i$ can be interpreted as a \textit{damping term} caused by inhomogeneous densities, while $\mathcal{S}$ characterizes the \textit{growth of entropy} that arises from the inhomogeneous densities. Moreover, there is a temperature-dependent matter flow associated with this growth.

\subsection{Spherical coordinates}\label{s:SScd}
Let us introduce the \textit{spherical coordinates} by
\begin{equation*}
	X^1=R \sin \theta \cos \varphi, \quad X^2= R \sin \theta\sin \varphi \AND X^3=R \cos\theta ,
\end{equation*}
and the spherical coordinate is related to the Cartesian coordinate  by
\begin{align}\label{e:Spcrd2}
	R=\sqrt{\delta_{ij}X^iX^j}, \quad  \theta = \arccos\Bigl(\frac{X^3}{R}\Bigr) \AND \varphi= \arctan \Bigl(\frac{X^2}{X^1}\Bigr) .
\end{align}
The Jacobian matrix of this transform is
\begin{align}\label{e:Jac}
	\p{\frac{\partial R}{\partial X^1} & \frac{\partial R}{\partial X^2} &\frac{\partial R}{\partial X^3} \\
		\frac{\partial \theta}{\partial X^1} & \frac{\partial \theta}{\partial X^2} &\frac{\partial \theta}{\partial X^3} \\
		\frac{\partial \varphi}{\partial X^1} & \frac{\partial \varphi}{\partial X^2} &\frac{\partial \varphi}{\partial X^3} }= \p{\sin\theta \cos \varphi & \sin\theta\sin\varphi & \cos\theta \\
		\frac{1}{R} \cos\theta\cos\varphi & \frac{1}{R} \cos\theta\sin\varphi & -\frac{1}{R} \sin\theta\\
		-\frac{1}{R}\frac{\sin\varphi}{\sin\theta}& \frac{1}{R}\frac{\cos\varphi}{\sin\theta} & 0}  .
\end{align}
Then the derivatives become
\begin{equation*}\label{e:Di}
	D_i=\frac{\partial R}{\partial X^i} \del{R}+\frac{\partial \theta}{\partial X^i} \del{\theta}+\frac{\partial \varphi}{\partial X^i} \del{\varphi}=\frac{\delta_{ij} X^j}{R}\del{R}+\frac{\partial \theta}{\partial X^i} \del{\theta}+\frac{\partial \varphi}{\partial X^i} \del{\varphi}  .
\end{equation*}
That is, using the Jacobian matrix given in \eqref{e:Jac}, they are
\begin{align*}
	D_1 = & \frac{\delta_{1j} X^j}{R}\del{R}+\frac{\partial \theta}{\partial X^1} \del{\theta}+\frac{\partial \varphi}{\partial X^1} \del{\varphi}=\frac{\delta_{1j} X^j}{R}\del{R}+\frac{\cos \theta \cos\varphi}{R} \del{\theta}-\frac{\sin\varphi}{R \sin \theta} \del{\varphi}  ; \\
	D_2 =& \frac{\delta_{2j} X^j}{R}\del{R}+\frac{\partial \theta}{\partial X^2} \del{\theta}+\frac{\partial \varphi}{\partial X^2} \del{\varphi} =\frac{\delta_{2j} X^j}{R}\del{R}+\frac{\cos\theta\sin\varphi}{R} \del{\theta}+\frac{\cos\varphi}{R \sin\theta} \del{\varphi} ; \\
	D_3 = & \frac{\delta_{3j} X^j}{R}\del{R}+\frac{\partial \theta}{\partial X^3} \del{\theta}+\frac{\partial \varphi}{\partial X^3} \del{\varphi} =\frac{\delta_{3j} X^j}{R}\del{R} - \frac{\sin\theta}{R} \del{\theta} .
\end{align*}

\section{Tools of analyses}
\subsection{Analysis of a nonlinear ODE}\label{s:ODE}
One of the main tools we use is a type of ODEs developed in our previous article \cite[\S$2$]{Liu2022b}. For convenience, we provide the results without proofs in this appendix. Detailed proofs can be found in \cite[\S$2$]{Liu2022b}. Specifically, we consider solutions $f(t)$ to the following ODE:
\begin{gather}
	f^{\prime\prime}(t)+\frac{\ca}{t}  f^\prime(t)-\frac{\cb}{t^2} f(t)(1+  f(t))-\frac{\cc (  f^\prime(t))^2}{1+f(t)}=   0 , \label{e:feq0}\\
	f(t_0)= \mf>0 \AND
	f^\prime(t_0)=   \mf_0>0.  \label{e:feq1}
\end{gather}
where $\mf,\mf_0>0$ are positive constants and
\begin{equation}\label{e:abcdk}
	\ca>1, \quad \cb>0 \AND 1<\cc < 3/2 .
\end{equation}

\begin{remark}
The equation  \eqref{e:feq1b}--\eqref{e:feq2b} is  \eqref{e:feq0}--\eqref{e:feq1} when
\begin{equation*}
	\ca=\frac{4}{3}, \quad \cb=\frac{2}{3}, \quad \cc=\frac{4}{3} , \quad t_0=1 \AND \beta_0=3(1+\mf)\mg.
\end{equation*}
\end{remark}

From now on, to simplify the notations, we denote
\begin{equation*}
	\triangle:=\sqrt{(1-\ca)^2+4\cb}>-\ba, \; \ba=1-\ca<0, \; \bc=1-\cc<0
\end{equation*}
and introduce constants $\mathtt{A}$,  $\mathtt{B}$, $\mathtt{C}$, $\mathtt{D}$ and $\mathtt{E}$ depending on the initial data  $\mf$ and $\mf_0$ of \eqref{e:feq0}--\eqref{e:feq1} and parameters $\ca$, $\cb$ and $\cc$,
\begin{align*}
	\mathtt{A}:=&\frac{t_0^{   -\frac{\ba-\triangle}{2}  }}{\triangle}\biggl(  \frac{t_0   \mf_0}{(1+\mf)^2} - \frac{\ba+\triangle}{2}  \frac{\mf  } {1+\mf} \biggr), \\
	\mathtt{B}:= &  \frac{t_0^{-\frac{\ba+\triangle}{2} }}{\triangle} \Bigl( \frac{\ba-\triangle}{2}  \frac{\mf }{1+\mf}  -\frac{t_0 \mf_0}{(1+\mf)^2} \Bigr)<0, \\
	\mathtt{C}:= & \frac{2 } {2+\ba+\triangle} \Bigl( \ln( 1+\mf) +\frac{\ba+\triangle}{2\cb} \frac{t_0\mf_0}{1+\mf}\Bigr)t_0^{-\frac{\ba+\triangle}{2 }} >0,  \\
	\mathtt{D}:= &
	\frac{ \ba+\triangle   }{2+\ba+\triangle}  \Bigl(  \ln( 1+\mf)  - \frac{1 } { \cb} \frac{t_0\mf_0}{1+\mf}\Bigr) t_0,   \\
	\mathtt{E}:=  &
	\frac{\bc  \mf_0 t_0^{1-\ba}  }{  \ba  (1+\mf) } >0.
\end{align*}

We define the following two critical times $t_\star$ and $t^\star$.
\begin{definition}\label{t:tdef}
	Suppose $\mathtt{A},\;\mathtt{B}, \;\mathtt{E},\;\ba$ and $\triangle$ are defined above, then
	\begin{enumerate}
		\item
		Let $\mathcal{R}:=\{t_r>t_0 \;|\;\mathtt{A} t_r^{\frac{\ba-\triangle}{2} } + \mathtt{B} t_r^{\frac{\ba+\triangle}{2} } + 1 =0\}$ and define $t_\star:=\min \mathcal{R}$.
		\item If $t_0^{\ba}> \mathtt{E}^{-1}$, we define $t^\star :=    (t_0^{\ba}- \mathtt{E}^{-1} )^{1/\ba}\in(0,\infty)$, i.e.,  $t=t^\star$ solves $1-\mathtt{E}  t_0^{\ba} +  \mathtt{E}   t^{\ba}=0$.
	\end{enumerate}
\end{definition}

We are now in a position to state the main theorem regarding ODE \eqref{e:feq0}--\eqref{e:feq1}, with the proof available in \cite[\S$2$]{Liu2022b}. 
\begin{theorem}\label{t:mainthm0}
	Suppose constants $\ca$, $\cb$ and $\cc$ are defined by  \eqref{e:abcdk}, $t_\star$ and $t^\star$ are defined above and the initial data $\mf, \mf_0>0$, then
	\begin{enumerate}
		\item $t_\star \in[0,\infty)$ exists and $t_\star>t_0$;
		\item there is a constant $t_m\in [t_\star,\infty]$, such that there is a unique solution $f\in C^2([t_0,t_m))$ to the equation \eqref{e:feq0}--\eqref{e:feq1}, and
		\begin{equation*}
			\lim_{t\rightarrow t_m} f(t)=+\infty \AND \lim_{t\rightarrow t_m} f_0(t)=+\infty .
		\end{equation*}
		\item  $f$ satisfies upper and lower bound estimates,
		\begin{align*}
			1+f(t)>&\exp \bigl( \mathtt{C} t^{\frac{\ba+\triangle}{2} }  +\mathtt{D}  t^{-1}\bigr)      &&\text{for}\quad t\in(t_0,t_m);
			\\
			1+f(t) < & \bigl(\mathtt{A} t^{\frac{\ba-\triangle}{2} } + \mathtt{B} t^{\frac{\ba+\triangle}{2} } + 1 \bigr)^{-1}    && \text{for}\quad t\in(t_0,t_\star).
		\end{align*}
	\end{enumerate}		
	Furthermore, if the initial data satisfies
	$\mf_0 >  \ba(1+\mf) /(\bc t_0 )$,
	then
	\begin{enumerate}
		\setcounter{enumi}{3}
		\item 	$t_\star$  and $t^\star$  exist and finite, and  $t_0<t_\star<t^\star<\infty$;
		\item there is a finite time $t_m\in [t_\star,t^\star)$, such that there is a solution $f\in C^2([t_0,t_m))$ to the equation \eqref{e:feq0} with the initial data \eqref{e:feq1},   and 	\begin{equation*}
			\lim_{t\rightarrow t_m} f(t)=+\infty \AND \lim_{t\rightarrow t_m} f_0(t)=+\infty .
		\end{equation*}
		\item the solution $f$ has improved lower bound estimates, for $t\in(t_0,t_m)$,
		\begin{equation*}\label{e:ipvest}
			(1+\mf)  \bigl(1-\mathtt{E}  t_0^{\ba} +  \mathtt{E}   t^{\ba} \bigr)^{1/\bc}  < 1+f(t) .
		\end{equation*}
	\end{enumerate}
\end{theorem}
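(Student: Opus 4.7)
The plan is to convert every claim about the nonlinear ODE \eqref{e:feq0} into a comparison argument against explicit power functions, by applying three different Bernoulli-type substitutions, each tailored to cancel the singular nonlinearity $\cc(f')^2/(1+f)$ and expose the linear Euler operator
\[
L := \partial_t^2 + \tfrac{\ca}{t}\partial_t - \tfrac{\cb}{t^2}.
\]
Local existence and uniqueness of a $C^2$ solution on a maximal interval $[t_0,t_m)$ are immediate from Picard--Lindel\"{o}f applied to the first-order system in $(f,f')$, together with the standard continuation principle. Evaluating $f''(t_0)$ and using $\mf,\mf_0,\cb,\cc > 0$ with $\ca>1$ shows that $f$ and $f'$ remain strictly positive on $[t_0,t_m)$, which is the nondegeneracy input needed for the substitutions below.

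The operator $L$ is the reference: under $s = \ln t$ it becomes the constant-coefficient operator $\partial_s^2 - \ba\partial_s - \cb = (D-r_+)(D-r_-)$ with $r_\pm = (\ba\pm\triangle)/2$, so any inequality $Lv \ge 0$ with $v(t_0)=v'(t_0)=0$ yields $v\ge 0$ via two successive integrating-factor arguments using $e^{-r_+ s}$ and $e^{-r_- s}$. The \emph{upper bound} follows from the substitution $y := 1/(1+f)$, which converts \eqref{e:feq0} into
\[
Ly = -\tfrac{\cb}{t^2} + (2-\cc)\tfrac{(y')^2}{y} \;\ge\; -\tfrac{\cb}{t^2}
\]
since $1<\cc<3/2$. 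The explicit function $y_1(t) := \mathtt{A}t^{r_-} + \mathtt{B}t^{r_+} + 1$ solves $Ly_1 = -\cb/t^2$ with the matching initial data, and the constants $\mathtt{A},\mathtt{B}$ of the theorem statement are precisely the solution of the resulting $2\times 2$ Vandermonde system (using $r_+ - r_- = \triangle$). The comparison principle then forces $y\ge y_1$, i.e.\ $1+f \le 1/y_1$, on the interval where $y_1>0$. Because $\mathtt{B}<0$ and $r_+>0$, the function $y_1$ crosses zero at a finite first positive root $t_\star > t_0$, proving item~(1). The bound $1+f\le 1/y_1$ on $[t_0,t_\star)$ precludes blow-up there, establishing $t_m \ge t_\star$ in item~(2).

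For the \emph{rough lower bound} in item~(3), the substitution $w := \ln(1+f)$ yields $Lw = -\bc(w')^2 + \tfrac{\cb}{t^2}(e^w-1-w) \ge 0$ (both terms nonnegative), and the same comparison principle, applied against the homogeneous combination $\mathtt{C} t^{r_+} + \mathtt{D} t^{r_-}$ fitted to the initial data, produces the stated estimate---in the setting of \eqref{e:feq1b} one has $\ca+\cb=2$, which makes $r_- = -1$ and reduces to the exact formula in the theorem. For the \emph{finite-time blow-up} in items~(4)--(6), the decisive substitution is $h := (1+f)^{\bc}$; a direct computation shows that the $(f')^2$ terms cancel exactly and
\[
h'' + \tfrac{\ca}{t}h' \;=\; \tfrac{\cb\bc}{t^2}\bigl(h^{(\bc+1)/\bc} - h\bigr) \;\le\; 0,
\]
because $\cb\bc<0$ while $h^{(\bc+1)/\bc}\ge h$ on $h\in(0,1]$ (the exponent is strictly less than $-1$ for $\bc\in(-1/2,0)$). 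Hence $(t^{\ca} h')' \le 0$, and two clean integrations from $t_0$ yield the explicit bound $h(t)\le h_0\bigl(1 - \mathtt{E}t_0^{\ba} + \mathtt{E}t^{\ba}\bigr)$; raising to the power $1/\bc<0$ flips the inequality and produces exactly the improved lower bound of item~(6). Under $\mf_0 > \ba(1+\mf)/(\bc t_0)$ one verifies $\mathtt{E}t_0^{\ba}>1$, so the bracket vanishes at the finite time $t^\star=(t_0^\ba-1/\mathtt{E})^{1/\ba}$, and the lower bound forces $1+f\to\infty$ before $t^\star$, giving $t_m\in[t_\star,t^\star)$ as claimed in items~(4)--(5).

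The main obstacle is identifying the right substitution for each claim---especially the Bernoulli transform $h=(1+f)^{\bc}$, which simultaneously cancels the $(f')^2/(1+f)$ nonlinearity \emph{and} turns the $\cb f(1+f)/t^2$ term into the manageable $(\cb\bc/t^2)(h^{(\bc+1)/\bc} - h)$ via the algebraic identity $hf = h^{(\bc+1)/\bc}-h$. Once the correct substitutions are in hand, everything reduces to the comparison principle for the factored Euler operator $(D-r_+)(D-r_-)$, which is routine. The remaining technical point is confirming that the initial-data constants $\mathtt{A},\mathtt{B},\mathtt{C},\mathtt{D},\mathtt{E}$ displayed in the theorem are exactly what the integrations produce---this is bookkeeping based on $r_+ - r_- = \triangle$, $r_+ r_- = -\cb$, and $1-\ca=\ba$.
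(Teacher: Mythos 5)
A preliminary remark: the paper itself does not prove Theorem \ref{t:mainthm0}; it is quoted from \cite[\S 2]{Liu2022b}, so your argument has to be judged on its own terms. Much of it does check out: with $y=1/(1+f)$ one indeed gets $Ly=-\cb t^{-2}+(2-\cc)(y')^2/y\ge -\cb t^{-2}$, the particular solution $\mathtt{A}t^{r_-}+\mathtt{B}t^{r_+}+1$ (with $r_\pm=(\ba\pm\triangle)/2$) fitted to the data reproduces exactly the stated $\mathtt{A},\mathtt{B}$, and $\mathtt{B}<0$, $r_+>0$ give the existence of $t_\star>t_0$ and the upper bound; likewise $h=(1+f)^{\bc}$ cancels the quadratic term, $(t^{\ca}h')'\le 0$, and two integrations give the improved lower bound of item (6) with the stated $\mathtt{E}$, together with $t_m\le t^\star$ under $\mf_0>\ba(1+\mf)/(\bc t_0)$.

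The genuine gap is the lower bound of item (3). The theorem asserts $1+f>\exp\bigl(\mathtt{C}t^{(\ba+\triangle)/2}+\mathtt{D}t^{-1}\bigr)$ with those specific constants for \emph{all} $\ca>1$, $\cb>0$, whereas your comparison only yields $w=\ln(1+f)\ge c_+t^{r_+}+c_-t^{r_-}$ with $c_\pm$ fitted to the data, and you recover the stated formula only when $\ca+\cb=2$ (so that $r_-=-1$ and $r_+=\cb$), which is the case of \eqref{e:feq1b} but not of the theorem as stated. For general parameters the comparison function $\phi=\mathtt{C}t^{r_+}+\mathtt{D}t^{-1}$ satisfies $\phi(t_0)=\ln(1+\mf)$ but $\phi'(t_0)=\tfrac{r_+}{\cb}\,\tfrac{\mf_0}{1+\mf}$ and $L\phi=\mathtt{D}(2-\ca-\cb)t^{-3}$, so the two-step integrating-factor comparison does not close; worse, when $\ca+\cb<2$ one computes $c_+<\mathtt{C}$, so the fitted homogeneous solution cannot dominate $\phi$ and your route cannot produce the claimed estimate — a different argument is needed there, and you have not supplied it. Smaller omissions: the limit statements $\lim_{t\to t_m}f=\lim_{t\to t_m}f_0=+\infty$ in items (2) and (5) are never established (for $t_m<\infty$ you must rule out derivative-only blowup, e.g.\ by noting that $f'/(1+f)^{\cc}$ has bounded growth when $f$ is bounded, which is also what really justifies your claim $t_m\ge t_\star$; for $t_m=\infty$ you need the lower bound plus an argument for $f_0$), and the strict inequalities $t_\star<t^\star$ and $t_m<t^\star$ in items (4)--(5) are asserted rather than derived — your $h$-argument as written only gives $t_m\le t^\star$, and strictness requires an extra step such as restarting the integration of $(t^{\ca}h')'\le 0$ from an interior time where $t^{\ca}h'$ has strictly decreased.
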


\subsection{The analysis of the reference solutions} \label{t:refsol}
\subsubsection{The time transform function $g(t)$ and rough estimates of $f$ and  $\del{t}f(t)$}\label{t:ttf}
Let
\begin{align}\label{e:gdef0a}
	g(t):=\exp\Bigl(-A\int^t_{t_0} \frac{f(s)(f(s)+1)}{s^2 f_0(s)} ds \Bigr)>0
\end{align}
where $A\in(0,2\cb/(3-2\cc))$ is a constant.
The following lemma gives an alternative representation of $g(t)$ only involving $f$ without $f_0$, fundamental properties of the function $g(t)$ and expresses $f_0$ in terms of $f$ and $g$.

\begin{lemma}\label{t:f0fg}
	Suppose $f\in C^2([t_0,t_1))$ ($t_1>t_0$) solves the equation \eqref{e:feq0}--\eqref{e:feq1}, $g(t)$ is defined by \eqref{e:gdef}, and  denote $f_0(t):=\del{t} f(t)$, then
	\begin{enumerate}	
		\item  $f_0$ can be expressed by
		\begin{equation}
			f_0(t)=B^{-1} t^{-\ca} g^{-\frac{\cb}{A}}(t)(1+f(t))^{\cc} >0  \label{e:f0aa}
		\end{equation}
		for $t\in [t_0,t_1)$ where $B:= (1+\mf)^\cc/( t_0^{ \ca} \mf_0)>0 $ is a constant depending on the data;
		\item If the data $\mf>0$, then $f(t)>0$ for $t \in[t_0,t_1)$;
		\item $g(t)$ can be represented by
		\begin{equation}\label{e:gdef2}
			g (t) =\Bigl(1+ \cb B \int^t_{t_0} s^{\ca-2} f(s)(1+f(s))^{1-\cc}  ds \Bigr)^{-\frac{A}{\cb}}\in(0,1],
		\end{equation}
		for $t\in[t_0,t_1)$,  and $g(t_0)=1$; 	
		\item
		$g(t)$ is strictly decreasing and  invertible in $[t_0,t_1)$.
	\end{enumerate}
\end{lemma}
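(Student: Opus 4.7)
\smallskip
\noindent\textbf{Proof proposal.} The plan is to prove all four parts essentially from one central identity: that the quantity
\begin{equation*}
\mathcal{J}(t) := f_0(t)\, t^{\ca}\, g^{\cb/A}(t)\, (1+f(t))^{-\cc}
\end{equation*}
is constant along solutions of \eqref{e:feq0}. The motivation is simply that the logarithmic derivative of $\mathcal{J}$, once combined with $g'/g = -A f(1+f)/(t^2 f_0)$ coming from the definition \eqref{e:gdef0a}, precisely reconstructs the ODE \eqref{e:feq0} divided by $f_0$.

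\smallskip
\noindent\emph{Step 1 (part (1)).} Differentiate $\ln \mathcal{J}$ to obtain
\begin{equation*}
\frac{\mathcal{J}'}{\mathcal{J}} = \frac{f_0'}{f_0} + \frac{\ca}{t} + \frac{\cb}{A}\,\frac{g'}{g} - \cc\,\frac{f_0}{1+f}.
\end{equation*}
Substituting $g'/g = -A f(1+f)/(t^2 f_0)$ from \eqref{e:gdef0a} converts this into $\frac{1}{f_0}$ times the left-hand side of \eqref{e:feq0}, hence $\mathcal{J}' \equiv 0$. Evaluating $\mathcal{J}(t_0)$ using $g(t_0)=1$ gives $\mathcal{J}(t_0) = \mf_0\, t_0^{\ca}(1+\mf)^{-\cc} = B^{-1}$, and solving for $f_0$ yields the formula \eqref{e:f0aa}. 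Since $g(t),\,(1+f(t))^{\cc},\,t^{-\ca}>0$ on $[t_0,t_1)$, we also get $f_0>0$ there.

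\smallskip
\noindent\emph{Step 2 (part (2)).} From Step 1, $f_0>0$ on $[t_0,t_1)$, so $f$ is strictly increasing and therefore $f(t)\geq \mf>0$.

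\smallskip
\noindent\emph{Step 3 (part (3)).} The idea is to remove $f_0$ from the definition of $g$ by using \eqref{e:f0aa}. Introduce $h(t) := g^{-\cb/A}(t)$; then $h(t_0)=1$ and a direct computation using $g'/g = -Af(1+f)/(t^2 f_0)$ together with \eqref{e:f0aa} gives
\begin{equation*}
h'(t) = -\frac{\cb}{A}\,g^{-\cb/A}\,\frac{g'}{g} = \cb B\, f(t)(1+f(t))^{1-\cc}\, t^{\ca-2}.
\end{equation*}
Integrating from $t_0$ to $t$ and then inverting yields \eqref{e:gdef2}.

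\smallskip
\noindent\emph{Step 4 (part (4)).} By part (2), the integrand in \eqref{e:gdef2} is strictly positive on $[t_0,t_1)$, so the expression inside the $(\,\cdot\,)^{-A/\cb}$ is strictly increasing from $1$, hence $g$ is strictly decreasing from $g(t_0)=1$ and thus invertible on $[t_0,t_1)$.

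\smallskip
The argument contains no genuine obstacle: the only nontrivial step is recognizing in Step 1 that the specific combination $\mathcal{J}$ encodes the ODE, after which everything reduces to bookkeeping. The exponent matching in Step 3 (the substitution $h=g^{-\cb/A}$) is where care is needed to avoid sign errors, but it is routine.
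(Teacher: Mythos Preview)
Your proposal is correct. The paper does not include its own proof of this lemma (it defers to \cite[\S2]{Liu2022b}), but the identities it records in Remark~\ref{t:dtg1}, particularly \eqref{e:dtgf}, are exactly your Step~3 computation of $h'=(g^{-\cb/A})'$, and your conserved-quantity argument for \eqref{e:f0aa} is the natural derivation; so your approach matches what the paper relies on. One small point worth making explicit in Step~1: the well-definedness of $g$ via \eqref{e:gdef0a} already presupposes $f_0\neq 0$ on $[t_0,t_1)$, and a brief open--closed (continuity) argument from $f_0(t_0)=\mf_0>0$ closes the apparent circularity between ``$\mathcal{J}$ constant'' and ``$f_0>0$''; otherwise the proof is complete as written.
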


\begin{remark}\label{t:dtg1}
	There are two useful identities from \cite[eqs. $(2.7)$ and $(2.8)$]{Liu2022b}. We list them in this remark.
	\begin{align}
		\del{t}g(t) = &-A  B g^{\frac{\cb}{A}+1}(t)  t^{\ca-2} f(t) (1+f(t))^{1-\cc} , \label{e:dtg0} \\
		\del{t} g^{-\frac{\cb}{A}}(t) = & -\frac{\cb}{A}g^{-\frac{\cb}{A}-1}\del{t} g = \cb B t^{\ca-2} f(t)  (1+f(t))^{1-\cc}.  \label{e:dtgf}
	\end{align}
\end{remark}

\subsubsection{Estimates of two crucial quantities $\chi(t)$ and $\xi(t)$}
In this section, we  estimate two important quantities $\chi(t)$ and $\xi(t)$ which are  frequently used (see \cite{Liu2022b} for details).  The first quantity is defined by
\begin{equation}\label{e:Gdef0}
	\chi(t):=\frac{t^{2-\ca} f_0(t)}{(1+f(t))^{2-\cc} f(t) g^{\frac{\cb}{A}}(t)} \overset{\eqref{e:f0aa}}{=} \frac{  g^{-\frac{2\cb}{A}}(t) t^{2(1-\ca)}}{B f(t) (1+f(t))^{2(1-\cc)}}  .
\end{equation}

\begin{lemma}\label{t:gmap}
	Suppose $g(t)$ is defined by \eqref{e:gdef0a} and  $\cc\in(1,3/2)$,  and  $f\in C^2([t_0,t_m))$ (where $[t_0,t_m)$ is the maximal interval of existence of $f$ given by Theorem \ref{t:mainthm0}) solves ODE \eqref{e:feq0}--\eqref{e:feq1}. Then $
	\lim_{t\rightarrow t_m}g(t)=0$.
\end{lemma}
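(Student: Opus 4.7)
The plan is to leverage the two integral representations of $g$ from Lemma \ref{t:f0fg}. Since the integrand $f(s)(1+f(s))/(s^2 f_0(s))$ appearing in the exponent of \eqref{e:gdef0a} is strictly positive, $g$ is monotonically decreasing on $[t_0,t_m)$, so $g_\infty := \lim_{t\to t_m} g(t)$ exists in $[0,1)$. I will argue by contradiction: suppose $g_\infty > 0$. By \eqref{e:gdef2} (equivalently \eqref{e:dtgf}), this is equivalent to the integral
\[
I(t) := \int_{t_0}^{t} s^{\ca-2} f(s)(1+f(s))^{1-\cc}\, ds
\]
remaining bounded as $t \to t_m$. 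I then split into two cases according to whether $t_m$ is finite or infinite.

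In the case $t_m < \infty$, boundedness of $g^{-\cb/A}$ combined with \eqref{e:f0aa} and boundedness of $t^{-\ca}$ on $[t_0,t_m]$ yields a uniform bound $f_0(t) \leq C_1(1+f(t))^{\cc}$. Since $f_0>0$ and $s^{\ca-2}$ is bounded below by some $c_0>0$ on $[t_0,t_m]$, I would perform the change of variables $u=f(s)$, $du = f_0\,ds$, and estimate
\[
I(t) \;\geq\; c_0 \int_{t_0}^{t} f(1+f)^{1-\cc}\,ds \;=\; c_0 \int_{\mf}^{f(t)} \frac{u(1+u)^{1-\cc}}{f_0(s(u))}\,du \;\geq\; \frac{c_0}{C_1}\int_{\mf}^{f(t)} u(1+u)^{1-2\cc}\,du.
\]
Since $f(t)\to\infty$ by Theorem \ref{t:mainthm0}, and the integrand grows like $u^{2-2\cc}$ at infinity, the hypothesis $\cc<3/2$ gives $2-2\cc>-1$ and forces the integral to diverge, contradicting boundedness of $I$.

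In the case $t_m = \infty$, the uniform bound $f_0\leq C_1(1+f)^\cc$ is no longer available because $t^{-\ca}$ is not bounded below. Instead, I invoke the stretched-exponential lower bound from Theorem \ref{t:mainthm0}(3), $1+f(t) > \exp(\mathtt{C}\, t^{(\ba+\triangle)/2}+\mathtt{D}\, t^{-1})$ with $\mathtt{C}>0$ and $(\ba+\triangle)/2>0$. For $s$ large enough that $f(s)\geq 1$, one has $f(1+f)^{1-\cc} \geq \tfrac{1}{2}(1+f)^{2-\cc}$, and $2-\cc>0$ turns this into a true stretched exponential that dominates the polynomial factor $s^{\ca-2}$ for any $\ca>1$. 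Divergence of $I(t)$ is then immediate, again contradicting the hypothesis $g_\infty>0$.

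The main obstacle is the algebraic step in the finite-$t_m$ case: pairing the bound $f_0\leq C_1(1+f)^\cc$ obtained from the contradiction hypothesis with the change of variables $ds = du/f_0$, so that the divergence criterion collapses exactly to $\cc<3/2$. This is precisely the threshold appearing in the lemma, which is reassuring but also fragile — any weaker bound on $f_0$ would fail to saturate it. The two cases genuinely require different tools because the nondegeneracy of $s^{\ca-2}$ (used in Case 1) is lost as $s\to\infty$, so one must fall back on the a priori lower bound for $f$ furnished by Theorem \ref{t:mainthm0} to handle Case 2.
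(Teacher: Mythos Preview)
The paper does not supply its own proof of Lemma~\ref{t:gmap}; the result is quoted from the companion article \cite{Liu2022b} (see the appendix preamble). So there is no paper argument to compare against, and I can only assess your proposal on its merits.

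Your argument is correct. The reduction to boundedness of $I(t)$ via \eqref{e:gdef2}, the change of variables $u=f(s)$ in Case~1 (legitimate since $f_0>0$ by Lemma~\ref{t:f0fg}), and the stretched-exponential domination in Case~2 all go through. One small correction in your narrative for Case~2: the bound $f_0\le C_1(1+f)^{\cc}$ actually \emph{does} survive when $t_m=\infty$, because what you need for it is that $t^{-\ca}$ be bounded \emph{above} (which follows from $t\ge t_0>0$), not below. What genuinely fails in Case~2 is the lower bound $s^{\ca-2}\ge c_0>0$ used just afterwards, since $s^{\ca-2}\to 0$ when $1<\ca<2$ and $s\to\infty$. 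This misattribution does not affect the validity of your proof, as the alternative route through Theorem~\ref{t:mainthm0}(3) is sound. (Incidentally, if you carry the factor $s^{\ca-2}$ through the change of variables and use the exact formula \eqref{e:f0aa} rather than just the upper bound, the combined power of $s$ becomes $s^{2\ca-2}$, which \emph{is} bounded below since $\ca>1$; so the two cases can be unified if you wish.)
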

\begin{remark}
	Due to this lemma, it is convenient to continuously extend $g(t)$ from $[t_0,t_m)$ to $[t_0,t_m]$ by letting $g(t_m):=\lim_{t\rightarrow t_m}g(t)=0$, then $g^{-1} (0)=t_m$.
\end{remark}

\begin{proposition}\label{t:limG}
	Suppose $\cc\in(1,3/2)$, $\cb>0$, $\ca>1$,  $\chi$ is defined by \eqref{e:Gdef0} and  $f\in C^2([t_0,t_m))$ (where $[t_0,t_m)$ is the maximal interval of existence of $f$ given by Theorem \ref{t:mainthm0}) solves ODE \eqref{e:feq0}--\eqref{e:feq1}.
	Then there is a function $\mathfrak{G} \in C^1([t_0,t_m))$, such that for $t\in [t_0,t_m)$,
	\begin{equation}\label{e:limG}
		\chi(t)=\frac{2\cb B}{3-2\cc}+\mathfrak{G}(t)
	\end{equation}
	where $\lim_{t\rightarrow t_m}\mathfrak{G}(t)=0$.
	Moreover, there is a constant $C_\chi>0$ such that $0<\chi(t) \leq C_\chi$ in $[t_0,t_m)$, and there are continuous extensions of $\chi$ and $\mathfrak{G}$ such that $\chi\in C^0([t_0,t_m])$ and $\mathfrak{G}\in C^0([t_0,t_m])$ by letting $\chi(t_m):=2\cb B/(3-2\cc)$ and $\mathfrak{G}(t_m):=0$.
\end{proposition}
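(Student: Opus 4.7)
The plan is to treat the limit statement as the conclusion of an evolution equation for $\chi$ (or equivalently for $\mathfrak{G}$) that one can read off directly from the ODE \eqref{e:feq0}--\eqref{e:feq1}. First I would rewrite $\chi$ in a form that eliminates $g$: multiplying the numerator and denominator of \eqref{e:Gdef0} by $f_0$ and substituting the identity $f_0 = B^{-1} t^{-\ca} g^{-\cb/A}(1+f)^\cc$ from \eqref{e:f0aa} yields the clean expression
\begin{equation*}
    \chi(t) = \frac{B\, t^2 f_0^2(t)}{f(t) (1+f(t))^2}.
\end{equation*}
This form makes the positivity $\chi(t) > 0$ immediate (since $f>0$ and $f_0>0$ by Theorem \ref{t:mainthm0}), and also makes $\chi \in C^1([t_0,t_m))$ transparent, so it suffices to define $\mathfrak{G}(t):=\chi(t)-\tfrac{2\cb B}{3-2\cc}$ and work with this quantity.

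Next I would differentiate $\ln\chi = \ln B + 2\ln t + 2\ln f_0 - \ln f - 2\ln(1+f)$ with respect to $t$ and use the ODE \eqref{e:feq0} to replace $f''/f_0$, obtaining
\begin{equation*}
    \frac{\del{t}\chi}{\chi} = \frac{2(1-\ca)}{t} + \frac{2\cb f(1+f)}{t^2 f_0} + \frac{f_0\bigl[(2\cc-3)f-1\bigr]}{f(1+f)}.
\end{equation*}
Re-expressing $f_0$ in terms of $\chi$ through $f_0 = (1+f)\sqrt{\chi f/(Bt^2)}$ (which follows from the simplified expression for $\chi$ above), the middle two "large-$f$" terms combine to
\begin{equation*}
    \frac{\sqrt{f\chi}}{t\sqrt{B}}\Bigl[2\cb B + (2\cc-3)\chi\Bigr] = -\frac{(3-2\cc)\sqrt{f\chi}}{t\sqrt{B}}\,\mathfrak{G},
\end{equation*}
so that the full equation takes the schematic form
\begin{equation*}
    \del{t}\mathfrak{G} = -\frac{(3-2\cc)\sqrt{f\chi}}{t\sqrt{B}}\,\mathfrak{G} + R(t),
\end{equation*}
where $R(t)$ gathers the two remainder contributions $\tfrac{2(1-\ca)\chi}{t}$ and $-\tfrac{\chi^{3/2}}{t\sqrt{Bf}}$, both of which are $\mathcal{O}(t^{-1})$ uniformly in the bounds on $\chi$ and vanishing after division by $\sqrt{f}$.

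The key step is then to run a Grönwall-type argument on this equation. Since $\cc\in(1,3/2)$ gives $3-2\cc>0$, the linear part is \emph{dissipative}: setting $\Lambda(t):=\int_{t_0}^{t}\frac{(3-2\cc)\sqrt{f\chi}}{s\sqrt{B}}\,ds$, the integrating factor $e^{\Lambda(t)}$ grows to $+\infty$ as $t\to t_m$ (which I would justify via the lower bound $f(t)\gtrsim \exp(\mathtt{C}t^{(\ba+\triangle)/2})$ together with an a priori bound $\chi\gtrsim 1$ obtained by a short continuity argument from the initial data). This same growth controls the integral of $e^{\Lambda}R$ via an integration by parts, yielding $\mathfrak{G}(t)\to 0$ as $t\to t_m$. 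Along the way, the same identity combined with a standard bootstrap gives the uniform upper bound $\chi(t)\le C_\chi$ on $[t_0,t_m)$. Once $\mathfrak{G}(t)\to 0$ is established, the continuous extension to $[t_0,t_m]$ is just the definition.

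The main obstacle will be making the Grönwall step rigorous when $t_m$ may be either finite or infinite: one must check that $\Lambda(t)\to\infty$ in both regimes and that $R(t)/\sqrt{f}$ is genuinely integrable against $e^{-\Lambda}$, which requires the quantitative lower bound on $f$ from Theorem \ref{t:mainthm0}.(3). An alternative path, which I would keep in reserve, is to bypass the Grönwall estimate entirely by introducing a phase-plane variable such as $\sqrt{\chi}$ and showing that the unique attracting equilibrium of the reduced autonomous-like system as $f\to\infty$ is $\sqrt{2\cb B/(3-2\cc)}$.
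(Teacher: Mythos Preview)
Your approach is essentially the one the paper (via the cited companion \cite{Liu2022b}) takes: the evolution equation you derive for $\chi$ (equivalently $\mathfrak{G}$) is exactly the identity recorded in Lemma~\ref{t:dtchi}, and the subsequent Gr\"onwall/bootstrap argument is the intended mechanism. One small simplification worth noting: from your equation the upper bound $\chi\le C_\chi$ is immediate without any bootstrap, since whenever $\mathfrak{G}\ge 0$ all three terms on the right-hand side are nonpositive (using $\ca>1$ and $3-2\cc>0$), so $\chi'\le 0$ once $\chi\ge \tfrac{2\cb B}{3-2\cc}$; hence $\chi(t)\le \max\{\chi(t_0),\tfrac{2\cb B}{3-2\cc}\}$, which then feeds cleanly into the Gr\"onwall step.
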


The second crucial quantity is
\begin{equation}\label{e:xidef}
	\xi(t):=1/[g(t) (1+f(t)) ],
\end{equation}
the next proposition prove $\xi$ is bounded and its limit vanishes as $t$ tends to $t_m$.
\begin{proposition}\label{t:fginv0}
	Suppose $f\in C^2([t_0,t_m))$ (where $[t_0,t_m)$ is the maximal interval of existence of $f$ given by Theorem \ref{t:mainthm0}) solves ODE \eqref{e:feq0}--\eqref{e:feq1}, $g(t)$ is defined by \eqref{e:gdef} and $\xi(t)$ is given by \eqref{e:xidef}, then $\xi\in C^1([t_0,t_m))$ and
	\begin{equation}\label{e:fginv}
		\lim_{t\rightarrow t_m} \xi(t)= 0 .
	\end{equation}
	Moreover, there is a constant $C_\star>0$, such that $0<\xi(t)  \leq C_\star$ for every $t\in[t_0,t_m)$, and there is a continuous extension of $\xi$ such that  $\xi\in C^0([t_0,t_m])$ by letting $\xi(t_m):=0$.
\end{proposition}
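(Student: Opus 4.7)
The plan is to express $\xi$ in closed form using the definition of $\chi$, and then read off the asymptotic behavior from Proposition \ref{t:limG} and Theorem \ref{t:mainthm0}. First, the $C^1$ regularity is almost automatic: $f\in C^2([t_0,t_m))$ and $f_0\in C^1$, so the integrand $f(s)(1+f(s))/(s^2 f_0(s))$ defining $g$ is $C^1$ (using $f>0$ and $f_0>0$ which hold by Lemma \ref{t:f0fg}), hence $g\in C^1$ and strictly positive, and $1+f>0$; therefore $\xi=1/[g(1+f)]\in C^1([t_0,t_m))$.

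For the limit and the continuous extension, I would use the relation \eqref{e:Gdef0} and solve for $g$:
\begin{equation*}
g(t)=\chi(t)^{-A/(2\cb)}B^{-A/(2\cb)}t^{A(1-\ca)/\cb}f(t)^{-A/(2\cb)}(1+f(t))^{A(\cc-1)/\cb}.
\end{equation*}
Multiplying by $1+f$ yields the key identity
\begin{equation*}
g(t)(1+f(t))=\chi(t)^{-A/(2\cb)}B^{-A/(2\cb)}t^{A(1-\ca)/\cb}f(t)^{-A/(2\cb)}(1+f(t))^{1+A(\cc-1)/\cb}.
\end{equation*}
The exponent of $f$ in the dominant factor $f^{-A/(2\cb)}(1+f)^{1+A(\cc-1)/\cb}$ is, for large $f$, asymptotically
\begin{equation*}
1+\frac{A(2\cc-3)}{2\cb},
\end{equation*}
which is strictly positive precisely because the hypothesis $A\in(0,2\cb/(3-2\cc))$ gives $A(2\cc-3)/(2\cb)>-1$. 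By Proposition \ref{t:limG}, $\chi(t)$ is bounded above by $C_\chi$ and converges to $2\cb B/(3-2\cc)>0$, so $\chi^{-A/(2\cb)}$ is bounded above and below by positive constants on a neighborhood of $t_m$. By Theorem \ref{t:mainthm0} one has $f(t)\to+\infty$.

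It then remains to handle the $t$-factor $t^{A(1-\ca)/\cb}$. If $t_m<\infty$, this factor is bounded above and below on $[t_0,t_m]$ and the statement $g(1+f)\to\infty$ follows immediately, giving $\xi\to 0$. The only genuinely delicate case is $t_m=\infty$, where $t^{A(1-\ca)/\cb}\to 0$ because $\ca>1$; here I would invoke the lower bound $1+f(t)>\exp(\mathtt{C}t^{(\ba+\triangle)/2}+\mathtt{D}t^{-1})$ from Theorem \ref{t:mainthm0}(3) together with $\ba+\triangle>0$, so that $f(t)^{1+A(2\cc-3)/(2\cb)}$ grows super-polynomially in $t$ while $t^{A(1-\ca)/\cb}$ decays only polynomially; thus $g(1+f)\to\infty$ in this case as well. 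This step—reconciling the possibly decaying $t$-power with the dominating super-polynomial growth of $f$—is the main (but modest) obstacle and the only place where we genuinely need the quantitative lower bound on $f$ rather than merely $f\to\infty$.

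Once $\lim_{t\to t_m}\xi(t)=0$ is established, the continuous extension $\xi(t_m):=0$ gives $\xi\in C^0([t_0,t_m])$; as a continuous function on a compact interval it is bounded, so there exists $C_\star>0$ with $\xi(t)\le C_\star$ on $[t_0,t_m)$, and positivity $\xi>0$ is immediate from $g,1+f>0$. This completes the proof.
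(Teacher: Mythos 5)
Your proof is correct and follows essentially the same route as the paper's own argument (given in-paper for the generalization, Proposition \ref{t:fginv2}): bound $\chi$ through Proposition \ref{t:limG}, exploit that $A<2\cb/(3-2\cc)$ makes the net exponent of $(1+f)$ favorable (your lower bound on $g(1+f)$ is exactly the paper's upper bound on $\xi^{2\cb/A}\lesssim t^{2(\ca-1)}(1+f)^{3-2\cc-2\cb/A}$), and split the cases $t_m<\infty$ and $t_m=\infty$, invoking the quantitative lower bound of Theorem \ref{t:mainthm0}(3) in the latter. One cosmetic point: when $t_m=\infty$ the interval $[t_0,t_m]$ is not compact, so the final boundedness should be argued as in the paper, from continuity on $[t_0,T]$ together with the vanishing limit near $t_m$, rather than ``continuous on a compact interval''---a trivial adjustment.
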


\subsubsection{The generalization of Proposition \ref{t:fginv0} and the expression of $\del{t}\chi$}\label{s:dtchi}
The derivative $\del{t}\chi$ is required in \S\ref{s:stp4}. Thus we calculate it in the following lemma.
\begin{lemma}\label{t:dtchi}
	Suppose $\chi(t)$ is defined by \eqref{e:Gdef0} and $\mathfrak{G}$ is given by \eqref{e:limG}, then
	\begin{equation}\label{e:dtchi}
		\del{t}\chi= \del{t}\mathfrak{G} = -\frac{(3-2\cc)  \mathfrak{G} f^{\frac{1}{2}} \chi^{\frac{1}{2}}}{B^{\frac{1}{2}} t}   - \frac{\chi^{\frac{3}{2}}}{B^{\frac{1}{2} }t f^{\frac{1}{2}}}   +2(1-\ca) \frac{\chi }{t} .
	\end{equation}
\end{lemma}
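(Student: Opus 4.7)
The plan is to differentiate $\chi(t)$ directly, starting from the second form in \eqref{e:Gdef0},
\begin{equation*}
\chi(t) = \frac{g^{-2\cb/A}(t)\, t^{2(1-\ca)}}{B\, f(t)(1+f(t))^{2(1-\cc)}}.
\end{equation*}
Taking the logarithmic derivative yields
\begin{equation*}
\frac{\del{t}\chi}{\chi} = -\frac{2\cb}{A}\frac{\del{t} g}{g} + \frac{2(1-\ca)}{t} - \frac{f_0}{f} - \frac{2(1-\cc)\, f_0}{1+f},
\end{equation*}
and each piece on the right can be made fully explicit: I will use identity \eqref{e:dtg0} from Remark \ref{t:dtg1} for $\del{t}g/g$ and identity \eqref{e:f0aa} from Lemma \ref{t:f0fg} for $f_0$. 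Since $\chi$ and $\mathfrak{G}$ differ by the constant $2\cb B/(3-2\cc)$ via \eqref{e:limG}, the equality $\del{t}\chi = \del{t}\mathfrak{G}$ is immediate, so the entire lemma reduces to an algebraic manipulation.

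The key trick is to eliminate $g$ in favor of $\chi$ itself. Solving the definition of $\chi$ for $g^{\cb/A}$ gives
\begin{equation*}
g^{\cb/A} = \frac{t^{1-\ca}}{(B\chi f)^{1/2}(1+f)^{1-\cc}},
\end{equation*}
and substituting this into the four terms above converts everything into an expression in $\chi$, $f$, $t$ and the fixed constants $\ca$, $\cb$, $\cc$, $B$. A short calculation shows that the $g$-term produces $2\cb B^{1/2} f^{1/2}\chi^{1/2}/t$, the $-f_0/f$ term produces $-\chi^{3/2}(1+f)/(B^{1/2} t f^{1/2})$, and the $-2(1-\cc)f_0/(1+f)$ term produces $-2(1-\cc)\chi^{3/2}f^{1/2}/(B^{1/2} t)$.

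The final step, and the only place where mild bookkeeping is required, is to split $(1+f)/f^{1/2} = f^{-1/2} + f^{1/2}$ so that the unwanted $\chi^{3/2}/(B^{1/2} t f^{1/2})$ term drops out cleanly and the remaining three terms group as
\begin{equation*}
\frac{\chi^{1/2} f^{1/2}}{B^{1/2} t}\bigl[2\cb B - (3-2\cc)\chi\bigr].
\end{equation*}
Invoking Proposition \ref{t:limG} in the form $2\cb B - (3-2\cc)\chi = -(3-2\cc)\mathfrak{G}$ collapses this bracket exactly to the first term of \eqref{e:dtchi}, while the leftover $2(1-\ca)\chi/t$ and $-\chi^{3/2}/(B^{1/2} t f^{1/2})$ are already in the desired form. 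There is no analytic obstacle; the sole delicate point is matching the constants so that the cancellation $2\cb B - (3-2\cc)\chi = -(3-2\cc)\mathfrak{G}$ lines up precisely, which is dictated by the choice of normalization in \eqref{e:limG}.
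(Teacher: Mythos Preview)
Your proof is correct and follows essentially the same route as the paper: differentiate the second expression for $\chi$ in \eqref{e:Gdef0}, substitute \eqref{e:dtg0} and \eqref{e:f0aa}, eliminate $g$ via the definition of $\chi$, and invoke \eqref{e:limG} to produce $\mathfrak{G}$. Your use of the logarithmic derivative is a cosmetic convenience compared to the paper's direct differentiation, but the identities used and the order of substitutions are identical.
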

\begin{proof}
	Differentiating \eqref{e:Gdef0} with respective $t$, using \eqref{e:f0aa}	and \eqref{e:dtg0} to replace $f_0$ and $\del{t}g(t)$, respectively, yield
	\begin{align*}
		\del{t}\chi
		\overset{\eqref{e:f0aa} \& \eqref{e:dtg0}}{=}   &  -\frac{2 (1-\cc) t^{2 -3\ca } (f+1)^{3\cc-3}  g^{-\frac{3 \cb}{A}}}{B^2 f }   -\frac{t^{2 -3a } (f +1)^{3\cc-2}    g^{-\frac{3 \cb}{A}}}{B^2 f^2} \notag  \\
		& + 2 b   g^{-\frac{\cb}{A}}  t^{-\ca}  (1+f )^{\cc-1}     +\frac{2 (1-\ca) t^{1-2\ca} (f +1)^{-2 (1-\cc)} g^{-\frac{2 \cb}{A}}}{B f }
		\notag  \\
		\overset{ \eqref{e:Gdef0}}{=}   &  -\frac{(3-2\cc) f^{\frac{1}{2}} \chi^{\frac{3}{2}}}{B^{\frac{1}{2}} t}   - \frac{\chi^{\frac{3}{2}}}{B^{\frac{1}{2} }t f^{\frac{1}{2}}}   + 2 b t^{-1}  B^{\frac{1}{2}} f^{\frac{1}{2}} \chi^{\frac{1}{2}} +2(1-\ca) \frac{\chi}{t}	 \notag  \\
		\overset{\eqref{e:limG}}{=} &  -\frac{(3-2\cc)  f^{\frac{1}{2}} \chi^{\frac{1}{2}}}{B^{\frac{1}{2}} t} \Bigl(\frac{2\cb B}{3-2\cc}+\mathfrak{G}\Bigr)  - \frac{\chi^{\frac{3}{2}}}{B^{\frac{1}{2} }t f^{\frac{1}{2}}}   + 2 b t^{-1}  B^{\frac{1}{2}} f^{\frac{1}{2}} \chi^{\frac{1}{2}} +2(1-\ca) \frac{\chi}{t}	 \notag  \\
		= &   -\frac{(3-2\cc)  \mathfrak{G} f^{\frac{1}{2}} \chi^{\frac{1}{2}}}{B^{\frac{1}{2}} t}   - \frac{\chi^{\frac{3}{2}}}{B^{\frac{1}{2} }t f^{\frac{1}{2}}}   +2(1-\ca) \frac{\chi}{t} .
	\end{align*}
We proved this lemma.
\end{proof}

The following Proposition generalizes Proposition \ref{t:fginv0} by considering
the following crucial quantity
\begin{equation}\label{e:xidef2}
	\eta_\theta(t):=1/[g^\theta (t) (1+f(t)) ],
\end{equation}
the next proposition prove $\eta_\theta$ is bounded and its limit vanishes as $t$ tends to $t_m$.
\begin{proposition}\label{t:fginv2}
	Suppose $\theta\geq 1$ is a  constant and $A\theta <  2\cb/(3-2\cc)$ and $f\in C^2([t_0,t_m))$ (where $[t_0,t_m)$ is the maximal interval of existence of $f$ given by Theorem \ref{t:mainthm0}) solves ODE \eqref{e:feq0}--\eqref{e:feq1}, $g(t)$ is defined by \eqref{e:gdef0a} and $\eta_\theta(t)$ is given by \eqref{e:xidef2}, then $\eta_\theta\in C^1([t_0,t_m))$ and
	\begin{equation}\label{e:fginv2b}
		\lim_{t\rightarrow t_m} \eta_\theta(t)= 0 .
	\end{equation}
	Moreover, there is a constant $C_\star>0$, such that $0<\eta_\theta(t)  \leq C_\star$ for every $t\in[t_0,t_m)$, and there is a continuous extension of $\eta_\theta$ such that  $\eta_\theta\in C^0([t_0,t_m])$ by letting $\eta_\theta(t_m):=0$.
\end{proposition}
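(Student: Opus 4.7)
The plan is to reduce Proposition \ref{t:fginv2} to the already-established Proposition \ref{t:fginv0} via a rescaling of the parameter $A$. The key observation is that the ODE solution $f \in C^2([t_0,t_m))$ and its maximal existence time $t_m$ depend only on $(\ca,\cb,\cc)$ and the initial data, and are therefore independent of $A$. Introduce the modified parameter $\tilde{A}:=A\theta$ and the associated function
\begin{equation*}
\tilde{g}(t):= g(t)^{\theta}=\exp\Bigl(-\tilde{A}\int^{t}_{t_0}\frac{f(s)(f(s)+1)}{s^2 f_0(s)}\,ds\Bigr),
\end{equation*}
which is precisely the function defined by \eqref{e:gdef0a} with $\tilde{A}$ in place of $A$. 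The hypothesis $A\theta<2\cb/(3-2\cc)$ is exactly the admissibility condition $\tilde{A}\in(0,2\cb/(3-2\cc))$ required by Lemma \ref{t:f0fg}, so every downstream result about $g$ (Lemma \ref{t:gmap}, Proposition \ref{t:limG}, and most importantly Proposition \ref{t:fginv0}) applies verbatim with $\tilde{g}$ in place of $g$.

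With this setup, the quantity under study satisfies $\eta_\theta(t)=1/[g^\theta(t)(1+f(t))]=1/[\tilde{g}(t)(1+f(t))]$, which is structurally identical to $\xi(t)=1/[g(t)(1+f(t))]$ from \eqref{e:xidef}, only built from $\tilde{g}$ rather than $g$. Applying Proposition \ref{t:fginv0} to the pair $(\tilde{g},\tilde{\xi})$, where $\tilde{\xi}:=\eta_\theta$, immediately yields all four assertions of Proposition \ref{t:fginv2} at once: the $C^1$-regularity on $[t_0,t_m)$, the uniform upper bound $0<\eta_\theta(t)\leq C_\star$, the limit $\lim_{t\to t_m}\eta_\theta(t)=0$, and the continuous extension to $[t_0,t_m]$ obtained by setting $\eta_\theta(t_m):=0$.

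I do not expect any substantial obstacle; the whole argument is a parameter substitution. The only point requiring care is verifying that $\tilde{g}\in C^1([t_0,t_m))$ with $\tilde{g}>0$, which is automatic from $g\in C^1([t_0,t_m))$, $g>0$ on $[t_0,t_m)$, and $\theta\geq 1$, and that $\tilde{g}(t)\to 0$ as $t\to t_m$, which follows from Lemma \ref{t:gmap} together with $\theta>0$. Should a self-contained argument be preferred over this reduction, one can instead invoke the representation \eqref{e:gdef2} to write $\eta_\theta(t)=[1+\cb B\int_{t_0}^{t}s^{\ca-2}f(s)(1+f(s))^{1-\cc}\,ds]^{A\theta/\cb}/(1+f(t))$ and repeat the integral estimates underlying Proposition \ref{t:fginv0} with the exponent $A/\cb$ replaced by $A\theta/\cb$; the convergence mechanism there only requires this exponent to lie strictly below $2/(3-2\cc)$, which is again our hypothesis.
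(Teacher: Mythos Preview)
Your reduction is correct and is a cleaner route than the one taken in the paper. The key observation that $g^\theta$ built with parameter $A$ coincides with the $g$ of \eqref{e:gdef0a} built with parameter $\tilde A=A\theta$, and that the hypothesis $A\theta<2\cb/(3-2\cc)$ is precisely the admissibility condition on $\tilde A$, is exactly right; since $f$, $t_m$, $B$, and even $\chi$ (note $g^{\cb/A}$ is $A$-independent) do not depend on $A$, Proposition \ref{t:fginv0} applies verbatim to $\tilde g$ and delivers all four conclusions at once.

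The paper instead argues directly: it uses the boundedness of $\chi$ from Proposition \ref{t:limG} to obtain $g^{-2\cb/A}\leq C_\chi B f(1+f)^{2(1-\cc)}t^{2(\ca-1)}$, raises this to the power $\theta$, and then estimates $\eta_\theta^{2\cb/A}$ explicitly, splitting into the cases $t_m<\infty$ and $t_m=\infty$ and invoking the lower bound on $f$ from Theorem \ref{t:mainthm0}.(3) in the latter. Your substitution argument avoids repeating this case analysis and makes transparent that Proposition \ref{t:fginv2} is not a genuinely new estimate but the $\theta=1$ statement in disguise. The paper's approach, on the other hand, is more self-contained and shows concretely where the exponent threshold $A\theta<2\cb/(3-2\cc)$ enters the decay mechanism (namely, through the sign of $(3-2\cc)\theta-2\cb/A$). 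Either argument is valid; yours is shorter.
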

\begin{proof}
	By the definition  \eqref{e:Gdef0} of $\chi(t)$ and Proposition \ref{t:limG} (i.e., $0<\chi\leq C_\chi$),
	\begin{equation}\label{e:gest2}
		0< g^{-\frac{2\cb}{A}}(t)  \leq C_\chi B f(t) (1+f(t))^{2(1-\cc)} t^{2(\ca-1)}.
	\end{equation}
	then for $\theta>1$
	\begin{equation}\label{e:gest2}
		0< g^{-\theta\frac{2\cb}{A}}(t)  \leq C_\chi^\theta B^\theta f^\theta (1+f)^{2\theta(1-\cc)} t^{2\theta(\ca-1)}.
	\end{equation}
	Since $1+f>f$, using \eqref{e:gest2}, we arrive at
	\begin{align*}
		\eta^{\frac{2\cb}{A}} =g^{-\theta \frac{2\cb}{A}}  (1+f)^{-\frac{2\cb}{A}} <   \frac{ (1+f )^{1-\frac{2\cb}{A}} }{g^{\theta\frac{2\cb}{A}} f}\leq B^\theta C_\chi^\theta  t^{2\theta(\ca-1)}(1+f)^{(2\theta+1)-2\theta\cc-\frac{2\cb}{A}} f^{\theta-1}.
	\end{align*}
	That is,
	\begin{equation}\label{e:gfest}
		0<\eta<   B^{\frac{A}{2\cb}\theta}  C_\chi^{\frac{A}{2\cb}\theta} t^{A\theta (\ca-1)/\cb}(1+f)^{\frac{A}{2\cb}(2\theta+1-2\theta\cc-\frac{2\cb}{A})} f^{\frac{A}{2\cb}(\theta-1)}<   B^{\frac{A}{2\cb}\theta}  C_\chi^{\frac{A}{2\cb}\theta} t^{A\theta (\ca-1)/\cb}(1+f)^{\frac{A}{2\cb}(3\theta-2\theta\cc-\frac{2\cb}{A})}  .
	\end{equation}
	Since $\cc\in(1,3/2)$ and $A\theta<2\cb/(3-2\cc)$, we have $3\theta-2\theta\cc-\frac{2\cb}{A}<0$.

	Then $(1)$ if $t_m<\infty$, directly using \eqref{e:gfest} and Theorem \ref{t:mainthm0} (i.e., $\lim_{t\rightarrow t_m} f(t)=+\infty$), we conclude \eqref{e:fginv2b}. $(2)$ If $t_m=\infty$, the right hand of \eqref{e:gfest} can be estimated, due to $3\theta-2\theta\cc-\frac{2\cb}{A}<0$ and Theorem \ref{t:mainthm0}.$(3)$, by
	\begin{equation*}
		t^{A\theta(\ca-1)/\cb}(1+f)^{\frac{A}{2\cb}((3-2\cc)\theta-\frac{2\cb}{A})} < t^{A\theta(\ca-1)/\cb} \exp\Bigl[\frac{A}{2\cb}\Bigl((3-2\cc)\theta -\frac{2\cb}{A}\Bigr) \bigl( \mathtt{C} t^{\frac{\ba+\triangle}{2} }  +\mathtt{D}  t^{-1}\bigr)\Bigr].
	\end{equation*}
	Then, using the fact $\lim_{x\rightarrow \infty}(x^a/e^x)=0$,
	\begin{align*}
		& \lim_{t \rightarrow \infty} \biggl(t^{A\theta(\ca-1)/\cb} \exp\Bigl[\frac{A}{2\cb}\Bigl((3-2\cc) \theta -\frac{2\cb}{A}\Bigr) \bigl( \mathtt{C} t^{\frac{\ba+\triangle}{2} }  +\mathtt{D}  t^{-1}\bigr)\Bigr]\biggr) \notag  \\
		&\hspace{3cm}=\lim_{t \rightarrow \infty} \Biggl(\frac{t^{A\theta(\ca-1)/\cb}}{\exp\bigl[-\frac{A}{2\cb}\bigl((3-2\cc)\theta-\frac{2\cb}{A}\bigr)  \mathtt{C} t^{\frac{\ba+\triangle}{2} }   \bigr]} \Biggr)=0.
	\end{align*} 
	This implies that the right-hand side of \eqref{e:gfest} approaches $0$ as $t \rightarrow t_m$, ultimately leading to \eqref{e:fginv2b}. Next, since $\eta_\theta(t)$ is a continuous function on $[t_0,t_m)$, we can apply a similar proof as presented in Proposition \ref{t:fginv0} to conclude the boundedness of $\eta_\theta(t)$ and therefore complete the proof. 
\end{proof}

\begin{corollary}\label{s:gf1/2}
	Suppose $\ca=4/3$, $\cb=2/3$,  $\cc=4/3$ and $0<
	A<2$, then
	\begin{equation*}
	\lim_{t\rightarrow t_m} \biggl(\frac{1}{g f^{\frac{1}{2}}}\biggr)= 0 .
	\end{equation*}
\end{corollary}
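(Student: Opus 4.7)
The plan is to deduce this from Proposition \ref{t:fginv2} applied with the specific choice $\theta = 2$, by rewriting $1/(g f^{1/2})$ as a product of $\eta_2^{1/2}$ with a factor that tends to $1$.

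First, I verify that the hypotheses of Proposition \ref{t:fginv2} hold for $\theta = 2$ with the given parameters. Plugging $\cc = 4/3$ and $\cb = 2/3$ into the threshold gives
\begin{equation*}
  \frac{2\cb}{3 - 2\cc} \;=\; \frac{4/3}{3 - 8/3} \;=\; \frac{4/3}{1/3} \;=\; 4,
\end{equation*}
so the condition $A\theta < 2\cb/(3-2\cc)$ becomes $2A < 4$, which is exactly the assumption $A \in (0,2)$. Since also $\theta = 2 \geq 1$ and $\ca = 4/3 > 1$, Proposition \ref{t:fginv2} applies and yields
\begin{equation*}
   \lim_{t \to t_m} \eta_2(t) \;=\; \lim_{t \to t_m} \frac{1}{g^{2}(t)\,(1+f(t))} \;=\; 0.
\end{equation*}

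Next, I rearrange algebraically:
\begin{equation*}
   \frac{1}{g(t)\, f^{1/2}(t)} \;=\; \sqrt{\frac{1}{g^{2}(t)\,f(t)}} \;=\; \sqrt{\frac{1+f(t)}{f(t)}}\;\sqrt{\eta_{2}(t)}.
\end{equation*}
By Theorem \ref{t:mainthm0} we have $\lim_{t \to t_m} f(t) = +\infty$, hence $(1+f(t))/f(t) \to 1$ as $t \to t_m$. Combining this with $\eta_{2}(t) \to 0$ gives the desired limit.

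There is no serious obstacle here; the only thing to check carefully is that the parameter window $A \in (0,2)$ matches exactly the admissible range $A\theta < 4$ needed to invoke Proposition \ref{t:fginv2} at $\theta = 2$. The reason one cannot simply take $\theta = 1$ (which would reduce to Proposition \ref{t:fginv0}) is that $\sqrt{(1+f)/f^{\,1}}$ is bounded but $(1+f)/f^{1/2} \to \infty$; one really needs to push $\theta$ up to exactly $2$ in order to absorb the $f^{1/2}$ denominator, which is why the sharper generalization given by Proposition \ref{t:fginv2} is invoked.
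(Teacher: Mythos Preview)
Your proof is correct and follows essentially the same approach as the paper: apply Proposition \ref{t:fginv2} with $\theta=2$ (noting that $2\cb/(3-2\cc)=4$ so $A<2$ is exactly the needed condition), and then rewrite $1/(gf^{1/2})=\bigl(\eta_2\,(1+1/f)\bigr)^{1/2}$, using $f\to\infty$ to conclude. Your additional remark about why $\theta=1$ would not suffice is a nice clarification not spelled out in the paper.
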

\begin{proof}
 Taking $\theta=2$ and by using Proposition \ref{t:fginv2} with parameters $\ca=4/3$, $\cb=2/3$,  $\cc=4/3$, and $0<A<2$, we obtain
	\begin{equation*}
		\lim_{t\rightarrow t_m} \biggl(\frac{1}{g^2(1+f)}\biggr) =0 \Rightarrow  	\lim_{t\rightarrow t_m}  \biggl(\frac{1}{g f^{\frac{1}{2}}}\biggr)  =	\lim_{t\rightarrow t_m} \biggl(\frac{1}{g^2(1+f)}\Bigl(1+\frac{1}{f}\Bigr)\biggr)^{\frac{1}{2}} =0 ,
	\end{equation*}
which completes the proof.
\end{proof}

\subsection{Cauchy problems of Fuchsian systems}\label{s:fuc}

In this Appendix, we introduce the Cauchy problem for the Fuchsian system, which is a variation of the one originally established in \cite[Appendix B]{Oliynyk2016a} and significantly developed in \cite{Beyer2020,Beyer2020b}. While the proof has been omitted here, readers can find detailed proofs in \cite{Beyer2020} and explore other generalizations and applications in, for example, \cite{Liu2018,Liu2018b,Liu2018a,Liu2022,Liu2022a,Liu2022b,Beyer2021}.

Consider the following symmetric hyperbolic system.
\begin{align}
	B^{\mu}(t,x,u)\partial_{\mu}u =&\frac{1}{t}\textbf{B}(t,x,u)\textbf{P}u+H(t,x,u)+|t|^{-\frac{1}{2}} F(t,x, u) \quad &&\text{in}\;[T_{0},T_{1})\times\mathbb{T}^{n},  \label{e:model1}\\
	u =&u_{0} &&\text{in}\;\{T_{0}\}\times\mathbb{T}^{n},\label{e:model2}
\end{align}
where $T_{0}<T_{1}\leq0$. For given $R>0$, we require the following \textbf{Conditions}\footnote{The notations used in this Appendix, such as $\mathrm{O}(\cdot)$, $\mathcal{O}(\cdot)$, and $B_R(\Rbb^N)$, are defined in \cite[\S$2.4$]{Beyer2020}. }:
\begin{enumerate}[leftmargin=*,label={(F\arabic*)}]
	\item \label{c:2} $\textbf{P}$ is a constant, symmetric projection operator, i.e., $\textbf{P}^{2}=\textbf{P}$, $\textbf{P}^{T}=\textbf{P}$ and $\partial_\mu \textbf{P}=0$.
	
	\item \label{c:3} $u=u(t,x)$ and $H(t,x,u)$ are $\mathbb R^{N}$-valued maps, $H, \; F\in C^{0}([T_{0},0],C^{\infty}(\Tbb^n\times B_R(\mathbb R^{N}),\Rbb^N))$ and satisfy $H(t,x,0)=0$ and there are constants $\lambda_1,\;\lambda_2 \geq 0$, such that
	\begin{equation}\label{e:Fcd}
		\mathbf{P} F(t,x,u)=\mathcal{O}(\lambda_1 u) \AND \mathbf{P}^\perp F(t,x,u)=\mathcal{O}(\lambda_2 \mathbf{P} u)
	\end{equation}
for all $ (t,x,u) \in[T_{0},0]\times \Tbb^n\times B_R(\Rbb^N) $ 	where
$
\textbf{P}^{\bot}=\mathds{1} - \textbf{P}
$
is the complementary projection operator.
	
	\item \label{c:4} $B^{\mu}=B^{\mu}(t,x,u)$ and $\textbf{B}=\textbf{B}(t,x,u)$ are $\mathbb M_{N\times N}$-valued maps, and  $B^{i}\in   C^{0}([T_{0},0),C^{\infty}(\Tbb^n\times  B_R(\mathbb R^{N}), \mathbb M_{N\times N})$,  $\textbf{B}\in   C^{0}([T_{0},0],C^{\infty}(\Tbb^n\times  B_R(\mathbb R^{N}), \mathbb M_{N\times N})$, $B^{0}\in C^{1}([T_{0},0),C^{\infty}(\Tbb^n\times  B_R(\mathbb R^{N}), \mathbb M_{N\times N})$ and they satisfy
	\begin{equation*}\label{e:comBP}
		(B^{\mu})^{T}=B^{\mu},\quad [\textbf{P}, \textbf{B}]=\textbf{PB}-\textbf{BP}=0.
	\end{equation*}
	and $	B^i$ can be expanded as
	\begin{equation*}
		B^i(t,x,u)=B_0^i(t,x,u)+\frac{1}{t} B_2^i(t,x,u)
	\end{equation*}
	where $B_0^i, B_2^i\in C^{0}([T_{0},0],C^{\infty}(\Tbb^n\times  B_R(\mathbb R^{N}), \mathbb M_{N\times N}))$.
	
	Suppose there is $\tilde{B}^0, \tilde{\mathbf{B}} \in C^0([T_0,0], C^\infty(\Tbb^n, \mathbb M_{N\times N}))$, such that
	\begin{equation*}
		[\mathbf{P}, \tilde{\mathbf{B}}] =0, \quad
		B^0(t,x,u)-\tilde{B}^0(t,x)= \mathrm{O}(u) , \quad
		\mathbf{B}(t,x,u)-\tilde{\mathbf{B}} (t,x)= \mathrm{O}(u)
	\end{equation*}
	for all $ (t,x,u) \in[T_{0},0]\times \Tbb^n\times B_R(\Rbb^N) $.
		
	Moreover, there is $\tilde{B}_2^i \in C^0([T_0,0], C^\infty(\Tbb^n, \mathbb M_{N\times N}))$, such that
	\begin{gather*}
		\mathbf{P} B_2^i(t,x,u) \mathbf{P}^\perp =   \mathrm{O}(\mathbf{P} u), \quad
		\mathbf{P}^\perp B_2^i(t,x,u) \mathbf{P} =   \mathrm{O}(\mathbf{P} u),  \\
		\mathbf{P}^\perp B_2^i(t,x,u) \mathbf{P}^\perp =   \mathrm{O}(\mathbf{P} u\otimes \mathbf{P} u),\quad
		\mathbf{P}  (B_2^i(t,x,u)-\tilde{B}_2^i(t,x)) \mathbf{P}  =   \mathrm{O}( u),
	\end{gather*}
	for all $ (t,x,u) \in[T_{0},0]\times \Tbb^n\times B_R(\Rbb^N) $.
	
	\item \label{c:5}
	There exists constants $\kappa,\,\gamma_{1},\,\gamma_{2}$ such that
	\begin{equation*}
		\frac{1}{\gamma_{1}}\mathds{1}\leq B^{0}\leq \frac{1}{\kappa} \textbf{B} \leq\gamma_{2}\mathds{1} \label{e:Bineq}
	\end{equation*}
	for all $ (t,x,u) \in[T_{0},0]\times \Tbb^n\times B_R(\Rbb^N) $.
	
	\item \label{c:6} For all $(t,x,u)\in[T_{0},0]\times\Tbb^n \times B_R(\mathbb R^{N})$,
	\begin{equation*}
		\textbf{P}^{\bot}B^{0}(t,\textbf{P}^\perp u)\textbf{P}=\textbf{P}B^{0}(t,\textbf{P}^\perp u)\textbf{P}^{\bot}=0.
	\end{equation*}
	\item \label{c:7}
	There exist
	constants $\theta$ and $\beta_{\ell}\geq 0$, $\ell=0,\cdots,7$, such that
	\begin{align*}
		\mathrm{div} B(t,x, u,w):= &  \del{t} B^0 (t,x,u)+D_u B^0(t,x,u) \cdot(B^0(t,x, u))^{-1}\Bigl[-B^i(t,x,u)\cdot w_i  \notag  \\
		& +\frac{1}{t} \mathbf{B}(t,x,u)\mathbf{P} u+H(t,x,y) +|t|^{-\frac{1}{2}}F(t,x,u)\Bigr]+\del{i}B^i(t,x,u) \notag  \\
		& +D_u B^i(t,x,u) \cdot w_i  ,
	\end{align*}
where $w=(w_i)$ and $(t,x,u, w) \in [T_0,0)\times \Tbb^n \times B_R(\Rbb^N) \times B_R(\mathbb{M}_{N \times n})$, satisfies
	\begin{align}
 \mathbf{P} 	\mathrm{div} B \mathbf{P}  = & \;\mathcal{O}\bigl(\theta   +|t|^{-\frac{1}{2}}\beta_0+|t|^{-1}\beta_1 \bigr),  \label{e:PhP1}\\
 \mathbf{P} 	\mathrm{div} B \mathbf{P}^\perp  = &\; \mathcal{O}\biggl(\theta  +|t|^{-\frac{1}{2}} \beta_2 +\frac{|t|^{-1}\beta_3}{R}\mathbf{P}u  \biggr),  \label{e:PhP2}\\
\mathbf{P}^\perp 	\mathrm{div} B \mathbf{P}  = & \; \mathcal{O}\biggl(\theta  +|t|^{-\frac{1}{2}} \beta_4  +\frac{|t|^{-1}\beta_5}{R}\mathbf{P}u   \biggr) \label{e:PhP3}
		\intertext{and}
\mathbf{P}^\perp  	\mathrm{div} B \mathbf{P}^\perp  =  & \; \mathcal{O}\biggl(\theta  +\frac{|t|^{-\frac{1}{2}}\beta_6}{R }\mathbf{P}u   +\frac{|t|^{-1}\beta_7}{R^2}\mathbf{P}u \otimes \mathbf{P}u \biggr)  . \label{e:PhP4}
	\end{align}	
\end{enumerate}

Now let us present the global existence theorem for the Fuchsian system (see \cite{Beyer2020} for detailed proofs).
\begin{theorem}\label{t:fuc}
	Suppose that $k\in\Zbb_{> \frac{n}{2}+3}$, $u_{0}\in H^{k}(\mathbb T^{n})$ and conditions \ref{c:2}--\ref{c:7} are fulfilled, and the constants $\kappa, \gamma_1,  \beta_\ell$ ($\ell=0,1,\cdots,7$) from the conditions \ref{c:2}--\ref{c:7} satisfy
	\begin{equation}\label{e:kpa2}
		\kappa>\frac{1}{2} \gamma_1 \max\Bigl\{\sum^3_{\ell=0} \beta_{2\ell+1}, \beta_1+2k(k+1)  \mathtt{b} \Bigr\}
	\end{equation}
where
\begin{equation*}
	\mathtt{b}:= \sup_{T_0\leq t<0} \bigl(\||\mathbf{P} \tilde{\mathbf{B}} D (\tilde{\mathbf{B}}^{-1} \tilde{B}^0)(\tilde{B}^0)^{-1} \mathbf{P} \tilde{B}_2^i \mathbf{P}|_{\mathrm{op}}\|_{\Li}+\||\mathbf{P}\tilde{\mathbf{B}} D (\tilde{\mathbf{B}}^{-1} \tilde{B}_2^i)\mathbf{P}|_{\mathrm{op}}\|_{\Li}\bigr).
\end{equation*}
	Then there exist constants $\delta_0, \delta>0$ satisfying $\delta<\delta_0$, such that if
	\begin{equation*}
		\|u_0\|_{H^k} \leq \delta,
	\end{equation*}
	then there exists a unique solution
	\begin{equation*}
		u\in C^0([T_0,0),H^k(\Tbb^n) ) \cap C^1([T_0,0),H^{k-1}(\Tbb^n) ) \cap \Li ([T_0,0),H^k(\Tbb^n))
	\end{equation*}
of the initial value problem \eqref{e:model1}--\eqref{e:model2} such that $\Pbp u(0):=\lim_{t\nearrow 0}\Pbp u(t)$ exists in $H^{s-1}(\Tbb^n)$.

\noindent Moreover, for $T_0\leq t<0$, the solution $u$ satisfies the energy estimate
\begin{equation*}\label{e:ineq1}
	\|u(t)\|_{H^k}^2  - \int^t_{T_0}\frac{1}{\tau}\|\mathbf{P} u(\tau)\|^2_{H^k}d\tau \leq C(\delta_0,\delta_0^{-1}) \|u_0\|^2_{H^k} .
\end{equation*}
\end{theorem}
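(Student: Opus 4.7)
The plan is to follow the standard Fuchsian-system strategy developed in \cite{Oliynyk2016a,Beyer2020,Beyer2021}: produce a local-in-time solution by the classical symmetric hyperbolic theory, derive a weighted energy inequality whose singular ``sink'' term on the right-hand side dominates every other singular contribution when $\|u_0\|_{H^k}$ is sufficiently small, and then continue the solution all the way up to $t=0$ by a bootstrap argument. The limit $\Pbp u(0)$ is extracted in a final step from the integrability built into the energy inequality.

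First, on any compact subinterval $[T_0,t_1]\subset[T_0,0)$ the system \eqref{e:model1} is a regular quasilinear symmetric hyperbolic system because \ref{c:4} gives symmetry of $B^\mu$ and \ref{c:5} gives positive definiteness of $B^0$; standard theory (e.g.\ \cite[Ch.~16]{Taylor2010}) then yields a unique local solution $u\in C^0([T_0,T_*),H^k)\cap C^1([T_0,T_*),H^{k-1})$ for some maximal $T_*\in(T_0,0]$. Define the natural weighted energy
\[
E_k(t):=\sum_{|\alpha|\leq k}\langle D^\alpha u(t),\,B^0(t,x,u(t))\,D^\alpha u(t)\rangle,
\]
which by \ref{c:5} is equivalent to $\|u\|_{H^k}^2$. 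Applying $D^\alpha$ to \eqref{e:model1}, pairing with $D^\alpha u$, summing in $|\alpha|\leq k$ and integrating by parts produces, after using $[\mathbf{P},\mathbf{B}]=0$ and symmetrizing the $B^i\partial_i$ contribution into the $\mathrm{div}\,B$ quantity of \ref{c:7}, an identity of the schematic form
\[
\tfrac{d}{dt}E_k(t)+\tfrac{2}{t}\langle \Pbb D^\alpha u,\mathbf{B}\,\Pbb D^\alpha u\rangle_{|\alpha|\leq k}
=\langle D^\alpha u,(\mathrm{div}\,B)D^\alpha u\rangle_{|\alpha|\leq k}+\mathcal{R}(t),
\]
where $\mathcal{R}(t)$ collects commutators $[D^\alpha,B^\mu]$, $[D^\alpha,\tfrac1t\mathbf{B}\Pbb]$, and the contributions of $H$ and $|t|^{-1/2}F$. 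Standard Moser and Kato--Ponce-type commutator estimates bound $\mathcal{R}(t)$ by $C\,E_k+C|t|^{-1/2}E_k^{1/2}(\Pbb\text{-weighted norm})$, with \ref{c:3} controlling the $F,H$ pieces exactly in the form \eqref{e:Fcd}.

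The decisive step is to split $\mathrm{div}\,B$ via the projectors $\Pbb,\Pbb^\perp$ according to \eqref{e:PhP1}--\eqref{e:PhP4} and absorb each singular piece into the dissipative term $\tfrac{2\kappa}{|t|}\|\Pbb u\|_{H^k}^2$ produced by \ref{c:5}. Here the condition \eqref{e:kpa2}, written in the form $2\kappa/\gamma_1>\sum\beta_{2\ell+1}+2k(k+1)\mathtt{b}$, is precisely what is needed: the sum $\sum\beta_{2\ell+1}$ accounts for the four $|t|^{-1}$-coefficients in \ref{c:7}, while the term $2k(k+1)\mathtt{b}$ accounts for the commutator of $D^\alpha$ with $\tfrac1t\mathbf{B}\Pbb$ at the top order, which via \ref{c:4} reduces to one power each of $\tilde{\mathbf{B}}$-differentiated factors that is quantified exactly by $\mathtt{b}$. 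The $|t|^{-1/2}$-pieces (from $\beta_0,\beta_2,\beta_4,\beta_6$ and from $F$) are absorbed into a Grönwall term $C|t|^{-1/2}E_k$ that is integrable on $[T_0,0]$. The net outcome is the differential inequality
\[
\tfrac{d}{dt}E_k(t)+\tfrac{\kappa}{|t|}\|\Pbb u(t)\|_{H^k}^2\leq C\bigl(1+|t|^{-1/2}\bigr)E_k(t)+C\,E_k(t)^{3/2}.
\]

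Integrating, Grönwall yields $E_k(t)+\kappa\int_{T_0}^t|\tau|^{-1}\|\Pbb u\|_{H^k}^2\,d\tau\leq C(\delta_0,\delta_0^{-1})\|u_0\|_{H^k}^2$ so long as $E_k$ remains below a fixed threshold; a standard continuity argument then shows that if $\|u_0\|_{H^k}\leq\delta$ with $\delta<\delta_0$ small, this threshold is never saturated, hence $T_*=0$. The existence of $\Pbp u(0)$ follows from \ref{c:6}, which makes the evolution of $\Pbp u$ non-singular: multiplying \eqref{e:model1} on the left by $\Pbp(B^0)^{-1}$ and using $\Pbp B^0\Pbb=0$ produces $\partial_t(\Pbp u)$ controlled in $H^{k-1}$ by $\|u\|_{H^k}$ plus an $L^1_t$-integrable singular piece coming from $\tfrac1t\mathbf{B}\Pbb u$, whose $L^1_t$-character is supplied by the Cauchy--Schwarz application $\int_{T_0}^0|\tau|^{-1}\|\Pbb u\|_{H^{k-1}}\,d\tau\leq(\int|\tau|^{-1}\|\Pbb u\|_{H^k}^2)^{1/2}\,|\log|^{1/2}$, combined with the dissipative integral just obtained.

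The main obstacle will be the commutator bookkeeping at top order $|\alpha|=k$, since the factor $k(k+1)$ in \eqref{e:kpa2} arises from counting exactly the number of $[D^\alpha,\tfrac1t\mathbf{B}]$ contributions that cannot be absorbed by \ref{c:5} alone; controlling these requires separating $\mathbf{B}$ into its $u$-independent principal part $\tilde{\mathbf{B}}$ (to which $\mathtt{b}$ refers) and a $u$-remainder (which by \ref{c:4} vanishes at $u=0$ and so is handled by the smallness of $\delta$). A careful accounting of which pieces absorb into the $\tfrac{2\kappa}{|t|}$ sink and which become integrable $|t|^{-1/2}$ Grönwall terms is what dictates the precise form of \eqref{e:kpa2}.
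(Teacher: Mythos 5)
Two remarks before the substance: the paper does not prove Theorem \ref{t:fuc} at all; it is imported verbatim (as a variant) from the Fuchsian global-existence theory of \cite{Oliynyk2016a,Beyer2020,Beyer2020b}, with the reader referred to \cite{Beyer2020} for the proof. Your outline follows exactly the strategy of that cited proof: local existence from symmetric hyperbolic theory, the $B^0$-weighted energy $E_k$, absorption of the $|t|^{-1}$-singular contributions of \ref{c:7} and of the commutators into the dissipative term supplied by \ref{c:5} under the threshold condition \eqref{e:kpa2}, integrable treatment of the $|t|^{-\frac{1}{2}}$ pieces, Gr\"onwall plus a continuation/bootstrap argument, and a separate argument for the limit of $\Pbp u$ at $t=0$. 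So at the level of strategy you are reproducing the standard proof rather than offering an alternative.

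There are, however, two concrete gaps in your sketch. First, the extraction of $\Pbp u(0)$ as written fails: your Cauchy--Schwarz step carries the factor $\bigl(\int_{T_0}^0|\tau|^{-1}d\tau\bigr)^{1/2}$, which is infinite, so the ``$|\log|^{1/2}$'' bound does not give $L^1$-integrability of $\partial_t\Pbp u$. Note also that since $[\mathbf{P},\mathbf{B}]=0$, the singular source $\frac{1}{t}\mathbf{B}\Pbb u$ has no $\Pbp$-component before inverting $B^0$; the singular terms that actually enter the $\Pbp$-evolution come from the off-diagonal blocks of $(B^0)^{-1}$ and from $\frac{1}{t}B_2^i\partial_i u$, and conditions \ref{c:4} and \ref{c:6} are precisely what force these to carry factors of $\Pbb u$ (quadratically in the $\Pbp\cdots\Pbp$ block), so that their time integrals are controlled by the dissipation integral $\int_{T_0}^0|\tau|^{-1}\|\Pbb u\|^2_{H^k}d\tau$ (equivalently, by the decay of $\|\Pbb u(t)\|$ established in \cite{Beyer2020}); this structural routing is missing from your argument. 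Second, the constant $\mathtt{b}$ in \eqref{e:kpa2} is built from spatial derivatives of the singular flux coefficients $\tilde{B}_2^i$ (and of $\tilde{B}^0$, $\tilde{\mathbf{B}}$), so the term $2k(k+1)\mathtt{b}$ accounts for the top-order commutators generated by $\frac{1}{t}B_2^i\partial_i u$, not for $[D^\alpha,\frac{1}{t}\mathbf{B}\Pbb]$ as you assert; the combinatorial factor $2k(k+1)$ comes from counting how derivatives distribute over those $\tilde{B}_2^i$ factors, and misplacing it would change which pieces must be absorbed by $\kappa$ versus handled by smallness of $\delta$.
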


\section*{Acknowledgement}
		C.L. is supported by the Fundamental Research Funds for the Central Universities, HUST: $5003011036$, $5003011047$, and the National Natural Science Foundation of China (NSFC) under the
		Grant No. $11971503$.

\bigskip

\textbf{Data Availability} Data sharing is not applicable to this article as no datasets were
generated or analyzed during the current study.

\bigskip

\textbf{Declarations}

\bigskip

\textbf{Conflict of interest} The authors declare that they have no conflict of interest.

\bibliographystyle{amsplain}
\bibliography{Reference_Chao}

\end{document}